\newtheorem{theorem}{Theorem}[section]
\newtheorem{lemma}[theorem]{Lemma}
\newtheorem{claim}[theorem]{Claim}
\newtheorem{definition}[theorem]{Definition}
\newtheorem{observation}[theorem]{Observation}
\newtheorem*{remark}{Remark}					% * removes the numbering
\newtheorem{proposition}[theorem]{Proposition}
\newtheorem{example}[theorem]{Example}
\newtheorem{fact}[theorem]{Fact}
\Crefname{theorem}{Theorem}{Theorems}
\Crefname{proposition}{Proposition}{Propositions}
\Crefname{claim}{Claim}{Claims}
\Crefname{lemma}{Lemma}{Lemmas}
\renewcommand{\implies}{\Longrightarrow}		% implies
\newcommand{\abs}[1]{\left|#1\right|}		% absolute value
\newcommand{\st}{\,|\,} 				 		% such that
\newcommand{\E}{\mathop{\mathbb{E}}}  		% expectation
\newcommand{\V}{\mathop{\mathbb{V}}}  		% variance
\newcommand{\N}{\mathbb{N}} 			  		% natural numbers
\newcommand{\ind}{\text{I}}					% indicator
\newcommand{\mc}{\mathcal}
\newcommand{\eps}{\mathop{\epsilon}}
\newcommand{\set}[1]{\left\{ #1 \right\}}   % sets
\newcommand{\brac}[1]{\left( #1 \right)}    % brackets
\newcommand{\sqbrac}[1]{\left[ #1 \right]}  % square brackets
\newcommand{\Prob}[1]{\Pr\sqbrac{#1}}		% probability
\newcommand{\norm}[1]{\left\vert #1 \right\vert}			% Single lined norm
\newcommand{\relent}[2]{D(#1\ \Vert\ #2)}					% Relative Entropy
\newcommand{\lonorm}[2]{\left\Vert #1 - #2 \right\Vert_1}
\newcommand{\ts}{\textsuperscript}		% Superscript
\newcommand{\val}{\textup{val}} 	 			% game value (parallel repetition)
\newcommand{\cC}{\mathcal{C}}
\newcommand{\cG}{\mathcal{G}}
\newcommand{\cP}{\mathcal{P}}
\newcommand{\Bernoulli}{\mathrm{Bernoulli}}
\newcommand{\textdef}[1]{\textnormal{\textsf{#1}}}
\newcommand{\eqdef}{\stackrel{\mathsf{def}}{=}}
\newcommand{\poly}{\mathsf{poly}}
\newif\ifnotes
\begin{document}
\title{Parallel Repetition For All 3-Player Games Over Binary Alphabet}

\author{Uma Girish\thanks{\scriptsize Princeton University. E-mail: \href{ugirish@cs.princeton.edu}{\texttt{ugirish@cs.princeton.edu}.} Research supported by the Simons Collaboration on Algorithms and Geometry, by a Simons Investigator Award, by the National Science Foundation grants No. CCF-1714779, CCF-2007462 and by the IBM Phd Fellowship.} \and Justin Holmgren\thanks{\scriptsize NTT Research.  E-mail: \href{justin.holmgren@ntt-research.com}{\texttt{justin.holmgren@ntt-research.com}.}} \and Kunal Mittal\thanks{\scriptsize Princeton University. E-mail: \href{kmittal@cs.princeton.edu}{\texttt{kmittal@cs.princeton.edu}.} Research supported by the Simons Collaboration on Algorithms and Geometry, by a Simons Investigator Award and by the National Science Foundation grants No. CCF-1714779, CCF-2007462.} \and Ran Raz\thanks{\scriptsize Princeton University.  E-mail: \href{ranr@cs.princeton.edu}{\texttt{ranr@cs.princeton.edu}.}  Research supported by the Simons Collaboration on Algorithms and Geometry, by a Simons Investigator Award and by the National Science Foundation grants No. CCF-1714779, CCF-2007462. } \and Wei Zhan\thanks{\scriptsize Princeton University.  E-mail: \href{weizhan@cs.princeton.edu}{\texttt{weizhan@cs.princeton.edu}.} Research supported by the Simons Collaboration on Algorithms and Geometry, by a Simons Investigator Award and by the National Science Foundation grants No. CCF-1714779, CCF-2007462.}}
\date{}
\maketitle

\begin{abstract}
We prove that for every 3-player (3-prover) game, with binary questions and answers and value $<1$, the value of the $n$-fold parallel repetition of the game decays  polynomially fast to 0. That is, for every such game, there exists a constant $c>0$, such that the value of the $n$-fold parallel repetition of the game is at most $n^{-c}$.

Along the way to proving this theorem, we prove two additional parallel repetition theorems for multiplayer (multiprover) games, that may be of independent interest:

%\begin{enumerate}
%\item
\paragraph{Playerwise Connected Games (with any number of players and any Alphabet size):}
%{\bf Playerwise Connected Games (with any number of players and any Alphabet size):}
We identify a large class of multiplayer games and prove that for every game with value $<1$ in that class, the value of the $n$-fold parallel repetition of the game decays polynomially fast to 0.

More precisely, our result applies for \emph{playerwise connected games}, with any number of players and any alphabet size:
For each player $i$, we define the graph $G_i$, whose vertices are the possible questions for that player and two questions $x,x'$ are connected by an edge if there exists a vector $y$ of questions for all other players, such that both $(x,y)$ and $(x',y)$ are asked by the referee with non-zero probability.
We say that the game is \emph{playerwise connected} if for every $i$, the graph $G_i$ is connected.

Our class of playerwise connected games is strictly larger than the class of \emph{connected} games that was defined in~\cite{DHVY17} and for which  exponentially fast decay bounds are known~\cite{DHVY17}.
For playerwise connected games that are not connected, only inverse Ackermann decay bounds were previously known~\cite{Ver96}.

%We show that for every playerwise connected game with value $<1$, there exists a constant $c>0$, such that the the value of the $n$-fold parallel repetition of the game is at most $n^{-c}$.

%\item
\paragraph{Exponential Bounds for the Anti-Correlation Game:}
%{\bf Exponential Bounds for the Anti-Correlation Game:}
In the {\em 3-player anti-correlation game}, two out of three players are given $1$ as input, and the remaining player is given $0$.  The two players who were given $1$ must produce different outputs in $\{0,1\}$. We prove that the value of the $n$-fold parallel repetition of that game decays exponentially fast to 0. That is, there exists a constant $c>0$, such that the value of the $n$-fold parallel repetition of the game is at most $2^{-c n}$.
Only inverse Ackermann decay bounds were previously known~\cite{Ver96}.

The 3-player anti-correlation game was studied and motivated in several previous works. In particular, Holmgren and Yang gave it as an example for a 3-player game whose non-signaling value (is smaller than 1 and yet) does not decrease at all under parallel repetition~\cite{HY19}.

%In a recent work, Dinur, Harsha, Venkat and Yuen \cite{DHVY17} show that any game having a certain \emph{connectedness} property satisfies an exponentially small parallel repetition bound (and this includes all games for which exponentially small bounds are currently known).
%Our class of playerwise connected games is strictly larger than the class of \emph{connected} games in \cite{DHVY17}, and to the best of our knowledge, no bound better than inverse Ackermann was previously known for the class of playerwise connected games.
%Our proof relies on information-theoretic techniques, extending the ideas of \cite{Raz98, Hol09, DHVY17}, and we hope that the ideas in this work lead to progress on the question of parallel repetition for general multiplayer games.

\end{abstract}

\newpage
\tableofcontents
\newpage

\section{Introduction}
We study multiplayer games and their behavior under parallel repetition.
In a $k$-player game $\mc G$, a referee samples questions $x=(x^1, \dots, x^k)$ from some distribution $Q$.
Then, for each $j\in [k]$, the $j\textsuperscript{th}$ player is given the question $x^j$, based on which they give back an answer $a^j$.
The referee then declares if the players win or not based on the evaluation of a predicate $V(x^1, \dots, x^k, a^1, \dots, a^k)$.
The value $\val(\mc G)$ of the game $\mc G$ is defined to be the maximum winning probability for the players, where the maximum is over all possible strategies (functions mapping questions to answers) of the players.

A very basic operation on a game $\mc G$ is to consider its parallel repetition, in which the players are asked to play many independent copies of the game in parallel.
More formally, in the $n$-fold parallel repetition $\mc G^{\otimes n}$, the referee draws questions $(x_i^1, \dots, x_i^k)$ from $Q$, independently for each $i\in [n]$.
Then, for each $j \in [k]$, the $j\textsuperscript{th}$ player is given the questions $(x_1^j, \dots, x_n^j)$, based on which they answer back $(a_1^j, \dots, a_n^j)$.
The referee says that the players win if for every $i\in [n]$, the predicate $V(x_i^1, \dots, x_i^k, a_i^1, \dots, a_i^k)$ evaluates to win.

A natural question is to study how the value of the game $\mc G^{\otimes n}$ behaves as a function of~$n$, the number of parallel repetitions~\cite{FRS94}.
It is not hard to see that $\val(\mc G^{\otimes n}) \geq \val(\mc G)^n$, since the players can achieve value $\val(\mc G)^n$ in the game $ \mc G^{\otimes n}$ by simply repeating an optimal strategy for the game $\mc G$ independently in all the $n$ coordinates.
It also seems that this should be optimal, and that $\val(\mc G^{\otimes n}) \leq \val(\mc G)^n$.
However, this turns out not to be the case, and there are games such that $\val(\mc G^{\otimes n})$ is exponentially larger than $\val(\mc G)^n$ \cite{For89, Fei91, FV02, Raz11}.
Hence, it is interesting to study the behavior of $\val(\mc G^{\otimes n})$ for games $\mc G$ with $\val(\mc G)<1$.

The special case of 2-player games is very well understood, and it was proven by Raz \cite{Raz98} that if $\val(\mc G)<1$, then the value of $\mc G^{\otimes n}$ decays exponentially in $n$; that is, $\val(\mc G^{\otimes n}) \leq 2^{-\Omega(n)}$, with the constants depending on the base game $\mc G$.
There have been improvements in the constants \cite{Hol09, Rao11, BRRRS09, RR12}, and we even know tight results based on the value of the initial game \cite{DS14, BG15}.
These results on 2-player games have found many applications, in particular in the theory of interactive proofs \cite{BOGKW88}, PCPs and harness of approximation \cite{BGS98, Fei98, Has01}, geometry of foams \cite{FKO07, KORW08, AK09}, quantum information \cite{CHTW04}, and communication complexity \cite{PRW97, BBCR13, BRWY13}.
The reader is referred to this survey \cite{Raz10} for more details.

The case of general $k$-player multiplayer games is still open.
The only general result, by Verbitsky \cite{Ver96}, says that if $\val(\mc G)<1$, then $\val(\mc G^{\otimes n})\to 0$ as $n\to \infty$.
This result uses the density Hales-Jewett theorem as a black box, and gives bounds of the form $\frac{1}{\alpha(n)}$, where $\alpha$ is an inverse Ackermann function \cite{FK91, Pol12}.
Apart from being interesting in its own right, studying parallel repetition of multiplayer games has some applications.
For example, it is known that a strong parallel repetition theorem for a particular class of multiplayer games implies super-linear lower bounds for Turing machines in the non-uniform model \cite{MR21}.
Also (as mentioned by \cite{DHVY17}), the technical limitations that arise when analyzing games with more than two players seem very similar to the ones we encounter when studying direct sum and direct product questions for multiparty number-on-forehead communication complexity (which is related to lower bounds in circuit complexity).
Therefore, studying parallel repetition for multiplayer games may lead to progress in these areas.

Although we know very little about general multiplayer games, there has been some recent progress on special classes of multiplayer games:
\begin{enumerate}
	\item Dinur, Harsha, Venkat and Yuen \cite{DHVY17} extend the two player techniques of \cite{Raz98, Hol09} and show that any \emph{connected game} satisfies an exponentially small bound on the value of parallel repetition (and this includes all games for which exponentially small bounds were previously known). The class of connected games is defined as follows: Define the graph $\mc{H_\mc G}$,
%(see Definition \ref{defn:conn_graph}), 
whose vertices are the ordered $k$-tuples of questions to the $k$-players, and there is an edge between questions $x$ and $x'$ if they differ in the question to exactly one of the $k$ players, and are the same for the remaining $k-1$ players. The game is said to be \emph{connected} if the graph $\mc H_{\mc G}$ is connected.
	\item The GHZ game \cite{GHZ89} is defined as follows: The referee samples the questions $(x^1, x^2, x^3)$ uniformly at random from $\set{0,1}^3$ such that $x^1\oplus x^2\oplus x^3=0$. The players answer back with $a^1, a^2, a^3\in \set{0,1}$, and are said to win if $a^1\oplus a^2\oplus a^3 = x^1 \lor x^2 \lor x^3$. It has been shown that any game with the same distribution as the GHZ game satisfies an inverse polynomial bound on the value of parallel repetition \cite{HR20, GHMRZ21}.

\end{enumerate}

\subsection{Our Results}

We prove that for every 3-player game, with binary questions and answers and value $<1$, the value of the $n$-fold parallel repetition of the game decays  polynomially fast to 0. 

\begin{theorem}\label{thm:intro_main}
	Let $\mc G$ be a 3-player game such that ${\val(\mc G)<1}$ and each question and answer is in $\{0,1\}$. Then, there exists a constant $c>0$, such that $\val(\mc G^{\otimes n}) \leq n^{-c}$.
\end{theorem}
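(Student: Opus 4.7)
The plan is to leverage all available parallel repetition theorems: the two new results in this paper (polynomial decay for playerwise connected games, and exponential decay for the 3-player anti-correlation game), together with Raz's theorem for 2-player games and the known polynomial decay for games with the GHZ distribution. Since the question space is $\{0,1\}^3$ has only $8$ elements, the support $S = \Supp(Q)$ of the question distribution $Q$ admits finitely many configurations, and I would proceed by a finite case analysis on $S$.

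First, I would check whether $\mc G$ is playerwise connected; if so, the playerwise connected games theorem immediately yields $\val(\mc G^{\otimes n}) \leq n^{-c}$. Otherwise, some player's question (say $x^3$, after relabeling players) is a non-constant deterministic function of the other two players' questions on $S$. I would enumerate all such supports (up to bit-flip and player symmetries) and for each apply one of the following reductions: (a) if a player has a constant question on $S$, absorb that player's strategy to obtain a $2$-player game, handled by Raz's theorem; (b) if two players' questions are deterministically related (one determines the other), merge them into a single super-player with pair-valued answers, again yielding a $2$-player game; (c) if $S$ is the GHZ distribution ($x^1 \oplus x^2 \oplus x^3 = 0$ or $= 1$), apply the known GHZ parallel repetition result; (d) for the remaining irreducible supports---which include the anti-correlation support itself---either apply the anti-correlation theorem directly (after verifying equivalence via symmetries) or pass to a sub-game: for a carefully chosen subset $A \subseteq S$, the conditional game $\mc G|_A$ becomes playerwise connected (or otherwise tractable), and a standard sub-game reduction bounds $\val(\mc G^{\otimes n})$ in terms of $\val((\mc G|_A)^{\otimes m})$ with $m \sim \Binom(n, \Pr[x \in A])$.

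The crux of the proof is the reduction in (d) for supports that are neither the GHZ distribution, the anti-correlation support, nor trivially reducible via (a) or (b). A representative example is $S = \{(0,0,0),(1,1,0),(1,0,1),(1,1,1)\}$, where $x^1 = x^2 \lor x^3$: this support is not playerwise connected, has no constant question, and no two players have correlated questions, so it must be handled by restricting to a sub-support---here, the three tuples with $x^1 = 1$ form a playerwise connected sub-game. The subtlety is that the sub-game's value can equal $1$ even when $\val(\mc G) < 1$; in that case the losing question tuples outside the sub-support are isolated, and simply avoiding them in every round already yields exponential decay by a direct counting argument. Identifying the correct sub-support for each residual $S$ and verifying either $\val(\mc G|_A) < 1$ (so that the playerwise connected or anti-correlation theorem applies) or else handling the degenerate case where only isolated question tuples can be lost is the principal technical challenge.
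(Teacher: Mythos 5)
Your overall architecture --- a finite case analysis on the support $\mc S \subseteq \{0,1\}^3$ up to cube symmetries, with reductions to $2$-player games, the GHZ result, the playerwise connected theorem, and the anti-correlation theorem --- is exactly the paper's strategy, and your observation that failure of playerwise connectivity forces some player's question to be a non-constant function of the other two is a correct way to organize the enumeration. The sub-game reduction in (d) is also sound as far as it goes: conditioning on the set of coordinates whose question lies in $A$, and on the jointly fixed questions outside that set, does bound $\val(\mc G^{\otimes n})$ by an average of $\val((\mc G|_A)^{\otimes m})$ over $m \sim \Binom(n, \Pr[A])$.

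The gap is in the degenerate branch of (d), and it is not a technicality --- it is the entire content of the paper's Section 5. Your representative example is, up to bit flips and a player permutation, the four-point AND support $\{(0,0,0),(0,1,0),(1,0,0),(1,1,1)\}$; the natural choice of $A$ is the three points other than the isolated point $(1,1,1)$, on which the third player's question is constant and the game is effectively a $2$-player game on $\{(0,0),(0,1),(1,0)\}$. When $\val(\mc G|_A)=1$, your claim that ``simply avoiding the isolated losing tuples in every round yields exponential decay by a direct counting argument'' does not hold. $\val(\mc G|_A)=1$ does not mean that every strategy loses on $(1,1,1)$; it means that every strategy winning all of $A$ must answer, on questions $(1,1)$ to players $1$ and $2$, a pair lying in the set $\mc D$ of pairs that lose on $(1,1,1)$, while other strategies may answer outside $\mc D$ at the cost of losing somewhere in $A$. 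In the repeated game the players can interpolate between these two behaviours coordinate by coordinate using their global inputs, and no counting argument rules this out; note in particular that player $1$, upon receiving question $1$ in coordinate $i$, cannot tell whether that coordinate is an $A$-coordinate or a $(1,1,1)$-coordinate. Handling this case is exactly what Theorem~\ref{thm:intro_4points} does, via a genuinely new technique: an inductive conditioning on a product event across players $1$ and $2$ that forces every pair in $\mc D$ to be used with only polynomially small probability in every coordinate, followed by a reduction to a constrained $2$-player game played on a random small subset of the coordinates. A smaller but real gap of the same flavour occurs in your treatment of the Hamming-weight-one support: not every binary-output game on that support is equivalent to the anti-correlation game under symmetries, so ``verifying equivalence via symmetries'' is not enough; the paper needs a separate predicate case analysis (Section~\ref{sec:hm_wt_one}), one branch of which is handled by a concentration bound for $2$-player parallel repetition rather than by Theorem~\ref{thm:anti_corr_par_rep}.
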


In the proof of Theorem~\ref{thm:intro_main}, we show that from the perspective  of studying the behaviour of $\val(\mc G^{\otimes n})$ as a function of $n$, every 3-player game $\mc G$ with binary questions and answers, is equivalent to, or can be reduced to, a game in one of the following five classes:

\begin{enumerate}
\item \label{C1}
{\bf 2-Player Games:} As mentioned above, exponentially small bounds on the value of the parallel repetition of games in this class have been known for a long time.
\item \label{C2}
{\bf Playerwise Connected Games:} This is a new class of games that we define and study in this work and we prove polynomially small bounds on the value of the parallel repetition of games in this class.
\item \label{C3}
{\bf The GHZ Game:} (and other games with the same query distribution): As mentioned above, polynomially small bounds on the value of the parallel repetition of games in this class were recently proved.
\item \label{C4}
{\bf The Anti-Correlation Game:} (and other games with the same query distribution and binary answers):
The 3-player anti-correlation game is defined as follows: The referee samples the questions $(x^1, x^2, x^3)$ uniformly at random from $\set{0,1}^3$ such that $x^1 + x^2 + x^3=2$ (that is, two out of three players are given $1$ as input, and the remaining player is given $0$).
The two players who were given $1$ must produce different outputs in $\{0, 1\}$. We prove exponentially small bounds on the value of the parallel repetition of that game (and all other games with the same query distribution and binary answers).
\item \label{C5}
{\bf Games over the Set of Questions $\{(0,0,0),(0,1,0),(1,0,0),(1,1,1)\}$:}
We prove polynomially small bounds on the value of the parallel repetition of games in this class.
\end{enumerate}

We note that the reduction to these five classes of games works more generally for all 3-player games with binary questions and arbitrary length of answers, except that we need to extend Class~\ref{C4}  so that it contains games with arbitrary length of answers. Note also that for all other classes, the bounds that we have  hold more generally for games with arbitrary length of answers. This  means that improving the bounds that we prove for Class~\ref{C4} so that they hold for arbitrary length of answers
(or even proving weaker, polynomially small, bounds for that case) would imply that Theorem~\ref{thm:intro_main} holds more generally, for games with arbitrary answer length.

We note also that the three new bounds that we prove in this work, the bounds for Class~\ref{C2}, Class~\ref{C4} and Class~\ref{C5}, are each proved by a completely different proof method. Next we elaborate on each of these three classes.

\subsubsection{The Anti-Correlation Game (Class~\ref{C4})}

In the hilarious essay {\em ``Test Your Telepathic Skills''}, Uri Feige tells the fictional story of the {\em ``amazing Tachman family''}, who astonished the team at FEXI {\em (the Foolproof Experiments Institute)} with their telepathic skills, by playing incredibly well the 3-fold parallel repetition of the 3-player anti-correlation game~[Uri Feige, 1995]\footnote{https://www.wisdom.weizmann.ac.il/$\sim$feige/tachman.html (Feige's description of the game is somewhat different than ours and is described below).}.
Feige showed that the value of the 3-player anti-correlation game, played in parallel 3 times, is $\tfrac{2}{3}$, exactly the same as the value of the original game.

More than two decades later, Holmgren and Yang 
proved that while the, so called, {\em non-signaling value}, of the 3-player anti-correlation game is strictly smaller than 1, it does not decrease at all under parallel repetition~\cite{HY19}.
This gave a surprising first example for a total failure of parallel repetition in reducing the value of a game, in any model of multiplayer games.
%strategies for the players.

Hazla, Holenstein and Rao studied games with the same query distribution as the anti-correlation game~\cite{HHR16} and showed barriers on proving parallel repetition theorems for such games using a technique known as the {\em  forbidden subgraph bounds}~\cite{FV02}.

The anti-correlation game can also be presented as a ``pigeonhole-principle'' game, where 2 out of 3 pigeons are chosen randomly and each of them needs to choose 1 out of 2 pigeonholes, without communicating between them, so that the two chosen pigeons end in 2 different pigeonholes. This may occur in situations when 3 players share 2 identical resources (such as 2 communication channels to an external party): Two (randomly chosen) players (out of the three players) need to use one of the two resources each and there is no communication between the players. Another description of the game, the one that was presented by Feige, can be viewed as a matching game: The 3 players try to output 3 different answers $X,Y,Z$, where two of the players, chosen randomly, can only output $Y$ or $Z$ and the remaining player can only output $X$ or $Z$.

Although the 3-player anti-correlation game has been around for more than two and a half decades, no bound on the value of its parallel repetition was previously known (other than Verbitsky's general inverse Ackermann bound on the value of the parallel repetition of every game~\cite{Ver96}). In this work, we prove
that the value of the $n$-fold parallel repetition of the 3-player anti-correlation game decays exponentially fast to 0.
(We also extend this bound to all other games with the same query distribution and binary answers).

\begin{theorem}\label{thm:intro_anticorrelation}
	Let $\mc G$ be the 3-player anti-correlation game (or any other game with the same query distribution and binary answers). Then,
there exists a constant $c>0$, such that  $\val(\mc G^{\otimes n}) \leq 2^{-c n}$.
\end{theorem}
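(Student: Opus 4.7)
The plan is to follow the information-theoretic parallel repetition framework of Raz and Holenstein, adapted to the 3-player setting by exploiting the very restricted combinatorial structure of the anti-correlation game. Encode each query vector as the passive-player index $z_i \in \{1,2,3\}$ (so that $x_i^{z_i} = 0$ and $x_i^j = 1$ for $j \neq z_i$); then player $j$'s input is the random set $T_j = \{i : z_i = j\}$, and the $z_i$'s are i.i.d.\ uniform. The central structural observation is that each single-coordinate winning constraint is a pure 2-player XOR (disagreement) constraint imposed on the pair of active players, with the active pair determined by $z_i$.

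Given strategies $f_1, f_2, f_3$ for $\mc G^{\otimes n}$ of value $w$, let $W$ denote the event that all coordinates are won. The plan is to identify a subset $S \subseteq [n]$ and a ``fresh'' coordinate $i^* \notin S$ such that, after conditioning on $W$ restricted to $S$ together with the revelation of a suitably chosen auxiliary ``dependency-breaking'' variable $\Omega$, the resulting posterior distribution at $i^*$ gives rise to a valid single-coordinate strategy with value strictly greater than $\val(\mc G) = 2/3$. An inductive/telescoping argument over $|S|$ would then give the desired $w \leq 2^{-cn}$.

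The key technical ingredient is the construction of $\Omega$ to neutralize the tripartite correlations that conditioning on the winning event introduces in $(T_1, T_2, T_3)$. A natural candidate for $\Omega$ is to include the values of $z_i$ for $i \in S$ together with additional per-coordinate side information (for instance, revealing one active player's answer on each $i \in S$) designed to reduce the 3-player interaction on coordinates outside $S$ to an essentially 2-player interaction. Once the posterior distribution has this 2-player-reducible structure at $i^*$, the single-coordinate analysis reduces to bounding a 2-player XOR game, which admits standard correlated-sampling and Jensen's-inequality arguments.

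The main obstacle is the control of tripartite correlations: in contrast to the 2-player case, where conditioning on winning preserves approximate product structure, a 3-player conditioning can in principle introduce correlations among $(T_1, T_2, T_3)$ that cannot be broken by revealing information involving only a subset of the players. The saving grace specific to the anti-correlation game is its minimal query support (only three distinct question vectors) together with its binary answer alphabet, which severely limit the rank of any tripartite correlation that can survive the conditioning. Turning this intuitive limitation into a rigorous construction of $\Omega$ with provably small information cost, and a provably $> 2/3$ extracted single-coordinate value, is the technical heart of the proof.
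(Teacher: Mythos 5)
There is a genuine gap: your proposal defers exactly the step that known techniques cannot handle for this game, and the paper's actual proof abandons your approach entirely. You correctly identify that the ``technical heart'' is constructing a dependency-breaking variable $\Omega$ that neutralizes the tripartite correlations introduced by conditioning on winning, but you do not construct it; you only express the hope that the small support and binary alphabet ``severely limit the rank'' of the surviving correlations. This hope runs directly into known barriers that are specific to the anti-correlation game. Its query support $\{(0,1,1),(1,0,1),(1,1,0)\}$ is totally disconnected in the sense of \cite{DHVY17} (no two query vectors agree on the questions to two of the three players), and it is not even playerwise connected, so the conditioning-plus-embedding machinery of \cite{Raz98,Hol09,DHVY17} --- which for three players only yields $P_{R_{-i}\mid E,x_i,y_i,z_i}\approx P_{R_{-i}\mid E,x_i,y_i}$ and needs connectivity to go further --- has no purchase. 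The paper emphasizes that this game is precisely the one for which the non-signaling value does not decrease at all under repetition \cite{HY19} and for which forbidden-subgraph-type arguments are known to fail \cite{HHR16}; these facts are strong evidence that no local embedding argument of the kind you sketch can extract a single-coordinate value above $2/3$. A further concrete difficulty with your final reduction: in the extracted coordinate $i^*$ the identity of the active pair is determined by $z_{i^*}$, which no single player knows, so the residual game is still a genuine 3-player game rather than a 2-player XOR game.

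The paper's proof (Section 4) is a completely different, ``local to global'' argument. It samples two copies $X,\tilde X$ of the first player's input, conditionally independent given $Y$, and shows that winning with probability $\alpha$ forces $f(X)_i=f(\tilde X)_i$ on all but $O(\log(1/\alpha))$ of the coordinates where $X_i=\tilde X_i=1$ (because each disagreement coordinate can only be survived when $Y_i=0$, which happens independently with probability $2/3$). From this approximate self-consistency it extracts, after fixing $X$, $S$, and a block of coordinates $T$ with $|T|\le 5n/6$, a strategy for $\cG^{\otimes n'}$ with $n'\ge n/6$ in which the first player's function is \emph{constant}, yet which still wins all but $O(\log(1/\alpha))$ coordinates with probability $\poly(\alpha)$ (\cref{prop:f-nearly-constant}). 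Finally it shows directly that a strategy with a constant first-player function must lose a constant fraction of coordinates except with exponentially small probability (\cref{prop:f-cant-be-constant}), forcing $\alpha\le e^{-\Omega(n)}$. If you want to pursue your route instead, you would need to actually produce the variable $\Omega$ and prove the single-coordinate value bound; as written, the proposal does not constitute a proof.
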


In light of the above mentioned result by Holmgren and Yang~\cite{HY19}, Theorem~\ref{thm:intro_anticorrelation} also implies an example for a 3-player game where the value of the parallel repetition of the game behaves completely differently for classical strategies versus non-signaling strategies. Namely, while parallel repetition doesn't decrease the non-signalling value of the game at all, it does decrease the classical value of the game exponentially fast to 0.

\paragraph{Techniques:}
The techniques that we use for the proof of Theorem~\ref{thm:intro_anticorrelation} are, to the best of our knowledge, completely new in the context of parallel repetition and are different than the techniques used in all  previous works. In particular, we don't use here the usual {\em embedding paradigm}, that is used in almost all previous works, where one tries to embed a copy of the original game in the set of success of the players on some set of coordinates. Instead, our proof shows a {\em local to global} property of the strategy of each player. Very roughly speaking, we prove that if the players win the parallel repetition game with  sufficiently high probability, then there exists a fixed (large) set of coordinates and a fixed global constant strategy for each of the players, that doesn't depend on the input for the player at all, and such that the global strategies win the parallel repetition game with a sufficiently high probability, on almost all the coordinates in the fixed set of coordinates. This leads to a contradiction since fixed global strategies are, in particular, independent between the different coordinates.
We note also that this is the first inverse exponential bound on the parallel repetition of any 3-player game that is not connected (in the sense of~\cite{DHVY17})\footnote{or reduces to a connected game}.

\subsubsection{Playerwise Connected Games (Class~\ref{C2})}

We define the class of \emph{playerwise connected games} as follows:
For each player $j$, we define the graph $\mc H_{\mc G}^j$, whose vertices are the possible questions for player $j$, and two questions $x^j$ and $x'^j$ are connected by an edge if there exists a vector $y$ of questions for all other players, such that both $(x^j,y)$ and $(x'^j,y)$ are asked by the referee with non-zero probability.
We say that the game is \emph{playerwise connected} if for every $j$, the graph $\mc H_{\mc G}^j$ is connected.

We prove polynomially small bounds on the value of the parallel repetition of any game in this class:

%In this work, we generalize the result of \cite{DHVY17} to a larger class of games which we call \emph{playerwise connected games}, albeit with weaker decay bounds.

\begin{theorem}\label{thm:intro_playerwiseconnected}
	Let $\mc G$ be a \emph{playerwise connected} game such that ${\val(\mc G)<1}$ (with any number of players and any Alphabet size).
Then, there exists a constant $c>0$, such that $\val(\mc G^{\otimes n}) \leq n^{-c}$.
\end{theorem}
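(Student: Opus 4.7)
My plan is to extend the information-theoretic parallel repetition framework of Raz and Holenstein to the multiplayer setting, using the per-player graphs $\mc H_{\mc G}^j$ in the role that 2-player structure plays in Raz's argument (and that global question-graph connectedness plays in the proof of Dinur--Harsha--Venkat--Yuen for connected games). Assume toward contradiction that $\val(\mc G^{\otimes n}) \geq n^{-c}$ for a small constant $c > 0$ to be fixed; the goal is to construct a strategy for the single-shot game $\mc G$ whose value exceeds $\val(\mc G)$, contradicting optimality.

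\textbf{Setup and main steps.} Fix an optimal deterministic strategy $\sigma=(\sigma^1,\dots,\sigma^k)$ for $\mc G^{\otimes n}$, and for a subset $T\subseteq[n]$ of size $m$ (a parameter chosen polylogarithmically in $n$) let $W_T$ denote the event that $\sigma$ wins every coordinate in $T$. A standard chain-rule argument locates a coordinate $i\notin T$ and a ``typical'' conditioning such that, after conditioning on a dependency-breaking variable $R$ built from $(T, X_T, A_T)$ and shared randomness, (a)~the marginal distribution of $X_i$ is close in statistical distance to the original question distribution $Q$, and (b)~the players' single-coordinate inputs $X_i^1,\dots,X_i^k$ are jointly close to a product of their marginals. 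The single-shot strategy is then: on input $x_i^j\sim Q^j$, player $j$ uses shared randomness to sample $R$, samples a completion $X_{-i}^j$ of their input vector from the conditional distribution given $R$, $W_T$ and $X_i^j = x_i^j$, and finally applies $\sigma^j$ to the full sampled input. One then shows that the winning probability on coordinate $i$ exceeds $\val(\mc G)$ by a polynomial amount, yielding the desired contradiction.

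\textbf{Role of playerwise connectedness and main obstacle.} The crux is to argue that the independent samples produced by different players are jointly consistent with the true conditional distribution. Raz's 2-player proof uses the classical correlated sampling lemma; the Dinur--Harsha--Venkat--Yuen argument uses global connectedness of $\mc H_{\mc G}$ to produce consistent joint samples with exponentially small error. Under only playerwise connectedness, I would exploit the following observation: for each player $j$ and any two inputs $x_i^j, \tilde x_i^j$, the graph $\mc H_{\mc G}^j$ contains a path from $x_i^j$ to $\tilde x_i^j$; along each edge there is a witness vector of questions $y$ for the other players, which lets one ``slide'' the conditional distribution of $X_{-i}$ from $X_i^j=x_i^j$ to $X_i^j=\tilde x_i^j$ by a bounded number of local perturbations. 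This is what allows player $j$'s sampling procedure to depend on $x_i^j$ only through such local swaps while keeping the total sampling error small. The principal obstacle is precisely this sampling step: unlike in the connected-game setting, there is no single globally consistent joint sample, and the errors from each of the $k$ players compound along the lengths of paths in the respective $\mc H_{\mc G}^j$'s. Carefully bounding the accumulated error---via, for instance, a potential function argument or a Markov-chain mixing style analysis on each $\mc H_{\mc G}^j$---is what yields the polynomial (rather than exponential) decay rate $n^{-c}$, and is the technical heart of the proof.
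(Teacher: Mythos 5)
Your high-level framework matches the paper's (dependency-breaking variable, closeness of the conditioned marginal on coordinate $i$ to $Q$, embedding a single copy of $\mc G$ into a hard coordinate), and you correctly locate both where playerwise connectedness must enter (chaining along edges of each $\mc H_{\mc G}^j$) and where the difficulty lies (the joint sampling step). However, there is a genuine gap at exactly that step, and the mechanism you gesture at would not go through as described. First, the paper does not condition on the winning event $W_T$; it conditions on \emph{product} events $E=E^1\times\dots\times E^k$ and uses an inductive criterion (Lemma~\ref{lemma:prod_set_hard_coor}) that only requires handling events of polynomially large measure. This product structure is not cosmetic: the edge-sliding argument you describe (that adjacent questions $x_i^j,\tilde x_i^j$ in $\mc H_{\mc G}^j$ induce nearby conditional distributions) is only proved in the paper when one conditions on the single player's event $E^j$ alone (Lemmas~\ref{lemma:same_distr_over_edges} and~\ref{lemma:same_distr_over_planes}); conditioning on a joint event such as $W_T$ destroys the factorization that makes Lemma~\ref{lemma:same_distr_over_planes} true, so the per-player path argument cannot be run directly in your setup.

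Second, and more importantly, your proposal is missing the idea that bridges the per-player statements back to the joint conditioning. The paper's key move is to replace the distribution of coordinate $i$ by $\Gamma$, the product of the marginals of $Q$, so that $(P_{-i}\Gamma)_{X_i\mid r_{-i},E}$ factors as a product over players of terms $P_{X_i^j\mid r_{-i},E^j}$, each of which is controlled by the playerwise-connectedness argument conditioned only on $E^j$; one then recovers $Q$ from $\Gamma$ by conditioning on an auxiliary event $T_i$, and pays a factor $1/P(E)$ to pass from $P_{R_{-i}\mid E^j}$ back to $P_{R_{-i}\mid E}$ (Lemma~\ref{lemma:same_distr_ith_prod}). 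That $1/P(E)$ loss, combined with the inductive criterion's tolerance for polynomially small events, is precisely what produces the $n^{-c}$ bound. The upshot is also that the players need no correlated sampling at all: one shows $P_{R_{-i}\mid x_i,E}\approx P_{R_{-i}\mid E}$, i.e., closeness to a single \emph{global} distribution sampleable from shared randomness, rather than merely closeness across players. Your ``potential function or Markov-chain mixing'' suggestion does not substitute for these steps, and your writeup offers no account of why the decay should be polynomial rather than exponential, which in the actual proof is exactly the $1/P(E)$ factor.
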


Theorem~\ref{thm:intro_playerwiseconnected} gives an inverse polynomial bound on the value of parallel repetition for many games for which the previously best known bound was inverse Ackerman.

Our class of playerwise connected games is related to the above mentioned class of connected games that was studied by Dinur, Harsha, Venkat and Yuen and for which exponentially small bounds were established~\cite{DHVY17}.
Observe that every connected game is also playerwise connected (the graph $\mc H_{\mc G}^j$ is simply the \emph{projection} of the graph $\mc H_{\mc G}$ in the $j$\textsuperscript{th} direction).
The vice-versa is however not true:

\begin{example}\label{ex:player_conn_but_not_conn}
The following  3-player game 
%$\mc G$ 
is playerwise connected, but not connected:
The referee samples $(x,y,z)$ uniformly from
$\mc S = \set{(0,0,0), (1,0,0), (0,1,0), (0,0,1), (1,1,1)}$ 
%$\mc S = \set{(0,0,0), (1,0,0), (0,1,0), (0,0,1), (1,1,1)}$ 
and gives $x,y,z$ to the three players respectively.
The players give answers $a,b,c\in \set{0,1}$ respectively.
The players win if the following condition holds: $a+b+c=1 \iff x+y+z\not=3$.
%\begin{itemize}
%	\item The referee samples $(x,y,z)$ uniformly from
%$\set{(0,0,0), (1,0,0), (0,1,0), (0,0,1), (1,1,1)}$
%%$\mc S = \set{(0,0,0), (1,0,0), (0,1,0), (0,0,1), (1,1,1)}$
%and gives $x,y,z$ to the three players respectively.
%	\item The players give answers $a,b,c\in \set{0,1}$ respectively.
%	\item The players win if the following condition holds: $a+b+c=1 \iff x+y+z\not=3$.
%\end{itemize}
\end{example}

We note that the set $\mc S$ of possible questions, from Example~\ref{ex:player_conn_but_not_conn}, is the only set with 3 players and binary questions  that gives a game that is playerwise connected but not connected (up to a change of names). When the number of players is larger than 3, or the question's Alphabet size is larger than 2, there are many additional examples.

\begin{example}\label{ex:rand_3CNF}
	Fix a random 3-CNF formula $\varphi = (C_1, \dots, C_m)$, with $m$ clauses, over $d$ variables.
	This is generated by sampling $m$ times independently and uniformly from the set of all $(2d)^3 = 8d^3$ possible clauses.
	
	A 3-player game $\mc G$ is defined by this formula $\varphi$ as follows:
	The referee samples $r\in [m]$ uniformly and gives the variables corresponding to the literals in $C_r$ to the 3 players respectively (with each player getting one variable).
	The players answer back values for the variables they get, and the referee declares that the players win if these values satisfy the clause $C_r$.
	
	Then, it is not hard to show (see Appendix \ref{app:rand_3CNF}) that with high probability:
	\begin{enumerate}
		\item If $m = \omega(d)$, the value of the game is close to 7/8, and hence less than 1.
		\item If $m = \omega(d^2\log{d})$, the graph $\mc H_{\mc G}$ is connected, and we get $\val(\mc G^{\otimes n}) = 2^{-\Omega(n)}$ by \cite{DHVY17}.
		Furthermore, if $m = o(d^2)$, the graph $\mc H_{\mc G}$ is not connected, and \cite{DHVY17} is not applicable.
		\item If $m = \omega(d^{1.5}\sqrt{\log d})$, the game $\mc G$ is playerwise connected, and $\val(\mc G^{\otimes n}) = n^{-\Omega(1)}$ by Theorem \ref{thm:intro_playerwiseconnected}.
		Furthermore, if $m = o(d^{1.5})$, the game $\mc G$ is not playerwise connected.
	\end{enumerate}
	
	Note that the $\omega$ and $o$ bounds on $m$ are with respect to $d\to\infty$.
    Once the formula $\varphi$ is fixed, we think of $m$ and $d$ as constants and the $\Omega$ bounds on the value of parallel repetition are with respect to $n\to\infty$.
    
    \end{example}
		
	\begin{remark}
	%Showing our theorem is widely applicable, 
The above example is also interesting when compared to the works on refutation of random 3-CNFs, where different regimes of the parameter $m$ lead to different consequences. It is known that with high probability:
	\begin{enumerate}
		\item If $m=\Omega(d^{1.5})$, there is a polynomial time algorithm for refuting the random 3-CNFs \cite{FO07}.
	    \item If $m=\Omega(d^2/\log{d})$, resolution provides polynomial size witnesses for refutation. On the other hand, it fails to provide short witnesses when $m=O(d^{1.5-\epsilon})$ \cite{CS88}.
	    \item If $m=\Omega(d^{1.4})$, there exist polynomial size witnesses for refutation, based on spectral approach \cite{FKO06}.
	\end{enumerate}
	In both cases, there is a polynomial gap in $d$ between the base assumption $m=\omega(d)$ and the regime of best known results.
	\end{remark}

\paragraph{Techniques:}
Our proof of Theorem~\ref{thm:intro_playerwiseconnected} relies on information-theoretic techniques, extending the ideas of \cite{Raz98, Hol09, DHVY17}. In particular, we use here the usual {\em embedding paradigm} and condition on a {\em dependency breaking event}, as in many previous works.
However, these techniques heavily rely on the game being connected and thus the result of~\cite{DHVY17} applies only to connected games\footnote{or disjoint unions of connected games} and we are not aware of any previous work that applies these techniques to games that are not connected. We hence need to deviate from these techniques at a crucial point. Very roughly speaking, at a crucial place in the proof where the connectivity of the game is necessary, our key idea is to replace the distribution of the game with a connected distribution and we carefully analyze how this change affects the rest of the proof.

\subsubsection{Support  ${\bf \{(0,0,0),(0,1,0),(1,0,0),(1,1,1)\}}$ (Class~\ref{C5})}

We consider 3-player games where the set of possible questions for the 3 players is: $\{(0,0,0),(0,1,0),(1,0,0),(1,1,1)\}$,  and
we prove polynomially small bounds on the value of the parallel repetition of any game in this class:

\begin{theorem}\label{thm:intro_4points}
	Let $\mc G$ be a 3-player game where the possible questions for the 3 players are: $(0,0,0),(0,1,0),(1,0,0),(1,1,1)$, and such that ${\val(\mc G)<1}$ (with any length of answers).
Then, there exists a constant $c>0$, such that $\val(\mc G^{\otimes n}) \leq n^{-c}$.
\end{theorem}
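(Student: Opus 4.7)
The plan is to reduce the theorem to a (heterogeneous) 2-player parallel repetition bound via a case analysis, exploiting the structural feature that on this support player~$3$'s question is forced to equal $x^1 \wedge x^2$, so conditioning on $x^3 = 0$ restricts to a 2-player game between players $1,2$ on the \emph{connected} support $\{(0,0),(0,1),(1,0)\}$.  For each possible answer $b$ of player~$3$, let $\mc G_0^{b}$ denote that 2-player game with predicate $V(x^1, x^2, 0, a^1, a^2, b)$; write $T := \{i : x_i^3 = 1\}$ and $T_j := \{i : x_i^j = 1\}$ for $j \in \{1,2\}$, so $T \subseteq T_1 \cap T_2$ and $|T|$ is concentrated around $n/4$.

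\textbf{Easy case:} Suppose $\val(\mc G_0^{b}) \le 1 - \delta$ for some fixed $\delta > 0$ and every answer $b$.  I would condition on $T$ and on player~$3$'s answer vector, which fixes the predicate on each coordinate $i \in [n] \setminus T$ to some $\mc G_0^{b_i}$ of value $\le 1-\delta$, all on the same connected 2-player question distribution $\{(0,0),(0,1),(1,0)\}$.  Even granting players $1,2$ free access to player~$3$'s input and strategy (which only helps them), a heterogeneous 2-player parallel repetition bound yields winning probability $\le 2^{-\Omega(n - |T|)}$ on $[n] \setminus T$; averaging over $T$ gives $\val(\mc G^{\otimes n}) \le 2^{-\Omega(n)}$, much stronger than $n^{-c}$.

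\textbf{Hard case:} Otherwise $\val(\mc G_0^{b_0}) = 1$ for some player-$3$ answer $b_0$, realized by strategies $(f_1^*, f_2^*)$.  Since $\val(\mc G) < 1$, the pair $(f_1^*(1), f_2^*(1))$ must be ``bad'' at $(1,1,1)$: $V(1,1,1, f_1^*(1), f_2^*(1), b) = 0$ for every player-$3$ answer $b$.  So the hardness of $\mc G$ is carried entirely by the $T$-coordinates, and my plan is to track the ``deviation sets'' $D_j := \{i \in T_j : a_i^j \ne f_j^*(1)\}$: winning on $T$-coordinates forces $T \subseteq D_1 \cup D_2$, while winning on coordinates in $T_j \setminus T$ (where $(x^1, x^2, x^3) \in \{(1,0,0), (0,1,0)\}$) tightly constrains the overlap of $D_j$ with $T_j \setminus T$.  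Since, conditional on player~$j$'s input alone, the set $T$ is a uniformly random half of $T_j$, player~$j$ has no information to discriminate $T$-coordinates inside $T_j$, so deviations are essentially blind guesses.  The main obstacle will be converting this information-theoretic blindness into a quantitative $n^{-c}$ bound on $\val(\mc G^{\otimes n})$, which I would approach via a Raz-style chain-rule argument on the joint distribution of $(T, D_1, D_2)$ conditioned on winning, iteratively extracting a single coordinate on which the strategy must fail with constant probability.
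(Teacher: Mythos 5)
Your "easy case" (every fixing $b$ of player~3's answer gives a 2-player game of value $\le 1-\delta$ on $\{(0,0),(0,1),(1,0)\}$) is sound and yields exponential decay there. The gap is in the hard case, which is exactly where all the work lies. The claim that winning on the coordinates in $T_j\setminus T$ "tightly constrains the overlap of $D_j$ with $T_j\setminus T$" is false in general: the predicate at $(1,0,0)$ may accept several answers of player~1, so deviating from $f_1^*(1)$ there can be completely free. For instance, take $\mc A=\mc B=\{0,1,2\}$, let $(1,1,1)$ accept only the answer pair $(2,2)$, let $(1,0,0)$ accept iff $a^1\in\{0,1\}$, let $(0,1,0)$ accept iff $a^2\in\{0,1\}$, and let $(0,0,0)$ accept everything; then $\val(\mc G)=3/4$, you are in the hard case with $f_1^*\equiv f_2^*\equiv 0$, yet answering $1$ anywhere on $T_1\setminus T$ is free, so $D_1$ is unconstrained off $T$ and the "blind guessing" argument gives nothing. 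More fundamentally, the relevant condition is not deviation from one canonical value-$1$ strategy but whether the \emph{joint} pair $(a_i^1,a_i^2)$ lands in the bad set $\mc D$ at the $T$-coordinates: each player's answer at input $1$ can be spread over several values, each individually consistent with winning the other points, with the pair falling in $\mc D$ only with small but non-negligible (say $n^{-\Theta(1)}$) probability per coordinate. A per-player deviation set cannot capture this joint condition, and your closing sentence defers the entire quantitative treatment of this regime to an unspecified "Raz-style chain rule."

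For contrast, the paper attacks this inside the repeated game rather than via a dichotomy on the base game. It inductively conditions on product events $E^1\times E^2$ that eliminate, one at a time, any coordinate where the players' answers at $(1,1)$ fall in $\mc D$ with probability $\ge n^{-4\delta}$ (if this recurs $n^{5\delta}$ times the value is already exponentially small). Afterwards it extracts, for each coordinate, product sets $\mc A_i\times\mc B_i$ of frequently-used answers that are provably disjoint from $\mc D$, and reduces to a 2-player game on $\{(0,0),(0,1),(1,0)\}$ whose predicate \emph{additionally} requires answering from $\mc A_i$ (resp.\ $\mc B_i$) on input $1$. That augmented game has value $<1$ (Claim~\ref{claim:restr_2p_game_value}); this is the correct replacement for your claim about $T_j\setminus T$. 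Two-player repetition on a random small subset of coordinates, plus a union bound over the rarely-used answers, then gives the inverse-polynomial bound. To salvage your route you would need to replace "deviation from $(f_1^*,f_2^*)$" by "using a pair in $\mc D$" and supply a mechanism, such as the paper's conditioning induction, for controlling the players' marginal answer distributions at input~$1$.
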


Theorem~\ref{thm:intro_4points} is necessary for the proof of Theorem~\ref{thm:intro_main} and we believe that the proof technique is interesting and might be useful for other games.

\paragraph{Techniques:}
The techniques that we use for the proof of Theorem~\ref{thm:intro_4points} are, to the best of our knowledge, new.
Very roughly speaking, we consider the possible pairs of answers $(a,b)$ by Player~1 and Player~2 on questions $(1,1)$ for these two players. We distinguish between pairs $(a,b)$ for which Player~3 has an answer $c$ such that the referee accepts the answers $(a,b,c)$ on questions $(1,1,1)$ and pairs $(a,b)$ for which Player~3 has no answer $c$ such that the referee accepts the answers $(a,b,c)$ on questions $(1,1,1)$. Intuitively, if the pair $(a,b)$ is of the second type, that is, no answer $c$ causes the referee to accept on $(1,1,1)$, the pair $(a,b)$ cannot be used too often by Player~1 and Player~2, and we are able to make this intuition precise by conditioning on a carefully and inductively defined, but possibly exponentially small, product event between the inputs of Player~1 and Player~2. Very roughly speaking, when all pairs $(a,b)$ of the second type are used with negligible (polynomially small) probability when conditioning on our product event, we are able to essentially reduce the parallel repetition game to parallel repetition of a 2-player game with value $< 1$, played by Player~1 and Player~2 on a subset of the coordinates and conditioned on the product event between the inputs of Player~1 and Player~2 that we defined. We then rely on the fact that bounds for 2-player games also hold when conditioning the inputs of the two players on a product event between the two players. The final bound that we obtain is inverse polynomial, rather than inverse exponential, because we must take into account the answers $(a,b)$ of the second type that are still used with polynomially small probability. We do that using a union bound and it's crucial here that the 2-player game that we reduce to is only played on a small subset of the coordinates so that we can apply a union bound over these coordinates.

\section{Overview}

\subsection{Organization}

For the problem of parallel repetition for all three-player games on binary alphabets, the following results were known prior to our work (\Cref{sec:previous_bounds}).

\begin{enumerate}
	\item For three-player games in which there are some two players whose inputs are in a bijective correspondence, we may treat these players as identical, and thus reduce the problem to showing parallel repetition for two-player games. The Parallel Repetition Theorem of \cite{Raz98} shows that parallel repetition decreases the value of two-player games exponentially fast (\Cref{thm:2p_par_rep}).
	\item There is a class of games known as {\it connected} ({or \it expanding}) games for which~\cite{DHVY17} showed an exponential decay on the parallel repetition value (\Cref{thm:conn_par_rep}). A $k$-player game is said to be connected if the $(k-1)$-connection graph is connected. This graph is defined as follows: the vertices are the elements in the support of the query distribution and the edges are between every pair of elements that agree on the questions to all but one player (\Cref{defn:conn_graph}).
	\item For any game (with value less than one) for which the query distribution has support \\$\{(0,0,0), (1,1,0), (1,0,1), (0,1,1)\}$,~\cite{HR20, GHMRZ21} showed that parallel repetition decreases the value at least polynomially fast (\Cref{thm:ghz_par_rep}). 
\end{enumerate}

In this work, we study all three-player binary-alphabet games that do not fall into the above categories. In \Cref{sec:three_player_binary}, we classify all such games. It turns out that there are essentially three such classes of games.
\begin{enumerate}
	\item Games whose query distribution has support $\{(0,1,1),(1,0,1),(1,1,0)\}$. Of these games, the {anti-correlation game} is the most interesting one. In this game, the players who receive one need to output distinct bits (\Cref{def:anti_corr_game}). We prove an exponential decay on the parallel repetition value of this game in \Cref{sec:anti_corr_game} (\Cref{thm:anti_corr_par_rep}). In \Cref{sec:hm_wt_one}, we show that this implies a similar result for all games with binary outputs whose query distribution has the same support as the anti-correlation game (\Cref{thm:hm_wt_one_par_rep}). 
	\item Games whose query distribution has support $\{ (0,0,0), (0,1,0),(1,0,0),(1,1,1)\}$. We refer to the uniform distribution on these four points as the four-point AND distribution. In \Cref{sec:four_point_game}, we show that parallel repetition for such games decreases the value at least polynomially fast (\Cref{thm:four_point_par_rep}). We remark that our result holds even if the answers are from an arbitrary alphabet. 
	\item Games whose query distribution has support $\{ (0,0,0),(0,0,1),(0,1,0),(1,0,0),(1,1,1)\}$. Such games fall into an even more general class of games which we call {\it playerwise connected} games; these are $k$-player games in which the projection of the $(k-1)$-connection graph on every player is connected (\Cref{defn:conn_player_graph}). We show in \Cref{sec:playerconn_games}  that parallel repetition for this class of games decreases the value at least polynomially fast (\Cref{thm:playerwise_conn_par_rep}). We remark that our result holds even if the answers are from an arbitrary alphabet. 
\end{enumerate}

\subsection{The Anti-Correlation Game}
%\jnote{I think this is ``complete'', but still not that good / somewhat rushed.  I will try to improve it more if I have time} \unote{Probably not too important now , but we used $\tilde{X}$ in the proof while we are using $X'$ here.}
In the three-player anti-correlation game $\cG$, a random pair of players are given $1$ as input, and the remaining player is given $0$. To win, the two players who are given $1$ must produce different outputs in $\{0, 1\}$, and the output of the player who is given $0$ does not matter. We present an overview of the proof of \Cref{thm:anti_corr_par_rep} which shows that the parallel repetition of the anti-correlation game decreases the value exponentially fast. The details are presented in \Cref{sec:anti_corr_game}.

Let $Q$ denote the joint input distribution for all the players in the game $\cG$, and let $X$, $Y$, and $Z$
respectively denote the first, second, and third player's inputs in the game $\mc G^{\otimes n}$.  
Let $f, g, h : \{0,1\}^n \to \{0,1\}^n$ be any strategy that wins the $n$-wise repeated game $\cG^{\otimes n}$
with probability $\alpha > 0$.  Our goal is to prove that $\alpha \le e^{-\Omega(n)}$. 

The players' inputs are fully determined by the inputs of any pair of players by the equation $X_i + Y_i + Z_i = 2$ for all $i \in [n]$.  We will say ``$(f, g, h)$ wins on $(x, y)$'' as short-hand for ``$(f, g, h)$ wins on $(x, y, z)$ where $z_i = 2 - x_i - y_i$ for each $i$''.

\paragraph{Winning Implies Self-Agreement on Correlated $X$, $X'$:}
We first consider a distribution in which $X,X', Y$ are random variables with both $(X, Y)$ and $(X', Y)$ distributed like $Q_{X,Y}^{\otimes n}$, and with $X$ and $X'$ conditionally independent given $Y$.  More explicitly, this distribution is sampled as follows:
\begin{enumerate}
    \item Sample an $n$-bit string $Y$ according to $Q_Y^{\otimes n}$. That is, for each $i \in [n]$ independently sample $Y_i = 1$ with probability $2/3$, and $Y_i = 0$ otherwise.
    \item Independently sample $X$ and $X'$ from the conditional distribution of the first player's input in $\cG^{\otimes n}$ given that the second player's input is $Y$.  That is, for each $i \in [n]$, if $Y_i = 0$, set $X_i = X'_i = 1$.  Otherwise, independently sample $X_i, X'_i \gets \{0,1\}$ uniformly at random.
\end{enumerate}

By the assumption that $(f, g, h)$ wins with probability $\alpha$, we know that $(f, g, h)$ wins on $(X, Y)$ with probability $\alpha$, and $(f, g,h )$ wins on $(X', Y)$ with probability $\alpha$.  Because of how $(X, Y)$ and $(X', Y)$ are correlated, we show that $(f, g, h)$ must simultaneously win on both $(X, Y)$ and $(X', Y)$ with probability at least $\alpha^2$.  Thus,
\begin{equation}
\label{eq:conditional-win-prob}
\Pr_{X,X'} \Big [ \Pr_Y \big [ \text{$(f, g, h)$ wins on $(X, Y)$ and on $(X', Y)$} \big ] \ge \alpha^2 / 2 \Big ] \ge \alpha^2/2.
\end{equation}
Now suppose that $X$ and $X'$ are such that $\Pr_Y \big [ (f, g, h) \text{ wins on $(X, Y)$ and on $(X', Y)$} \big ] \ge \alpha^2 / 2$.
We have $\Pr[Y_i = 1|X, X'] = 1/3$ for $i$ such that $X_i = X'_i = 1$, and all such $Y_i$ are conditionally independent given $X$, $X'$.  Also, if $X_i = X'_i = 1$ and $f(X)_i \neq f(X')_i$, then the only way $(f, g, h)$ can win on $(X, Y)$ and on $(X', Y)$ is if $Y_i = 0$, because if $Y_i = 1$ then the win conditions require that $f(X)_i \neq g(Y)_i \neq f(X')_i$.  Combining these two facts implies that $f(X)_i \neq f(X')_i$ for at most $\log_{2/3}(\alpha^2 / 2) = O(\log (1 / \alpha))$ coordinates $i$ with $X_i = X'_i = 1$.

% The correlation between $X$ and $X'$ in $\cC$ in fact takes the following form:
% $X$ is first sampled from $Q_X^{\otimes n}$.  Then, for each $i \in [n]$, $X'_i$
% is defined to be equal to $X_i$ with some constant probability, and otherwise
% $X'_i$ is freshly sampled from $Q_X$.  In particular, let $S$ be a random
% variable denoting the set of $i$ for which $X'_i$ is defined\footnote{We do not
%   include $i$ for which $X'_i$ was freshly resampled from $Q_X$, and only
%   incidentally turned out equal to $X_i$.} as equal to $X_i$.  We elaborate on
% why such a correlation is possible.  \jnote{todo}

This shows that when $X$, $X'$, and $Y$ are sampled as above, it holds with $\poly(\alpha)$  probability that:
\begin{enumerate}
\item (Approximate Self-Agreement): For all but at most $O(\log 1 / \alpha)$ values of $i \in [n]$, if $X_i = X'_i = 1$ then $f(X)_i = f(X')_i$.  
\item (Winning): $(f, g, h)$ wins on $(X, Y)$ and on $(X', Y)$.  \end{enumerate}

A similar argument gives an analogous statement, where the Winning property is replaced by a property that we call Winning', requiring that $(f, g, h)$ has probability at least $\alpha / 2$ of winning in $\cG^{\otimes n}$ conditioned on the first player's input being $X'$ (\Cref{claim:P-good}).

We say that $(X, X')$ is \emph{good} if it satisfies Approximate Self-Agreement and Winning'. 
\paragraph{Constructing a strategy for $\cG^{\otimes n'}$:}
We next use the fact that $(X, X')$ is good with $\poly(\alpha)$ probability to show that if there exists a strategy $(f, g, h)$ that wins $\cG^{\otimes n}$ with probability $\alpha$, then there exists a strategy $(f', g', h')$ for $\cG^{\otimes n'}$ (with $n' \ge \Omega(n)$) such that:
\begin{itemize}
    \item $f'$ is a constant function, and
    \item $(f', g', h')$ wins in all but $O(\log (1 / \alpha))$ coordinates of $\cG^{\otimes n'}$ with probability $\poly(\alpha)$.
\end{itemize}
(See \Cref{prop:f-nearly-constant} for a formal statement.) The main idea is that $(X, X')$ can be equivalently sampled as follows:
\begin{enumerate}
    \item Sample each bit of $X_i$ independently such that $X_i = 1$ with probability $2/3$.
    \item Sample a set $S \subseteq [n]$ by independently including each $i \in [n]$ with probability $1/4$.
    \item  For $i \in S$, set $X'_i = X_i$.  For all other $i$, sample $X'_i$ such that $X'_i = 1$ with probability $2/3$. 
\end{enumerate}
The point of this alternative sampling process is that  conditioned on any value of $X$ and $S$, the distribution of $X'_{-S}$ is $Q_X^{\otimes n - |S|}$.  In contrast, the distribution of $X'$ given $X$ is not $Q_X^{\otimes n}$ because of the correlation between $X$ and $X'$.

We will first condition on random values of $X$, $S$, and $X'_T$, where $T = S \cup \{i : X_i = 0\}$. This ensures that
\[
\Pr_{X,S,X'_T} \Big [ \Pr_{X'_{-T}} \big [ \text{$(X, X')$ is good} \big ] \ge \poly(\alpha) \Big ] \ge \poly(\alpha).
\]  Also, the conditional distribution of $X'_{-T}$ given any values of $X$, $S$, and $X'_T$ is just the first player's input distribution in $\cG^{\otimes n - |T|}$.
This means that we can view $f$ as inducing a first-player strategy $f'$ on $\cG^{n'}$ for $n' = n - |T|$ by fixing the appropriate part of $f$'s input to $X'_{T}$.  Note that $n' = \Omega(n)$ with overwhelming probability.

Part of $(X, X')$ being good means that $(f, g, h)$ has $\poly(\alpha)$ probability of winning on $(X', Y')$ when $Y'$ is sampled from the distribution of the second player's input in $\cG^{\otimes n}$ conditioned on the first player's input being $X'$.  We split the sampling of $Y'$ into two parts: $Y'_T$ and $Y'_{-T}$.  We show that we can sample and fix $Y'_T$, and use it to define $g'$ and $h'$ analogously to $f'$ (for $h'$ implicitly defining $Z'_T$ such that $(Z'_T)_i = 2 - (X'_T)_i - (Y'_T)_i$), such that with $\poly(\alpha)$ probability over the choice of $(X'_{-T}, Y'_{-T})$:
\begin{itemize}
    \item $(f', g', h')$ wins on $(X'_{-T}, Y'_{-T})$, and 
    \item $(X, X')$ is good.  In particular, since $X_{-T}$ is the all-ones string, we have that $f'(X'_{-T})_i = f(X')_i = f(X)_i$ for all but $O(\log 1 /\alpha)$ values of $i \notin T$ for which $(X'_{-T})_i = 1$.
\end{itemize}
%\unote{What is $Y'$?}\jnote{should have just been $Y$, sorry}
This implies that  up to a difference in its outputs for $O(\log 1 / \alpha)$ coordinates, $f'$ might as well be the constant function that always outputs $f(X)_{-T}$.

We could in principle continue onwards, eventually finding a smaller $n''$ (still $\Omega(n)$) such that $\cG^{\otimes n''}$ has a strategy $(f'', g'', h'')$ that wins in all but $O(\log 1 / \alpha)$ coordinates with probability $\poly(\alpha)$, but consists only of constant functions.  This is  a contradiction unless $\alpha \le e^{-\Omega(n)}$.   For simplicity, however, we instead directly show that when $f'$ is a constant function, the strategy $(f', g', h')$ must lose in a constant fraction of the coordinates with all but exponentially small probability (\cref{prop:f-cant-be-constant}).  This implies that $\alpha$ is $e^{-\Omega(n)}$.  %This relies on parallel repetition bounds for 2-player games.  In particular, fixing the first player's output to a constant reduces the three-player game to a two-player game played by the other two players.  This resulting two-player game is a parallel repetition\footnote{Actually, it is a direct product of games that are not all the same, but existing bounds apply in this setting too.} of  a game with value $2/3$.  Existing bounds imply that  with all but $e^{-\Omega(n)}$ probability, it is not possible to win more than a $3/4$ fraction of these games.  This means that $\alpha$ must be $e^{-\Omega(n)}$.

\subsection{Four-Point AND Distribution}

We present a technical overview of the proof of  \Cref{thm:four_point_par_rep} which shows an inverse polynomial bound for the parallel repetition value for the four-point AND distribution. The details can be found in \Cref{sec:four_point_game}. 

We first note the following observation about the set of points $\mc S:=\{(0,0,0),(0,1,0),(1,0,0),(1,1,1)\}$. Let $\mc G$ be any game on $\mc S$ with value less than one and fix any strategy $(f, g, h)$ for the players for one copy of the game $\mc G$. Observe that whenever Charlie receives 1, he knows that the inputs of Alice and Bob are 1. Consider the following two cases.
\begin{itemize}
\item Case A: Suppose the answers of Alice and Bob on input 1 are such that there is {\it no answer} that Charlie can give to satisfy the predicate, then the strategy loses on the input $(1,1,1)$. \item Case B: On the other hand, suppose the answers of Alice and Bob on input 1 are such that {\it there exists an answer for Charlie} that satisfies the predicate, then we can assume that Charlie answers this on input 1. Since the value of the game is less than one and this strategy succeeds on $(1,1,1)$, the strategy must fail on one of the remaining three points $\{(0,0,0),(0,1,0),(1,0,0)\}$. For these points, Charlie's input is fixed to zero and in particular, his answer is also fixed. Therefore, the predicate $V$ when restricted to these inputs, induces a predicate $\tilde{V}$ which depends only on the inputs and outputs of Alice and Bob. The game $\mc G$ thus defines a two-player game on the uniform distribution $\tilde{\mc S}:=\{(0,0),(0,1),(1,0)\}$ with predicate $\tilde{V}$.
\end{itemize} 
Next, we consider the $n$-fold parallel repetition of the game.
\paragraph*{Pre-processing the game.} Let $Q$ be the uniform distribution over $\mc S$ and let $P=Q^{\otimes n}$. We will always maintain a product event of the form $E=E^1\times E^2\times\{0,1\}^n$ on the players inputs where $E^1$ is a subset of Alice's inputs and $E^2$ is a subset of Bob's inputs. We begin by showing that we can assume without loss of generality that conditioned on a large product event across Alice's and Bob's inputs, all coordinates satisfy a property similar to case B with high probability, otherwise, we would get exponential decay of the parallel repetition value. For each $i\in[n]$, let $\tilde{L}_i$ denote the event that Alice and Bob receive 1 in the $i$-th coordinate and their answers are such that {\it there is no answer for Charlie} that satisfies the predicate. Let $\tilde{W}_i$ denote the complement of $\tilde{L}_i$. For $S\subseteq [n]$, let $W_S$ denote the event that the players win the game in coordinate $i$ for all $i\in S$.  Note that whenever $\tilde{L}_i$ happens, the players lose in the $i$-th coordinate. Thus,
\begin{equation}\label{intro:eq} \Pr[W_{[n]}] \le \Pr[\tilde{W}_1]\cdot \Pr[\tilde{W}_2| \tilde{W}_1]\cdots \Pr[\tilde{W}_n| \tilde{W}_{1},\ldots,\tilde{W}_{n-1}].\end{equation} 
While there exists a coordinate $i\in[n]$ such that $\Pr[\tilde{L}_i|E]$ is significant, i.e., $\Pr[\tilde{L}_i| E]\ge 1/n^{4\delta}$ for some small constant $\delta>0$, we will try to {\it condition on $\tilde{W}_i$} %\jnote{should be $\tilde{W}_i$?} 
and proceed. (To ensure that we maintain a product event across Alice and Bob, we will also randomly fix their inputs and answers in the $i$-th coordinate and update $E$ based on this fixing and proceed.\footnote{This can be done since $\tilde{W}_i$ and $\tilde{L}_i$ %\jnote{tildes?}
depend only on the inputs and answers to Alice and Bob in the $i$-th coordinate.}) If this conditioning process happens more than $n^{5\delta}$ times, then \Cref{intro:eq} implies that the probability of winning all coordinates is at most $(1-1/n^{4\delta})^{n^{5\delta}}\le \exp(-\Omega(n^{\delta}))$. It suffices to study the other case, that is, this conditioning process happens at most  $n^{5\delta}$ times. 

\paragraph*{Reduction to a two-player game.} We are left with a product event $E$ of the form $E^1\times E^2\times \{0,1\}^n$ of measure at least $\exp(-\Omega(n^{5\delta}))$ such that for all $i\in [n]$, we have $\Pr[\tilde{L}_i|E]\le 1/n^{4\delta}$ and we will show that the probability of winning all coordinates when the inputs are drawn from $P|E$ is at most $n^{-\Omega(\delta)}$ (essentially \Cref{lemma:4_pt_main_lemma}).  Let  $\mc A_i$ be the set of answers $a_i$ such that with significant probability (namely, more than $1/n^{2\delta}$ probability over $P|E$ %\jnote{conditioned on $E$?}
), Alice's input in $i$-th coordinate is 1 and her output in the $i$-th coordinate is $a_i$. Define $\mc B_i$ for Bob similarly. Since Alice's and Bob's inputs are independent under $P$ and $E$ is a product event across Alice and Bob, and $\Pr[\tilde{L}_i|E]\le 1/n^{4\delta}$, it follows that for every pair of answers of Alice and Bob in $\mc A_i\times  \mc B_i$, there is an answer that Charlie can give so that the predicate is satisfied when all players get 1. Define a product event $G_i$ across Alice and Bob which is true if and only if whenever Alice and Bob get input 1 in the $i$-th coordinate, they answer from $\mc A_i\times \mc B_i$ (\Cref{defn:four_point_G}). A union bound implies that %\jnote{and the product structure of $E$ and the fact that Alice's and Bob's input are independent?  Also shouldn't the right-hand side be $1 - O(n^{-\delta})$?  If I'm wrong can you add more explanation}implies that 
\begin{equation}
	\label{intro:eq3}\Pr[G_i|E]\ge  1- \frac{|\mc A|}{n^{2\delta}}-\frac{|\mc B|}{n^{2\delta }}  \ge 1-O(n^{-\delta}).
\end{equation}

We now randomly fix an input $z\in\{0,1\}^n$ to Charlie. Let $K$ denote the set of coordinates that are zero in $z$, and let $m$ denote $|K|$. 
With all but exponentially small probability, $m=\Omega(n)$. We also randomly fix the inputs $x_{-K},y_{-K}$ to Alice and Bob in coordinates outside $K$.  Pick a random subset $S\subseteq K$ of size $m^{\epsilon}$ for some constant $0<\epsilon<\delta$. We have (in expectation over $z, x_{-K}, y_{-K}$): %\jnote{I thought ``win coordinates in S'' was initially a little unclear, e.g. who is doing the winning ? is it some coordinates in S or all coordinates?  Both quickly resolved with a moment's thought, but maybe better just to define in a sentence the event $W_S$ that the players win in coordinate $i$ for all $i \in S$?}
\begin{align} 
	\label{intro:eq2} \begin{split} 
	\Pr[W_S|E] &\le \Pr[\vee_{i\in S} \neg G_i|E] \\  &+ \Pr[\wedge_{i\in S} G_i \wedge W_S|E,z,x_{-K},y_{-K}]\end{split}\end{align}
The first term $\Pr[\vee_{i\in S} \neg G_i|E]$ is at most $O(n^{-\delta+\epsilon})$ by \Cref{intro:eq3} and a union bound. To analyze the second term, we will define a two-player game $\tilde{\mc G}$ such that the probability of winning the coordinates in $S$ in the $m$-fold parallel repetition of $\tilde{\mc G}$ is exactly the second term in the R.H.S. of \Cref{intro:eq2}. For now, we will define a game $\tilde{\mc G}_i$ for each $i\in[n]$. Although these games can be different for different $i\in[n]$, there are only finitely many possibilities for $\tilde{\mc G}_i$ and we simply restrict our attention to the game that appears in most number of coordinates. The query distribution for $\tilde{\mc G}_i$ is the uniform distribution on $\tilde{\mc S}=\{(0,0),(1,0),(1,1)\}$. The predicate $\tilde{V}_i$ is $\tilde{V}_i(x,y,a,b):= V(x,y,0,a,b,h(z)_i)\wedge (x=1 \implies a\in \mathcal{A}_i )\land (y=1 \implies b\in \mathcal{B}_i $)  (\Cref{def:restr_2p_game}). Note that the value of $\tilde{\mc G}_i$ is less than one (\Cref{claim:restr_2p_game_value}). To see this, given any strategy $(\tilde{f}, \tilde{g})$ for $\tilde{\mc G}$ with value one, there is a strategy $\tilde{h}$ for Charlie on input 1 such $(\tilde{f}, \tilde{g}, \tilde{h})$ succeeds on $(1,1,1)$, since when Alice answers from $\mc A_i$ on input 1 and Bob from $\mc B_i$ on input 1, there is an answer Charlie can give on input 1 to satisfy the predicate. Define the strategy $\tilde{h}$ to output $h(z)_i$ when Charlie receives 0. We know that $(\tilde{f},\tilde{g},\tilde{h})$ must fail on one of the remaining points in $\{(0,0,0),(0,1,0),(1,0,0)\}$. This implies that $(\tilde{f}, \tilde{g})$ falsifies the predicate $\tilde{V}_i$ at the corresponding point in $\tilde{\mc S}$. We use two-player parallel repetition techniques and show that for a random $S\subseteq K$ of size $m^{\epsilon}$, the probability of winning the $m$-fold parallel repetition of the two-player game in coordinates in $S$ is at most $\exp(-\Omega(m^{\epsilon}))$ (see \Cref{lemma:fixed_z_good_event_part1}).  We remark that we are able to apply two-player parallel repetition techniques even though the measure of $E$ could be smaller than $\exp(-\Omega(|S|))$, and this is because the set $S$ was chosen randomly. Thus, the second term in the R.H.S. of \Cref{intro:eq2} is bounded by $\exp(-\Omega(m^{\epsilon}))=\exp(-\Omega(n^{\epsilon}))$. This completes the proof overview.  

%%%%%%%%%%%%%%%%%%%%%%%%%%%%%%%%%%%%%%%%%%%%%%%%%%%%%%%%%%%%%%%%%%%%%%%%%%%%%%%%%%%%%%%%%%

\subsection{Playerwise Connected Games}\label{sec:playerconn_outline}

We present an overview of the proof of \Cref{thm:playerwise_conn_par_rep} which shows an inverse polynomial bound on the parallel repetition value of all playerwise connected games. The details are in \Cref{sec:playerconn_games}. We focus on the case of the uniform distribution $Q$ over $\mc S:=\{(0,0,0),(1,0,0),(0,1,0),(0,0,1),(1,1,1)\}$. Let $P=Q^{\otimes n}$. For a random variable $W$, we use $P_W$ to denote the distribution of $W$ where the probability space is $P$. We use $(X,Y,Z)$ to denote inputs of the three players for the game $\mc G^{\otimes n}$.

Our proof builds on the framework of the Parallel Repetition Theorem from \cite{Raz98,Hol09,DHVY17}. We now describe this framework. Let $\mc G$ be a game whose query distribution is $Q$ and whose value is less than one. Consider its $n$-fold parallel repetition. 
%Using an inductive argument, it suffices to show that for every large {\it product event} $E=E^1\times E^2\times E^3$ across the players' inputs to the $n$-fold game, there is {\it some coordinate} $i \in [n]$ of the $n$-fold game which is hard when the questions are drawn from the distribution $P|E$, that is, when the inputs are drawn from the distribution $P|E$, the probability of winning the game in the $i$-th coordinate is $1-\epsilon$ for some constant $\epsilon>0$. 
Using an inductive argument, it suffices to show that for every large {\it product event} $E=E^1\times E^2\times E^3$ across the players' inputs to the $n$-fold game $\mc G^{\otimes n}$, when the inputs are drawn from the distribution $P|E$, there exists some \emph{hard coordinate} $i \in [n]$, meaning that the probability of winning the game in the $i$-th coordinate is $1-\epsilon$ for some constant $\epsilon>0$.
Since the event $E$ has large measure, it cannot reveal too much information about too many coordinates, and hence the distribution of the marginal of $P|E$ on the $i$-th coordinate is similar to the original distribution $Q$ for most $i\in[n]$. It then suffices to show a way for the players to approximately embed the inputs they receive for the original game $\mc G$, into the $i$-th coordinate of the inputs to the $n$-fold game drawn according to the distribution $P|E$. In order to do this, they need to be able to sample the remaining $n-1$ coordinates of the inputs according to the correct distribution. To do this, as in \cite{Raz98,Hol09,DHVY17}, %\jnote{didn't Raz only consider 2 players?} 
we define a dependency breaking random variable $R$ as follows. The random variable $R$ (Definition \ref{defn:dep_break_rv}) for each copy $i\in [n]$ of the game, independently, chooses two players uniformly at random and samples the inputs to those players according to the distribution induced by $P|E$.  Let $(x_i,y_i,z_i)\in \mathrm{supp}(Q)$. If the players can jointly sample from $P_{R_{-i}|{E,x_i,y_i,z_i}}$, then since $R$ breaks the correlation between the player's inputs, any player can independently sample the rest of their own input given $R_{-i}$. Note that each player only knows one of $x_i,y_i$ and $z_i$ and it is not evident how the players can jointly sample from $P_{R_{-i}|{E,x_i,y_i,z_i}}$. Ideally, one would like to show that  $P_{R_{-i}|{E,x_i,y_i,z_i}}$ is close to some global distribution for all $(x_i,y_i,z_i)\in \mathrm{supp}(Q)$; in such a  case, the global distribution would be $P_{R_{-i}|E}$. We denote this by $P_{R_{-i}|E}\approx P_{R_{-i}|{E,x_i,y_i,z_i}}$. This would mean that the players only need to sample from a global distribution $P_{R_{-i}|E}$ and this can be done using shared randomness.  Prior works \cite{Raz98,Hol09,DHVY17} showed that $P_{R_{-i}|{E,x_i,y_i,z_i}}\approx P_{R_{-i}|{E,x_i,y_i}}$, but it is not clear if this suffices to prove the desired result.

In our work, we show that for the query distribution $Q$, we have $P_{R_{-i}|E}\approx P_{R_{-i}|{E,x_i,y_i,z_i}}$ for all $(x_i,y_i,z_i)\in \mathrm{supp}(Q)$ (\Cref{lemma:ri_same_distr}). (We prove a similar result for all playerwise connected games.) By choosing parameters appropriately and by Bayes' Rule, it will suffice to show that
\begin{equation}\label{intro_five_point_eq0} P_{X_i,Y_i,Z_i|{r_{-i},E}}\approx P_{X_i,Y_i,Z_i|E}\quad\text{ for most }r_{-i}\sim P_{R_{-i}|E}.\end{equation}
The key idea we use is to modify the distribution of $P$ in the $i$-th coordinate as follows. Let $\Gamma$ be a product distribution over $\{0,1\}^3$ such that the marginal on each player's input agrees with that of $Q$. The idea is to consider the distribution $P_{-i}\Gamma$, which is a  product of $n$ independent distributions, where the distribution in the $i$-th coordinate is $\Gamma$ and the distribution in every other coordinate is $Q$. We can recover the original distribution $P$ from $P_{-i}\Gamma$ by conditioning on some event that depends on the $i$-th coordinate (and possibly on some additional randomness). 

Note that  $(P_{-i}\Gamma)_{X,Y,Z|r_{-i}}$ is a product distribution across the inputs of the players, since $R$ is a correlation breaking random variable and since the distribution of $P_{-i}\Gamma$ in the $i$-th coordinate was a product distribution to begin with. Since $E=E^1\times E^2\times E^3$ is a product event, the $i$-th coordinate of $(P_{-i}\Gamma)_{X,Y,Z|r_{-i},E}$ has a product distribution across the players. Finally, since the marginal of $\Gamma$ on any player's input agrees with that of $Q$, we have 
\begin{align}\label{intro_five_point_eq1}\begin{split}  (P_{-i}\Gamma)_{X_i,Y_i,Z_i|{r_{-i},E} }&=  (P_{-i}\Gamma)_{X_i|{r_{-i},E^1}}\times  
(P_{-i}\Gamma)_{Y_i|{r_{-i},E^2}}\times (P_{-i}\Gamma)_{Z_i|{r_{-i},E^3}} \\
&= P_{X_i|{r_{-i},E^1}}\times 
P_{Y_i|{r_{-i},E^2}}\times P_{Z_i|{r_{-i},E^3}} \quad\text{ for all }r_{-i}.\\
\end{split}\end{align} 
To study the distribution $P_{X_i|r_{-i},E^1}$, we will study the distribution $P_{X_i,Y_i,Z_i|{r_{-i},E^1}}$ and show that it is close to $Q$ (\Cref{lemma:Xi_same_distr_cond_ri_only_Ej}). The support of the latter distribution is contained in $\mc S$. Observe that the distribution $P_{Y_i,Z_i|{r_{-i},E^1,X_i=x_i}}$ is exactly the uniform distribution over $\{(y_i,z_i)\in \{0,1\}^2 :(x_i,y_i,z_i)\in \mathrm{supp}(Q)\}$ for all $x_i\in \{0,1\}$.\footnote{To see this, note that the distribution $P_{XYZ|r_{-i}}$ is a product distribution across coordinates where the marginal in the $i$-th coordinate is simply the uniform distribution over $\mc S$. Once we condition on $X_i=x_i$ for any $x_i\in \{0,1\}$, the distribution $P_{Y,Z|{r_{-i},X_i=x_i}}$ is still a product distribution across coordinates and in the $i$-th coordinate is exactly the uniform distribution over $\{(y_i,z_i)\in \{0,1\}^2 :(x_i,y_i,z_i)\in \mc S\}$. If we further condition on $E^1$, it only affects the distribution of inputs in coordinates other than $i$.} This implies that the  probabilities assigned by $P_{X_i,Y_i,Z_i|{r_{-i},E^1}}$ to all points in $\mc S$ of the form $(1,*,*)$ are identical, and similarly, the probabilities assigned to all points in $\mc S$ of the form $(0,*,*)$ are identical (\Cref{lemma:same_distr_over_planes}). To conclude that the probabilities assigned to $(0,0,0)$ and $(1,0,0)$ are close for most $r_{-i}\sim P_{R_{-i}|E^1}$, we use techniques similar to~\cite{Raz98,Hol09,DHVY17} (\Cref{lemma:same_distr_over_edges}). We thus show that the probabilities assigned to all points in $\mc S$ are similar and hence the distribution of $P_{X_i,Y_i,Z_i|{r_{-i},E^1}}$  is close to $Q$ for most $r_{-i}\sim P_{R_{-i}|E^1}$. This along with a similar argument for the second and third terms in the R.H.S. of \Cref{intro_five_point_eq1} (and the fact that the marginals of $\Gamma$ on any player agree with that of $Q$) implies that
\begin{align}
	\label{intro_five_point_eq3} 
	\begin{split}	 P_{X_i|{r_{-i},E^1}}& \approx  (P_{-i}\Gamma)_{X_i}	\quad \text{for most }r_{-i}\sim P_{R_{-i}|E^1} \\	P_{Y_i|{r_{-i},E^2}}& \approx  (P_{-i}\Gamma)_{Y_i}	\quad \text{for most }r_{-i}\sim P_{R_{-i}|E^2} \\P_{Z_i|{r_{-i},E^3}}& \approx  (P_{-i}\Gamma)_{Z_i}	\quad \text{for most }r_{-i}\sim P_{R_{-i}|E^3} \\
	\end{split}
\end{align}
Let us pretend for now that \Cref{intro_five_point_eq3} actually holds for most $r_{-i}\sim P_{R_{-i}|E}$. If so, this, along with \Cref{intro_five_point_eq1} would imply that 
\[ (P_{-i}\Gamma)_{X_i,Y_i,Z_i|{r_{-i},E}} \approx (P_{-i}\Gamma)_{X_i,Y_i,Z_i}\quad \text{ for most } r_{-i}\sim P_{R_{-i}|E}.\]
We now condition both sides of the above equation on an event $T_i$ such $\Gamma|T_i=Q$. Note that $T_i$ depends only on the inputs in coordinate $i$ and some additional shared randomness. (See \Cref{sec:fivepoint_gamma} for details.) This implies that
\[ P_{X_i, Y_i ,Z_i|{r_{-i},E}} \approx P_{{X}_i,
{Y}_i, {Z}_i}\quad \text{ for most } r_{-i}\sim P_{R_{-i}|E}.\]
This along with the fact that the distribution of $P_{X_i,Y_i,Z_i|E}$ is close to $Q$ for most coordinates $i\in[n]$ completes the proof of \Cref{intro_five_point_eq0}, under the {\it incorrect assumption} that the distribution of $r_{-i}$ in \Cref{intro_five_point_eq3} was $P_{R_{-i}|E}$. To fix this, we use the property that for any random variable $G$, we have $P[G=g|E]\le P[G=g|E^1]\cdot \frac{P[E^1]}{P[E]}\le P[G=g|E^1]\cdot \frac{1}{P[E]}$. This allows us to connect probabilities over $P|E$ and probabilities over $P|E^1,P|E^2$ and $P|E^3$ (\Cref{lemma:same_distr_ith_prod}). This is the place where we incur a loss of $\frac{1}{P(E)}$ and as a result, our bound only holds for polynomially large events.

\section{Preliminaries}

Let $\N = \set{1,2,\dots}$ be the set of all natural numbers. For each $n\in \N$, we use $[n]$ to denote the set $\set{1,2,\dots,n}$.

We will mostly follow \cite{Hol09, DHVY17, HR20} for notation.

\subsection{Probability Distributions}
We will use calligraphic letters to denote sets, capital letters to denote random variables and small letters to denote values.

Let $P$ be a distribution (with the underlying \emph{finite} set clear from context). 
For a random variable $X$, we use $P_X$ to denote the distribution of $X$, that is, $P_X(x) = P(X=x)$.
For random variables $X$ and $Y$, we use $P_{XY}$ to denote the joint distribution of $X$ and $Y$.
For an event $E$ with $P(E)>0$, we use $P_{X|E}$ to denote the distribution of $X$ conditioned on the event $E$, given by \[P_{X|E}(x) = \frac{P\brac{X=x\land E}}{P(E)}. \]
Suppose $R$ is a random variable, and $r$ is such that $P_R(r)>0$.
We will frequently use the shorthand $P_{X|r}$ to denote the distribution $P_{X|R=r}$.

Let $P_X$ and $Q_X$ be distributions over a set $\mc X$.
The $L^1$-distance (or $\ell_1$-norm) between $P_X$ and $Q_X$ is defined as $\lonorm{P_X}{Q_X} = \sum_{x\in \mc X}\norm{P_X(x)-Q_X(x)}$.

We will also be using the following facts:
\begin{fact} \label{fact:chernoff} (Chernoff Bounds, see \cite{MU05} for reference)
	Let $X_1,\dots, X_n \in \set{0,1}$ be independent random variables each with mean $\mu$, and let $X = \sum_{i=1}^n X_i$. Then, for all $\delta \in (0,1)$, it holds that 
	\[\Prob{X \leq (1-\delta)\mu n} \leq e^{-\frac{\delta^2 \mu n}{2}},\]
	\[\Prob{X \geq (1+\delta)\mu n} \leq e^{-\frac{\delta^2 \mu n}{3}}.\]
	\[
	\Prob{X - \mu n \ge \delta n
	} \le e^{-2\delta^2 n}.
	\]
\end{fact}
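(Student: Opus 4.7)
The plan is to apply the standard Chernoff method: for each of the three tail bounds, I would bound the moment generating function (MGF) of $X$ using independence, apply Markov's inequality, and then optimize the free parameter. Concretely, for any $t > 0$, Markov's inequality gives $\Pr[X \geq a] \leq e^{-ta}\E[e^{tX}]$, and by independence $\E[e^{tX}] = \prod_{i=1}^n \E[e^{tX_i}] = ((1-\mu) + \mu e^t)^n \leq \exp(\mu n(e^t - 1))$, using $1 + z \leq e^z$. For the lower tail I would use the analogous argument with $t < 0$, again bounding the per-coordinate MGF by $\exp(\mu(e^t - 1))$.

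For the upper tail $\Pr[X \geq (1+\delta)\mu n]$, the plan is to choose $t = \ln(1+\delta) > 0$, which yields the familiar bound $\bigl(\frac{e^\delta}{(1+\delta)^{1+\delta}}\bigr)^{\mu n}$. The routine calculus step $(1+\delta)\ln(1+\delta) - \delta \geq \delta^2/3$ on $\delta \in (0,1)$ then gives the claimed $e^{-\delta^2 \mu n / 3}$. For the lower tail $\Pr[X \leq (1-\delta)\mu n]$, I would symmetrically take $t = -\ln\frac{1}{1-\delta} < 0$, obtain $\bigl(\frac{e^{-\delta}}{(1-\delta)^{1-\delta}}\bigr)^{\mu n}$, and use $(1-\delta)\ln(1-\delta) + \delta \geq \delta^2/2$ on $\delta \in (0,1)$, which improves the constant in the exponent to $1/2$, as stated.

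For the third (Hoeffding-type) bound $\Pr[X - \mu n \geq \delta n] \leq e^{-2\delta^2 n}$, the plan is to invoke Hoeffding's lemma: since each $X_i - \mu \in [-\mu, 1-\mu]$ has mean zero and range $1$, we have $\E[e^{t(X_i - \mu)}] \leq e^{t^2/8}$. Taking the product over $i$, applying Markov to $e^{t(X - \mu n)}$, and optimizing by choosing $t = 4\delta$ yields the desired $e^{-2\delta^2 n}$.

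The main (and only) obstacle is purely calculus — verifying the two elementary inequalities $(1+\delta)\ln(1+\delta) - \delta \geq \delta^2/3$ and $(1-\delta)\ln(1-\delta) + \delta \geq \delta^2/2$ on $(0,1)$, which follow from Taylor expansion or from checking that the difference is zero at $\delta = 0$ and has non-negative derivative on the interval. Since this is a textbook fact with the reference \cite{MU05} provided, I would not reproduce the full calculation but simply cite it.
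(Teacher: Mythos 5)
Your proposal is correct and is exactly the standard Chernoff--Hoeffding argument (MGF bound, Markov, parameter optimization) that the cited reference \cite{MU05} uses; the paper itself gives no proof and simply cites that source. The two calculus inequalities and the choice $t=4\delta$ in the Hoeffding step all check out, so there is nothing to fix.
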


\begin{fact}\label{fact:norm_condition}
Let $P_X$ and $Q_X$ be probability distributions over a set $\mc X$, and let $W\subseteq \mc X$ be an event such that $P_X(W), Q_X(W)>0$.
Then,
\[\lonorm{P_{X|W}}{Q_{X|W}} \leq \frac{2}{Q_X(W)}\cdot \lonorm{P_X}{Q_X}\]
\end{fact}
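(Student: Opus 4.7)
The plan is to reduce the bound on the conditional $L^1$ distance to two separate contributions: one coming from the discrepancy of the normalizing constants $P_X(W)$ and $Q_X(W)$, and one from the pointwise discrepancy of $P_X$ and $Q_X$ on $W$. Let me write $a = P_X(W)$ and $b = Q_X(W)$ for brevity, and observe that by definition $P_{X|W}(x) = P_X(x)/a$ and $Q_{X|W}(x) = Q_X(x)/b$ for $x \in W$ (and both are zero outside $W$), so the quantity to bound is $\sum_{x \in W} \bigl| P_X(x)/a - Q_X(x)/b \bigr|$.

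The key algebraic step is to write
\[
\frac{P_X(x)}{a} - \frac{Q_X(x)}{b} \;=\; \frac{b\,P_X(x) - a\,Q_X(x)}{ab} \;=\; \frac{(b-a)\,P_X(x)}{ab} + \frac{P_X(x) - Q_X(x)}{b},
\]
obtained by adding and subtracting $a\,P_X(x)$ in the numerator. Applying the triangle inequality pointwise and summing over $x \in W$, the first term contributes at most $|b-a|/b$ (since $\sum_{x \in W} P_X(x)/a = 1$), while the second term contributes at most $(1/b) \sum_{x \in W} |P_X(x) - Q_X(x)| \le (1/b)\,\lonorm{P_X}{Q_X}$.

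It remains to bound $|b-a|$. Since $b - a = \sum_{x \in W}(Q_X(x) - P_X(x))$, the triangle inequality gives $|b-a| \le \sum_{x \in W} |P_X(x) - Q_X(x)| \le \lonorm{P_X}{Q_X}$. Combining the two contributions yields $\lonorm{P_{X|W}}{Q_{X|W}} \le (2/b)\,\lonorm{P_X}{Q_X} = (2/Q_X(W))\,\lonorm{P_X}{Q_X}$, as desired.

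This is a short calculation with no real obstacle; the only thing to keep track of is which of the two normalizers to keep in the denominator when splitting the fraction, and choosing $b = Q_X(W)$ (the one appearing on the right-hand side of the target inequality) makes the bookkeeping line up cleanly.
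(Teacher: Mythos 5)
Your proof is correct and follows essentially the same route as the paper's: adding and subtracting $a\,P_X(x)$ in the numerator is exactly the paper's step of inserting the intermediate term $P_X(x)/Q_X(W)$, and both arguments then bound the normalizer discrepancy $|P_X(W)-Q_X(W)|$ by $\lonorm{P_X}{Q_X}$ and the pointwise term by $\lonorm{P_X}{Q_X}/Q_X(W)$. Nothing further is needed.
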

\begin{proof}
	See Appendix~\ref{appendix:probab_facts}.
\end{proof}

\begin{fact} \label{fact:conditional-markov}
	Let $P$ be a probability distribution, and let $X$ be a random variable over a set $\mc X$, and let $E$ be any event with $P(E)>0$.
	Let $\alpha>0$ be arbitrary, and let $\mc T = \set{x\in \mc X : P(E\st x) \geq \alpha\cdot P(E)}.$
	Then, it holds that
	\[ P\brac{X\in \mc T \st E} > 1-\alpha.\]
\end{fact}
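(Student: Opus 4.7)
The plan is to recognize this as a conditional version of Markov's inequality and to prove it by directly bounding $P(X \notin \mc T \land E)$ using the defining property of $\mc T^c$.

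First I would expand $P(X \notin \mc T \land E)$ via the law of total probability as
\[
P(X \notin \mc T \land E) \;=\; \sum_{x \notin \mc T} P(X=x)\, P(E \st X=x),
\]
where the summand is interpreted as $P(X=x \land E)$ (which is $0$ when $P(X=x)=0$, so conditioning on $X=x$ need not be defined in those terms). By the definition of $\mc T$, every $x \notin \mc T$ with $P(X=x)>0$ satisfies $P(E \st X=x) < \alpha \cdot P(E)$. Substituting this bound termwise gives
\[
P(X \notin \mc T \land E) \;<\; \sum_{x \notin \mc T} P(X=x) \cdot \alpha\, P(E) \;=\; \alpha\, P(E)\, P(X \notin \mc T) \;\le\; \alpha\, P(E).
\]
Dividing both sides by $P(E)>0$ yields $P(X \notin \mc T \st E) < \alpha$, i.e., $P(X \in \mc T \st E) > 1-\alpha$, as desired.

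The only subtlety is justifying the strict inequality. If there exists at least one $x_0 \notin \mc T$ with $P(X=x_0)>0$, then the strict inequality $P(E \st X=x_0) < \alpha\,P(E)$ propagates into the sum, making the overall bound strict. Otherwise, every $x \notin \mc T$ has $P(X=x)=0$, so $P(X \notin \mc T \land E)=0$, which is strictly less than $\alpha\,P(E)$ since $\alpha>0$ and $P(E)>0$. Either way the strict inequality holds, completing the argument. There is no real obstacle here; the only thing to be careful about is the edge cases where $P(X=x)$ may be zero for some $x$, which are handled transparently by working with the joint probability $P(X=x \land E)$ rather than the conditional $P(E \st X=x)$.
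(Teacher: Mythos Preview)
Your proof is correct and follows essentially the same approach as the paper: both bound $P(X\notin\mc T\mid E)$ by applying the defining inequality $P(E\mid x)<\alpha\,P(E)$ termwise in the sum over $x\notin\mc T$. You are in fact slightly more careful than the paper about justifying the strict inequality in the edge case where every $x\notin\mc T$ has $P(X=x)=0$.
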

\begin{proof}
	See Appendix~\ref{appendix:probab_facts}.
\end{proof}

\begin{fact} \label{fact:averaging}
	Let $P$ be a probability distribution, and let $X$ be a random variable over a set $\mc X$, and let $E$ be any event.
	Then, there exists $x\in \mc X$ such that $P\brac{E\st X=x} \geq P(E)$.
\end{fact}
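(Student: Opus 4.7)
The plan is to prove this by a standard averaging (probabilistic method) argument: the quantity $P(E)$ itself can be written as a weighted average of the conditional probabilities $P(E \mid X = x)$, and the maximum of a collection of numbers is always at least their weighted average.

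More concretely, first I would restrict attention to the set $\mc X_0 = \set{x \in \mc X : P(X=x) > 0}$, since $P(E \mid X=x)$ is only defined for such $x$. By the law of total probability, we may expand
\[
P(E) \;=\; \sum_{x \in \mc X_0} P(X = x) \cdot P(E \mid X = x).
\]
The right-hand side is a convex combination (the weights $P(X=x)$ for $x \in \mc X_0$ are nonnegative and sum to $1$) of the values $P(E \mid X=x)$.

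It is a basic fact that any convex combination of real numbers is at most their maximum. Therefore there exists some $x^* \in \mc X_0$ with $P(E \mid X = x^*) \geq P(E)$, which is the claim. The only subtlety to mention is handling the case $P(E) = 0$, which is trivial since then \emph{any} $x \in \mc X_0$ works (and indeed in that case the statement is vacuous in spirit); and the case $\mc X_0 = \emptyset$ cannot occur for a genuine probability distribution $P$ on a nonempty finite set.

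No step here is an obstacle; this is a one-line averaging argument and is included in the Preliminaries essentially as a named tool to be invoked later in the paper. The only thing worth being careful about is the trivial bookkeeping around undefined conditionals when $P(X=x) = 0$, which is handled by restricting the sum to the support of $X$.
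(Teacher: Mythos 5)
Your proof is correct and is exactly the standard averaging argument; the paper in fact states this as a \emph{Fact} without any proof (unlike the neighboring facts, it is not proved in the appendix), treating it as immediate. Your handling of the support restriction and the degenerate cases is fine and matches what the authors implicitly rely on.
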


%%%%%%%%%%%%%%%%%%%%%%%%%%%%%%%%%%%

\subsection{Multiplayer Games}
\begin{definition} (Multiplayer Game)
	A $k$-player game $\mc G$ is a tuple $\mc G = (\mc X, \mc A, Q, V)$, where the question set $\mc X = \mc X^1 \times\dots\times \mc X^k$, and the answer set $\mc A= \mc A^1 \times\dots\times \mc A^k$ are finite sets, $Q$ is a probability distribution over $\mc X$, and $V:\mc X\times \mc A\to\set{0,1}$ is a predicate.
\end{definition}

\begin{definition} (Game Value)
	Let $\mc G = (\mc X, \mc A, Q, V)$ be a $k$-player game.
	
	For a sequence $\brac{f^j:\mc X^j\to \mc A^j}_{j\in[k]}$ of functions, define the function $f = f^1\times\dots\times f^k : \mc X \to \mc A$ by $f\brac{x^1,\dots,x^k} = \brac{f^1(x^1), \dots, f^k(x^k)}$. 
	We use the term \emph{product functions} to denote functions $f$ defined in this manner.
	
	The value $\val(\mc G)$ of the game $\mc G$ is defined as 
	\[\val(\mc G) = \max_{f = f^1\times \dots\times f^k}\ \Pr_{X\sim Q}\sqbrac{V(X,f(X))=1},\]
	where the maximum is over all product functions $f = f^1\times \dots\times f^k$. The functions $(f^j)_{j\in[k]}$ are called player strategies.
\end{definition}

\begin{fact}\label{fact:rand_no_help}
	The value of the game is unchanged even if we allow the player strategies to be randomized, that is, we allow the strategies to depend on some additional shared and private randomness. 
\end{fact}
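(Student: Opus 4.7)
The plan is to prove this fact by a standard averaging argument, handling shared randomness and private randomness together (since shared randomness subsumes private randomness). First, I would set up notation: a randomized strategy for player $j$ can be described as a function $f^j : \mc X^j \times \mc R \to \mc A^j$, where $\mc R$ is a finite set from which shared randomness $r$ is drawn according to some distribution $\rho$, and where private randomness is modeled by allowing $f^j$ to further depend on a private string $r^j$ drawn from some distribution $\rho^j$ independent of everything else. Absorbing all the $r^j$ into $r$ (by taking $\mc R$ to be the product of the shared-randomness space and all the private-randomness spaces, with the product distribution), we may assume without loss of generality that the only randomness is shared, and each player's strategy has the form $f^j(x^j, r)$.

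Next, the winning probability of this randomized strategy is
\[
\Pr_{X \sim Q,\, r \sim \rho}\sqbrac{V(X, f^1(X^1,r), \dots, f^k(X^k, r)) = 1} = \E_{r \sim \rho}\sqbrac{\Pr_{X\sim Q}\sqbrac{V(X, f^1_r(X^1), \dots, f^k_r(X^k)) = 1}},
\]
where $f^j_r(\cdot) = f^j(\cdot, r)$ is a deterministic player-$j$ strategy for each fixed $r$. By \Cref{fact:averaging} applied with the ``random variable'' being $r$ and the ``event'' being the winning event, there exists some $r^* \in \mc R$ such that the deterministic product strategy $(f^1_{r^*}, \dots, f^k_{r^*})$ wins with probability at least the expectation above. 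Hence the value of the game under randomized strategies is at most the value under deterministic product strategies, which is by definition $\val(\mc G)$. The reverse inequality is trivial, since any deterministic strategy is a special case of a randomized one.

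There is no real obstacle here; the only subtlety worth flagging is that private randomness can be folded into shared randomness because the players' private strings are mutually independent and independent of $X$, so conditioning on a joint realization of all of them preserves the product structure of the strategy. After that folding, the averaging step finishes the proof.
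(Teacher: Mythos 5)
Your proof is correct and is the standard averaging argument; the paper states \Cref{fact:rand_no_help} without proof, treating it as folklore, and your argument (folding private randomness into shared randomness, then fixing the randomness via \Cref{fact:averaging} while noting that the fixing preserves the product structure of the strategy) is exactly the intended justification.
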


\begin{definition} (Parallel Repetition of a game)
	Let $\mc G = (\mc X, \mc A, Q, V)$ be a $k$-player game. We define its $n$-fold repetition as $\mc G^{\otimes n} = (\mc X^{\otimes n}, \mc A^{\otimes n}, P, V^{\otimes n})$.
	The sets $\mc X^{\otimes n}$ and $\mc A^{\otimes n}$ are defined to be the $n$-fold product of the sets $\mc X$ and $\mc A$ with themselves respectively.
	The distribution $P$ is the $n$-fold product of the distribution $Q$ with itself, that is, $P(x) = \prod_{i=1}^n Q(x_i)$.
	The predicate $V^{\otimes n}$ is defined as $V^{\otimes n}(x, a) = \bigwedge_{i=1}^n V(x_i, a_i)$.
	
	Note that we use the notation $\mc X^{\otimes n}$ instead of the standard notation $\mc X^n$ so as to avoid confusion with the sets $\mc X^1, \dots, \mc X^k$.
\end{definition}

Following the notation in \cite{DHVY17}, we use subscripts to denote the coordinates in the parallel repetition, and superscripts to denote the players. 
For example, for $i\in[n]$ and $j\in[k]$, we will use $x_i^j$ to refer to the question to the $j\ts{th}$ player in the $i\ts{th}$ repetition of the game.
Similarly, $x_i$ will refer to the vector of questions to the $k$ players in the $i\ts{th}$ repetition, and $x^j$ will refer to the vector of questions received by the $j\ts{th}$ player over all repetitions.
We use $x^{-j}$ to refer to the questions to all players except the $j\ts{th}$ player, and use $x_{-i}$ to refer to the questions in all coordinates except the $i\ts{th}$ coordinate.

When we are dealing with 3 player games, we will not be using superscripts to refer to different players, and rather use the notation $\mc G = \brac{\mc X\times \mc Y\times \mc Z,\ \mc A\times \mc B\times \mc C, Q, V}$.
That is, we use $\mc X, \mc Y, \mc Z$ in place of $\mc X^1, \mc X^2, \mc X^3$ and $\mc A, \mc B, \mc C$ in place of $\mc A^1, \mc A^2, \mc A^3$.
%In that case, we will not be using superscripts, and simply use $x, y, z$ to denote the inputs to the players.
%\knote{$x_S, y_S, z_S$}

We will use the notation $A\lesssim B$ (resp. $\gtrsim$) to mean that $A\leq c\cdot B$ (resp. $A\geq c\cdot B$) for some constant $c>0$.
For our purposes, we will allow the constant to depend on the size of the (initial) game being considered, but not on the number of repetitions.

%%%%%%%%%%%%%%%%%%%%%%%%%%%%%%%%%%%%%%%%%%%%%

\subsection{Playerwise Connected Games}

We will particularly be interested in a special class of games, which we refer to as playerwise connected games.
Before we define this class, we recall the following definition:

\begin{definition}\label{defn:conn_graph} ($(k-1)$-connection graph \cite{DHVY17})
	Let $\mc G = (\mc X, \mc A, Q, V)$ be a $k$-player game, and	 let $\mc S \subseteq \mc X$ be the support of $Q$.
	We define its (undirected) $(k-1)$-connection graph $H_\mc G$ as follows.
	The vertex set of $H_\mc G$ is $\mc S$, and there is an edge between $x,y\in \mc S$ if and only if they differ in the question to exactly one of the players.
	That is, $\set{x,y}$ is an edge if and only if there exists $j\in[k]$ such that $x^{-j}=y^{-j}$ and $x^j\not=y^j$.
	
	We say that a game $\mc G$ is \emph{connected} if the graph $\mc H_\mc G$ is connected.
\end{definition}

We will define a game to be \emph{playerwise connected} if the projection of the above graph with respect to each of the players is connected.
This is formally defined as follows:

\begin{definition}\label{defn:conn_player_graph} (Playerwise Connected Game)
Let $\mc G = (\mc X, \mc A, Q, V)$ be a $k$-player game, and	 let $\mc S \subseteq \mc X$ be the support of $Q$.
We assume that for all $j\in [k]$, and for all $x^j\in \mc X^j$, the $j^\textsuperscript{th}$ player is given the question $x^j$ with positive probability.

For every $j\in[k]$, we define the graph $H_\mc G^j$ as the graph with vertex set $\mc X^j$, with an edge between $x^j, y^j\in \mc X^j$ if and only if there exists $x^{-j}\in \mc X^{-j}$ such that both $(x^{-j},x^j)\in \mc S$ and $(x^{-j},y^j)\in \mc S$. 
We say that the game $\mc G$ is \emph{playerwise connected} if $H_\mc G^j$ is connected for each $j\in[k]$.

Note that the assumption on $\mc G$ in the above definition is without loss of generality, because we can simply remove each $x^j$ which occurs with zero probability, without affecting the game in any meaningful way.
\end{definition}

%%%%%%%%%%%%%%%%%%%%%%%%%%%%%%%%%%%%%%%%%%%%%

\subsection{Previously Known Results}
\label{sec:previous_bounds}
We state the known results on parallel repetition that will be useful for us.

\begin{theorem}\label{thm:2p_par_rep}(Parallel Repetition for 2-Player Games \cite{Raz98})
	Let $\mc G = \brac{\mc X\times \mc Y, \mc A\times \mc B, Q, V}$ be a 2-player game such that $\val(\mc G)<1$.
	Then, there exists a constant $c = c(\mc G)>0$ such that for every $n\in \N$, it holds that $\val(\mc G^{\otimes n})\leq 2^{-cn}$.
	
\end{theorem}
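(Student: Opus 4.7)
The plan is to prove the contrapositive by an embedding argument: I assume that $\val(\mc G^{\otimes n}) \geq \alpha$ and exhibit a randomized strategy for the single-copy game $\mc G$ whose winning probability tends to $1$ as $\tfrac{1}{n}\log(1/\alpha) \to 0$. This forces $\alpha \leq 2^{-cn}$ for a constant $c = c(\mc G) > 0$ depending on the gap $1 - \val(\mc G)$. Concretely, fix optimal deterministic strategies $f:\mc X^{\otimes n}\to \mc A^{\otimes n}$ and $g:\mc Y^{\otimes n}\to \mc B^{\otimes n}$ for $\mc G^{\otimes n}$, let $(X,Y)\sim P = Q^{\otimes n}$, and let $W$ be the event that all $n$ coordinates are won, so $P(W) \geq \alpha$. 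The goal is to choose a coordinate $i \in [n]$ and a joint procedure by which the two players, on individual inputs $x, y$ drawn from $Q$, can simulate the distribution of $(X_{-i}, Y_{-i})$ conditioned on $W$ and $(X_i,Y_i) = (x,y)$, then apply $f$ and $g$ internally, and output the $i$-th answer coordinate.

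The key technical device, following Raz and Holenstein, is a dependency-breaking variable $D = (D_1,\dots,D_n)$ together with a mask $M \in \{0,1\}^n$: each $M_j$ is an independent fair coin, and $D_j = X_j$ if $M_j = 0$, $D_j = Y_j$ otherwise. The crucial feature is that conditioned on $(D,M)$, the variables $X$ and $Y$ become independent coordinate-wise. Thus, if the two players can use shared randomness to jointly sample $(D_{-i}, M_{-i})$ from the conditional distribution given $W$, each can then privately complete their own remaining input coordinates from the correct product conditional. Implementing the joint sampling requires showing that, for a typical coordinate $i$, the distributions $P_{D_{-i} M_{-i} \mid W, X_i = x}$ and $P_{D_{-i} M_{-i} \mid W, Y_i = y}$ are both close in $L^1$ to a common distribution $P_{D_{-i} M_{-i} \mid W}$, at which point a correlated-sampling protocol produces a coupling of the two players' samples that agree with high probability.

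The quantitative heart of the argument is an information-theoretic bound via the chain rule: from $P(W) \geq \alpha$ one has $\relent{P_{XYM \mid W}}{P_{XYM}} \leq \log(1/\alpha)$, and this decomposes as a sum over the $n$ coordinates of conditional mutual information terms. Averaging over a uniformly random $i \in [n]$ and applying Pinsker's inequality yields that the relevant $L^1$ distances, together with $\lonorm{P_{X_i Y_i \mid W}}{Q}$, are bounded on average by $O(\sqrt{\log(1/\alpha)/n}) = O(\sqrt{c})$. Composing these estimates, the embedded strategy wins $\mc G$ with probability at least $1 - O(\sqrt{c})$, because the event $W$ forces winning in every coordinate and all approximation errors have been controlled; choosing $c$ small enough that $1 - O(\sqrt{c}) > \val(\mc G)$ then contradicts the definition of $\val(\mc G)$, where I use \Cref{fact:rand_no_help} to justify that randomized strategies suffice. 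The main obstacle I expect is the delicate bookkeeping at the correlated-sampling step: one must define $D$ so that conditioning on it genuinely breaks dependencies and yet is locally samplable by each player from its own input, and then accumulate the Pinsker losses, the correlated-sampling losses, and the distortion in the marginal of $(X_i,Y_i)$ into a single error of size $o(1 - \val(\mc G))$ — this is precisely where the full Raz/Holenstein machinery is required, rather than any naive union bound.
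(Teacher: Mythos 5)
The paper does not actually prove this theorem; it quotes it from \cite{Raz98} (the appendix's Proposition~\ref{prop:2_player_hard_set} sketches the underlying inductive argument). Your proposal attempts to reproduce the Raz--Holenstein proof, and most of your ingredients are the right ones: the dependency-breaking variable $(D,M)$, the chain-rule-plus-Pinsker bound of $O(\sqrt{\log(1/\alpha)/n})$, and correlated sampling to agree on $(D_{-i},M_{-i})$. But there is a genuine gap at the embedding step: you condition on the global winning event $W$, which is \emph{not} a product event across the two players. The dependency-breaking property says $X$ and $Y$ are independent given $(D_{-i},M_{-i},X_i,Y_i)$; this independence does \emph{not} survive further conditioning on $W$, since $W$ depends jointly on $X$ and $Y$ through $f$, $g$ and $V$. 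So after the players have agreed on a sample $u_{-i}$ of $(D_{-i},M_{-i})$, each can only complete its own input from a \emph{local} conditional law, and the resulting joint distribution is the product of the marginals of $P_{X_{-i}Y_{-i}\mid u_{-i},x_i,y_i,W}$ rather than that joint conditional itself; these can be far apart (the event $W$ can force near-perfect correlation between $X_{-i}$ and $Y_{-i}$). Correlated sampling does not rescue this: it lets two parties agree on a single common sample from two close distributions, not draw a correlated \emph{pair} from a non-product joint law.

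The standard repair --- and the reason the actual proof (and Lemma~\ref{lemma:prod_set_hard_coor} together with Proposition~\ref{prop:2_player_hard_set} in this paper) is inductive --- is to never condition on $W$ itself but on a \emph{product} event $E=E^1\times E^2$ obtained by fixing the questions and answers in a small set $S$ of already-won coordinates, e.g.\ $E^1=\set{x: x_S=x_S^0,\ f(x)_S=a_S^0}$ and similarly for $E^2$. Conditioned on such an $E$, the exact identity $P_{X\mid u_{-i},x_i,E}\times P_{Y\mid u_{-i},y_i,E}=P_{XY\mid u_{-i},x_i,y_i,E}$ holds (this is Claim~\ref{claim:dep_break_prop} in the $k$-player setting), the embedding is legitimate, and one concludes that some coordinate $i\notin S$ has $\Pr[W_i\mid E]\le 1-\epsilon$. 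Iterating and multiplying these conditional losses over $\Omega\brac{n/\log\brac{\norm{\mc X}\norm{\mc Y}\norm{\mc A}\norm{\mc B}}}$ rounds yields the exponential bound; the alphabet-dependent cost of fixing the questions and answers in $S$ is exactly what is absent from your accounting. Without this inductive, product-event structure, the strategy you construct does not sample from the distribution you claim it does, and the final ``wins with probability $1-O(\sqrt{c})$'' conclusion is unjustified.
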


\begin{theorem}\label{thm:conn_par_rep} (Parallel Repetition for Connected Games \cite{DHVY17})
	Let $\mc G$ be a \emph{connected} $k$-player game (see Definition~\ref{defn:conn_graph}) such that ${\val(\mc G)<1}$.
	Then, there exists a constant $c = c(\mc G)>0$ such that for every $n\in \N$, it holds that $\val(\mc G^{\otimes n}) \leq 2^{-cn}$.
\end{theorem}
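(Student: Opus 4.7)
My plan is to follow the information-theoretic framework of Raz~\cite{Raz98} and Holenstein~\cite{Hol09}, adapted to the multiplayer setting via the connectivity of $H_{\mc G}$, as in~\cite{DHVY17}. First I would reduce to the case where $Q$ is uniform on its support $\mc S\subseteq \mc X$ (a standard reduction, e.g.~\cite{FV02}). Then I would argue by contradiction: suppose $(f^1,\dots,f^k)$ wins $\mc G^{\otimes n}$ with probability $\alpha \ge 2^{-cn}$ for $c$ a sufficiently small constant depending on $\mc G$. Let $W = \bigwedge_{i\in[n]} W_i$ denote the event that all coordinates are won. By a chain-rule / averaging argument (as in~\cite{Raz98, Hol09}), there exists a set $T\subseteq [n]$ of size $\Theta(n)$ of ``good'' coordinates such that, conditioned on winning in $T$, for a random $i\in [n]\setminus T$ the marginal distribution $P_{X_i\,|\,W_T}$ is $\epsilon$-close to $Q$ in $L^1$, and the conditional success probability in coordinate $i$ is $\ge \val(\mc G)+\epsilon'$ for some $\epsilon'>0$. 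The goal is then to exhibit, for such an $i$, a randomized product strategy for $\mc G$ that achieves value $> \val(\mc G)$, contradicting $\val(\mc G)<1$.

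The core technical construction is a dependency-breaking random variable $R = (R_1,\ldots,R_n)$, where each $R_i$ independently picks a uniformly random player index $J_i \in [k]$ to ``hide'' and sets $R_i = X_i^{-J_i}$ (the questions of all the other $k-1$ players in coordinate $i$). The point is that conditioned on $R_{-i}=r_{-i}$, the players' remaining inputs in coordinates other than $i$ are independent across players, so each player can locally sample them given shared randomness for $r_{-i}$. The crux is to show that $r_{-i}$ can be sampled \emph{jointly} from a fixed distribution independent of $X_i$, i.e.\ that for most $i$,
\[
\E_{x_i\sim Q}\,\bigl\lVert P_{R_{-i}\,|\,W_T, X_i=x_i} - P_{R_{-i}\,|\,W_T}\bigr\rVert_1 \le \epsilon.
\]

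The main obstacle, and the step where connectivity of $H_{\mc G}$ is used, is establishing the above closeness. Following the information-theoretic calculation in~\cite{Raz98, Hol09, DHVY17}, I would first show the \emph{edge} version: for every edge $\{x_i, y_i\}$ of $H_{\mc G}$, i.e.\ for $x_i,y_i\in \mc S$ differing only in player $j$'s coordinate, the distributions $P_{R_{-i}\,|\,W_T, X_i=x_i}$ and $P_{R_{-i}\,|\,W_T, X_i=y_i}$ are close on average over $i$. This is proved by comparing both to the ``hybrid'' distribution $P_{R_{-i}\,|\,W_T, X_i^{-j}=x_i^{-j}=y_i^{-j}}$ and using that $W_T$ together with $\mathrm{Hid}=j$ hides the coordinate-$i$ question of player $j$; the total information gain across coordinates is bounded by $\log(1/\alpha)\le cn$, so averaging gives a per-coordinate bound of $O(\sqrt c)$ via Pinsker. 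Then I would chain along paths in $H_{\mc G}$: since $H_{\mc G}$ is connected and has diameter at most $|\mc S|$, any two questions $x_i,y_i\in \mc S$ are connected by a path of length at most $|\mc S|$, and a triangle-inequality argument yields $P_{R_{-i}\,|\,W_T,X_i=x_i}\approx P_{R_{-i}\,|\,W_T,X_i=y_i}$ with loss only a factor $|\mc S|$, which is a constant depending only on $\mc G$.

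Finally, I would construct the embedded strategy for $\mc G$. Given inputs $(x^1,\ldots,x^k)$ drawn from $Q$, the players pick a random $i\in [n]$ and use shared randomness to sample $r_{-i}$ from (an approximation to) $P_{R_{-i}\,|\,W_T}$; each player $j$ then uses $x^j$ together with $r_{-i}$ and private randomness to sample the remaining $n-1$ coordinates of their input from the conditional distribution $P_{X^j\,|\,R_{-i}=r_{-i}, W_T, X_i^j=x^j}$ (this is a product over players by the dependency-breaking property of $R$ together with the fact that $W_T$ depends only on coordinates outside $i$), and then outputs $f^j$ applied to this full vector, restricted to coordinate $i$. The closeness results above imply that the distribution of inputs across players in coordinate $i$ is $O(\epsilon)$-close to $Q$ and the win probability on coordinate $i$ is $\ge \val(\mc G)+\epsilon'-O(\epsilon)$. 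Choosing $c$ small enough makes $\epsilon$ negligible, producing a strategy beating $\val(\mc G)$ and yielding the desired contradiction. The main obstacle is the edge-closeness lemma above; everything else is bookkeeping around it.
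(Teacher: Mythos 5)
The paper does not prove this theorem at all: it is stated in Section~\ref{sec:previous_bounds} as a previously known result and cited directly from \cite{DHVY17}, so there is no in-paper proof to compare against. Your sketch is a faithful outline of the actual \cite{DHVY17} argument --- reduction to the uniform distribution, the dependency-breaking variable that hides one random player per coordinate, the edge-closeness lemma proved by an information/Pinsker calculation, chaining along paths of $H_{\mc G}$ using connectedness, and the embedding of $\mc G$ into a good coordinate --- so in that sense it takes the same route as the source the paper relies on.

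One imprecision worth flagging: you condition throughout on the raw winning event $W_T$ and assert that $P_{X\,|\,R_{-i}=r_{-i},\,W_T,\,X_i=x_i}$ is a product over players because ``$W_T$ depends only on coordinates outside $i$.'' That justification is not sufficient: $W_{i'}$ for $i'\in T$ depends on \emph{all} players' full inputs (through $f^j(x^j)_{i'}$), so $W_T$ is not a product event across players and conditioning on it destroys the product structure needed for the local sampling step. The standard fix --- used in \cite{Raz98,Hol09,DHVY17} and mirrored in this paper's Lemma~\ref{lemma:prod_set_hard_coor}, where one conditions on a typical transcript of questions and answers in $T$ (a product event that determines $W_T$), or equivalently folds that transcript into $R$ --- is needed to make your embedding step go through. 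With that repair the outline is correct.
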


\begin{theorem}\label{thm:ghz_par_rep} (Parallel Repetition for The GHZ Game \cite{HR20, GHMRZ21})
	Let $\mc G = (\mc X\times \mc Y\times \mc Z,\ \mc A \times \mc B \times \mc C,\ Q, V)$ be a 3-player game with $\mc X = \mc Y = \mc Z = \set{0,1}$, and $Q$ the uniform distribution over $\mc S = \set{(0,0,0), (1,1, 0), (1,0,1), (0,1,1)} = \set{(x,y,z)\in \set{0,1}^3 : x\oplus y \oplus z = 0 }$, and such that $\val(\mc G)<1$.
	Then, there exists a constant $c = c(\mc G)>0$, such that for every $n\in \N$, it holds that $\val(\mc G^{\otimes n})\leq n^{-c}$. 
\end{theorem}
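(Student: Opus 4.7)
The plan is to adapt the information-theoretic framework of \cite{Raz98, Hol09, DHVY17} to the GHZ setting. The fundamental obstacle is that the GHZ game is not playerwise connected: because $x \oplus y \oplus z = 0$ on the support, fixing any two players' inputs uniquely determines the third, so each player's connection graph $H_{\mc G}^j$ has no edges at all. Consequently, the standard dependency-breaking variable of \cite{Hol09, DHVY17} --- which reveals two players' $i$-th inputs --- degenerates completely, because it also reveals the third player's input and leaves no residual randomness to embed a fresh copy of $\mc G$.

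I would first apply the usual inductive reduction: it suffices to show that for every product event $E = E^1 \times E^2 \times E^3$ on the $n$-fold players' inputs with $P(E) \geq n^{-c'}$, there is some coordinate $i \in [n]$ where the marginal game under $P|E$ has value at most $\val(\mc G) + \epsilon < 1$. To bypass the connectivity obstacle, I would use a modified dependency-breaking variable $R$ whose $i$-th component $R_i$ reveals a uniformly chosen one of the three XORs $X_i \oplus Y_i$, $X_i \oplus Z_i$, $Y_i \oplus Z_i$ (together with the index of which XOR was selected). Under $Q$ each such XOR is a uniform bit, and exactly one bit of residual entropy in $(X_i, Y_i, Z_i)$ remains, which is precisely what is needed to embed a fresh copy of $\mc G$.

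The core technical task is then to prove that for a random coordinate $i$ and a typical $r_{-i} \sim P_{R_{-i} | E}$, the conditional distribution $P_{X_i Y_i Z_i | r_{-i}, E}$ is $L^1$-close to $Q$. By Bayes' rule this reduces to showing $P_{R_{-i} | E, x_i y_i z_i} \approx P_{R_{-i} | E}$ for most $(x_i, y_i, z_i) \in \mathrm{Supp}(Q)$. Following the strategy sketched in Section~\ref{sec:playerconn_outline}, I would first handle each single-player conditioning $P_{R_{-i} | E^j, x_i y_i z_i}$ using a mutual-information chain-rule argument, exploiting the fact that each $R_i$ carries only one bit per coordinate, and then combine these using the inequality $P[\cdot | E] \leq P[\cdot | E^j] / P(E)$. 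Once this is established, the players embed $\mc G$ in coordinate $i$ by using shared randomness to sample $r_{-i}$ from $P_{R_{-i} | E}$, after which each player samples their remaining $n - 1$ inputs from the (now product across players) conditional distribution.

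The hardest step, and the source of the polynomial (rather than exponential) decay, is controlling the single-player conditionings: since each $R_i$ is a function of two players' inputs, conditioning on an event $E^j$ that concerns only the $j$-th player's $n$-fold inputs need not interact cleanly with $R_{-i}$. I expect this step to require careful Fourier-analytic bookkeeping exploiting the orthogonality of the characters $\chi_{\{1\}}(x)\chi_{\{1\}}(y)$, $\chi_{\{1\}}(x)\chi_{\{1\}}(z)$, $\chi_{\{1\}}(y)\chi_{\{1\}}(z)$ under $Q$, which is the algebraic manifestation of GHZ's symmetry and is precisely what makes each single revealed XOR uncorrelated with the remaining input to any one player. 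As in the playerwise connected proof, the $1 / P(E)$ loss incurred by passing from $E$ to each $E^j$ forces $P(E) \geq n^{-c'}$ in the induction and yields the final $n^{-c}$ bound.
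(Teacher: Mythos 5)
First, a point of context: the paper does not prove \Cref{thm:ghz_par_rep} at all --- it is quoted verbatim from \cite{HR20, GHMRZ21} as a previously known result, and the present work only uses it as a black box in the case analysis of \Cref{sec:three_player_binary}. So there is no in-paper proof to match; what can be assessed is whether your sketch would actually work, and it has a genuine gap at its central step.

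Your modified dependency-breaking variable does not break the dependency. On the support of $Q$ we have $X_i\oplus Y_i=Z_i$, $X_i\oplus Z_i=Y_i$, $Y_i\oplus Z_i=X_i$, so revealing a uniformly chosen one of the three XORs in coordinate $j$ is exactly revealing one uniformly chosen player's input $X_j^{D_j}$. Conditioned on that, the other two players' inputs in coordinate $j$ remain \emph{perfectly correlated} (each determines the other via the revealed bit). Hence $P_{X|r_{-i},x_i,y_i,z_i}$ does \emph{not} factor as a product over the three players, and the analogue of \Cref{claim:dep_break_prop} fails. This kills the embedding step: the players cannot privately sample their remaining $n-1$ coordinates from the conditional distribution given a shared $r_{-i}$, because that conditional distribution is not a product across players. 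The one bit of "residual entropy" you correctly count is a single shared bit between two players, not independent randomness for each; the whole point of the \cite{Hol09,DHVY17} construction is that the residual randomness must be private to individual players. This is precisely the obstruction that makes GHZ resistant to the standard paradigm, and it is why \cite{HR20,GHMRZ21} develop genuinely different arguments (exploiting the pairwise independence of the GHZ distribution and reductions to two-player statements) rather than a modified dependency-breaking variable. The Fourier-orthogonality remark at the end does not address this: orthogonality of the characters controls correlations under the \emph{unconditioned} $Q$, not the product structure of the conditional distribution of the full input vectors given $R_{-i}$ and $E$. Compare with how \Cref{sec:playerconn_outline} handles the five-point support: there, revealing $k-1$ players' inputs per coordinate still leaves private randomness for the remaining player, which is exactly what is unavailable here.
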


%%%%%%%%%%%%%%%%%%%%%%%%%%%%%%%%%%%%%%%%%%%%%

\subsection{Some Results on Multiplayer Games}\label{sec:simplifications}

\subsubsection{Restriction to Uniform Distributions} \label{sec:wlog_unif_dist}

We state a lemma from \cite{FV02}, which shows that it suffices to prove parallel repetition in the case when the game's distribution is the uniform distribution over its support.
For the sake of completeness, we also include a short proof.

\begin{lemma}\label{lemma:wlog_unif_dist}
	Let $\mc G = (\mc X, \mc A, Q, V)$ be a $k$-player game such that $\val(\mc G)<1$, and	 let $\mc S \subseteq \mc X$ be the support of $Q$.
	Let $\tilde{\mc G} = (\mc X, \mc A, U, V)$,  where $U$ is the uniform distribution over $\mc S$.
	Let $v:\N\cup\set{0}\to[0,1]$ be the function defined by $v(n) = \val(\tilde{\mc G}^{\otimes n})$, for every $n\in \N\cup\set{0}$, with the convention $v(0)=1$.
	Then, 
	\begin{enumerate}[label=(\alph*)]
		\item $v(1) = \val(\tilde{\mc G}) < 1$.
		\item There exists a constant $\beta>0$ such that $\val(\mc G^{\otimes n}) \leq 2v(\lfloor\beta n\rfloor)$ for all $n\in \N$.
	\end{enumerate}
\end{lemma}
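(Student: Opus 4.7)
For part (a), I would argue by contradiction. If $v(1) = 1$, there exists a product strategy $f = f^1 \times \cdots \times f^k$ with $V(s, f(s)) = 1$ for every $s \in \mathrm{supp}(U) = \mc S$; since $\mathrm{supp}(Q) = \mc S$ as well, the same $f$ wins $\mc G$ with probability $1$ under $Q$, contradicting $\val(\mc G) < 1$.

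For part (b), the plan is to realize $Q$ as a mixture containing $U$ and then reduce $\mc G^{\otimes n}$ to $\tilde{\mc G}^{\otimes m}$ for $m = \lfloor \beta n \rfloor$. Set $q_{\min} = \min_{s \in \mc S} Q(s) > 0$ and $\lambda = |\mc S| \cdot q_{\min} \in (0, 1]$, so that
\[
Q \;=\; \lambda U + (1 - \lambda)\, Q',
\]
where $Q' := (Q - \lambda U)/(1 - \lambda)$ is a probability distribution on $\mc S$ (the $Q'$ part being vacuous if $\lambda = 1$). Fixing a strategy $\sigma$ that achieves $\val(\mc G^{\otimes n})$, I would build $\tilde\sigma$ for $\tilde{\mc G}^{\otimes m}$ via shared randomness (justified by Fact~\ref{fact:rand_no_help}) as follows: the players jointly sample iid $I_1, \ldots, I_n \sim \mathrm{Bernoulli}(\lambda)$; if $\sum_j I_j < m$ they output arbitrarily; otherwise they let $J \subseteq [n]$ be the first $m$ indices with $I_j = 1$, place the $\tilde{\mc G}^{\otimes m}$-inputs $\tilde x_1, \ldots, \tilde x_m$ at positions in $J$ via the natural order-preserving bijection, jointly sample $x_j \sim U$ for $j \notin J$ with $I_j = 1$, jointly sample $x_j \sim Q'$ for $j$ with $I_j = 0$, and finally each player runs $\sigma$ on its share of the simulated input and outputs the $J$-indexed answers.

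The crucial property is that the marginal distribution of the simulated input is \emph{exactly} $Q^{\otimes n}$: conditionally on $I$ the coordinates are independent with $x_j \sim U$ when $I_j = 1$ and $x_j \sim Q'$ when $I_j = 0$, and averaging over $I$ gives $\lambda U + (1-\lambda) Q' = Q$ at each coordinate independently. Hence $\sigma$ wins on all $n$ coordinates with probability $\val(\mc G^{\otimes n})$, and $\tilde\sigma$ wins whenever that event occurs \emph{and} the failure event $|I| < m$ does not. For $\beta \leq \lambda / 2$, Chernoff (Fact~\ref{fact:chernoff}) bounds $\Pr[|I| < m] \leq e^{-cn}$ for some $c = c(\lambda) > 0$, yielding
\[
\val(\mc G^{\otimes n}) \;\leq\; v(\lfloor \beta n \rfloor) + e^{-cn}.
\]

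To absorb the additive term, I would assume $\val(\mc G) > 0$ (otherwise the conclusion is trivial), so that some $s^* \in \mc S$ is winnable by a single-copy product strategy, giving $v(1) \geq U(s^*) = 1/|\mc S|$ and hence $v(m) \geq |\mc S|^{-m}$ by independent repetition. Shrinking $\beta$ further so that $\beta \log |\mc S| < c/2$, I obtain $|\mc S|^{-m} \geq e^{-cn/2}$, hence $e^{-cn} \leq v(\lfloor \beta n \rfloor)$ for all large $n$, and conclude $\val(\mc G^{\otimes n}) \leq 2\, v(\lfloor \beta n \rfloor)$ (small $n$ are handled by further shrinking the constant). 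The only real subtlety is to ensure the simulated distribution equals $Q^{\otimes n}$ \emph{unconditionally} rather than merely conditionally on non-failure, which is why the "extra" coordinates with $I_j = 1$ that fall outside $J$ must actually be sampled from $U$ via shared randomness instead of being dropped; once this is set up correctly, the rest is a Chernoff bound plus a crude trivial lower bound on $v(m)$.
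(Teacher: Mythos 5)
Your proposal is correct and follows essentially the same route as the paper: decompose $Q = \lambda U + (1-\lambda)Q'$, use independent coins to locate $\lfloor \beta n\rfloor$ coordinates distributed according to $U$, embed $\tilde{\mc G}^{\otimes m}$ there while jointly sampling the remaining coordinates via shared randomness, apply a Chernoff bound to the number of heads, and absorb the resulting exponentially small additive error using a crude lower bound on $v(m)$ (the paper uses $v(m)\ge v(1)^m$ where you use $v(m)\ge |\mc S|^{-m}$, and your explicit treatment of the degenerate case $\val(\mc G)=0$ is if anything slightly more careful). The only cosmetic difference is the direction of presentation: the paper conditions on the coin vector $Z$ and bounds $\val(\mc G^{\otimes n})\le\sum_m \Pr[|Z|=m]\,v(m)$, whereas you explicitly construct the strategy for $\tilde{\mc G}^{\otimes m}$ from one for $\mc G^{\otimes n}$ — these are the same argument.
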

\begin{proof}
	See Appendix~\ref{appendix:multiplayer_game_results}.
\end{proof}

\subsubsection{A Restriction on Predicates} \label{sec:pred_for_deter_inp}

We show that to prove parallel repetition, it suffices to assume that the game has the following property:
If some input $y^j$ for the $j\ts{th}$ player completely determines the input $y$ to all the players, then on input $y$, the game's predicate does not depend on the answer $a^j$ given by the $j\ts{th}$ player.

A recursive application of the next lemma shows that we can assume the aforementioned property.

\begin{lemma}\label{lemma:pred_for_deter_inp}
	Let $\mc G = (\mc X, \mc A, Q, V)$ be a $k$-player game.
	Suppose $y\in \mc X, j\in [k]$ are such that $y$ is the unique input with $Q(y)>0$ that has $y^j$ as the input to the $j\ts{th}$ player.
	Then, there exists a predicate $V'$ such that the game  $\mc G' = (\mc X, \mc A, Q, V')$ satisfies:
	\begin{enumerate}[label=(\alph*)]
		\item For every $a, b\in \mc A$ with $a^{-j} = b^{-j}$, it holds that $V'(y, a) = V'(y, b)$. 
		\item For every $n\in \N$, it holds that $\val(\mc G^{\otimes n})\leq \val(\mc G'^{\otimes n})$.
		\item $\val(\mc G') = \val(\mc G)$.
	\end{enumerate}
\end{lemma}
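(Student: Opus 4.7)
The plan is to define $V'$ as the minimal pointwise relaxation of $V$ that ignores Player $j$'s answer on input $y$. Concretely, I will set $V'(x, a) = V(x, a)$ for every $x \neq y$, and
\[ V'(y, a) \;=\; \bigvee_{\tilde a^j \in \mc A^j} V\brac{y,\ (a^{-j}, \tilde a^j)}. \]
Property (a) is then immediate, since by construction $V'(y, a)$ depends only on $a^{-j}$.

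For part (b), I will observe that $V'(x,a) \geq V(x,a)$ pointwise by construction, and hence also $V'^{\otimes n}(x,a) \geq V^{\otimes n}(x,a)$ pointwise. Therefore any product strategy wins in $\mc G'^{\otimes n}$ with probability at least as large as in $\mc G^{\otimes n}$ on the same inputs, which gives $\val(\mc G^{\otimes n}) \leq \val(\mc G'^{\otimes n})$.

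For part (c), the inequality $\val(\mc G') \geq \val(\mc G)$ again follows from $V' \geq V$ pointwise applied to an optimal product strategy of $\mc G$. The nontrivial direction $\val(\mc G) \geq \val(\mc G')$ is where the uniqueness hypothesis on $y^j$ enters. Given an optimal product strategy $\brac{f^1, \ldots, f^k}$ for $\mc G'$, I will modify only $f^j$, and only at the single input $y^j$: since $y^j$ uniquely identifies $y$ in the support of $Q$, it also determines $y^{-j}$ and hence the tuple $a_0^{-j} := f^{-j}(y^{-j})$. If $V'\brac{y, (a_0^{-j}, f^j(y^j))} = 1$, then by the definition of $V'(y, \cdot)$ there exists some $\tilde a^j \in \mc A^j$ with $V\brac{y, (a_0^{-j}, \tilde a^j)} = 1$, and I redefine $f^j(y^j) := \tilde a^j$; otherwise I leave $f^j(y^j)$ unchanged. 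Because $y^j$ appears only at input $y$ in the support of $Q$, this single-input modification does not affect the strategy on any other input sampled with positive probability, so the modified strategy wins $\mc G$ whenever the original wins $\mc G'$, giving $\val(\mc G) \geq \val(\mc G')$.

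There is no substantial obstacle here; the only step requiring care is verifying that the single-input modification to $f^j$ does not disturb the strategy's behavior on any other input in the support of $Q$, and this is exactly guaranteed by the assumed uniqueness of $y^j$.
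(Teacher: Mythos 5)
Your proposal is correct and matches the paper's proof essentially verbatim: the paper defines $V'(y,a) = \max_{b:\, b^{-j}=a^{-j}} V(y,b)$ (identical to your disjunction over $\tilde a^j$), gets (a) and (b) from the pointwise inequality $V' \geq V$, and proves (c) by noting that player $j$ on input $y^j$ knows the whole of $y$ and hence can be rewired to output an answer witnessing $V'(y,\cdot)=1$. Your explicit check that the single-input modification of $f^j$ cannot disturb any other support point is the same use of the uniqueness hypothesis the paper makes implicitly.
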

\begin{proof}
	See Appendix~\ref{appendix:multiplayer_game_results}.
\end{proof}

\subsubsection{An Inductive Parallel Repetition Criterion} \label{sec:}

We state a parallel repetition criterion from \cite{Raz98, HR20}.
For the sake of completeness, we also include a proof.

\begin{definition}
	Let $\mc G = (\mc X, \mc A, Q, V)$ be a $k$-player game, and let $Q'$ be some distribution over $\mc X$. We define $\mc G\st Q'$ to be the game $\mc G\st Q' = (\mc X, \mc A, Q', V)$.
\end{definition}

\begin{definition}
	Let $\mc G = (\mc X, \mc A, Q, V)$ be a $k$-player game, and $\mc G^{\otimes n} = (\mc X^{\otimes n}, \mc A^{\otimes n}, P, V^{\otimes n})$ be its $n$-fold repetition.
	For each $i\in [n]$, we define the value of the $i\textsuperscript{th}$ coordinate of $\mc G^{\otimes n}$, denoted by $\val_i(\mc G^{\otimes n})$, to be the value of the game $(\mc X^{\otimes n}, \mc A^{\otimes n}, P, V')$, where $V'(x, a) = V(x_i, a_i)$.
\end{definition}

\begin{lemma}\label{lemma:prod_set_hard_coor}
	Let $\mc G = (\mc X, \mc A, Q, V)$ be a $k$-player game, and $\mc G^{\otimes n} = (\mc X^{\otimes n}, \mc A^{\otimes n}, P, V^{\otimes n})$ be its $n$-fold repetition.
	Suppose that there exists a constant $\epsilon >0$, and a non-increasing function $\rho:\N\to[0,1]$ such that $\rho(n)\geq 2^{-O(n)}$, and for every $n\in \N$, and every product event $E=E^1\times \dots \times E^k \subseteq (\mc X^1)^{\otimes n}\times\dots\times (\mc X^k)^{\otimes n} = \mc X^{\otimes n}$ with $P(E)\geq \rho(n)$, there exists a coordinate $i\in [n]$ such that $\val_i(\mc G^{\otimes n}\st(P|E)) \leq 1-\epsilon$.
	Then, there exists a constant $c>0$ such that $\val(\mc G^{\otimes n}) \leq \rho(n)^c$ for every $n\in \N$.
\end{lemma}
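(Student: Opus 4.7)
My plan is to apply the hypothesis iteratively, building a shrinking sequence of product events that captures the winning mass of an optimal strategy. Fix a strategy $\sigma = (f^1, \dots, f^k)$ for $\mc G^{\otimes n}$ achieving value $\alpha := \val(\mc G^{\otimes n})$, and let $W_S$ denote the event that $\sigma$ wins every coordinate in $S \subseteq [n]$. Starting from $E_0 := \mc X^{\otimes n}$, I will inductively construct a nested sequence of product events $E_0 \supseteq E_1 \supseteq \cdots$ together with distinct coordinates $i_1, i_2, \dots \in [n]$ satisfying the invariants
\[
E_t \;\subseteq\; W_{i_1} \cap \cdots \cap W_{i_t}, \qquad P(W_{[n]} \cap E_t) \;\geq\; \frac{\alpha}{K^t},
\]
where $K$ is the game-dependent constant $|\set{(c,a)\in \mathrm{Supp}(Q)\times \mc A : V(c, a) = 1}|$.

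At step $t$, provided $P(E_{t-1}) \geq \rho(n)$, the hypothesis yields a coordinate $i_t$ with $\val_{i_t}(\mc G^{\otimes n} \,|\, (P|E_{t-1})) \leq 1 - \epsilon$; this $i_t$ is automatically new, since the coordinates in $\set{i_1, \dots, i_{t-1}}$ are won surely under $P|E_{t-1}$ and hence have value $1$. I then write $W_{i_t}$ as the disjoint union of the product events $F_{c,a,i_t} := \set{x : x_{i_t} = c,\ f^j(x^j)_{i_t} = a^j \text{ for all } j}$ over winning configurations $(c, a)$, and by averaging across the at most $K$ summands select one pair $(c^\ast, a^\ast)$ with $P(W_{[n]} \cap E_{t-1} \cap F_{c^\ast,a^\ast,i_t}) \geq P(W_{[n]} \cap E_{t-1})/K$. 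Setting $E_t := E_{t-1} \cap F_{c^\ast, a^\ast, i_t}$ preserves both the product-event structure and the two invariants.

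The final bound then follows by combining two estimates on $P(E_t)$. The inclusion $E_t \subseteq E_{t-1} \cap W_{i_t}$ together with the hypothesis $\Pr[W_{i_t} \,|\, E_{t-1}] \leq 1-\epsilon$ gives, by iteration, the upper bound $P(E_t) \leq (1-\epsilon)^t$; the winning-mass invariant gives the lower bound $P(E_t) \geq P(W_{[n]} \cap E_t) \geq \alpha/K^t$. The iteration may continue at least as long as $\alpha/K^t \geq \rho(n)$, i.e., for $t$ up to $T := \lfloor \log(\alpha/\rho(n))/\log K \rfloor$; at $t = T$, comparing the two bounds yields $\alpha/K^T \leq (1-\epsilon)^T$, and substituting $K^T = \alpha/\rho(n)$ and taking logarithms rearranges to $\val(\mc G^{\otimes n}) = \alpha \leq \rho(n)^c$ for a positive constant $c$ determined by the ratio $\log K / \log(1/(1-\epsilon))$.

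The main obstacle I anticipate is the careful constant-tracking in this last arithmetic step: the combined inequality $\alpha \leq (K(1-\epsilon))^t$ only produces a polynomial bound with positive exponent once $t$ is driven to its maximum admissible value, and the assumption $\rho(n) \geq 2^{-O(n)}$ is invoked precisely to absorb the additive logarithmic corrections and to guarantee that enough iterations can be carried out. Apart from this bookkeeping, the main conceptual point is that at every step the coordinate returned by the hypothesis must be a new one, which follows immediately from the invariant that the strategy $\sigma$ wins all previously chosen coordinates with certainty under the conditional distribution, so their single-coordinate values equal $1$.
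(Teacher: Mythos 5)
Your construction of the nested product events is sound (the events $F_{c,a,i_t}$ are indeed product events, and the observation that the hypothesis must return a \emph{new} coordinate because the previously fixed ones have conditional value $1$ is exactly right), but the final arithmetic does not close, and this is a genuine gap rather than bookkeeping. Your two bounds combine to $\alpha \leq (K(1-\epsilon))^{t}$, and since $K$ is the number of winning question--answer configurations of the base game while $\epsilon$ is a small constant, one has $K(1-\epsilon) \geq 1$ in every nondegenerate case, so this inequality is vacuous for all $t$. Driving $t$ to its maximum $T$ with $K^{T}=\alpha/\rho(n)$ does not rescue it: writing $\beta = \log\brac{1/(1-\epsilon)}/\log K$, the substitution yields $\log(1/\alpha) \geq \frac{\beta-1}{\beta}\log(1/\rho(n))$, whose right-hand side is nonpositive whenever $K \geq 1/(1-\epsilon)$. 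The root cause is that committing to a \emph{single} winning configuration $(c^\ast,a^\ast)$ at each step only guarantees that a $1/K$ fraction of the winning mass survives, so the retained mass $\alpha/K^{t}$ decays strictly faster than the upper bound $(1-\epsilon)^{t}$ on $P(E_t)$, and the two estimates never collide.

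The paper avoids this by never committing to one configuration. It tracks $P(W_{i_1}\land\dots\land W_{i_\ell})$ directly and bounds each conditional factor $P(W_{i_{\ell+1}}\mid W_{i_1}\land\dots\land W_{i_\ell})$ by decomposing the conditioning event over \emph{all} configurations $z_{[\ell]}$ of questions and answers in the chosen coordinates: for each configuration of probability at least $\rho(n)$ the hypothesis gives a factor $1-\epsilon$, and the atypical configurations contribute at most $\rho(n)\cdot\norm{\mc X}^{\ell}\norm{\mc A}^{\ell}/P(W_{[\ell]})$ in total, which is at most $\epsilon/2$ as long as $\ell \lesssim \log(1/\rho(n))/\log\brac{\norm{\mc X}\norm{\mc A}}$ and $P(W_{[\ell]})\geq 2^{-(\ell+1)}$ (otherwise one is already done). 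This replaces your multiplicative loss of $K$ per step by an additive loss of $\epsilon/2$ per step, yielding $\val(\mc G^{\otimes n}) \leq P(W_{[m]}) \leq (1-\epsilon/2)^{m} \leq \rho(n)^{c}$. To repair your argument you would need to make the analogous move: average over all configurations consistent with winning the chosen coordinates when invoking the hypothesis, rather than conditioning on one of them.
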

\begin{proof}
	See Appendix~\ref{appendix:multiplayer_game_results}.
\end{proof}

%%%%%%%%%%%%%%%%%%%%%%%%%%%%%%%%%%%%%%%%%%
\section{The Anti-Correlation Game}\label{sec:anti_corr_game}

In this section, we will focus on the following game.

\begin{definition}[The Anti-Correlation Game]\label{def:anti_corr_game}
  The \emph{anti-correlation game}, which we denote as $\mc G = (\set{0,1}^3,\ \set{0,1}^3, Q, V)$,
  is a 3-player game in which the query distribution $Q$ is uniform over the set $\set{(0,1,1), (1, 0, 1), (1, 1, 0)}$ of strings of hamming-weight $2$, and the win predicate
  $V : \set{0,1}^3 \times \set{0,1}^3 \to \{0,1\}$ is defined so that
  $V\brac{(x,y,z), (a,b,c)} = 1$ if and only if
  $\langle (x,y,z), (a,b,c) \rangle = xa+yb+zc= 1$.

	In words, a random pair of players receive $1$, and these players must produce different bits.

 	If $V\brac{(x,y,z), (a,b,c)} = 1$, we say that $(a,b,c)$ \text{wins on} $(x,y,z)$.
\end{definition}

We will denote this game by  $\cG$, and denote its query distribution by $Q$ (i.e. the uniform distribution on $\big \{(0,1,1), (1, 0, 1), (1, 1, 0) \big \}$.
Observe that the value of this game is $2/3$.
We will show that parallel repetition exponentially decays the value of this game.

\begin{theorem}\label{thm:anti_corr_par_rep} Let $\mc G$ be the anti-correlation game as in \Cref{def:anti_corr_game}. Then, there exists a constant $c>0$ such that for every $n\in \N$, it holds that $\val(\mc G^{\otimes n})\le \exp(-c\cdot n)$.
\end{theorem}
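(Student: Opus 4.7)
The plan is to fix an arbitrary strategy $(f, g, h)$ for $\mc G^{\otimes n}$ winning with probability $\alpha$, and argue by contradiction that $\alpha \leq e^{-\Omega(n)}$. Following the local-to-global paradigm sketched in the overview, the aim is to show that if $\alpha$ is not already exponentially small, then there is a large sub-instance of $\mc G^{\otimes n}$ on which the first player may be replaced by a constant (input-independent) function while still winning almost all coordinates with probability $\poly(\alpha)$. Since a constant first-player strategy forces independence of the players' joint answers across coordinates and each coordinate has win probability at most $\val(\mc G) = 2/3$, a Chernoff bound will drive $\alpha$ down to $e^{-\Omega(n)}$.

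The first step is to establish approximate self-agreement of $f$. I would sample $Y \sim Q_Y^{\otimes n}$ and independently draw $X, X'$ from the conditional distribution $Q_{X \mid Y}^{\otimes n}$, so that $(X, Y)$ and $(X', Y)$ are marginally Alice-Bob distributions. An averaging argument (using that both events have probability $\alpha$) gives that with probability at least $\alpha^2/2$ over $(X, X')$, the strategy $(f, g, h)$ simultaneously wins on $(X, Y)$ and $(X', Y)$ with conditional probability at least $\alpha^2/2$ over $Y$. For indices $i$ with $X_i = X'_i = 1$, one has $Y_i = 1$ with independent probability $1/3$, and the anti-correlation predicate forces $f(X)_i = f(X')_i$ whenever $Y_i = 1$. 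A counting argument then shows that on this good event, $f(X)_i \neq f(X')_i$ on at most $O(\log(1/\alpha))$ coordinates with $X_i = X'_i = 1$. An analogous step yields the stronger Winning$'$ property that conditions on the first player's input being $X'$.

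The second step is a reduction to a sub-game with a nearly-constant first-player strategy. I would use the equivalent two-stage sampling of $(X, X')$: draw $X \sim Q_X^{\otimes n}$, form $S \subseteq [n]$ by including each index independently with probability $1/4$, set $X'_i = X_i$ on $S$ and resample $X'_i \sim Q_X$ on $[n] \setminus S$. Setting $T = S \cup \{i : X_i = 0\}$, the conditional law of $X'_{-T}$ given $(X, S, X'_T)$ is exactly $Q_X^{\otimes n - |T|}$, and $n' := n - |T| = \Omega(n)$ with overwhelming probability. After conditioning on typical values of $(X, S, X'_T)$ and then fixing the second player's input on $T$ (which implicitly fixes $Z'_T = 2 - X'_T - Y'_T$), the restriction of $(f, g, h)$ to $[n] \setminus T$ defines a strategy $(f', g', h')$ for $\mc G^{\otimes n'}$. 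Because $X_{-T}$ is the all-ones string and the good event entails approximate self-agreement, the outputs $f'(X'_{-T})_i$ agree with the fixed bits $f(X)_i$ on all but $O(\log(1/\alpha))$ coordinates, making $f'$ constant up to logarithmically many exceptions.

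Finally, I would argue that a strategy on $\mc G^{\otimes n'}$ in which the first player uses a fixed bit $a$ cannot win all but $O(\log(1/\alpha))$ coordinates with probability greater than $e^{-\Omega(n')}$. Once $f' \equiv a$, whether coordinate $i$ is won is a function of $(Y'_i, Z'_i, g'(Y')_i, h'(Z')_i)$ only, and a direct case analysis over the three input patterns in $\Supp(Q)$ shows that any fixed pair of functions $g', h'$ loses a constant fraction of coordinates in expectation, since the constraint on the player receiving $1$ alongside Alice becomes unilateral. A Chernoff bound then forces an $e^{-\Omega(n')}$ upper bound on the win probability of such a strategy, contradicting the $\poly(\alpha)$ lower bound produced by the reduction and yielding $\alpha \leq e^{-\Omega(n)}$. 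The main technical obstacle I anticipate is the careful orchestration of the chain of conditionings: the typical fixing of $(X, S, X'_T, Y'_T)$ must simultaneously preserve (i) that the residual input on $[n] \setminus T$ is genuinely product-distributed for the induced game $\mc G^{\otimes n'}$, (ii) the self-agreement guarantee that identifies $f'$ with a constant function off a logarithmic set, and (iii) a $\poly(\alpha)$ conditional win probability; threading a single good event through all three requirements is the crux of the argument.
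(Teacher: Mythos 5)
Your first two steps track the paper's proof of \cref{prop:f-nearly-constant} essentially verbatim: the correlated sampling of $(X,X')$ conditioned on $Y$, the $\alpha^2$-level argument giving approximate self-agreement of $f$ together with the Winning$'$ property, and the re-sampling via the random set $S$ and $T = S \cup \{i : X_i = 0\}$ that produces a sub-game $\cG^{\otimes n'}$ with an essentially constant first-player strategy. That part is sound.

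The gap is in your third step. You assert that ``a constant first-player strategy forces independence of the players' joint answers across coordinates,'' and on that basis you apply a Chernoff bound to the per-coordinate win events. This is false: even with $f' \equiv a$, the bits $g'(Y')_i$ and $h'(Z')_i$ are coordinates of arbitrary global functions of the full inputs $Y'$ and $Z'$, so the events ``coordinate $i$ is won'' are not independent, and ``loses a constant fraction in expectation'' does not by itself yield an exponentially small probability of winning all but $O(\log(1/\alpha))$ coordinates. The paper's \cref{prop:f-cant-be-constant} has to work for this: it first pigeonholes on the $\Omega(n)$ won coordinates with $x_i = 0$ (where winning forces $g(y)_i \neq h(z)_i$, hence one of the two equals $a_i$) to find, say, $\Omega(n)$ coordinates $i$ with $z_i = 1$ and $h(z)_i = a_i$; then it \emph{conditions on $z$}, so that the set $I_z$ of such coordinates is fixed and the input bits $X_{I_z}$ are genuinely i.i.d.\ uniform, and winning requires $X_i = 0$ on almost all of $I_z$ --- only then is Chernoff applied, to the input randomness rather than to the answers. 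Your step can alternatively be repaired by observing that once $f' \equiv a$ the residual game is a genuine two-player game between players $2$ and $3$ with value $2/3$, and invoking a concentration bound for two-player parallel repetition (as in \cref{prop:2p_concentration_bd}); but some such device is needed, and as written your argument does not supply one.
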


We will let $Q^{\otimes n}$ denote the query distribution of $\cG^{\otimes n}$, with $X$, $Y$, and $Z$ respectively being random variables corresponding to the first, second, and third player inputs in the game $\mc G^{\otimes n}$.

\subsection{Winning Strategies Imply Partially Constant Winning Strategies}
We first claim that: 
\begin{proposition}
  \label{prop:f-nearly-constant}
  Suppose the value of $\cG^{\otimes n}$ is $\alpha$ for
    $\alpha^4 \ge 32e^{-n / 72}$. Then there exist $n' \ge n / 6$, a string $a \in \{0,1\}^{n'}$, and functions 
  $g', h' : \{0,1\}^{n'} \to \{0,1\}^{n'}$ such that with at least $\alpha^4 / 16$ probability when sampling $(x, y, z) \gets Q^{\otimes n'}$, 
  $(a_i, g'(y)_i, h'(z)_i)$ wins on $(x_i, y_i, z_i)$  for all but $5 \ln(1
  / \alpha) + 5$ values of $i \in [n']$.
\end{proposition}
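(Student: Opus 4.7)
My plan is to use the alternative sampling of $(X, X')$ from the overview to decouple randomness, combined with an averaging argument, to extract a strategy in which the first player's function is effectively constant. Recall that sampling: draw $X \sim Q_X^{\otimes n}$; independently include each $i \in [n]$ in $S$ with probability $1/4$; set $X'_i = X_i$ for $i \in S$ and draw $X'_i$ freshly from $Q_X$ for $i \notin S$. Let $T := S \cup \{i : X_i = 0\}$, so that $X_{-T}$ is the all-ones vector, and conditional on $(X, S, X'_T)$ the restriction $X'_{-T}$ is distributed as $Q_X^{\otimes (n - |T|)}$. Since $\Pr[i \notin T] = (3/4)(2/3) = 1/2$ and the events $\{i \in T\}$ are independent across $i$, Hoeffding's inequality gives $|T| \le 5n/6$ except with probability $e^{-\Omega(n)}$, which is much smaller than $\alpha^4$ by the hypothesis $\alpha^4 \ge 32 e^{-n/72}$. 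I will also sample $Y'$ jointly with $X'$ so that $(X', Y')$ has distribution $Q_{X,Y}^{\otimes n}$, and set $Z'_i = 2 - X'_i - Y'_i$.

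Next I will combine \Cref{claim:P-good} with Winning'. That claim says $(X, X')$ is \emph{good}---meaning it satisfies both approximate self-agreement (at most $5\ln(1/\alpha) + 5$ exceptional coordinates $i$ with $X_i = X'_i = 1$ and $f(X)_i \neq f(X')_i$) and Winning' ($\Pr_{Y'|X'}[(f, g, h)\text{ wins on }(X', Y', Z')] \ge \alpha/2$)---with probability at least $\poly(\alpha)$ over $(X, X')$. Given any good $(X, X')$, the conditional winning probability over $Y'$ is at least $\alpha/2$, so the joint event
\[
\mathcal{E} := \{(X, X')\text{ is good}\} \cap \{(f, g, h)\text{ wins on }(X', Y', Z')\}
\]
has probability at least $\alpha^4/16$ over the full sampling of $(X, S, X', Y')$ (the negligible contribution from $|T| > 5n/6$ being absorbed by the Chernoff slack). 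By averaging (\Cref{fact:averaging}) over the fixing of $(X, S, X'_T, Y'_T)$, there is a particular fixing with $n' := n - |T| \ge n/6$ on which $\Pr[\mathcal{E} \mid \text{fixing}] \ge \alpha^4/16$. Set $a := f(X)_{-T}$ and define
\[
f'(x) := f(X'_T, x)_{-T}, \qquad g'(y) := g(Y'_T, y)_{-T}, \qquad h'(z) := h(Z'_T, z)_{-T},
\]
where $Z'_T$ is determined by $X'_T$ and $Y'_T$. Identify $(x, y, z) \sim Q^{\otimes n'}$ with the residual randomness $(X'_{-T}, Y'_{-T}, Z'_{-T})$, which has exactly this distribution by the key property of the alternative sampling.

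On $\mathcal{E}$, the strategy $(f', g', h')$ wins every coordinate of $\cG^{\otimes n'}$. Replacing $f'$ by the constant $a$ can only cause a failure at a coordinate $i \in [n']$ where $x_i = 1$ and $a_i \neq f'(x)_i$, since on coordinates with $x_i = 0$ Alice's answer is irrelevant to the anti-correlation predicate. By the choice $a = f(X)_{-T}$ and the all-ones structure of $X_{-T}$, these failure coordinates correspond exactly to $i \notin T$ with $X_i = X'_i = 1$ and $f(X)_i \neq f(X')_i$, and approximate self-agreement caps them by $5\ln(1/\alpha) + 5$. Hence $(a, g', h')$ wins all but $5\ln(1/\alpha) + 5$ coordinates on $\mathcal{E}$.

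The main obstacle is the bookkeeping: juggling the several $\alpha$-factors from \Cref{claim:P-good} and Winning' so that they combine to exactly $\alpha^4/16$ after subtracting the Chernoff failure probability, and tracking the constants in the self-agreement argument to land on the precise exception size $5\ln(1/\alpha) + 5$. A simplification that keeps this count tight is that coordinates $i \notin T$ with $X'_i = 0$ contribute no loss when $f'$ is swapped for $a$, so only coordinates with $X_i = X'_i = 1$ need to be controlled---which is exactly the form of the self-agreement bound from \Cref{claim:P-good}.
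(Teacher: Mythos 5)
Your proposal is correct and follows essentially the same route as the paper's proof: the alternative sampling of $(X,X')$ via the set $S$, a Chernoff bound on $|T|$, \Cref{claim:P-good} transferred to this sampling space, an averaging step to fix $(X,S,X'_T,Y'_T)$, and the observation that approximate self-agreement lets $f'$ be replaced by the constant $a=f(X)_{-T}$. The only quibble is that you quote the Winning$'$ guarantee as $\alpha/2$ where \Cref{claim:P-good} actually gives $\alpha^2/4$; with the correct constant the joint event has probability at least $\alpha^4/8$ before the Chernoff loss, exactly as in the paper's \Cref{claim:consistent-and-win}.
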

\begin{proof}
Let $f, g, h : \{0,1\}^n \to \{0,1\}^n$ be a strategy that wins
$\cG^{\otimes n}$ with probability $\alpha$.  

% For $x \in \{0,1\}^n$, let $w(x)$ denote the winning probability of $(f, g, h)$
% conditioned on $X = x$.  That is, define
% \[
%   w(x) \eqdef Q^{\otimes n}\big [ W | X = x \big ]
% \]

We now define a probability space that induces a correlated
distribution on $(X, \tilde{X})$, with both $X$ and $\tilde{X}$ individually
distributed according to $Q_X^{\otimes n}$.

\paragraph{The probability space $\cP$.}
Let $\cP$ be obtained by the following process:
\begin{enumerate}
\item Sample $Y \gets Q_{Y}^{\otimes n}$.
\item Independently sample random variables $(X,Z)$ and $(\tilde{X}, \tilde{Z})$
  such that
  $\cP \big [X = x, Z = z | Y = y \big ] = \cP \big [ \tilde{X} = x, \tilde{Z} =
  z | Y = y \big ] = Q^{\otimes n}_{X,Z | Y}(x,z | y)$.
\end{enumerate}

\begin{claim}
  \label{claim:P-probabilities}
  In $\cP$, $\big \{ (X_i, \tilde{X}_i) \big \}$ are i.i.d., and
  for each $i$, we have
  \begin{align*}
      \cP[X_i = x_i, \tilde{X}_i = \tilde{x}_i] = \begin{cases}
      1/6 & \text{if $(x_i, \tilde{x}_i) \in \big \{(0, 0), (0, 1), (1, 0) \big \}$} \\
      1/2 & \text{if $(x_i, \tilde{x}_i) = (1, 1)$.}
    \end{cases}
  \end{align*}
  % \begin{table}[]
% \begin{tabular}{lllll}
%   Y_i & X_i & \tilde{X}_i & Probability  \\
%   \hline \\
%  0 & 1 & 1 & 1/3  \\
%  1 & 0 & 0 & 1/6 &  \\
%  &  &  &  &
% \end{tabular}
% \end{table}
\end{claim}
\begin{proof}
  If $x_i = 0$ or $\tilde{x}_i = 0$, the only way we can have $X_i = x_i$ and
  $\tilde{X}_i = \tilde{x}_i$ is if $Y_i = 1$.  Thus for such $(x_i,
  \tilde{x}_i)$, we have
  \begin{align*}
    \cP[X_i = x_i, \tilde{X}_i = \tilde{x}_i] &= \cP[Y_i = 1] \cdot \cP[X_i =
                                                x_i, \tilde{X}_i = \tilde{x}_i | Y_i = 1] \\
                                              &= \frac{2}{3} \cdot \frac{1}{4} = \frac{1}{6}.
  \end{align*}

  $\cP[X_i = \tilde{X}_i = 1 | Y_i = 1] = \frac{1}{4}$ as above, but
  additionally $\cP[X_i = \tilde{X}_i = 1 | Y_i = 0] = 1$.  Thus in total
  $\cP[X_i = \tilde{X}_i = 1] = \frac{2}{3} \cdot \frac{1}{4} + \frac{1}{3} = \frac{1}{2}$.
\end{proof}

\begin{definition}[Approximate Consistency]
  \label{def:approx-consistency}
  For any $x, \tilde{x} \in \{0,1\}^n$ and $f : \{0,1\}^n \to \{0,1\}^n$, we say
  that \textdef{$f$ is approximately consistent on $x$ and $\tilde{x}$} if for all
  but $5 \ln(1 / \alpha) + 5$ values of $i \in [n]$, it holds that if $x_i = \tilde{x}_i = 1$, then
  $f(x)_i = f(\tilde{x})_i$.
\end{definition}

  \begin{claim}
    \label{claim:P-good}
    It holds with probability at least $\alpha^2 / 2$ over $(x, \tilde{x}) \gets \cP_{X, \tilde{X}}$ that:
    \begin{itemize}
    \item $f$ is approximately consistent on $x$ and $\tilde{x}$, and
    \item $\cP\Big [\big (f(\tilde{X}), g(Y), h(\tilde{Z}) \big ) \text{
        wins on $(\tilde{X}, Y, \tilde{Z})$}\ \Big | \tilde{X} = \tilde{x} \Big ] \ge \alpha^2/4$.
    \end{itemize}
  \end{claim}
\begin{proof}
  Let $W$ denote the event that $\big (f(X), g(Y), h(Z))$ wins on $(X, Y, Z)$,
  and let $\tilde{W}$ denote the event that
  $\big (f(\tilde{X}), g(Y), h(\tilde{Z}) \big )$ wins on
  $(\tilde{X}, Y, \tilde{Z})$.

  We first have
  \begin{align}
    \cP[W, \tilde{W}] &= \E_{y \gets \cP_Y} \big [ \cP[W, \tilde{W} | Y = y] \big ] \notag \\ &= \E_{y \gets \cP_Y} \big [ \cP[W | Y = y]^2 \big ] && \text{($(X,Z)$ and $(\tilde{X},\tilde{Z})$ conditionally i.i.d. given $Y$)} \notag
    \\& \ge \E_{y \gets \cP_Y} \big [ \cP[W | Y = y] \big ]^2   \notag 
    \\& = \cP[W]^2  \notag
    \\&= \alpha^2 \label{eq:prob-w-tilde-w}.
  \end{align}

  Then, we have
  \begingroup \addtolength{\jot}{0.75em}
  \begin{align*}
    & \cP \left [
    \begin{array}{l}
      \cP[W , \tilde{W} | X, \tilde{X}] \ge \alpha^2/4 \text{ and } \\
      \cP[\tilde{W} | \tilde{X}] \ge \alpha^2/4
    \end{array}
    \right ] \\
    & \ge \cP \left [
    \begin{array}{l}
      \cP[W , \tilde{W} | X, \tilde{X}] \ge \alpha^2/4 \text{ and }\\
      \cP[W, \tilde{W} | \tilde{X}] \ge \alpha^2/4
    \end{array}
    \right ] \\
    & \ge \cP [ W , \tilde{W} ] \cdot    \cP \left [
      \begin{array}{l}
        \cP[W , \tilde{W} | X,
      \tilde{X}] \ge \alpha^2/4 \text{ and }\\ \cP[W , \tilde{W} | \tilde{X}] \ge
        \alpha^2/4
      \end{array}
    \middle | W , \tilde{W} \right ] \\
    & \ge \alpha^2 \cdot    \cP \left [
      \begin{array}{l}
        \cP[W , \tilde{W} | X,
        \tilde{X}] \ge \alpha^2/4 \text{ and }\\ \cP[W , \tilde{W} | \tilde{X}] \ge
        \alpha^2/4
      \end{array}
    \middle | W , \tilde{W} \right ] && \text{(\cref{eq:prob-w-tilde-w})} \\
    & = \alpha^2 \cdot    \left ( 1 - \cP \left  [
      \begin{array}{l} \cP[W , \tilde{W} | X,
        \tilde{X}] < \alpha^2/4 \ \text{ or } \\
        \cP \big [\cP[W , \tilde{W} | \tilde{X}] <
        \alpha^2/4
      \end{array}
    \middle | W , \tilde{W} \right ] \right ) \\
    & \ge \alpha^2 \cdot  \left (1 -2 \cdot \frac{1}{4} \right ) &&
                                           \text{(\cref{eq:prob-w-tilde-w}, \cref{fact:conditional-markov}, and a union bound)} \\
     & = \alpha^2 / 2.
  \end{align*}
  \endgroup
  
  Finally, we show that whenever $\cP[W,\tilde{W} | X, \tilde{X}]\ge \alpha^2 / 4$,
  $f$ is approximately consistent on $X$ and $\tilde{X}$.  Let
  $\Delta = \Delta(X, \tilde{X})$ denote the set $i \in [n]$ for which
  $X_i = \tilde{X}_i = 1$ and $f(X)_i \neq f(\tilde{X})_i$.  In order for
  $W \land \tilde{W}$ to occur, it must hold for every
  $i \in \Delta(X, \tilde{X})$ that $Y_i = 0$; otherwise we would have
  $f(X)_i \neq f(Y)_i$ and $ f(\tilde{X})_i \neq f(Y)_i$, implying that
  $f(X)_i = f(\tilde{X})_i$.

  On the other hand $\{Y_i\}_{i \in [n]}$ are conditionally independent given $X$,
  $\tilde{X}$.  The conditional distribution of $Y_i$ given
  $X_i = \tilde{X}_i = 1$ can be computed as
  \begin{align*}
    \cP[Y_i = 1 | X_i = \tilde{X}_i = 1 ] &= \frac{\cP[Y_i = 1] \cdot \cP[X_i = \tilde{X}_i = 1 |
                                            Y_i = 1]}{
    \cP[Y_i = 0] \cdot \cP[X_i = \tilde{X}_i = 1 | Y_i = 0] + \cP[Y_i = 1] \cdot
                                            \cP[X_i = \tilde{X}_i = 1 | Y_i = 1]} \\
                                          &= \frac{(2/3) \cdot (1 / 4)}{(1/3) \cdot 1 + (2/3) \cdot (1/4)} \\
    &= \frac{1}{3},
  \end{align*}
  so $\cP[W, \tilde{W} | X, \tilde{X}] \le (2/3)^{ | \Delta(X, \tilde{X})
     |}$.  Thus when $\cP[W, \tilde{W} | X, \tilde{X}] \ge \alpha^2/4$, we
  must have $|\Delta(X, \tilde{X})| \le
  \frac{2 \ln(1 / \alpha) + \ln(4)}{\ln(1.5)} \le 5 \ln(1 / \alpha) + 5$, which means that $f$
  is approximately consistent on $X$ and $\tilde{X}$.
\end{proof}

\paragraph{The probability space $\cC$.}
Let $\cC$ be obtained by the following process:
\begin{enumerate}
\item Sample $X \gets Q_X^{\otimes n}$.
\item Sample a set $S \subseteq [n]$ by including $i$ in $S$ with
  probability $1/4$,  independently for each $i \in [n]$.
\item For $i \in S$, set $\tilde{X}_i = X_i$.
\item For $i \notin S$, sample $\tilde{X}_i \gets Q_X$ (independently).
\item Sample $\tilde{Y}$, $\tilde{Z}$ such that \[
    \cC\big [\tilde{Y} = \tilde{y}, \tilde{Z}
    = \tilde{z} | S = s, X = x, \tilde{X} = \tilde{x} \big ] \eqdef Q^{\otimes
      n}_{Y,Z|X}(\tilde{y}, \tilde{z} | \tilde{x}).
  \]
\end{enumerate}

\begin{claim}
  \label{claim:C-probabilities}
    $\cC_{X,\tilde{X}} = \cP_{X, \tilde{X}}$ and $\cC_{\tilde{X}, \tilde{Y},
      \tilde{Z}} = \cP_{\tilde{X}, Y, \tilde{Z}}$.
\end{claim}
\begin{proof}
  We start by showing $\cC_{X, \tilde{X}} = \cP_{X, \tilde{X}}$.    It is clear
  from their definitions that in both $\cP$ and $\cC$,
  $\big \{(X_i, \tilde{X}_i)\big \}_{i \in [n]}$ are
  i.i.d., so it suffices to show for each $i$
  that $\cC_{X_i, \tilde{X}_i} = \cP_{X_i, \tilde{X}_i}$.
  By \cref{claim:P-probabilities}, we need to show that
  \[
    \cC[X_i = x_i, \tilde{X}_i = \tilde{x}_i]  = \begin{cases}
      1/6 & \text{if $(x_i, \tilde{x}_i) \in \big \{(0, 0), (0, 1), (1, 0) \big \}$} \\
      1/2 & \text{if $(x_i, \tilde{x}_i) = (1, 1)$.}
    \end{cases}
  \]
  For
  $x_i = \tilde{x}_i$, we have
  \begin{align*}
    \cC[X_i = x_i \land  \tilde{X}_i = \tilde{x}_i] &=
    \cC[X_i = x_i] \cdot \big (\cC[i \in S] + \cC[i \notin S] \cdot \cC[\tilde{X}_i =
                                                      \tilde{x}_i | i \notin S]
                                                      \big )
    \\
    &=     \cC[X_i = x_i] \cdot \Big (\frac{1}{4}  + \frac{3}{4} \cdot
      Q_X(\tilde{x}_i) \Big )\\
                                                    &= \begin{cases}
                                                      \frac{1}{2} & \text{if
                                                        $x_i = \tilde{x}_i = 1$}
                                                      \\
                                                      \frac{1}{6} & \text{if
                                                        $x_i = \tilde{x}_i = 0$.}
                                                    \end{cases}
  \end{align*}

  We also have
  \begin{align*}
    \cC[X_i = 0 \land \tilde{X}_i = 1] &= \cC[X_i = 0] \cdot \cC[i \notin S]
                                         \cdot \cC[\tilde{X}_i = 1] \\
                                       &= \frac{1}{3} \cdot \frac{3}{4} \cdot
                                         \frac{2}{3} \\
                                       &= \frac{1}{6}
  \end{align*}
  and similarly $\cC[X_i = 1 \land \tilde{X}_i = 0] = 1/6$.

  We now show that
  $\cC_{\tilde{X}, \tilde{Y}, \tilde{Z}} = \cP_{\tilde{X}, Y, \tilde{Z}}$.  It
  is immediate from the definition of $\cP$ that $\cP_{\tilde{X}, Y, \tilde{Z}}$
  is just $Q_{X,Y,Z}^{\otimes n}$.  Similarly, $\cC_{\tilde{X}, \tilde{Y}, \tilde{Z}} =
  \cC_{\tilde{X}} \cdot \cC_{\tilde{Y}, \tilde{Z} | \tilde{X}} = Q_{X}^{\otimes
    n} \cdot Q_{Y,Z|X}^{\otimes n} = Q_{X,Y,Z}^{\otimes n}$.
\end{proof}

% \begin{figure}[h]
%   \includegraphics[width=6cm]{graphicalmodel}
%   \centering
%   \caption{$X$, $Y$, $Z$, $\tilde{X}$, $\tilde{Y}$, and $\tilde{Z}$ form a
%     Markov random field with respect to this graph.}
%   \end{figure}
Let $G$ denote the event that $f$ is approximately consistent on $X$ and
$\tilde{X}$ and $\big (f(\tilde{X}), g(\tilde{Y}), h(\tilde{Z}) \big )$ wins on $(\tilde{X}, \tilde{Y}, \tilde{Z})$.
\begin{claim}
  \label{claim:consistent-and-win}
  $\cC[G] \ge \alpha^4/8$.
\end{claim}
\begin{proof}
Let $A$ denote the event that $f$ is approximately consistent on $X$ and
$\tilde{X}$, and let $B$ denote the event that $\big (f(\tilde{X}),
g(\tilde{Y}), h(\tilde{Z}) \big )$ wins on
$(\tilde{X}, \tilde{Y}, \tilde{Z})$.  Note that $A$ depends only on $X$ and
$\tilde{X}$, while $B$ depends only on $\tilde{X}$, $\tilde{Y}$, and
$\tilde{Z}$.  We want to show that $\cC[A \land B] \ge \alpha^4/8$.

We have
\begin{align*}
  \cC[A \land B]
            &= \E [ 1_A \cdot 1_B] \\
            &= \E \big [ 1_A
              \cdot \E [1_B | X, \tilde{X} ] \big ] &&
                                                                              \text{($A$
                                                                              depends
                                                                              only
                                                                              on
                                                                              $X,
                                                                              \tilde{X}$)}
  \\
            &= \E \big [ 1_A \cdot \cC[B | X, \tilde{X}] \big ]
  \\
  & \ge  \frac{\alpha^2}{4} \cdot \cC \big [ 1_A \cdot \cC [ B | X, \tilde{X} ] \ \ge\  \alpha^2/4
    \big ] && \text{(Markov)} \\
  & = \frac{\alpha^2}{4} \cdot \cC \big [ 1_A \cdot \cC [ B | \tilde{X} ]\ \ge\ \alpha^2/4
    \big ] && \text{($\tilde{Y}, \tilde{Z}$ independent of $X$ given
              $\tilde{X}$)} \\
  & = \frac{\alpha^2}{4} \cdot \cC \big [ A\  \land\  \cC [ B | \tilde{X} ] \ge \alpha^2/4
    \big ] \\
  & \ge \alpha^4 /8 && \text{(\cref{claim:C-probabilities,claim:P-good})},
\end{align*}
where $\E$ denotes expectation in $\cC$.
\end{proof}

\begin{claim}
  \label{claim:conditional-consistent-and-win}
  There exists a set $s \subseteq [n]$ and a string $x \in \{0,1\}^n$ such that
  if we define $t \eqdef \{i \in [n] : i \in s \lor x_i = 0\}$, then
  $|t| \le 5n / 6$ and there exist strings $\tilde{x}_t$, $\tilde{y}_t$,
  $\tilde{z}_t \in \{0,1\}^{|t|}$ such that
  \[
    \cC[G | X = x, S = s, \tilde{X}_t = \tilde{x}_t,
    \tilde{Y}_t = \tilde{y}_t, \tilde{Z}_t = \tilde{z}_t] \ge \alpha^4 / 16,
  \]
  where the left-hand side is well-defined (i.e. the event conditioned on has
  non-zero probability).
\end{claim}
\begin{proof}
  Recall $S$ is a random set generated by including $i$ in $S$
  independently with probability $1/4$ for each $i \in [n]$, and $X$ is distributed like
  $\Bernoulli(2/3)^n$.  Thus by Chernoff bounds,
  \[
    \cC \big [ |S| \ge n / 3 \big ] \le e^{-2 \cdot (1/3 - 1/4)^2 \cdot n} = e^{-n/72}
  \]
  and
  \[
    \cC \Big [ \big | \{ i : X_i = 0 \} \big | \ge n / 2 \Big ] \le e^{-2 \cdot
      (1/2 - 1/3)^2 \cdot n} = e^{-n/18}.
  \]
  Thus with all but $e^{-n/72} + e^{-n/18}$ probability, $|T| \le n/3 + n/2 =
  5n/ 6$.

  Together with \cref{claim:consistent-and-win}, this implies that
  \begin{align*}
    \cC[G \land |T| \le 5n/6] \ge \alpha^4/8 - e^{-n/18} - e^{-n/72} & \ge
    \alpha^4 / 8 - 2 \cdot e^{-n/72} \\
    & \ge \alpha^4 / 16. && \text{(because $\alpha^4 \ge 32 e^{-n/72}$)}
  \end{align*}
  
  By \cref{fact:averaging} there exist fixed values
  $x \in \{0,1\}^n$ and $s \subseteq [n]$ such that
  \[
    \cC \big [ G \land |T| \le 5n/6 \big | X = x, S = s \big ] \ge \alpha^4 / 16.
  \]

  Fix such an $x$ and $s$.  By construction
  $t \eqdef \{i \in [n] : i \in s \lor x_i = 0\}$ must satisfy $|t| \le 5n/6$.
  Applying \cref{fact:averaging} again, there must also exist values
  $\tilde{x}_t, \tilde{y}_t, \tilde{z}_t \in \{0,1\}^{|t|}$ such that
  \[
    \cC \big [G \big | X = x, S = s, \tilde{X}_t = \tilde{x}_t, \tilde{Y}_t =
    \tilde{y}_t, \tilde{Z}_t = \tilde{z}_t \big ] \ge \alpha^4/16.  \qedhere
  \]
\end{proof}
We are now in a position to finish the proof of \cref{prop:f-nearly-constant}.
Let $x$, $s$, $t$, $\tilde{x}_t$, $\tilde{y}_t$, and $\tilde{z}_t$ be as guaranteed
by \cref{claim:conditional-consistent-and-win}, and let
% , i.e. $t = \{i \in [n] : i \in s
% \text{ or } x_i = 0\}$, $|t| \le 5n/6$, and
% \[
% \cC[G | X = x, S = s, \tilde{X}_t = \tilde{x}_t,
% \tilde{Y}_t = \tilde{y}_t, \tilde{Z}_t = \tilde{z}_t] \ge \alpha^4 / 16.
% \]
$n' \eqdef n - |t|$.  Since $|t| \le 5n /6$, $n' \ge n / 6$.  Define $f' : \{0,1\}^{n'} \to \{0,1\}^{n'}$ to be the
function that on input $x' \in \{0,1\}^{n'}$ outputs
$f(\tilde{x}_t, x')_{[n] \setminus t}$,\footnote{Here we define
  $(\tilde{x}_t, x')$ as the $n$-bit string whose $i^{th}$ bit is
  $(\tilde{x}_t)_i$ if $i \in t$ and is $x'_i$ otherwise, where we view the bits
  of $\tilde{x}_t$ and $x'$ as respectively indexed by $t$ and by
  $[n] \setminus t$.} and define $g', h' : \{0,1\}^{n'} \to \{0,1\}^{n'}$
analogously.  Let $a \in \{0,1\}^{n'}$ denote $f(x)_{[n] \setminus t}$.

Suppose $G$ holds, $X = x$, $S = s$, $\tilde{X}_t = \tilde{x}_t$,
$\tilde{Y}_t = \tilde{y}_t$, and $\tilde{Z}_t = \tilde{z}_t$.  By the definition
of $G$, $\big (f(\tilde{X})_i, g(\tilde{Y})_i, h(\tilde{Z})_i \big )$ wins on
$(\tilde{X}_i, \tilde{Y}_i, \tilde{Z}_i)$ for all $i \in [n]$.  In particular
\begin{equation}
\text{$\big (f'(\tilde{X}_{-t})_i, g'(\tilde{Y}_{-t})_i, h'(\tilde{Z}_{-t})_i \big )$
wins on
$\big ( (\tilde{X}_{-t})_i, (\tilde{Y}_{-t})_i, (\tilde{Z}_{-t})_i \big )$ for
all $i \in [n']$.}
\end{equation}
But $f'(\tilde{X}_{-t})_i = a_i$ for all but $5 \ln(1 / \alpha) + 5$ of the values of
$i \in [n']$ for which $(\tilde{X}_{-t})_i = 1$.  This follows from the approximate
consistency of $f$ on $X$ and $\tilde{X}$, and the fact that $X_{[n] \setminus
  t} = x_{[n] \setminus
  t} = 1^{n'}$.  For all other $i$ (that is, $i$ for which $(\tilde{X}_{-t})_i = 0$), the value
of $f'(\tilde{X}_{-t})_i$ is
irrelevant to whether or not
$\big (f'(\tilde{X}_{-t})_i, g'(\tilde{Y}_{-t})_i, h'(\tilde{Z}_{-t})_i \big )$ wins
on $\big ((\tilde{X}_{-t})_i, (\tilde{Y}_{-t})_i, (\tilde{Z}_{-t})_i \big )$.
Thus for
all but $5 \ln(1 / \alpha) + 5$ values of $i \in [n']$,
\begin{equation}
  \text{ $\big (a_i, g'(\tilde{Y}_{-t})_i, h'(\tilde{Z}_{-t})_i \big )$ wins on
$\big ((\tilde{X}_{-t})_i, (\tilde{Y}_{-t})_i, (\tilde{Z}_{-t})_i \big )$ }
\end{equation}

To complete the proof of \cref{prop:f-nearly-constant}, we simply note that
\begin{align*}
  \cC \big [ G \big |  X = x, S = s, \tilde{X}_t = \tilde{x}_t, \tilde{Y}_t =
    \tilde{y}_t, \tilde{Z}_t = \tilde{z}_t \big ] \ge \alpha^4/16 && \text{(\cref{claim:conditional-consistent-and-win})}
\end{align*}
and  that
\begin{align*}
\cC_{\tilde{X}_{-t}, \tilde{Y}_{-t}, \tilde{Z}_{-t}|X = x, S = s, \tilde{X}_t =
  \tilde{x}_t, \tilde{Y}_t = \tilde{y}_t, \tilde{Z}_t = \tilde{z_t}} =
  Q^{\otimes n'} && \text{(Definition of $\cC$).} & \qedhere
\end{align*}
\end{proof}

\subsection{Strategies with Constant Functions Can't Win}

\begin{proposition}
  \label{prop:f-cant-be-constant}
   For any $n \in \N$, any $a \in \{0,1\}^{n}$, and any functions $g, h :
   \{0,1\}^{n} \to \{0,1\}^{n}$,
  \[
    \Pr_{(x, y, z) \gets Q^{\otimes n}} \big [ (a_i, g(y)_i, h(z)_i) \text{ wins
      on } (x_i, y_i, z_i) \text{ for more than $0.99n$ values of $i \in [n]$}
    \big ] \le 3 \cdot e^{-10^{-7} \cdot n}.
  \]
\end{proposition}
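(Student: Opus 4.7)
The plan is to symmetrize so Alice's output is $0^n$, then bound the loss count $L$ via a case analysis controlled by Chernoff bounds, union-bounding three exponentially small ``bad events'' to obtain the prefactor $3$.

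First I symmetrize: replacing $g(y)$ and $h(z)$ by $g(y) \oplus a$ and $h(z) \oplus a$ (coordinate-wise XOR) preserves all three win predicates, so I may assume $a = 0^n$. Then winning coordinate $i$ requires $g(y)_i = 1$ on type $(1,1,0)$, $h(z)_i = 1$ on type $(1,0,1)$, and $g(y)_i \ne h(z)_i$ on type $(0,1,1)$. The goal becomes $\Pr[L < 0.01 n] \le 3 e^{-10^{-7} n}$.

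Next I condition on $Y$ and set $b := g(Y)$. Let $I_1 := \{i : Y_i = 1\}$; Chernoff gives $\Pr[|I_1| < 0.6n] \le e^{-\Omega(n)}$, the first bad event. Let $I_1^{\mathrm{wr}} := \{i \in I_1 : b_i = 0\}$. In \emph{Case~A}, if $|I_1^{\mathrm{wr}}| \ge \delta n$ for a small constant $\delta > 0$, every $i \in I_1^{\mathrm{wr}}$ with $Z_i = 0$ is a forced type-$(1,1,0)$ loss, and since $\{Z_i\}_{i \in I_1}$ are i.i.d.\ $\Bernoulli(1/2)$ given $Y$, Chernoff yields $\ge 0.01 n$ losses except with probability $e^{-\Omega(n)}$, the second bad event. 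In \emph{Case~B}, $|I_1^{\mathrm{wr}}| < \delta n$, so $b_i = 1$ on all but $< \delta n$ indices of $I_1$. The key identity is that for any coordinate $i$ with $Z_i = 1$ and $b_i = 1$, the coordinate is a loss if and only if $h(Z)_i = Y_i$ (check both subtypes $(Y_i,Z_i) = (0,1)$ and $(Y_i,Z_i) = (1,1)$ directly).

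To exploit this identity, I switch the order of conditioning: writing the relevant probability as $\E_Z \Pr_{Y \mid Z}[\,\cdot\,]$, I use that for fixed $Z$ the bits $\{Y_i\}_{i \in J}$ (with $J := \{i : Z_i = 1\}$) are i.i.d.\ $\Bernoulli(1/2)$ and independent of the fixed string $h(Z)$. On the at-least-$|J| - \delta n$ indices of $J$ where $b_i = 1$, the loss indicators $\mathbb{1}[h(Z)_i = Y_i]$ are conditionally i.i.d.\ $\Bernoulli(1/2)$, and Chernoff gives $\ge 0.01 n$ losses except with probability $e^{-\Omega(n)}$ --- this is the third bad event. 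Union-bounding the three bad events yields the prefactor $3$, and the constant $10^{-7}$ comes from the smallest Chernoff constant after optimizing $\delta$. \emph{Main obstacle.} The most delicate step is the Case~B Chernoff: although the bits $\{Y_i\}_{i \in J}$ are i.i.d.\ $\Bernoulli(1/2)$ given $Z$, the Bob-strategy bits $b_i = g(Y)_i$ are themselves functions of $Y$, so the ``$b_i = 1$'' condition is entangled with the $Y_i$ randomness. Resolving
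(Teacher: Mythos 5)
Your symmetrization to $a=0^n$, the per-type loss criteria, and Case~A are all sound: in Case~A the set $I_1^{\mathrm{wr}}$ is a function of $Y$ alone, and the randomness you invoke is that of $Z_{I_1}$ conditioned on $Y$, so the Chernoff bound there is legitimate. Case~B, however, contains exactly the gap you flag at the end and never close (the proposal cuts off mid-sentence). You apply a Chernoff bound to the indicators of $h(Z)_i=Y_i$ over the index set $\{i\in J: b_i=1\}$, conditioning on $Z$ and using the randomness of $Y_J$ --- but that index set is itself a function of $Y$ through $b=g(Y)$, so the summands are selected adaptively by the very randomness you are concentrating over, and the i.i.d.\ Chernoff bound does not apply as stated. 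There is also a secondary slip: Case~B only controls $b_i=0$ on $I_1=\{i:Y_i=1\}$, so you cannot conclude $b_i=1$ on all but $\delta n$ indices of $J=\{i:Z_i=1\}$; the indices $i\in J$ with $Y_i=0$ are unconstrained. (That slip is harmless, since for $Y_i=0$ the type is $(1,0,1)$ and the loss condition $h(Z)_i=0=Y_i$ does not involve $b_i$, so your identity in fact holds on all of $J\setminus I_1^{\mathrm{wr}}$ --- but the adaptivity problem persists for that set too, since $I_1^{\mathrm{wr}}$ depends on $Y$.)

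The gap is repairable with one more idea, and it is worth seeing how the paper avoids it. One fix within your framework is to decouple: the number of losses in $J$ is at least $|\{i\in J: h(Z)_i=Y_i\}|-|I_1^{\mathrm{wr}}|$, and the first quantity makes no reference to $b$, so in Case~B ``winning'' forces $|\{i\in J: h(Z)_i=Y_i\}|<(0.01+\delta)n$, an event you can bound unconditionally by fixing $Z$ and applying Chernoff to the genuinely i.i.d.\ bits $\{Y_i\}_{i\in J}$. The paper instead sidesteps adaptivity from the start: it shows that winning $0.99n$ coordinates together with having $\ge 0.1n$ coordinates of type $(0,1,1)$ forces either $\ge 0.025n$ coordinates with $y_i=1$ and $g(y)_i=a_i$, or $\ge 0.025n$ coordinates with $z_i=1$ and $h(z)_i=a_i$; in the latter case the agreement set $I_z=\{i: z_i=1,\ h(z)_i=a_i\}$ is a function of $z$ alone, and conditioned on $Z=z$ the bits $X_{I_z}$ are uniform and independent, with every $i\in I_z$ having $x_i=1$ being a forced loss. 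In both working arguments the set of coordinates is defined by one player's input and the concentration uses the fresh randomness of a \emph{different} player's input; until your Case~B is restructured this way, the proof is incomplete.
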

\begin{proof}
  Let $W_i$ denote the event that $(a_i, g(y)_i, h(z)_i)$ wins on
  $(x_i, y_i, z_i)$, and let $W$ denote the event that $W_i$ holds for at least
  $0.99n$ values of $i \in [n]$.  Let $E$ denote the event that there are at
  least $0.1n$ coordinates $i \in [n]$ where $y_i = z_i = 1$ (and $x_i = 0$).

  By a Chernoff bound,
  $\Pr_{(x, y, z) \gets Q^{\otimes n}}[E] \ge 1 - e^{-2 (1/3 - 1/10)^2 n} \ge 1
  - e^{-0.1n}$.  By a union bound, \begin{equation} \Pr_{(x, y, z) \gets
      Q^{\otimes n}} \big [ W \land E \big ] \ge \Pr[W] - e^{-0.1n}.
  \end{equation}

  When $W$ and $E$ both hold, there must exist at least $.05n$ coordinates $i$
  for which $W_i$ holds and $y_i = z_i = 1$, meaning that $g(y)_i \neq h(z)_i$,
  which in turn means that either:
  \begin{itemize}
  \item For at least $0.025n$ coordinates $i$, $y_i = 1$ and $g(y)_i = a_i$; or
  \item For at least $0.025n$ coordinates $i$, $z_i = 1$ and $h(z)_i = a_i$.
  \end{itemize}

  Thus it must either hold that
  \begin{equation}
    \label{eq:g-agrees}
    \Pr_{(x, y, z) \gets Q^{\otimes n}} \big [ W \text{ and for at least $0.025n$ coordinates $i$, $y_i = 1$ and
    $g(y)_i = a_i$}
    \big ] \ge \frac{1}{2} \cdot \big (P[W] - e^{-0.1n} \big )
  \end{equation}
  or
  \begin{equation}
    \label{eq:h-agrees}
    \Pr_{(x, y, z) \gets Q^{\otimes n}} \big [ W \text{ and for at least $0.025n$ coordinates $i$, $z_i = 1$ and
    $h(z)_i = a_i$}
    \big ] \ge \frac{1}{2} \cdot \big (P[W] - e^{-0.1n} \big )
  \end{equation}

  Suppose it is \cref{eq:h-agrees} that holds.  In
  fact this supposition is without loss of generality because $X$, $Y$, and
  $Z$ are interchangeable in the definition $Q^{\otimes n}$.

  Consider any $z$ such that  $I_z \eqdef \big \{ i \in [n] : z_i = 1 \text{ and } h(z)_i =
  a_i \big \}$ satisfies $|I_z| \ge 0.025n$.  The
  distribution of $X_{I_z}$ conditioned on $Z = z$ is uniform on $\{0,1\}^{|I_z|}$.
  However, $W$ can happen only if $X_i = 0$ for all but $0.01n$ values of $i \in I_z$.  Thus by a Chernoff bound, 
  \begin{align*}
  \cP \big [ W \big | Z = z \big
  ] \le e^{-2 \cdot (0.0025 ) ^2 \cdot 0.025 \cdot n} \le e^{ - 10^{-7} \cdot n }.
  \end{align*}
  Since this holds for every
  such $z$, we must have
  \[
    \Pr_{(x, y, z) \gets Q^{\otimes n}} \big [ W \text{ and for at least $0.025n$ coordinates $i$, $z_i = 1$ and
    $h(z)_i = a_i$}
    \big ] \le e^{-10^{-7}n}.
  \]
  This contradicts
  \cref{eq:h-agrees} unless $P[W] \le 2 \cdot e^{-10^{-7} \cdot n} + e^{-0.1n} \le 3
  \cdot e^{-10^{-7} \cdot n}$.
\end{proof}
\subsection{Parallel Repetition for the Anti-Correlation Game}

We now complete the proof of \Cref{thm:anti_corr_par_rep}. 
\begin{proof}[Proof of \Cref{thm:anti_corr_par_rep}]
  Let $n$ be sufficiently large, and let $f,g,h : \{0,1\}^n \to \{0,1\}^n$ be
  any strategy for $\cG^{\otimes n}$.  If $(f, g, h)$ wins $\cG^{\otimes n}$
  with probability $\alpha$ satisfying $\alpha > e^{-10^{-9} \cdot n}$ (which in
  particular means that $\alpha^4 \ge 32e^{-n / 72}$), then by
  \cref{prop:f-nearly-constant}, there exists $n' \ge n / 6$ and a strategy
  $f', g', h' : \{0,1\}^{n'} \to \{0,1\}^{n'}$ for $\cG^{\otimes n'}$, with $f'$
  a constant function, such that $(f', g', h')$ wins on all but at most
  $5 \ln(1 / \alpha) + 5 < 0.01n'$ coordinates of
  $\cG^{\otimes n'}$ with probability at least
  $\alpha^4 / 16 > 3 \cdot e^{-10^{-7} \cdot n'}$, contradicting
  \cref{prop:f-cant-be-constant}.
\end{proof}
\section{Four Point AND Distribution} \label{sec:four_point_game}

This section is devoted to the proof of the following theorem:

\begin{theorem}\label{thm:four_point_par_rep}
	Let $\mc G = (\mc X\times \mc Y\times \mc Z,\ \mc A \times \mc B \times \mc C,\ Q, V)$ be a 3-player game with $\mc X = \mc Y = \mc Z = \set{0,1}$, and $Q$ the uniform distribution over $\mc S = \set{(0,0,0), (1,0, 0), (0,1,0), (1,1,1)} = \set{(x,y,z)\in \set{0,1}^3 : z=x\land y}$, and such that $\val(\mc G)<1$.
	Then, there exists a constant $c = c(\mc G)>0$, such that for every $n\in \N$, it holds that $\val(\mc G^{\otimes n})\leq n^{-c}$. 
\end{theorem}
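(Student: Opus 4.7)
The plan is to apply the inductive criterion of \Cref{lemma:prod_set_hard_coor} in a refined way: given a large product event $E = E^1\times E^2\times \{0,1\}^n$ across Alice's and Bob's input coordinates, I will show that the probability of winning all coordinates under $P|E$ decays polynomially in $n$. The starting observation for the game on $\mc S = \{(0,0,0),(0,1,0),(1,0,0),(1,1,1)\}$ is that whenever Charlie receives $1$ he knows Alice and Bob both received $1$, so any pair $(a,b)$ of Alice/Bob answers on input $1$ falls into exactly one of two cases: Case A, where no Charlie-answer satisfies $V$ on $(1,1,1)$ (so any strategy using $(a,b)$ loses outright on that point), or Case B, where Charlie has a completing answer, in which case the restriction of $V$ to the three zero-projected points $\{(0,0,0),(0,1,0),(1,0,0)\}$ induces a two-player game $\tilde{\mc G}$ for Alice and Bob that must have value strictly less than $1$ (since $\val(\mc G)<1$ forces a loss somewhere).

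For each coordinate $i$ let $\tilde L_i$ be the event that Alice and Bob both receive $1$ in coordinate $i$ and their answers land in Case A; this is a product event across Alice and Bob. Let $\tilde W_i$ be its complement. Since winning all coordinates forces every $\tilde W_i$, a chain rule expansion gives $\Pr[W_{[n]}\mid E] \le \prod_i \Pr[\tilde W_i \mid \tilde W_1,\ldots,\tilde W_{i-1}, E]$. I will iteratively pick any coordinate $i$ with $\Pr[\tilde L_i \mid E] \ge n^{-4\delta}$ for a small constant $\delta>0$, condition on $\tilde W_i$, and additionally fix Alice's and Bob's coordinate-$i$ inputs and answers -- an operation that preserves the product-event structure of $E$ because $\tilde L_i$ only depends on these variables. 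If this process fires at least $n^{5\delta}$ times, the above chain-rule bound collapses $\Pr[W_{[n]}]$ to at most $(1-n^{-4\delta})^{n^{5\delta}} \le \exp(-\Omega(n^{\delta}))$, which is much stronger than the claimed bound.

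In the remaining case the procedure halts with a product event $E$ of measure at least $\exp(-\Omega(n^{5\delta}))$ on which $\Pr[\tilde L_i \mid E] \le n^{-4\delta}$ for every $i$. For each $i$ define $\mc A_i$ (resp.\ $\mc B_i$) to be the set of Alice (resp.\ Bob) answers that appear with probability at least $n^{-2\delta}$ conditional on receiving input $1$, under $P|E$. Since Alice's and Bob's marginals are independent under the product event $E$, each pair $(a,b)\in \mc A_i\times \mc B_i$ occurs jointly with probability $\ge n^{-4\delta}$, and the bound $\Pr[\tilde L_i \mid E]\le n^{-4\delta}$ forces every such $(a,b)$ to be in Case B. Let $G_i$ be the product event that in coordinate $i$, Alice on input $1$ answers in $\mc A_i$ and Bob on input $1$ answers in $\mc B_i$; a union bound gives $\Pr[G_i\mid E]\ge 1 - O(n^{-\delta})$. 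Now sample Charlie's input $z$, set $K=\{i:z_i=0\}$ so that $m:=|K|=\Omega(n)$ with overwhelming probability, sample Alice's and Bob's inputs $x_{-K},y_{-K}$ outside $K$, and finally draw a uniform subset $S\subseteq K$ of size $m^{\epsilon}$ for a constant $0<\epsilon<\delta$. Splitting
\[
\Pr[W_S\mid E] \le \Pr\Big[\bigvee_{i\in S}\neg G_i \,\Big|\, E\Big] + \Pr\Big[\bigwedge_{i\in S} G_i\wedge W_S \,\Big|\, E,z,x_{-K},y_{-K}\Big],
\]
the first term is $|S|\cdot O(n^{-\delta}) = O(n^{-\delta+\epsilon})$ by a union bound.

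For the second term I define, for each $i\in K$, a two-player game $\tilde{\mc G}_i$ on the uniform distribution over $\{(0,0),(0,1),(1,0)\}$ with predicate $\tilde V_i(x,y,a,b):= V(x,y,0,a,b,h(z)_i)\wedge (x=1\Rightarrow a\in\mc A_i)\wedge (y=1\Rightarrow b\in\mc B_i)$. Since every $(a,b)\in\mc A_i\times\mc B_i$ is in Case B, any strategy for $\tilde{\mc G}_i$ of value $1$ would extend to a value-$1$ strategy for $\mc G$ (Charlie plays the completing answer on input $1$ and $h(z)_i$ on input $0$), contradicting $\val(\mc G)<1$. There are only finitely many possible $\tilde{\mc G}_i$, so by pigeonhole I restrict to one fixed $\tilde{\mc G}$ occurring in a constant fraction of coordinates of $K$. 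The second term then becomes the probability of winning coordinates $S$ in the $|K|$-fold repetition of $\tilde{\mc G}$, conditioned on the product event inherited from $E$ (using \Cref{fact:norm_condition} to absorb the conditioning into player strategies). Applying the two-player parallel repetition theorem (\Cref{thm:2p_par_rep}) on the random $S$ yields a bound of $\exp(-\Omega(m^{\epsilon})) = \exp(-\Omega(n^{\epsilon}))$. Combining both contributions and choosing $\delta,\epsilon$ as small positive constants gives $\Pr[W_{[n]}]\le\Pr[W_S]\le n^{-\Omega(\delta)}$. The main obstacle is this last step: one must run two-player parallel repetition while conditioning on a product event that may be as small as $\exp(-\Omega(n^{5\delta}))$, and the key trick that makes it work is that the subset $S$ is chosen \emph{randomly} and of size only $m^{\epsilon}$, so the decay $\exp(-\Omega(m^{\epsilon}))$ comfortably dominates the measure loss once $\epsilon \ll \delta$.
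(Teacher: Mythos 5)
Your proposal follows essentially the same route as the paper: the Case A/Case B dichotomy, the iterative conditioning on $\tilde W_i$ with the $(1-n^{-4\delta})^{n^{5\delta}}$ collapse, the sets $\mc A_i,\mc B_i$ and the events $G_i$, the fixing of $z, x_{-K}, y_{-K}$, the induced two-player game $\tilde{\mc G}_i$ with value $<1$, and the random small subset $S$ to beat the measure loss. This is exactly the paper's argument (Lemma~\ref{lemma:4_pt_main_lemma} and the surrounding induction).

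One caveat on the final step: as written, ``applying the two-player parallel repetition theorem (\Cref{thm:2p_par_rep}) on the random $S$'' does not literally go through, and \Cref{fact:norm_condition} cannot ``absorb the conditioning into player strategies,'' since the conditional distribution $P|\tilde E$ for a product event $\tilde E = \tilde E^1\times\tilde E^2$ is not the query distribution of $\tilde{\mc G}^{\otimes m}$ (the base distribution on $\{(0,0),(0,1),(1,0)\}$ is correlated, so product-event conditioning genuinely changes it). What is needed is the stronger, non-black-box ingredient the paper isolates as Proposition~\ref{prop:2_player_hard_set}: Raz's ``hard coordinate under large product-event conditioning'' statement, iterated over the coordinates of the random subset $S$ one at a time (each step re-fixing typical questions/answers to restore a product event). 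You clearly have the right intuition — that the randomness and small size $m^{\epsilon}$ of $S$ make the $\exp(-\Omega(m^{\epsilon}))$ decay dominate the $\exp(-\Omega(n^{5\delta}))$ measure of $E$ — but the citation should be to this product-event/random-subset version rather than to the plain parallel repetition theorem. A second, smaller detail: when you ``fix Alice's and Bob's coordinate-$i$ inputs and answers'' during the iterative phase, you must restrict to fixings $t$ with $P(t\mid E)$ not too small (the paper keeps only $t$ with $P(t\mid E)\ge 2^{-n^{4\delta}}$ and pays an additive $O(2^{-n^{4\delta}})$ error) so that the product event stays large enough for the induction to continue.
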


For the rest of this section, we fix 3-player game $\mc G$ satisfying the theorem hypothesis, and a large enough $n\in \N$.
Note that we prove the theorem only for large enough $n$, as the theorem trivially holds for small $n$, for a sufficiently small constant $c>0$.
By Lemma~\ref{lemma:pred_for_deter_inp}, we also assume that the predicate $V$, on player inputs $(1,1,1)$, does not depend on the answer given by the $3\ts{rd}$ player.
Let $\mc D \subseteq \mc A\times \mc B$ be the set of pairs of answers of player 1 and player 2, that lose the game $\mc G$ when the players get inputs $(1,1,1)$.

Consider the game $\mc G^{\otimes n} = (\brac{\mc X\times \mc Y\times \mc Z}^{\otimes n}, (\mc A \times \mc B \times \mc C)^{\otimes n}, P, V^{\otimes n})$, and suppose that $(X,Y,Z)$ is the random variable denoting the inputs to the three players in $\mc G^{\otimes n}$.
Let $f:\set{0,1}^n\to \mc A^{\otimes n},\ g:\set{0,1}^n\to \mc B^{\otimes n},\ h:\set{0,1}^n\to \mc C^{\otimes n} $ denote a set of optimal strategies for the three players respectively, and let $(A,B,C) = (f(X), g(Y), h(Z))$ be the random variables denoting the answers given by the players.

For each $S\subseteq [n]$, let $W_S$ be the event of winning all coordinates $i\in S$.
Let $W = W_{[n]}$ be the event of winning all the coordinates.
%For ease of notation, we will use $W_i$ and $W_{\set{i}}$ interchangeably. 

Before we prove the theorem, we state the main technical lemma required for the proof, which is then proved later.

\begin{lemma}\label{lemma:4_pt_main_lemma}
	Let $\delta = \frac{1}{100}$.
	Suppose $E = E^1\times E^2\times \set{0,1}^n$ is a product event on the inputs of the players, with $P(E) \geq 2^{-n^{10\delta}}$.
	Also suppose that for each $i\in [n]$, there exist sets $\mc A_i \subseteq \mc A, \mc B_i \subseteq \mc B$ such that the following hold:
	\begin{enumerate}[label=(\alph*)]
		\item $\mc D \subseteq \brac{\mc A\times \mc B}\setminus\brac{\mc A_i\times \mc B_i}$.
		\item $P\brac{X_i=1 \land A_i\not\in \mc A_i\st E}\leq n^{-\delta}$ and $P\brac{Y_i=1 \land B_i\not\in \mc B_i\st E}\leq n^{-\delta}$.	\end{enumerate} 
	Then, it holds that $P(W\st E)\leq n^{-\delta/3}$.
\end{lemma}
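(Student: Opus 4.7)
The plan is to fix $\epsilon := \delta/2$ and bound $P(W \mid E)$ by passing to a random subset $S \subseteq K^*$ of size $s := \lfloor m^\epsilon \rfloor$, where $m := |K| = |\{i : Z_i = 0\}|$ and $K^*$ is a ``majority'' subset of $K$. The bound will split into a union-bound term of order $n^{-\delta/2}$ and a two-player parallel-repetition term of order $2^{-\Omega(n^{\delta/2})}$.

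For each $i \in [n]$, introduce the product event on $(X, Y)$:
\[G_i := \{X_i = 1 \Rightarrow A_i \in \mc A_i\} \cap \{Y_i = 1 \Rightarrow B_i \in \mc B_i\}.\]
By hypothesis (b) and a union bound, $P(\neg G_i \mid E) \leq 2 n^{-\delta}$. Define the two-player game $\tilde{\mc G}_i$ with query distribution $\tilde Q$ uniform on $\{(0,0), (1,0), (0,1)\}$ and predicate
\[\tilde V_i(x, y, a, b) := V\bigl((x, y, 0), (a, b, h(Z)_i)\bigr) \wedge (x = 1 \Rightarrow a \in \mc A_i) \wedge (y = 1 \Rightarrow b \in \mc B_i).\]
Its value is strictly less than $1$: any value-$1$ strategy $(\tilde f, \tilde g)$ forces $(\tilde f(1), \tilde g(1)) \in \mc A_i \times \mc B_i$, so hypothesis (a) supplies $c \in \mc C$ with $V((1,1,1), (\tilde f(1), \tilde g(1), c)) = 1$; extending with $\tilde h(0) := h(Z)_i$ and $\tilde h(1) := c$ gives a value-$1$ strategy for $\mc G$, contradicting $\val(\mc G) < 1$.

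Since $(\mc A_i, \mc B_i, h(Z)_i)$ ranges over a set of size depending only on $\mc G$, pigeonhole produces a single game $\tilde{\mc G}$ with $\val(\tilde{\mc G}) = 1 - \eta$ (for a constant $\eta = \eta(\mc G) > 0$) and a subset $K^* \subseteq K$ with $|K^*| \geq |K|/C$ (for a constant $C = C(\mc G)$) such that $\tilde{\mc G}_i = \tilde{\mc G}$ for all $i \in K^*$. Since $\Pr[Z_i = 0] = 3/4$ under $P$, a Chernoff bound gives $P(|K| < n/2) \leq 2^{-\Omega(n)}$, hence $P(|K| < n/2 \mid E) \leq 2^{-\Omega(n)}/P(E) \leq 2^{-\Omega(n)}$ using $P(E) \geq 2^{-n^{10\delta}}$ and $10\delta < 1$. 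On $|K| \geq n/2$ we get $|K^*| \geq n/(2C)$, and the support constraint forces $X_{-K} = Y_{-K} = 1^{n-m}$. Sample $Z$ from $P_{Z \mid E}$ and then $S$ uniformly from subsets of $K^*$ of size $s$. Writing $G_S := \bigcap_{i \in S} G_i$ and using $W \subseteq W_S$,
\[P(W \mid E) \leq \E_{Z, S}\bigl[P(W_S \mid E, Z)\bigr] \leq \E_{Z, S}\bigl[P(\neg G_S \mid E, Z)\bigr] + \E_{Z, S}\bigl[P(W_S \wedge G_S \mid E, Z)\bigr].\]
A union bound controls the first term: $\E_{Z, S}[P(\neg G_S \mid E, Z)] \leq s \cdot \max_i P(\neg G_i \mid E) \leq 2 n^{\epsilon - \delta} = 2 n^{-\delta/2}$.

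The main obstacle is bounding the second term. On $G_S$, for every $i \in S$ the triple $(A_i, B_i, h(Z)_i)$ attempts to win $\tilde{\mc G}$ on input $(X_i, Y_i)$, so the induced strategies $\tilde f_Z(x_K) := f(1^{n-m}, x_K)\vert_S$ and $\tilde g_Z(y_K) := g(1^{n-m}, y_K)\vert_S$ play $\tilde{\mc G}^{\otimes s}$ in the $S$-coordinates. Two-player parallel repetition (\Cref{thm:2p_par_rep}) gives $\val(\tilde{\mc G}^{\otimes s}) \leq 2^{-\Omega(s)}$, but the inputs $(X_S, Y_S)$ are drawn from $P \mid (E, Z)$, not from $\tilde Q^{\otimes s}$. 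Moreover $P(E)$ may be as small as $2^{-n^{10\delta}}$, vastly smaller than $2^{-s} = 2^{-n^{\delta/2}}$ (since $\epsilon = \delta/2 \ll 10\delta$), so the naive bound $P(W_S \mid E) \leq P(W_S)/P(E)$ is useless. The saving is that $S$ is a \emph{random small} subset of $K^*$: since $D\bigl(P_{X_K, Y_K \mid E, Z} \,\Vert\, \tilde Q^{\otimes m}\bigr) \lesssim \log(1/P(E)) \leq n^{10\delta}$, this relative entropy spreads across coordinates so that the average per-coordinate contribution is at most $n^{10\delta}/m \lesssim n^{10\delta - 1} \ll 1$; consequently the marginal of $(X_S, Y_S)$ under $P \mid (E, Z)$, averaged over $S$, is close in total variation to $\tilde Q^{\otimes s}$. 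Combining this closeness with two-player parallel repetition on $\tilde{\mc G}$---essentially rerunning a parallel-repetition argument for $\tilde{\mc G}$ inside the conditional distribution $P \mid (E, Z)$ while exploiting the product-like marginal of a random small $S$---should yield $\E_{Z, S}[P(W_S \wedge G_S \mid E, Z)] \leq 2^{-\Omega(s)} = 2^{-\Omega(n^{\delta/2})}$. Making this averaged-subset reduction rigorous is the technical heart of the proof. Summing the two bounds gives $P(W \mid E) \leq 2 n^{-\delta/2} + 2^{-\Omega(n^{\delta/2})} \leq n^{-\delta/3}$ for sufficiently large $n$.
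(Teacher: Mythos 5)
Your decomposition is essentially the paper's: the events $G_i$, the union bound giving $2sn^{-\delta}$, the induced two-player game $\tilde{\mc G}_i$ on $\tilde{\mc S}=\{(0,0),(1,0),(0,1)\}$ with value $<1$ via hypothesis (a), the pigeonhole on $(\mc A_i,\mc B_i,h(z)_i)$ to get a single game on a constant fraction of the zero coordinates, the Chernoff bound on $|K|$, and the random subset $S$ of size $\lfloor m^{\epsilon}\rfloor$ with $\epsilon=\delta/2$ all appear in the paper in the same roles.

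The gap is in the step you yourself flag as the technical heart. Your proposed justification --- that $D(P_{X_K,Y_K\mid E,Z}\,\Vert\,\tilde Q^{\otimes m})\lesssim n^{10\delta}$ spreads over $m=\Omega(n)$ coordinates, so the $S$-marginal of the inputs is close in total variation to $\tilde Q^{\otimes s}$ --- does not by itself bound $\E_{Z,S}[P(W_S\wedge G_S\mid E,Z)]$. The strategies $\tilde f_Z,\tilde g_Z$ read all of $(X_K,Y_K)$, not just $(X_S,Y_S)$, so closeness of the input marginal on $S$ says nothing about the probability that a strategy correlated across coordinates wins every coordinate of $S$; one genuinely needs the full embedding/induction machinery. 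What the paper uses here is a self-contained statement (its Proposition~\ref{prop:2_player_hard_set}): for a two-player game of value $<1$ and a \emph{product} event of measure $\geq 2^{-n^{\delta'}}$, the expected probability over a uniformly random $S$ of size $\lfloor n^{\epsilon}\rfloor$ of winning all coordinates of $S$ is at most $2^{-n^{\epsilon/2}}$, proved by induction on $|S|$ via Raz's hard-coordinate theorem (a random next coordinate is hard in expectation, then condition on typical questions/answers in the already-won coordinates to restore a product event). You should invoke or prove that proposition rather than argue via marginals. Two smaller omissions: (i) before applying any such proposition you must discard the values of $z$ (and, as the paper does, of $x_{-K},y_{-K}$, since the coordinates of $K\setminus K^*$ carry possibly different induced games) for which the conditional measure of $E$ drops below $2^{-m^{O(\delta)}}$; this is a routine Markov/averaging step (the paper's sets $\mc T_1,\mc T_2$) but it is needed, since the proposition's hypothesis is on the conditional measure, not on $P(E)$ itself; (ii) your bound $\E_{Z,S}[P(\neg G_S\mid E,Z)]\le s\cdot\max_i P(\neg G_i\mid E)$ needs a line of care because $S$ depends on $Z$ through $K^*$; summing $P(\neg G_i\mid E)$ over all $i$ weighted by $\Pr[i\in S]\le O(s/n)$ fixes it.
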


Next, assuming this lemma, we complete the proof of Theorem~\ref{thm:four_point_par_rep}.

\begin{proof}[Proof of Theorem~\ref{thm:four_point_par_rep}]
	Let $\delta=\frac{1}{100}$ be as in Lemma~\ref{lemma:4_pt_main_lemma}, and let $m = \lceil n^{5\delta}\rceil$.
	We will show that for every $k\in [m]\cup\set{0}$, and every product event $E = E^1\times E^2\times \set{0,1}^n$ with $P(E)\geq 2^{-n^{4\delta}(m-k)}$, it holds that $P\brac{W|E} \leq v_k$, where $v_k = \max\set{n^{-\delta/3},\ \brac{1-n^{-4\delta}}^k+n^{-4\delta}}$.
	Assuming this, the case $m=k$, with $E = \set{0,1}^n\times \set{0,1}^n\times \set{0,1}^n$ gives us 
	\[P(W) \leq  \max\set{n^{-\delta/3},\  \brac{1-n^{-4\delta}}^{n^{5\delta}}+n^{-4\delta}} \leq \max\set{n^{-\delta/3},\ n^{-3\delta}} =n^{-\delta/3} ,\]
	which completes the proof.
	
	We prove the above claim by induction on $k$. The base case $k=0$ holds trivially as $v_0\geq {\brac{1-n^{-4\delta}}^k = 1}$.
	Next, we suppose that the claim holds for $k-1$, for some $1\leq k\leq m$, and we prove it for $k$.
	
	Let $E = E^1\times E^2\times \set{0,1}^n$ be any event with $P(E)\geq 2^{-n^{4\delta}(m-k)}\geq 2^{-n^{10\delta}}$.
	Recall that $\mc D \subseteq \mc A\times \mc B$ is the set of pairs of answers of player 1 and player 2, that lose the game $\mc G$ on inputs $(1,1,1)$.
	For each $i \in [n]$, let $\tilde{L_i}$ denote the event $\brac{(X_i, Y_i, Z_i) = (1,1,1)}\land \brac{\brac{A_i, B_i}\in \mc D}$, and let $\tilde{W_i}$ be the complement event of $\tilde{L_i}$.
	
	We consider the following cases:
	
	\begin{enumerate}
		\item Suppose that for every $i\in [n]$, it holds that $P\brac{\tilde{L_i} \st E} \leq n^{-4\delta}$.
		
		Consider any fixed $i\in [n]$.
		Then, for every $(a_i,b_i)\in \mc D$, we have that 
		\begin{align*}
			n^{-4\delta} &\geq P\brac{(A_i, B_i) = (a_i, b_i) \land (X_i,Y_i,Z_i) = (1,1,1) \st E}
			\\&= P\brac{X_i=1 \land A_i=a_i\st E}\cdot P\brac{Y_i=1 \land B_i=b_i\st E},
		\end{align*}
		where the  equality follows from the fact that $P_{XY} = P_X\times P_Y$ is a product distribution, the product structure of $E$, and that $Z_i$ is a deterministic function of $X_i,Y_i$.
		Hence, for every $(a_i,b_i)\in \mc D$, either $P\brac{X_i=1 \land A_i=a_i\st E}\leq n^{-2\delta}$ or $ P\brac{Y_i=1 \land B_i=b_i\st E} \leq n^{-2\delta}$.
		
		We define
		\[\mc A_i = \set{a_i\in \mc A: P\brac{X_i=1 \land A_i=a_i\st E}> n^{-2\delta}},\]
		\[\mc B_i = \set{b_i\in \mc B:  P\brac{Y_i=1 \land B_i=b_i\st E}> n^{-2\delta}}.\]
		Then, for each $(a_i, b_i)\in \mc D$, it holds that either $a_i\not\in \mc A_i$ or $b_i\not\in \mc B_i$.
		Also, by a union bound, we have  that
		\[ P\brac{X_i=1 \land A_i\not\in \mc A_i\st E} \leq \norm{\mc A}\cdot n^{-2\delta} \leq n^{-\delta},\]
		\[ P\brac{Y_i=1 \land B_i\not\in \mc B_i\st E} \leq \norm{\mc B}\cdot n^{-2\delta} \leq n^{-\delta}.\]

		Hence, we have that $\mc A_1, \mc B_1, \dots, \mc A_n, \mc B_n$ satisfy the hypothesis in Lemma~\ref{lemma:4_pt_main_lemma}, and so ${P\brac{W\st E}\leq n^{-\delta/3} \leq v_k}$.

		\item Let $i \in [n]$ be such that $P(\tilde{L_i} |E) \geq n^{-4\delta}$. That is, $P(\tilde{W_i} |E) \leq 1-n^{-4\delta}$.
			
			 Let $T = \brac{(X_i, Y_i, Z_i), (A_i, B_i)}$.
			 Then, the event $\tilde{W_i}$ depends deterministically on $T$.
			 Let $\mc T$ be the set of all $t$ that satisfy the event $\tilde{W_i}$, and let $\mc T'\subseteq \mc T$ consist of all $t\in \mc T$ such that $P\brac{t\st E}\geq 2^{-n^{4\delta}}$.
			 Then, we have that 
		\begin{align*}
			P\brac{W\st E} &= P\brac{W\land \tilde{W_i}\st E}
			\\&= \sum_{t\in \mc T} P\brac{t\st E} P\brac{W \st E, t}
			\\&= \sum_{t\in \mc T'} P\brac{t\st E} P\brac{W \st E, t} + \sum_{t\in \mc T\setminus \mc T'} P\brac{t\st E} P\brac{W \st E, t} 
			\\&\leq \sum_{t\in \mc T'} P\brac{t\st E} P\brac{W \st E, t} + \norm{\mc T\setminus \mc T'}\cdot 2^{-n^{4\delta}}
			\\&\leq \sum_{t\in \mc T'} P\brac{t\st E} P\brac{W \st E, t} + 4\norm{\mc A}\norm{\mc B}\cdot 2^{-n^{4\delta}}
		\end{align*}
		Now, for any $t\in \mc T'$, the event $E\land (T=t)$ is a product event on the inputs of player 1 and player 2, and it holds that $P(E, t) \geq 2^{-n^{4\delta}(m-k)}\cdot 2^{-n^{4\delta}}  = 2^{-n^{4\delta}(m-(k-1))}$.
		Hence, using the induction hypothesis, we get
		\begin{align*}
			P\brac{W\st E} &\leq \sum_{t\in \mc T'} P\brac{t\st E} \cdot v_{k-1} + 4\norm{\mc A}\norm{\mc B}\cdot 2^{-n^{4\delta}}
			\\&\leq P\brac{\tilde{W_i}\st E} \cdot v_{k-1} + 4\norm{\mc A}\norm{\mc B}\cdot 2^{-n^{4\delta}}
			\\&\leq \brac{1-n^{-4\delta}} \cdot v_{k-1} + 4\norm{\mc A}\norm{\mc B}\cdot 2^{-n^{4\delta}}
			\\&\leq v_k. \qedhere
		\end{align*}
	\end{enumerate}
\end{proof} 
%
%%%%%%%%%%%%%%%%%%%%%%%%%%%%%%%%%%%%%%%%%
%
%\subsection{Technical Lemma}\label{sec:4_pt_main_lemma}

The remainder of this section is devoted to the proof of Lemma~\ref{lemma:4_pt_main_lemma}, which we assumed to complete the proof of Theorem~\ref{thm:four_point_par_rep}.

We recall that $\mc D \subseteq \mc A\times \mc B$ is the set of pairs of answers of player 1 and player 2, that lose the game $\mc G$ on inputs $(1,1,1)$.

Let $\delta = \frac{1}{100}$. We will work with a fixed product event  $E = E^1\times E^2\times \set{0,1}^n$ with $P(E) \geq 2^{-n^{10\delta}}$, and sets $\mc A_1, \dots, \mc A_n \subseteq \mc A$ and $\mc B_1, \dots, \mc B_n \subseteq \mc B$ such that for all $i\in [n]$:
\begin{enumerate}[label=(\alph*)]
	\item\label{item:4_pt_main_lemma_lose_prop} $\mc D \subseteq \brac{\mc A\times \mc B}\setminus\brac{\mc A_i\times \mc B_i}$.
	\item\label{item:4_pt_main_lemma_small_prob} $P\brac{X_i=1 \land A_i\not\in \mc A_i\st E}\leq n^{-\delta}$ and $P\brac{Y_i=1 \land B_i\not\in \mc B_i\st E}\leq n^{-\delta}$.
\end{enumerate}

\begin{definition}\label{defn:four_point_G}
	For any $i\in [n]$, we define $G(i)$ to be the event 
	\[ \brac{X_i\not=1 \lor A_i\in \mc A_i} \land \brac{Y_i\not=1 \lor B_i\in \mc B_i} .\]
%	\[\lnot\brac{\brac{X_i=1 \land A_i\not\in \mc A_i} \lor \brac{Y_i=1 \land B_i\not\in \mc B_i}}.\]
	For any subset $S\subseteq [n]$, we define $G(S)$ to be the event $\bigwedge\limits_{i\in S} G(i)$. 
\end{definition}

\begin{lemma}\label{lemma:bad_event_prob}
	For every $S\subseteq [n]$, it holds that $P(\lnot G(S)\st E) \leq 2\norm{S}\cdot n^{-\delta}$.
\end{lemma}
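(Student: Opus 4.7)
The proof is a straightforward two-step union bound, so the plan is to simply unpack the definitions.

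First, I would handle a single coordinate $i \in S$. By definition,
\[
\lnot G(i) \;=\; \bigl(X_i = 1 \land A_i \notin \mc A_i\bigr) \,\lor\, \bigl(Y_i = 1 \land B_i \notin \mc B_i\bigr),
\]
so applying a union bound together with hypothesis~\ref{item:4_pt_main_lemma_small_prob} gives
\[
P(\lnot G(i) \mid E) \;\leq\; P(X_i = 1 \land A_i \notin \mc A_i \mid E) + P(Y_i = 1 \land B_i \notin \mc B_i \mid E) \;\leq\; 2 n^{-\delta}.
\]

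Second, I would take a union bound over $i \in S$. Since $\lnot G(S) = \bigvee_{i \in S} \lnot G(i)$,
\[
P(\lnot G(S) \mid E) \;\leq\; \sum_{i \in S} P(\lnot G(i) \mid E) \;\leq\; 2|S| \cdot n^{-\delta},
\]
which is the claimed bound. There is no real obstacle here; the lemma is essentially a restatement of hypothesis~\ref{item:4_pt_main_lemma_small_prob} summed across the coordinates of $S$, and neither the product structure of $E$ nor property~\ref{item:4_pt_main_lemma_lose_prop} is needed for this particular step (they will be used in later lemmas that apply this bound, e.g., when combining $G(S)$ with the reduction to a two-player game).
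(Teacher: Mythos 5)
Your proof is correct and is exactly the argument the paper intends: the paper's proof is the one-line remark that the lemma ``follows from a union bound on property~\ref{item:4_pt_main_lemma_small_prob},'' which is precisely your two-step union bound. Your added observation that neither the product structure of $E$ nor property~\ref{item:4_pt_main_lemma_lose_prop} is needed here is also accurate.
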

\begin{proof}
	This follows from a union bound on property~\ref{item:4_pt_main_lemma_small_prob}.
\end{proof}

\begin{definition}\label{defn:z_zero_set}
	For any $z\in \set{0,1}^n$, we define $K'_z = \set{i\in [n] : z_i = 0}$, and $m'_z = \norm{K'_z}$.
	Also, we define $K_z\subseteq K'_z$ to be a subset of the largest possible size such that $(\mc A_i,\mc B_i, h(z)_i)$ are the same for each $i\in K_z$, and let $m_z = \norm{K_z}$.
\end{definition}

%%%%%%%%%%%%%%%%%%%%%%%%%%%%%%%%%%%%%%%%

\subsection{Fixed Input For Player 3}

Throughout this subsection, we will consider a fixed input $z\in \set{0,1}^n$ for player 3.
Let $K'=K'_z,\ m'=m'_z,\ K=K_z,\ m=m_z$ be as in Definition~\ref{defn:z_zero_set}.

We will also consider fixed inputs $x_{-K}, y_{-K}\in \set{0,1}^{[n]\setminus K}$ to player 1 and player 2 in coordinates $[n]\setminus K$, and assume that the following hold:
\begin{enumerate}[label=(\alph*)]
	\item $m'\geq \frac{n}{2}$, and hence $m\geq cn$, for $c= \frac{1}{2\cdot 2^{\norm{\mc A}+\norm{\mc B}}\cdot \norm{\mc C}}$.
	\item $P\brac{E|z, x_{-K}, y_{-K}} \geq 2^{-n^{11\delta}} \geq 2^{-m^{12\delta}}$.
\end{enumerate}

Under these assumptions, we will prove a result, namely Lemma~\ref{lemma:fixed_z_good_event}, which we will need for the proof of Lemma~\ref{lemma:4_pt_main_lemma}.

\begin{definition}\label{def:restr_2p_game}
	For every $i\in K$, we define a 2-player game $\tilde{\mc G}_i$ as follows:
	\begin{enumerate}
	\item The inputs $(\tilde{x}_i, \tilde{y}_i)$ are distributed uniformly over the set $\tilde{\mc S} = \set{(0,0), (0,1), (1,0)}$.
	\item The players give answers $\tilde{a}_i \in \mc A$ and $\tilde{b}_i \in \mc B $ respectively.
	\item The predicate $\tilde{V}_{i}$ is given by \[\tilde{V}_{i}\brac{(\tilde{x}_i, \tilde{y}_i), (\tilde{a}_i, \tilde{b}_i)} = V\brac{(\tilde{x}_i, \tilde{y}_i, 0), (\tilde{a}_i, \tilde{b}_i, h(z)_i)} \land \brac{\tilde{x}_i\not=1\lor \tilde{a}_i\in \mc A_i}\land \brac{\tilde{y}_i\not=1\lor \tilde{b}_i\in \mc B_i}.\] 
\end{enumerate}
Observe that this game is actually the same for all $i\in K$, and hence we will simply denote it by $\tilde{\mc G}$.
\end{definition}

\begin{claim}\label{claim:restr_2p_game_value}
	$\val(\tilde{\mc G})<1$.
\end{claim}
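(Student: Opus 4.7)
The plan is to prove this by contradiction: assume some strategy pair $(\tilde f,\tilde g)$ achieves value $1$ in $\tilde{\mc G}$ and use it to construct strategies $f',g',h'$ winning the original 3-player game $\mc G$ with probability $1$, contradicting the hypothesis $\val(\mc G)<1$.

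Concretely, suppose $\tilde V_i$ is satisfied on all three inputs in $\tilde{\mc S}=\{(0,0),(0,1),(1,0)\}$ by $(\tilde f,\tilde g)$. Unpacking the definition of $\tilde V_i$ from \Cref{def:restr_2p_game}, this yields three facts simultaneously:
\begin{itemize}
\item $V\bigl((0,0,0),(\tilde f(0),\tilde g(0),h(z)_i)\bigr)=1$,
\item $V\bigl((0,1,0),(\tilde f(0),\tilde g(1),h(z)_i)\bigr)=1$ and $\tilde g(1)\in\mc B_i$,
\item $V\bigl((1,0,0),(\tilde f(1),\tilde g(0),h(z)_i)\bigr)=1$ and $\tilde f(1)\in\mc A_i$.
\end{itemize}
I now build a strategy for the original game $\mc G$ by setting $f'(b)=\tilde f(b)$, $g'(b)=\tilde g(b)$ for $b\in\{0,1\}$, and $h'(0)=h(z)_i$, with $h'(1)$ still to be chosen. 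The three bullets above already guarantee that $f',g',h'$ win $\mc G$ on the three inputs $(0,0,0),(0,1,0),(1,0,0)\in\mc S$, regardless of the value of $h'(1)$.

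The only remaining input in the support $\mc S$ is $(1,1,1)$, and the crux is to pick $h'(1)$ so that $V\bigl((1,1,1),(\tilde f(1),\tilde g(1),h'(1))\bigr)=1$. By the two facts $\tilde f(1)\in\mc A_i$ and $\tilde g(1)\in\mc B_i$ we have $(\tilde f(1),\tilde g(1))\in\mc A_i\times\mc B_i$. But assumption~\ref{item:4_pt_main_lemma_lose_prop} of \Cref{lemma:4_pt_main_lemma} (which applies since $i\in K\subseteq [n]$) says $\mc D\subseteq(\mc A\times\mc B)\setminus(\mc A_i\times\mc B_i)$, so $(\tilde f(1),\tilde g(1))\notin\mc D$. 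By the very definition of $\mc D$ as the pairs of first-two-player answers that lose $\mc G$ on $(1,1,1)$, there must exist some $\tilde c\in\mc C$ with $V\bigl((1,1,1),(\tilde f(1),\tilde g(1),\tilde c)\bigr)=1$. Setting $h'(1)=\tilde c$ completes the strategy, and $(f',g',h')$ now wins $\mc G$ on every point of $\mc S=\mathrm{supp}(Q)$, giving $\val(\mc G)=1$ and the desired contradiction.

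I do not anticipate any real obstacle here: the claim is essentially a straightforward unwinding of the definition of $\tilde V_i$ combined with the key structural property~\ref{item:4_pt_main_lemma_lose_prop}, namely that pairs outside $\mc A_i\times\mc B_i$ are precisely the ones that can (potentially) lose on $(1,1,1)$. The only mild subtlety is to be careful that the fixed function value $h(z)_i$, produced by the original player-$3$ strategy on the fixed input $z$, is exactly what is plugged in as $h'(0)$; this is consistent because, in the four-point support $\mc S$, whenever $z_i=0$ (which holds for $i\in K\subseteq K'_z$) the third player always receives input $0$, so fixing $h'(0):=h(z)_i$ is unambiguous.
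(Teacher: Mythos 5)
Your proof is correct and follows essentially the same route as the paper: assume a value-$1$ strategy for $\tilde{\mc G}$, read off from $\tilde V_i$ that $\tilde f(1)\in\mc A_i$ and $\tilde g(1)\in\mc B_i$, invoke property (a) to get $(\tilde f(1),\tilde g(1))\notin\mc D$ and hence a win on $(1,1,1)$, and combine with the wins on the other three support points (with player 3 answering $h(z)_i$) to contradict $\val(\mc G)<1$. The only cosmetic difference is that the paper, having already applied Lemma~\ref{lemma:pred_for_deter_inp} so that $V$ on $(1,1,1)$ ignores player 3's answer, does not need to exhibit a specific winning $\tilde c$, whereas you select one explicitly; both are fine.
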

\begin{proof}
	Consider any $i\in K$.
	Suppose for the sake of contradiction that $\val(\tilde{\mc G}_{i})=1$.
	Let $\tilde{f}_i:\set{0,1}\to \mc A$, and $\tilde{g}_i:\set{0,1}\to \mc B$ be winning strategies of player 1 and player 2.
	Let $\tilde{h}_i:\set{0,1}\to \mc C$ satisfy $\tilde{h}_i(0) = h(z)_i$.
	
	By the definition of the predicate $\tilde{V}_i$, it must hold that $\tilde{f}_i(1)\in\mc A_i $ and $\tilde{g}_i(1)\in \mc B_i$.
	Then, by property~\ref{item:4_pt_main_lemma_lose_prop}, the answers $(\tilde{f}_i(1),\tilde{g}_i(1))\not\in \mc D$, and hence must win the game $\mc G$ on player inputs $(1,1,1)$.	
	This fact, along with how the predicate $\tilde{V}_{i}$ is defined, shows that the strategies $\tilde{f}_i, \tilde{g}_i, \tilde{h}_i$ win the game $\mc G$ with probability 1, which is a contradiction.
\end{proof}

\begin{lemma}\label{lemma:fixed_z_good_event_part1}
	For any constant $\epsilon\in(0,1)$, it holds that 
	\[\E_S\sqbrac{P\brac{W_S\land G(S)\st E, z, x_{-K},y_{-K}}} \leq 2^{-m^{\epsilon/2}},\]
	where the expectation is over uniformly random $S\subseteq K$ of size $\lfloor m^{\epsilon}\rfloor$.
\end{lemma}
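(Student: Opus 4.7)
The plan is to reduce the claim to the two-player parallel repetition theorem for the auxiliary game $\tilde{\mc G}$ of \Cref{def:restr_2p_game}, which has value strictly less than one by \Cref{claim:restr_2p_game_value}. By the definition of $K$, the triples $(\mc A_i, \mc B_i, h(z)_i)$ are identical for every $i \in K$, and $z_i = 0$. Under $P$ we have $X_i \wedge Y_i = Z_i$, so after conditioning on $z$ and on $x_{-K}, y_{-K}$, the pair $(X_K, Y_K)$ is distributed as $\tilde{Q}^{\otimes m}$, where $\tilde{Q}$ is uniform on $\tilde{\mc S} = \{(0,0),(0,1),(1,0)\}$. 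The event $E = E^1\times E^2 \times \{0,1\}^n$ restricts to a product event $E_K = E^1_K \times E^2_K$ on $(X_K,Y_K)$ with $\tilde{P}(E_K) \geq 2^{-m^{12\delta}}$, and $f,g$ restrict to valid two-player strategies $\tilde{f},\tilde{g}$ for $\tilde{\mc G}^{\otimes m}$ on $K$. A direct inspection of $\tilde{V}$ shows that for $i \in K$ the event $W_i \wedge G(i)$ is precisely the event that $\tilde{V}\bigl((X_i, Y_i),(\tilde{f}(X_K)_i, \tilde{g}(Y_K)_i)\bigr)=1$. Writing $\tilde{W}_S$ for the winning event on $S$ in $\tilde{\mc G}^{\otimes m}$, the quantity to bound becomes $\E_S\bigl[\tilde{P}(\tilde{W}_S \mid E_K)\bigr]$.

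To bound this expectation for uniformly random $S \subseteq K$ with $|S| = \lfloor m^\epsilon \rfloor$, I would invoke the information-theoretic framework underlying \Cref{thm:2p_par_rep} (Raz/Holenstein). Order $S$ sequentially as $(i_1, \ldots, i_s)$ by choosing $i_t$ uniformly from $K \setminus \{i_1,\ldots,i_{t-1}\}$, and apply the chain rule
\[ \tilde{P}(\tilde{W}_S \mid E_K) = \prod_{t=1}^s \tilde{P}\bigl(\tilde{W}_{i_t} \mid \tilde{W}_{i_1}, \ldots, \tilde{W}_{i_{t-1}}, E_K\bigr). \]
The standard dependency-breaking/embedding argument shows that for uniform $i_t$, the expected $t$-th factor is at most $\val(\tilde{\mc G}) + O(\sqrt{I_t/(m-t+1)})$, where $I_t$ is a per-step mutual-information contribution satisfying $\sum_{t=1}^s I_t \leq \log(1/\tilde{P}(E_K)) + O(s) \leq m^{12\delta} + O(m^\epsilon)$. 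Since both $12\delta$ and $\epsilon$ are strictly less than $1$, the average per-step loss $\sqrt{(m^{12\delta}+m^\epsilon)/m}$ tends to zero, so for sufficiently large $m$ every factor is at most $(1 + \val(\tilde{\mc G}))/2 < 1$. This yields $\E_S[\tilde{P}(\tilde{W}_S \mid E_K)] \leq (1-\Omega(1))^s = 2^{-\Omega(m^\epsilon)} \leq 2^{-m^{\epsilon/2}}$ for large enough $m$, which is the bound claimed.

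The main obstacle is handling the event $E_K$, whose measure can be as small as $2^{-m^{12\delta}}$, possibly much smaller than $2^{-|S|}$: a naive division by $\tilde{P}(E_K)$ would give a vacuous bound whenever $\epsilon < 12\delta$, which is the regime in which \Cref{lemma:4_pt_main_lemma} actually invokes this lemma. The randomness of $S$ is what saves the argument, as it distributes the information cost of conditioning on $E_K$ across all $m$ coordinates, so that the per-step cost scales as $m^{12\delta-1} = o(1)$ even when the total cost $m^{12\delta}$ exceeds $|S|$. Formalizing this requires a careful accounting in the Raz/Holenstein information-theoretic chain rule, exploiting that the per-step losses are summed only over $s \ll m$ steps rather than all $m$ of them.
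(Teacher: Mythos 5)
Your proposal is correct and follows essentially the same route as the paper: the identical reduction of $P\brac{W_S\land G(S)\st E, z, x_{-K},y_{-K}}$ to winning the coordinates of $S$ in $\tilde{\mc G}^{\otimes m}$ conditioned on the restricted product event, followed by a two-player parallel-repetition bound for a random subset of coordinates. The chain-rule/information-cost argument you sketch (including the key point that randomizing $S$ spreads the cost $\log(1/\tilde{P}(E_K))\le m^{12\delta}$ over all $m$ coordinates) is exactly what the paper packages as Proposition~\ref{prop:2_player_hard_set} and then cites as a black box.
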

\begin{proof}		
	Observe that when $z, x_{-K}, y_{-K}$ are fixed, the distribution $P_{X_KY_K|z, x_{-K}, y_{-K}}$ of the remaining inputs is exactly the same as the input distribution of the game $\tilde{\mc G}^{\otimes m}$.
	Furthermore, the product structure of $E = E^1\times E^2\times \set{0,1}^n$ implies that the distribution $P_{X_KY_K|E, z, x_{-K}, y_{-K}}$ is the same as the distribution of game $\tilde{\mc G}^{\otimes m}$, when inputs drawn conditioned on $\tilde{E}=\tilde{E}^1\times \tilde{E}^2$, where  $\tilde{E}^1 = \set{x_K : (x_K, x_{-K})\in E^1}\subseteq \set{0,1}^m$, and  $\tilde{E}^2 = \set{y_K : (y_K, y_{-K})\in E^2}\subseteq \set{0,1}^m$.
		
	Under this identification, we also get strategies $\tilde{f}:\set{0,1}^m\to\mc A^{\otimes m}$ and $\tilde{g}:\set{0,1}^m\to\mc B^{\otimes m}$ for the game $\mc G^{\otimes m}$ given by $\tilde{f}(x_K) = f(x_K, x_{-K})_K$ and $\tilde{g}(y_K) = g(y_K, y_{-K})_K$ respectively. 
	Given how the predicate $\tilde{V}$ for the game $\tilde{\mc G}$ is defined, we have that for any subset $S\subseteq K$, the probability $P\brac{W_S\land G(S)\st E, z, x_{-K},y_{-K}}$ equals the probability that $\tilde{f}, \tilde{g}$ win the coordinates corresponding to $S$, of the game $\tilde{\mc G}^{\otimes m}$, when the input distribution conditioned on $\tilde{E}$.
	
	Hence, by Claim~\ref{claim:restr_2p_game_value} and Proposition~\ref{prop:2_player_hard_set}, we get that for all $\epsilon \in (0,1)$, it holds that
	\[\E_S\sqbrac{P\brac{W_S\land G(S)\st E, z, x_{-K},y_{-K}}} \leq 2^{-m^{\epsilon/2}},\]
	where the expectation is over uniformly random $S\subseteq K$ of size $\lfloor m^{\epsilon}\rfloor$.
	We used that under the input distribution of $\tilde{\mc G}^{\otimes m}$, the event $\tilde{E}$ has measure $P\brac{E|z, x_{-K}, y_{-K}} \geq 2^{-m^{12\delta}}$, and $12\delta\in (0,1)$.
\end{proof}

\begin{lemma}\label{lemma:fixed_z_good_event}
	For any constant $\epsilon \in (0,1)$, it holds that 
	\[\E_S\sqbrac{P\brac{W\land G(S)\st E, z, x_{-K},y_{-K}}} \leq 2^{-n^{\epsilon/8}},\]
	where the expectation is over uniformly random $S\subseteq [n]$ of size $\lfloor n^{\epsilon}\rfloor$.
\end{lemma}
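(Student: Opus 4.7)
The plan is to reduce \Cref{lemma:fixed_z_good_event} to \Cref{lemma:fixed_z_good_event_part1} by sub-sampling the random set $S\subseteq[n]$ down to a uniformly random subset $\hat S\subseteq K$ of a suitable (smaller) size. Throughout, I would use two monotonicity facts: (i)~$W\Rightarrow W_T$ for every $T\subseteq [n]$, since winning all coordinates certainly implies winning those in $T$; and (ii)~$G(S)\Rightarrow G(T)$ for every $T\subseteq S$, directly from the definition of $G(\cdot)$. Together these give $P(W\land G(S)\st\cdot)\leq P(W_T\land G(T)\st\cdot)$ for any (possibly random) $T\subseteq S$.

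My first step is to note that since $|K|=m\geq cn$, a standard Chernoff-type tail bound on the hypergeometric distribution shows that the event $\mathcal{E}\eqdef\{|S\cap K|\geq \lfloor m^{\epsilon/2}\rfloor\}$ holds except with probability $2^{-\Omega(n^\epsilon)}$: the mean of $|S\cap K|$ is $\lfloor n^\epsilon\rfloor\cdot m/n\geq \tfrac{c}{2}n^\epsilon$, which dominates $m^{\epsilon/2}$ for large $n$ because $m\leq n$. Conditioned on $\mathcal{E}$, I would then sample $\hat S$ as a uniformly random subset of $S\cap K$ of size exactly $\lfloor m^{\epsilon/2}\rfloor$. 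The joint procedure $(S,\hat S)$ is invariant under permutations of $K$, so by symmetry the marginal distribution of $\hat S$ conditional on $\mathcal{E}$ is uniform over subsets of $K$ of size $\lfloor m^{\epsilon/2}\rfloor$. Combining monotonicity with this coupling yields
\[
\E_S\bigl[P(W\land G(S)\st\cdot)\bigr]\;\leq\;\E_{S,\hat S}\bigl[P(W_{\hat S}\land G(\hat S)\st\cdot)\mathbbm{1}_{\mathcal{E}}\bigr]+\Pr[\neg\mathcal{E}]\;\leq\;\E_{\hat S}\bigl[P(W_{\hat S}\land G(\hat S)\st\cdot)\bigr]+2^{-\Omega(n^\epsilon)},
\]
where in the last line $\hat S$ is uniform over subsets of $K$ of size $\lfloor m^{\epsilon/2}\rfloor$.

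Applying \Cref{lemma:fixed_z_good_event_part1} with parameter $\epsilon/2$ bounds the first term by $2^{-m^{\epsilon/4}}$. Since $m\geq cn$, one has $m^{\epsilon/4}\geq (cn)^{\epsilon/4}\geq 2\cdot n^{\epsilon/8}$ for $n$ sufficiently large (the constants being absorbed into the $n^{\epsilon/8}$ exponent). Combined with the tail bound $2^{-\Omega(n^\epsilon)}\leq 2^{-2n^{\epsilon/8}}$, this yields the claimed $2^{-n^{\epsilon/8}}$ for all sufficiently large $n$. The only delicate point I foresee is verifying that $\hat S$ is genuinely uniform over subsets of $K$ of the prescribed size after conditioning on $\mathcal{E}$; this is a symmetry argument, but requires noticing that $\mathcal{E}$ depends only on $|S\cap K|$, so does not spoil the permutation-invariance of the joint sampling procedure.
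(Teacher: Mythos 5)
Your proposal is correct and follows essentially the same route as the paper: split on whether $|S\cap K|\geq m^{\epsilon/2}$, bound the bad event by a tail estimate for the hypergeometric distribution, and on the good event reduce to \Cref{lemma:fixed_z_good_event_part1} with parameter $\epsilon/2$ via the monotonicity of $W$ and $G(\cdot)$ under passing to a subset. Your explicit coupling (sub-sampling $\hat S\subseteq S\cap K$ and invoking permutation-invariance of $K$) is in fact a slightly cleaner justification of the sub-sampling step than the paper's description of the conditional law of $S$, but it is the same argument in substance.
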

\begin{proof}
	Let $S\subseteq [n]$ of size $\lfloor n^{\epsilon}\rfloor$ be chosen uniformly at random. 
	
	Then, the quantity $\E_S\sqbrac{P\brac{W\land G(S)\st E, z, x_{-K},y_{-K}}}$ is at most 
	\[P\brac{\norm{S\cap K} \leq m^{\epsilon/2}} + \E_{S}\sqbrac{P\brac{W\land G(S)\st E, z, x_{-K},y_{-K}}\big\vert \ \norm{S\cap K} \geq m^{\epsilon/2}}.\]

	We bound the two terms on the right hand side as follows:
	\begin{enumerate}[label=(\alph*)]
		\item Recall that $m \geq cn$, for a constant $c>0$.
		Then,
		\begin{align*}
			P\brac{\norm{S\cap K} \leq m^{\epsilon/2} } &\leq P\brac{\norm{S\cap K} \leq n^{\epsilon/2}}
			\\&\leq \frac{ \binom{m}{\lfloor n^{\epsilon/2} \rfloor}\binom{n-m+\lfloor n^{\epsilon/2} \rfloor}{\lfloor n^{\epsilon}\rfloor} }{\binom{n}{\lfloor n^{\epsilon}\rfloor}}
			\\&\leq n^{\lfloor n^{\epsilon/2}\rfloor}\cdot \brac{\frac{n-m+\lfloor n^{\epsilon/2} \rfloor}{n}}^{\lfloor n^{\epsilon} \rfloor}
			\\&\leq e^{n^{\epsilon/2} \ln{n}}\cdot \brac{1-\frac{c}{2}}^{n^{\epsilon}/2}
			\\&\leq 2^{-n^{\epsilon/2}}.
		\end{align*}
		
		\item Observe that the distribution of a uniformly random subset $S\subseteq [n]$ of size $\lfloor n^\epsilon\rfloor$, conditioned on $\norm{S\cap K}\geq m^{\epsilon/2}$, is the same as the following.
		We choose $T\subseteq K$ of size $\lfloor m^{\epsilon/2}\rfloor$ uniformly at random, and then $T'\subseteq [n]\setminus T$ of size $\lfloor n^{\epsilon}\rfloor-\lfloor m^{\epsilon/2}\rfloor$ uniformly at random, and then let $S = T\cup T'$.
		Then, by Lemma~\ref{lemma:fixed_z_good_event_part1}, we can bound the second term as
		\[ \E_{T, T'}\sqbrac{P\brac{W\land G(T\cup T')\st E, z, x_{-K},y_{-K}}} \leq \E_{T}\sqbrac{P\brac{W_T\land G(T)\st E, z, x_{-K},y_{-K}}} \leq 2^{-m^{\epsilon/4}}.\]
	\end{enumerate}
	
	Combining the above, we get that the desired quantity is at most $2^{-n^{\epsilon/2}}+2^{-m^{\epsilon /4}} \leq 2^{-n^{\epsilon/8}}$.
\end{proof}

%%%%%%%%%%%%%%%%%%%%%%%%%%%%%%%%%%%%%%%%

\subsection{Completing the Proof}

Let $T$ be the random variable given by ${T = (Z, X_{-K_Z}, Y_{-K_Z})}$, let ${\mc T_1 = \set{(z, x_{-K_z}, y_{-K_z}): m'_z \leq \frac{n}{2}}}$, and let ${\mc T_2 = \set{(z, x_{-K_z}, y_{-K_z}):P\brac{E\st z, x_{-K_z}, y_{-K_z}}\leq 2^{-n^{11\delta}}}}$.

\begin{lemma}\label{lemma:bad_T1}
	\[P\brac{T\in \mc T_1 \st E} \leq 2^{-n/20}.\]
\end{lemma}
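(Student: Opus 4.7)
The plan is to reduce the statement to a direct Chernoff bound on the Hamming weight of $Z$ under the unconditional distribution $P$, and then pay a factor of $1/P(E)$ to pass to the conditional distribution. Observe first that membership of $T=(Z,X_{-K_Z},Y_{-K_Z})$ in $\mc T_1$ depends only on $Z$: the condition $m'_z\le n/2$ means $|\{i:z_i=0\}|\le n/2$, equivalently the Hamming weight $|Z|=\sum_i Z_i$ is at least $n/2$. Under the product distribution $P=Q^{\otimes n}$ the coordinates $Z_i$ are i.i.d.\ with $\Pr[Z_i=1]=Q(1,1,1)=1/4$, since $(1,1,1)$ is the only point in $\mc S$ with third coordinate $1$.

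The expected weight of $Z$ is therefore $n/4$, and I will apply the additive Chernoff bound from \Cref{fact:chernoff} with $\mu=1/4$ and deviation $\delta=1/4$ to conclude
\[
P\!\left(|Z|\ge n/2\right)\le e^{-2\cdot(1/4)^2\cdot n}=e^{-n/8}.
\]
Combining this with the trivial inequality $P(A\mid E)\le P(A)/P(E)$ and the standing hypothesis $P(E)\ge 2^{-n^{10\delta}}$ with $\delta=1/100$, I get
\[
P(T\in\mc T_1\mid E)\le \frac{e^{-n/8}}{P(E)}\le e^{-n/8}\cdot 2^{n^{1/10}}.
\]
For all sufficiently large $n$ the exponent satisfies $n/(8\ln 2)-n^{1/10}\ge n/20$, so the right-hand side is at most $2^{-n/20}$, which is the desired bound. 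The entire argument is a few lines; the only (trivial) obstacle is verifying that the Chernoff tail $e^{-n/8}$ comfortably dominates the loss $2^{n^{1/10}}$ from conditioning, which is immediate because $10\delta=1/10<1$.
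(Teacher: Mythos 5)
Your proof is correct and follows essentially the same route as the paper's: a Chernoff bound on the number of zero (equivalently, one) coordinates of $Z$ under the unconditional product distribution, followed by the loss of a $1/P(E)\le 2^{n^{1/10}}$ factor from conditioning. The only cosmetic difference is that you use the additive Chernoff form on the weight of $Z$ (obtaining $e^{-n/8}$) while the paper uses the multiplicative lower-tail form on $m'_Z$ (obtaining $e^{-n/24}$); both comfortably dominate the conditioning loss and yield $2^{-n/20}$.
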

\begin{proof}
	By Fact~\ref{fact:chernoff}, it holds that 
	\[P\brac{m'_Z \leq \frac{n}{2}}\leq e^{-\frac{\brac{\frac{1}{3}}^2 \brac{\frac{3}{4}} n}{2}} =  e^{-n/24} \leq 2^{-n/19}.\] 
	This gives us
	\[P\brac{T\in \mc T_1 \st E} \leq \frac{P\brac{m'_Z \leq \frac{n}{2}}}{P(E)} \leq \frac{2^{-n/19}}{2^{-n^{10\delta}}} \leq 2^{-n/20}. \qedhere\]
\end{proof}

\begin{lemma}\label{lemma:bad_T2}
	\[P\brac{T\in \mc T_2 \st E} \leq 2^{-n^{10\delta}}.\]
\end{lemma}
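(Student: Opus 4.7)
The plan is to prove this by a direct Bayes-rule calculation, exploiting the gap between the threshold $2^{-n^{11\delta}}$ in the definition of $\mc T_2$ and the lower bound $2^{-n^{10\delta}}$ on $P(E)$. Intuitively, $\mc T_2$ consists of values of $T$ that contribute almost no mass to $E$, so their conditional probability given $E$ must be small relative to the total mass $P(E)$.

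Concretely, I would first write
\[
P(T \in \mc T_2, E) \;=\; \sum_{t \in \mc T_2} P(T = t) \cdot P(E \mid T = t),
\]
and apply the defining inequality $P(E \mid t) \leq 2^{-n^{11\delta}}$ for every $t \in \mc T_2$ to get
\[
P(T \in \mc T_2, E) \;\leq\; 2^{-n^{11\delta}} \sum_{t \in \mc T_2} P(T = t) \;\leq\; 2^{-n^{11\delta}}.
\]
Dividing by $P(E) \geq 2^{-n^{10\delta}}$ then yields
\[
P(T \in \mc T_2 \mid E) \;\leq\; \frac{2^{-n^{11\delta}}}{2^{-n^{10\delta}}} \;=\; 2^{-(n^{11\delta} - n^{10\delta})} \;=\; 2^{-n^{10\delta}(n^{\delta} - 1)}.
\]
Finally, for $n$ large enough so that $n^{\delta} \geq 2$ (that is, $n \geq 2^{1/\delta} = 2^{100}$, which is covered by our assumption that $n$ is sufficiently large), we have $n^{\delta} - 1 \geq 1$, and therefore $P(T \in \mc T_2 \mid E) \leq 2^{-n^{10\delta}}$, as required.

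This is essentially a one-line application of Bayes' rule against a threshold, so I do not expect any real obstacle; the only thing to be careful about is ensuring $n$ is large enough for the exponent gap $n^{11\delta}-n^{10\delta}$ to dominate $n^{10\delta}$, which follows easily from $\delta = 1/100$ being fixed and $n$ being taken large at the start of the section.
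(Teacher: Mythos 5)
Your proof is correct and is essentially identical to the paper's: both apply Bayes' rule term by term over $t \in \mc T_2$, bound $P(E\mid t)$ by the threshold $2^{-n^{11\delta}}$, divide by $P(E)\ge 2^{-n^{10\delta}}$, and use $n^{\delta}\ge 2$ to absorb the exponent gap. Your write-up just makes the final inequality $2^{-(n^{11\delta}-n^{10\delta})}\le 2^{-n^{10\delta}}$ slightly more explicit.
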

\begin{proof}
	The left hand side equals
	\[\sum_{t\in \mc T_2}P\brac{t\st E} = \sum_{t\in \mc T_2} \frac{P\brac{E\st t}P(t)}{P(E)} \leq \sum_{t\in \mc T_2} \frac{2^{-n^{11\delta}}P(t)}{2^{-n^{10\delta}}} \leq\frac{2^{-n^{11\delta}}}{2^{-n^{10\delta}}} \leq 2^{-n^{10\delta}}. \]
\end{proof}

\begin{proof}[Proof of Lemma~\ref{lemma:4_pt_main_lemma}]
	Observe that
	\begin{align*}
		P(W\st E) &= \sum_{t}P\brac{t|E}\cdot P(W \st E, t)
		\\&\leq \sum_{t\not\in \mc T_1\cup \mc T_2}P\brac{t|E}\cdot P(W \st E, t) + P\brac{T\in \mc T_1\st E} + P\brac{T\in \mc T_2\st E}
		\\&\leq \sum_{t\not\in \mc T_1\cup \mc T_2}P\brac{t|E}\cdot P(W \st E, t) + 2^{-n/20} + 2^{-n^{10\delta}}.
	\end{align*}
	We used Lemma~\ref{lemma:bad_T1} and Lemma~\ref{lemma:bad_T2} for the last inequality.	
	
	Let $\epsilon = \delta/2$, and let $S\subseteq [n]$ be a uniformly random subset of size $\lfloor n^\epsilon \rfloor$.	
	By Lemma~\ref{lemma:fixed_z_good_event}, we know that for each $t\not\in \mc T_1\cup \mc T_2$, it holds that
	\[P\brac{W\st E, t} =  \E_S\sqbrac{P\brac{W\land G(S)\st E, t} + P\brac{W\land \lnot G(S)\st E, t}}   \leq 2^{-n^{\epsilon/8}}+\E_S\sqbrac{P\brac{\lnot G(S)\st E, t}}.\]
	Hence, by Lemma~\ref{lemma:bad_event_prob}, we get
	\[\sum_{t\not\in \mc T_1\cup \mc T_2}P\brac{t|E}\cdot P(W \st E, t) \leq 2^{-n^{\epsilon/8}}+ \E_S\sqbrac{P\brac{\lnot G(S)\st E}}\leq 2^{-n^{\epsilon/8}} + 2 n^{\epsilon-\delta} = 2^{-n^{\delta/16}}+ 2n^{-\delta/2}. \]
	
	Putting everything together, we get
	\[P(W\st E) \leq 2^{-n^{\delta/16}}+ 2n^{-\delta/2} + 2^{-n/20}+ 2^{-n^{10\delta}} \leq n^{-\delta/3}.\qedhere\]
\end{proof}

%%%%%%%%%%%%%%%%%%%%%%%%%%%%%%%%%%%%%%%%%%%%%%%%%%%%%%%%%%%%%%%%%%%%%%%%%%%%%%%%

\section{Playerwise Connected Games}\label{sec:playerconn_games}

In this section, we will prove the following theorem:

\begin{theorem}\label{thm:playerwise_conn_par_rep} (Parallel Repetition for Playerwise Connected Games)
	Let $\mc G$ be a playerwise connected $k$-player game such that ${\val(\mc G)<1}$.
	Then, there exists a constant $c = c(\mc G)>0$ such that for every $n\in \N$, it holds that $\val(\mc G^{\otimes n}) \leq n^{-c}$.
\end{theorem}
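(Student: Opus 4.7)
The plan is to follow and extend the information-theoretic framework of \cite{Raz98, Hol09, DHVY17}. First, by \Cref{lemma:wlog_unif_dist}, it suffices to handle the case in which $Q$ is uniform over its support $\mc S \subseteq \mc X$. By the inductive criterion of \Cref{lemma:prod_set_hard_coor}, it then suffices to exhibit a constant $\epsilon > 0$ and a polynomial threshold $\rho(n) \geq n^{-O(1)}$ such that for every product event $E = E^1 \times \cdots \times E^k$ with $P(E) \geq \rho(n)$, there exists some coordinate $i \in [n]$ for which $\val_i(\mc G^{\otimes n} \mid (P|E)) \leq 1 - \epsilon$.

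Fix such an $E$ and introduce a dependency-breaking random variable $R = (R_1, \dots, R_n)$ in which, for each coordinate $i$ independently, one picks a uniformly random player $J_i \in [k]$ and sets $R_i = (J_i, X_i^{-J_i})$, so that $R$ reveals all but one random player's input in each coordinate and conditioning on $R$ makes the $k$ inputs independent in every coordinate. The goal is to show that for a typical coordinate $i$ the players can, using only shared randomness, jointly sample an $r_{-i}$ from a distribution close to $P_{R_{-i} \mid E}$, and that the conditional distribution $P_{X_i \mid r_{-i}, E}$ is close to $Q$ on most such $r_{-i}$. If so, each player can independently extend $r_{-i}$ to its full input given only its own question in coordinate $i$, producing a valid embedding of $\mc G$ inside coordinate $i$ of $\mc G^{\otimes n} \mid E$.

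The heart of the argument is to prove this closeness when $\mc G$ is merely playerwise connected. The key device is an auxiliary product distribution $\Gamma$ on $\mc X$ whose single-player marginals agree with those of $Q$, together with the measure $P_{-i} \Gamma$ obtained by replacing the $i$-th coordinate of $P$ with $\Gamma$. Under $P_{-i} \Gamma$, the coordinate-$i$ marginal is a product across players; after conditioning on $r_{-i}$ and any single-player event $E^j$, the coordinate-$i$ distribution stays product across players and its single-player marginals agree with $Q$'s. One then needs two ingredients. First, one shows that under $P_{X_i Y_i Z_i \mid r_{-i}, E^j}$ the probabilities assigned to points of $\mc S$ that agree on the $j$-th coordinate are equal (this is immediate from $R$ being dependency-breaking and the symmetry of $Q$ given $X^j$). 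Second, one uses an information-theoretic argument in the spirit of \cite{Raz98, Hol09, DHVY17}, but applied only to the single-player event $E^j$, to prove that probabilities assigned to points $x, x' \in \mc S$ connected by an edge in $H_{\mc G}^j$ are close on typical $r_{-i}$; iterating this along paths in the connected graph $H_{\mc G}^j$ (whose connectedness is precisely the playerwise-connected hypothesis) yields that all points in $\mc S$ have comparable probabilities. Reverting $\Gamma$ back to $Q$ via conditioning on an event $T_i$ depending only on the $i$-th coordinate then gives $P_{X_i Y_i Z_i \mid r_{-i}, E^j} \approx Q$ for most $r_{-i}$.

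The main obstacle, and the place where the exponential bound is lost, is transferring the per-player estimates (conditioned on $E^j$) to estimates under the full product event $E$. For this, one uses the crude but essential inequality $P[A \mid E] \leq P[A \mid E^j] / P(E)$, which introduces a multiplicative factor $1/P(E)$ and forces the threshold $\rho(n) = n^{-O(1)}$, yielding the final bound $n^{-\Omega(1)}$. Once the embedding is established, averaging over $i$ and invoking $\val(\mc G) < 1$ produces a hard coordinate, which by \Cref{lemma:prod_set_hard_coor} completes the proof. The argument is oblivious to the size of the answer alphabet, since playerwise connectedness is a structural property of $\mc S$ and all estimates depend only on the product structure of $E$ and on information-theoretic properties of $R$.
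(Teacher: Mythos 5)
Your proposal is correct and follows essentially the same route as the paper's proof: the same dependency-breaking variable $R$, the same auxiliary product distribution $\Gamma$ with matching single-player marginals, the same two lemmas (equality of probabilities of points of $\mc S$ agreeing in the $j$-th coordinate, and closeness along edges of $H_{\mc G}^j$ under conditioning on $E^j$ alone), the same recovery of $Q$ from $\Gamma$ via an event $T_i$, and the same $1/P(E)$ loss from the inequality $P[A\mid E]\leq P[A\mid E^j]/P(E)$ that forces the polynomial rather than exponential bound. No substantive differences from the paper's argument.
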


We will follow the proof outline described in Section~\ref{sec:playerconn_outline}.
We fix a $k$-player game $\mc G = (\mc X, \mc A, Q, V)$ such that ${\val(\mc G) < 1}$, and $Q$ is the uniform distribution over its support $\mc S\subseteq \mc X$.
Consider the game $\mc G^{\otimes n} = (\mc X^{\otimes n}, \mc A^{\otimes n}, P, V^{\otimes n})$ for any large enough $n\in \N$.
Let $X$ be the random variable denoting the inputs to the players in the game $\mc G^{\otimes n}$.
Let $E=E^1\times \dots \times E^k \subseteq \mc X^{\otimes n}$ be an arbitrary product event such that $P(E)\geq n^{-\delta}$, where $\delta = 0.4$.

Following \cite{Raz98, Hol09, DHVY17}, we define our dependency breaking random variable variable $R$ as follows.
\begin{definition}\label{defn:dep_break_rv}
	For each $i\in [n]$, define $R_i = (D_i, M_i)$, where $D_i\in [k]$ is chosen uniformly at random, and $M_i = X_i^{-D_i}$.
	Define $R = (R_1,\dots,R_n)$.
	
	We will use $P_R$ to denote the distribution of $R$, ignoring the fact that the definition of $R$ uses randomness additional to that coming from $P$.
\end{definition}

Next, we state two important lemmas, which are proved in Section~\ref{sec:rand_coord_good}.
The expectations in these lemmas are with respect to the uniform distribution over $i\in [n]$.

\begin{lemma}\label{lemma:Xi_same_distr}
	\[ \E_{i\in [n]} \lonorm{P_{X_i|E}}{P_{X_i}} \lesssim \sqrt{\frac{1}{n}\log_2{\frac{1}{P(E)}}}. \]
\end{lemma}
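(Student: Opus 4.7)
The plan is to apply the standard information-theoretic argument from~\cite{Raz98, Hol09, DHVY17}, using the chain rule for KL divergence together with Pinsker's inequality. The key starting point is the identity
\[
D\brac{P_{X|E}\ \|\ P_X} = \sum_{x\in E} \frac{P(x)}{P(E)}\log_2 \frac{P(x)/P(E)}{P(x)} = \log_2 \frac{1}{P(E)},
\]
which gives us a clean global bound on how far the conditional distribution is from the original product distribution.

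Next I would invoke the chain rule for KL divergence together with the fact that $P = Q^{\otimes n}$ is a product distribution (so $P_{X_i\mid X_{<i}} = P_{X_i}$):
\[
\log_2 \frac{1}{P(E)} = D\brac{P_{X|E}\ \|\ P_X} = \sum_{i=1}^n \E_{x_{<i}\sim P_{X_{<i}|E}}\!\sqbrac{D\brac{P_{X_i|E,\,x_{<i}}\ \|\ P_{X_i}}}.
\]
Since $P_{X_i|E}$ is the average of $P_{X_i|E,\,x_{<i}}$ over $x_{<i}\sim P_{X_{<i}|E}$, convexity of KL divergence in its first argument yields
\[
\sum_{i=1}^n D\brac{P_{X_i|E}\ \|\ P_{X_i}} \leq \log_2 \frac{1}{P(E)}.
\]

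Finally, I would apply Pinsker's inequality coordinatewise: ${\|P_{X_i|E} - P_{X_i}\|_1 \lesssim \sqrt{D(P_{X_i|E}\|P_{X_i})}}$, and then Cauchy--Schwarz (equivalently, Jensen's inequality applied to the concave square-root):
\[
\E_{i\in[n]} \lonorm{P_{X_i|E}}{P_{X_i}} \lesssim \E_{i\in[n]}\sqrt{D\brac{P_{X_i|E}\ \|\ P_{X_i}}} \leq \sqrt{\frac{1}{n}\sum_{i=1}^n D\brac{P_{X_i|E}\ \|\ P_{X_i}}} \leq \sqrt{\frac{1}{n}\log_2\frac{1}{P(E)}},
\]
which is precisely the claimed bound. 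No step here is really an obstacle --- this is a well-known ``conditioning on an event costs at most $\log(1/P(E))$ bits of information, spread across $n$ coordinates'' calculation, and the only care needed is to keep track of the two applications of Jensen/convexity (one for the chain-rule averaging, and one for the square root in the final averaging over $i$).
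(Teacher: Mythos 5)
Your proof is correct and follows essentially the same route as the paper: the paper proves this lemma by a one-line application of Lemma~\ref{lemma:pdt_distr_cond} to the product distribution $P_X = P_{X_1}\times\dots\times P_{X_n}$, and the proof sketch of that lemma in Appendix~\ref{appendix:2player_par_rep_results} is exactly your computation (bounding $\relent{P_{X|E}}{P_X}$ by $\log_2(1/P(E))$, superadditivity of relative entropy over coordinates against a product measure, then Pinsker and Jensen). Your derivation of the superadditivity step via the chain rule plus convexity of KL in its first argument is the standard justification of the fact the paper cites, so there is no substantive difference.
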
 

\begin{lemma}\label{lemma:ri_same_distr}
	\[ \E_{i\in [n]} \E_{x_i\sim P_{X_i|E}} \lonorm{P_{R_{-i}|x_i, E}}{P_{R_{-i}|E}} \lesssim \frac{1}{P(E)} \sqrt{\frac{1}{n}\log_2{\frac{1}{P(E)}}}. \]
\end{lemma}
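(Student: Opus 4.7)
The plan is to adapt the information-theoretic framework of \cite{Raz98,Hol09,DHVY17} to the playerwise connected setting. By Bayes' rule, bounding $\lonorm{P_{R_{-i}|x_i,E}}{P_{R_{-i}|E}}$ on average over $x_i\sim P_{X_i|E}$ is equivalent, up to constants, to showing that for a random $r_{-i}\sim P_{R_{-i}|E}$ the distribution $P_{X_i|r_{-i},E}$ is close on average to $P_{X_i|E}$; by \Cref{lemma:Xi_same_distr} it therefore suffices to show that $P_{X_i|r_{-i},E}$ is close on average to $Q$.

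First I would introduce a product distribution $\Gamma$ on $\mc X$ whose marginals on each player agree with those of $Q$, and consider the modified distribution $P_{-i}\Gamma$ which differs from $P$ only in the $i$-th coordinate. The crucial observation is that under $P_{-i}\Gamma$, conditioned on $r_{-i}$, the $i$-th coordinate remains product across the $k$ players, since $R$ already fixes the inputs of $k-1$ players in each coordinate and $\Gamma$ itself is a product distribution. Combined with $E = E^1\times\cdots\times E^k$, this yields the factorization
\[
(P_{-i}\Gamma)_{X_i\mid r_{-i}, E} \;=\; \prod_{j\in[k]} (P_{-i}\Gamma)_{X_i^j\mid r_{-i}, E^j} \;=\; \prod_{j\in[k]} P_{X_i^j\mid r_{-i}, E^j}.
\]
I would then argue that each factor is close to the $j$-th marginal of $Q$. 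This proceeds in two substeps. Substep (a): by the standard information-theoretic argument (now conditioning only on the single-player event $E^j$, not $E$) one shows that the joint distribution $P_{X_i\mid r_{-i}, E^j}$, which is supported on $\mc S$, satisfies for most $r_{-i}$ a collection of equalities and approximate equalities: whenever two points of $\mc S$ agree in the $j$-th coordinate their $P_{X_i\mid r_{-i}, E^j}$-probabilities are \emph{equal} (because fixing $r_{-i}$ and $X_i^j$ already determines the rest of $X_i$ up to uniform randomness independent of $E^j$), while whenever two points of $\mc S$ are joined by a single-player edge corresponding to any player $j'\neq j$, their probabilities are \emph{close} by the Raz/Holenstein cut-and-paste style argument. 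Substep (b): because the game is playerwise connected, the projection graph $H^{j'}_{\mc G}$ is connected for every $j'$, and chaining these two types of constraints along paths in $\mc S$ forces all points of $\mc S$ to have approximately equal probability; hence the player-$j$ marginal is close to that of $Q$.

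Plugging this into the factorization, $(P_{-i}\Gamma)_{X_i\mid r_{-i}, E}$ is close to $\prod_j$ marginals of $\Gamma$, which by construction equals $(P_{-i}\Gamma)_{X_i}$. To transfer back to $P$, I would condition on the event $T_i$ that pins $\Gamma$ in coordinate $i$ to $Q$; since $T_i$ depends only on coordinate $i$ and auxiliary randomness, this recovers the statement that $P_{X_i|r_{-i},E}$ is close to $Q$ on average over $r_{-i}$. Averaging over $i\in[n]$ and applying Pinsker/Cauchy--Schwarz against the chain-rule bound $\sum_i I(X_i;R_{-i}\mid E) \le \log(1/P(E))$ yields the $\sqrt{(1/n)\log(1/P(E))}$ factor.

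The main obstacle is the mismatch between probability spaces: substep (a) naturally gives bounds when averaging $r_{-i}\sim P_{R_{-i}|E^j}$, but the conclusion requires averaging over $r_{-i}\sim P_{R_{-i}|E}$. I would bridge this with the elementary inequality $P[\cdot\mid E] \le P[\cdot\mid E^j]/P(E)$, valid since $E\subseteq E^j$, at the cost of a multiplicative factor of $1/P(E)$. This loss is precisely the source of the $1/P(E)$ in the statement, and is what restricts the overall parallel repetition bound to the polynomial regime (events of at least inverse-polynomial measure) rather than the exponential regime achieved in \cite{DHVY17} for connected games.
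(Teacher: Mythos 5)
Your overall route is the paper's: Bayes' rule to reduce the statement to the closeness of $P_{X_i|r_{-i},E}$ to $P_{X_i|E}$ (and hence to $Q$ via \Cref{lemma:Xi_same_distr}), the auxiliary product distribution $\Gamma$ and the factorization of $(P_{-i}\Gamma)_{X_i\mid r_{-i},E}$ across players, the exact/approximate equality dichotomy on $\mc S$ combined with playerwise connectivity, the event $T_i$ to return from $P_{-i}\Gamma$ to $P$, and the inequality $P[\cdot\mid E]\le P[\cdot\mid E^j]/P(E)$ as the source of the $1/P(E)$ loss. This matches Lemmas~\ref{lemma:Xi_same_distr_cond_ri}, \ref{lemma:Xi_same_distr_cond_ri_only_Ej}, \ref{lemma:same_distr_over_edges}, \ref{lemma:same_distr_over_planes} and \ref{lemma:same_distr_ith_prod}.

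There is, however, a concrete error in substep (a) that breaks the chaining in substep (b) as written: you have the roles of the players reversed. When conditioning on $E^j$ alone, the \emph{exact} equality indeed holds for pairs $y,z\in\mc S$ with $y^j=z^j$ (Lemma~\ref{lemma:same_distr_over_planes}), but the cut-and-paste argument gives \emph{approximate} equality precisely for edges in direction $j$ itself, i.e.\ $y^{-j}=z^{-j}$ and $y^j\neq z^j$ (Lemma~\ref{lemma:same_distr_over_edges}): with probability $1/k$ the variable $R_i$ fixes $X_i^{-j}$, and one then bounds how much conditioning on the player-$j$ event $E^j$ can distort $X_i^j$ given $x_i^{-j}$. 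An edge in a direction $j'\neq j$, as in your write-up, joins two points that already agree in coordinate $j$ and hence are already covered by the exact-equality case; such edges never let you move between distinct values of $x^j$, so your chain cannot relate the probabilities of two points of $\mc S$ with $y^j\neq z^j$, and the conclusion that the player-$j$ marginal of $P_{X_i\mid r_{-i},E^j}$ is close to that of $Q$ does not follow. The fix is to use direction-$j$ edges together with the connectivity of $H_{\mc G}^j$ (whose edges correspond exactly to direction-$j$ edges between points of $\mc S$); with that correction your argument coincides with the paper's.
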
 
We remark that proving the above lemma without the $1/P(E)$ factor would show that the value of the repeated game decays exponentially.

The following claim from \cite{DHVY17} follows from the definition of the random variable $R$.

\begin{claim}\label{claim:dep_break_prop} (Dependency Breaking Property)
For each $i\in [n]$, and for each $x_i, r_{-i}$, it holds that 
\[ P_{X|x_i, r_{-i}} = P_{X^1|r_{-i}, x_i^1}\times\dots\times P_{X^k|r_{-i}, x_i^k}. \]
Since $E = E^1\times\dots\times E^k$ is a product event, it further holds that 
\[ P_{X|x_i, r_{-i}, E} = P_{X^1|r_{-i}, x_i^1, E^1}\times\dots\times P_{X^k|r_{-i}, x_i^k, E^k}.\]
\end{claim}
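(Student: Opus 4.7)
The plan is to derive both factorizations from two basic facts: the coordinates $X_1, \ldots, X_n$ are independent under $P$ (since $P = Q^{\otimes n}$), and each $R_l = (D_l, M_l)$ is a function of $X_l$ together with an auxiliary uniform coin $D_l$ that is independent across $l$. Together these imply that the pairs $\set{(X_l, R_l)}_{l\in[n]}$ are mutually independent.

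First I would use this independence to observe that conditioning on $(X_i, R_{-i}) = (x_i, r_{-i})$ decouples across coordinates:
\[ P_{X \st x_i, r_{-i}} \;=\; \delta_{x_i} \,\otimes\, \bigotimes_{l \neq i} P_{X_l \st r_l}. \]
For $r_l = (d_l, m_l)$, the factor $P_{X_l \st r_l}$ deterministically sets $X_l^{-d_l} = m_l$ and samples $X_l^{d_l}$ from the conditional marginal $Q_{X^{d_l} \st X^{-d_l} = m_l}$; crucially, all remaining randomness at coordinate $l$ sits on the single player-index $d_l$.

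Next I would regroup the product over $l$ by the player that holds the free coordinate. Let $L_j = \set{l \neq i : d_l = j}$. For each player $j \in [k]$, reading off the $j$-th-player slice of the above product gives a distribution on $X^j$ that fixes $X_i^j = x_i^j$, fixes $X_l^j = (m_l)^j$ for every $l \in [n] \setminus (L_j \cup \set{i})$, and samples the entries $\set{X_l^j : l \in L_j}$ independently with the appropriate conditional marginals of $Q$. Because each coordinate $l \neq i$ contributes a free entry to exactly one player, these $k$ per-player distributions depend on disjoint sources of randomness and are therefore mutually independent. By inspection, the $j$-th factor coincides with $P_{X^j \st r_{-i}, x_i^j}$, giving the first equation.

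For the second equation, since $E = E^1 \times \cdots \times E^k$, the indicator factorizes as $\mathbf{1}[x \in E] = \prod_j \mathbf{1}[x^j \in E^j]$. Applying Bayes' rule and substituting the first equation,
\[ P_{X \st x_i, r_{-i}, E}(x) \;=\; \frac{\prod_j \bigl(P_{X^j \st r_{-i}, x_i^j}(x^j)\,\mathbf{1}[x^j \in E^j]\bigr)}{P(E \st x_i, r_{-i})}, \]
and summing the numerator over $x$ shows $P(E \st x_i, r_{-i}) = \prod_j P(E^j \st r_{-i}, x_i^j)$ since the product in $x$ is separable. Dividing yields $\prod_j P_{X^j \st r_{-i}, x_i^j, E^j}(x^j)$, as desired. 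The only substantive work is the coordinate-to-player regrouping in the first equation; there is no real obstacle beyond careful bookkeeping.
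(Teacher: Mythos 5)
Your proof is correct. The paper gives no proof of this claim at all---it cites \cite{DHVY17} and simply asserts that it ``follows from the definition of the random variable $R$''---and your argument (mutual independence of the pairs $(X_l,R_l)$ across coordinates, the observation that $P_{X_l\mid r_l}$ fixes $X_l^{-d_l}=m_l$ and leaves only the single entry $X_l^{d_l}$ random, the regrouping of the free entries by player, and the standard fact that conditioning a product distribution on a product event preserves the product structure) is exactly the routine verification the paper leaves implicit.
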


\begin{lemma}\label{lemma:prod_set_hard_coor_exists}
	There exists a constant $\epsilon>0$ such that
	\[\E_{i\in [n]} \sqbrac{\val_i(\mc G^{\otimes n} \st (P|E))} \leq 1-\epsilon.\]
\end{lemma}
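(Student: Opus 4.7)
The plan is to prove this lemma via a standard embedding argument: given the hypothesised strategies achieving $\val_i(\mc G^{\otimes n} \st (P|E))$ for each coordinate $i$, construct a single randomized strategy for one copy of $\mc G$ whose winning probability is close to the average $\E_{i \in [n]}[\val_i(\mc G^{\otimes n} \st (P|E))]$. Since any randomized strategy for $\mc G$ has value at most $\val(\mc G) < 1$ by Fact~\ref{fact:rand_no_help}, this will force the average to be bounded away from $1$ by an additive constant depending only on $\mc G$.

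For each $i \in [n]$, fix player strategies $\sigma^{(i)} = (\sigma^{(i),1},\dots,\sigma^{(i),k})$ attaining $\val_i(\mc G^{\otimes n} \st (P|E))$. On input $x_i \sim Q$, the embedded strategy for $\mc G$ proceeds as follows: the players use shared randomness to pick $i \in [n]$ uniformly and to sample $r_{-i} \sim P_{R_{-i}|E}$. Each player $j$ then privately samples the remaining coordinates of their input from $P_{X^j | r_{-i}, x_i^j, E^j}$, a distribution depending only on data available to them. Finally, each player applies $\sigma^{(i),j}$ to their full input $x^j$ and outputs its $i$-th coordinate. By the dependency-breaking property (Claim~\ref{claim:dep_break_prop}), the joint distribution of the reconstructed $x$ conditioned on $(i, x_i, r_{-i})$ and $E$ is exactly $P_{X | x_i, r_{-i}, E}$, so the procedure is a valid randomized strategy for $\mc G$.

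The embedding's winning probability equals the average over $i$ of
\[ \E_{x_i \sim Q,\, r_{-i} \sim P_{R_{-i}|E}}\, \E_{x_{-i} \sim P_{X_{-i}|x_i, r_{-i}, E}}\bigl[V(x_i, \sigma^{(i)}(x)_i) = 1\bigr], \]
whereas $\val_i(\mc G^{\otimes n} \st (P|E))$ is the same quantity with $(x_i, r_{-i})$ drawn from $P_{X_i, R_{-i}|E}$ rather than from the product $Q \times P_{R_{-i}|E}$. The absolute difference is therefore bounded by $\E_i \lonorm{P_{X_i, R_{-i}|E}}{Q \times P_{R_{-i}|E}}$. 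A triangle inequality splits this distance into one term bounded by $\E_i \lonorm{P_{X_i|E}}{Q}$ and a second term (after re-weighting $x_i$ from $Q$ to $P_{X_i|E}$, absorbing one more copy of the first bound) bounded by $\E_i \E_{x_i \sim P_{X_i|E}} \lonorm{P_{R_{-i}|x_i, E}}{P_{R_{-i}|E}}$. Substituting Lemmas~\ref{lemma:Xi_same_distr} and~\ref{lemma:ri_same_distr} and using $P(E) \ge n^{-\delta}$ with $\delta = 0.4$, the dominant term is
\[ O\!\biggl(\frac{1}{P(E)}\sqrt{\tfrac{1}{n}\log_2\tfrac{1}{P(E)}}\biggr) \;=\; O\!\bigl(n^{-0.1}\sqrt{\log n}\bigr) \;=\; o(1). \]

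Combining yields $\E_i[\val_i(\mc G^{\otimes n} \st (P|E))] \le \val(\mc G) + o(1)$, so for all sufficiently large $n$ we may take $\epsilon = (1 - \val(\mc G))/2$; the finitely many small $n$ are handled by shrinking $\epsilon$ further to any suitably small positive constant. The only genuine obstacle in this chain of reasoning—already resolved by the stated input lemmas—is bounding the gap between $P_{R_{-i}|x_i, E}$ and $P_{R_{-i}|E}$ in Lemma~\ref{lemma:ri_same_distr}: this is where the playerwise connectedness of $\mc G$ is crucially used, and where the factor $1/P(E)$ enters, ultimately forcing the final bound in Theorem~\ref{thm:playerwise_conn_par_rep} to be polynomial rather than exponential in $n$.
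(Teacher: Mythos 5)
Your proposal is correct and follows essentially the same route as the paper: the same embedding strategy (shared randomness to pick $i$ and $r_{-i}\sim P_{R_{-i}|E}$, private sampling via Claim~\ref{claim:dep_break_prop}, output of the $i$-th coordinate), with the error controlled by Lemmas~\ref{lemma:Xi_same_distr} and~\ref{lemma:ri_same_distr} exactly as in the paper's two-step substitution, yielding $\val(\mc G)+o(1)$ and hence $\epsilon=(1-\val(\mc G))/2$ for large $n$. The only cosmetic difference is that you package the two substitutions as a single $L^1$ bound via the triangle inequality rather than performing them term by term inside the sum.
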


\begin{proof}
	Consider the following (randomized) strategy for the game $\mc G$. The $k$ players get an input $y$ sampled from $Q$, with the $j\textsuperscript{th}$ player getting $y^j$, for each $j\in [k]$.
	\begin{enumerate}
		\item Using shared randomness, the players sample $i\in [n]$ uniformly, and sample $r_{-i}\sim P_{R_{-i}|E}$. Let $(f^j)_{j\in [k]}$ be a strategy that achieves value $\val_i(\mc G^{\otimes n} \st (P|E))$ (the players can decide on such strategies for each $i\in [n]$ beforehand).
		\item For each $j\in [k]$, the $j\textsuperscript{th}$ player samples $x^j\sim P_{X^j|r_{-i}, X_i^j=y^j, E^j}$, and outputs the $i\textsuperscript{th}$ coordinate of $f^j(x^j)$ as an answer.
	\end{enumerate}
	Let $\ind_{i}(x)$ be the indicator random variable for the event that the optimal strategy for the $i\textsuperscript{th}$ coordinate wins on questions $x_i$.
	Then, the value obtained by the above strategy is
	{\small
	\begin{align*}
		&\sum_{y}Q(y)\cdot \E_{i\in [n]}\sum_{r_{-i}}P\brac{r_{-i}|E}\cdot\sum_{x}\brac{\prod_{j\in [k]}P\brac{x^j| r_{-i}, X_i^j=y^j, E^j}\cdot \ind_i(x) }
		\\= & \E_{i\in [n]} \sum_{y, r_{-i}, x}Q(y)\cdot P\brac{r_{-i}|E}\cdot P(x|r_{-i}, X_i=y, E)\cdot \ind_i(x) & (\text{Claim~\ref{claim:dep_break_prop}})
%		\\=& \E_{i\in [n]}\sum_{y, r_{-i}, x}Q(y)\cdot P\brac{r_{-i}|E}\cdot P(x|r_{-i}, X_i=y, E)\cdot \ind_i(x)
		\\\geq& \E_{i\in [n]}\sum_{y, r_{-i}, x}P(X_i=y|E)\cdot P\brac{r_{-i}|E}\cdot P(x|r_{-i}, X_i=y, E)\cdot \ind_i(x) - O\brac{\sqrt{\frac{\delta \log_2 n}{n}}} & (\text{Lemma~\ref{lemma:Xi_same_distr}})
		\\\geq& \E_{i\in [n]}\sum_{y, r_{-i}, x}P(X_i=y|E)\cdot P\brac{r_{-i}|E, X_i=y}\cdot P(x|r_{-i}, X_i=y, E)\cdot \ind_i(x) - O\brac{n^{\delta}\cdot \sqrt{\frac{\delta \log_2 n}{n}}} & (\text{Lemma~\ref{lemma:ri_same_distr}})
%		\\=& \E_{i\in [n]}\sum_{y, r_{-i}, x}P(X_i=y, r_{-i}, x|E)\cdot \ind_i(x) - O\brac{n^{\delta}\cdot \sqrt{\delta \cdot \frac{\log_2 n}{n}}}
		\\=& \E_{i\in [n]}\sum_{x}P(x|E)\cdot \ind_i(x) - O\brac{n^{\delta}\cdot \sqrt{\frac{\delta\log_2 n}{n}}}
		\\=& \E_{i\in [n]} \sqbrac{\val_i(\mc G^{\otimes n} \st (P|E))} - O\brac{n^{\delta}\cdot \sqrt{\frac{\delta\log_2 n}{n}}}.
	\end{align*}}
	
	By Fact~\ref{fact:rand_no_help}, this value must be at most $\val(\mc G)<1$, and the choice $\delta = 0.4$ and $\epsilon = \frac{1-\val(\mc G)}{2} > 0$ gives the desired result for large enough $n$. 
\end{proof}

The above lemmas together prove our main theorem.

\begin{proof}[Proof of Theorem~\ref{thm:playerwise_conn_par_rep}]
	Let $\mc G$ be any playerwise connected $k$-player game with $\val(\mc G)<1$.
	Combining Lemma~\ref{lemma:wlog_unif_dist}, Lemma~\ref{lemma:prod_set_hard_coor}, and Lemma~\ref{lemma:prod_set_hard_coor_exists} (along with the probabilistic method), we get that there exists a constant $c>0$ such that for large enough $n$, it holds that $\val(\mc G^{\otimes n}) \leq n^{-c}$. 
	The result then holds for all $n\in \N$ by making the constant $c>0$ small enough.
\end{proof}
%%%%%%%%%%%%%%%%%%%%%%%%%%%%%%%%%%%%%%

\subsection{Technical Lemmas}\label{sec:rand_coord_good}

In this section, we will prove Lemma $\ref{lemma:Xi_same_distr}$ and Lemma $\ref{lemma:ri_same_distr}.$

First, we state the following lemma from \cite{Raz98, Hol09}.

\begin{lemma}\label{lemma:pdt_distr_cond}
	Let $P_{V} = P_{V_1}\times\dots\times P_{V_n}$ be a product distribution over a set $\mc V^{\otimes n}$, and $W$ an event.
	Then,
	\[ \frac{1}{n}\sum_{i=1}^n \lonorm{P_{V_i|W}}{P_{V_i}} \lesssim \sqrt{\frac{1}{n}\log_2{\frac{1}{P_V(W)}}}.\]
\end{lemma}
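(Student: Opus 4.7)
The plan is to prove this via standard information-theoretic arguments, using Pinsker's inequality to pass from $L^1$-distance to relative entropy, and then exploiting the product structure of $P_V$ to bound the sum of relative entropies by $\log(1/P_V(W))$.

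First I would apply Pinsker's inequality coordinate-wise: for each $i$, $\lonorm{P_{V_i|W}}{P_{V_i}} \lesssim \sqrt{\relent{P_{V_i|W}}{P_{V_i}}}$. Averaging over $i$ and applying Cauchy--Schwarz (or Jensen's inequality applied to the concave function $\sqrt{\cdot}$) yields
\[ \frac{1}{n}\sum_{i=1}^n \lonorm{P_{V_i|W}}{P_{V_i}} \lesssim \sqrt{\frac{1}{n}\sum_{i=1}^n \relent{P_{V_i|W}}{P_{V_i}}}. \]
So it remains to show that $\sum_{i=1}^n \relent{P_{V_i|W}}{P_{V_i}} \leq \log_2(1/P_V(W))$.

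Next I would use the product structure of $P_V$. Since $P_V = \prod_i P_{V_i}$, the chain rule for relative entropy gives
\[ \relent{P_{V|W}}{P_V} = \sum_{i=1}^n \relent{P_{V_i|V_{<i},W}}{P_{V_i}}, \]
where $V_{<i} = (V_1,\dots,V_{i-1})$. By convexity of $\relent{\cdot}{P_{V_i}}$ in its first argument (a standard fact) and the identity $P_{V_i|W} = \E_{v_{<i} \sim P_{V_{<i}|W}}[P_{V_i|V_{<i}=v_{<i},W}]$, we get
\[ \relent{P_{V_i|V_{<i},W}}{P_{V_i}} \;=\; \E_{v_{<i}\sim P_{V_{<i}|W}}\!\left[\relent{P_{V_i|V_{<i}=v_{<i},W}}{P_{V_i}}\right] \;\geq\; \relent{P_{V_i|W}}{P_{V_i}}. \]
Summing, $\sum_i \relent{P_{V_i|W}}{P_{V_i}} \leq \relent{P_{V|W}}{P_V}$.

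Finally I would directly compute $\relent{P_{V|W}}{P_V}$: since $P_{V|W}(v) = P_V(v)/P_V(W)$ for $v \in W$ and $0$ otherwise,
\[ \relent{P_{V|W}}{P_V} \;=\; \sum_{v \in W} \frac{P_V(v)}{P_V(W)}\,\log_2 \frac{1}{P_V(W)} \;=\; \log_2 \frac{1}{P_V(W)}. \]
Combining everything gives the desired bound. The only slightly delicate step is the convexity argument that turns the chain-rule decomposition $\sum_i \relent{P_{V_i|V_{<i},W}}{P_{V_i}}$ into the one-coordinate-at-a-time bound $\sum_i \relent{P_{V_i|W}}{P_{V_i}}$; everything else is routine manipulation.
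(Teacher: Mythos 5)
Your proposal is correct and follows essentially the same route as the paper's proof sketch: Pinsker plus Cauchy--Schwarz to reduce to bounding $\sum_i \relent{P_{V_i|W}}{P_{V_i}}$, super-additivity of relative entropy against a product reference measure, and the observation that $\relent{P_{V|W}}{P_V} = \log_2(1/P_V(W))$. The only difference is that you prove the super-additivity step explicitly via the chain rule and convexity, whereas the paper cites it as a standard fact; your derivation of it is correct.
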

\begin{proof}
	See Appendix~\ref{appendix:2player_par_rep_results} for a proof sketch.
\end{proof}

\begin{proof}[Proof of Lemma~\ref{lemma:Xi_same_distr}]
	Applying Lemma~\ref{lemma:pdt_distr_cond} to the distribution $P_X = P_{X_1}\times\dots\times P_{X_n}$ and the event $E$ gives the desired result.
\end{proof}

Now, we state the main lemma of this section.
\begin{lemma}\label{lemma:Xi_same_distr_cond_ri}
	\[ \E_{i\in [n]} \E_{r_{-i}\sim P_{R_{-i}|E}} \lonorm{P_{X_i|r_{-i}, E}}{P_{X_i}} \lesssim \frac{1}{P(E)}\sqrt{\frac{1}{n}\log_2{\frac{1}{P(E)}}}. \]
\end{lemma}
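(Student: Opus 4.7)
\medskip

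\textbf{Proof plan.} My approach follows the outline in Section~\ref{sec:playerconn_outline}: I would introduce an auxiliary product distribution in the $i$-th coordinate to exploit the product structure of $E$, then transfer bounds obtained one player at a time by paying a $1/P(E)$ factor. Let $\Gamma = Q^1 \times \dots \times Q^k$ be the product of the single-coordinate marginals of $Q$ across the $k$ players (so $\Gamma$ has the same marginal on each player as $Q$, but is a product across players). For each $i$, let $P' := P_{-i}\Gamma$ denote the modified distribution that uses $Q$ in every coordinate except coordinate $i$, and $\Gamma$ in coordinate $i$. Because $\Gamma$ is a product across players, and because $R$ together with $E = E^1 \times \dots \times E^k$ breaks all cross-player dependencies, Claim~\ref{claim:dep_break_prop} (applied to $P'$) gives
\[
P'_{X_i \mid r_{-i}, E} \;=\; \prod_{j=1}^{k} P'_{X_i^j \mid r_{-i}, E^j}.
\]
Moreover, since $\Gamma$ and $Q$ agree on each player's single-coordinate marginal, the $(n-1)$-coordinate marginals of $P$ and $P'$ for each player $j$ are identical, so $P'_{X_i^j \mid r_{-i}, E^j} = P_{X_i^j \mid r_{-i}, E^j}$.

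The next step is to invoke Lemma~\ref{lemma:Xi_same_distr_cond_ri_only_Ej}, which bounds the single-player-single-coordinate analogue: for each $j \in [k]$,
\[
\E_{i \in [n]} \E_{r_{-i} \sim P_{R_{-i} \mid E^j}} \lonorm{P_{X_i^j \mid r_{-i}, E^j}}{Q^j} \;\lesssim\; \sqrt{\tfrac{1}{n}\log_2 \tfrac{1}{P(E^j)}} \;\le\; \sqrt{\tfrac{1}{n}\log_2 \tfrac{1}{P(E)}}.
\]
I would then change the outer measure from $P_{R_{-i}\mid E^j}$ to $P_{R_{-i}\mid E}$ using the elementary inequality $P(G{=}g \mid E) \le P(G{=}g \mid E^j)/P(E)$, paying a factor of $1/P(E)$. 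Combining the $k$ bounds via a union/triangle inequality on $\ell_1$ distances of product distributions yields
\[
\E_{i \in [n]} \E_{r_{-i} \sim P_{R_{-i} \mid E}} \lonorm{P'_{X_i \mid r_{-i}, E}}{\Gamma} \;\lesssim\; \frac{1}{P(E)} \sqrt{\tfrac{1}{n} \log_2 \tfrac{1}{P(E)}}.
\]

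Finally, I would pass from $P'$ back to $P$ by conditioning on an event $T_i$ (depending only on coordinate $i$ and some shared randomness) such that $\Gamma \mid T_i = Q$, as in Section~\ref{sec:fivepoint_gamma}. Because $T_i$ touches only coordinate $i$, $P' \mid T_i = P$, so $P'_{X_i \mid r_{-i}, E, T_i} = P_{X_i \mid r_{-i}, E}$ and $\Gamma \mid T_i = Q = P_{X_i}$. Applying Fact~\ref{fact:norm_condition} to upgrade the unconditional $\ell_1$ bound to the conditional one (with $Q(T_i)$ a positive constant depending only on $\mc G$) delivers the desired inequality.

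The hardest step is the single-player bound (Lemma~\ref{lemma:Xi_same_distr_cond_ri_only_Ej}), which is exactly where playerwise connectedness enters: the information-theoretic argument of \cite{Raz98, Hol09, DHVY17} only shows that nearby vertices of the graph $\mc H_{\mc G}^j$ receive similar probabilities under $P_{X_i \mid r_{-i}, E^j}$ (these are the "edge" bounds of Lemma~\ref{lemma:same_distr_over_edges}); connectedness of $\mc H_{\mc G}^j$ is needed to propagate this along paths so all vertices get near-uniform mass. The other non-trivial step is the $1/P(E)$ reweighting, which is unavoidable in this approach and is precisely what forces the final bound to be inverse polynomial rather than inverse exponential.
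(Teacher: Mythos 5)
Your proposal is correct and follows essentially the same route as the paper's proof: the auxiliary distribution $P_{-i}\Gamma$ with $\Gamma$ the product of $Q$'s single-player marginals, the product decomposition of $(P_{-i}\Gamma)_{X_i\mid r_{-i},E}$ across players via the dependency-breaking structure, the per-player bound from Lemma~\ref{lemma:Xi_same_distr_cond_ri_only_Ej} combined with the $1/P(E)$ change of measure from $P_{R_{-i}\mid E^j}$ to $P_{R_{-i}\mid E}$, and the final conditioning on an event $T_i$ with $\Gamma\mid T_i = Q$ via Fact~\ref{fact:norm_condition}. This matches the paper's two-step argument (Lemma~\ref{lemma:same_distr_ith_prod} followed by the $T_i$-conditioning) in all essential respects.
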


Next, we prove Lemma~\ref{lemma:ri_same_distr}, assuming the above lemma.
\begin{proof}[Proof of Lemma~\ref{lemma:ri_same_distr}]
	The equations in Lemma~\ref{lemma:Xi_same_distr} and Lemma~\ref{lemma:Xi_same_distr_cond_ri}, along with the triangle inequality give that
	\[ \E_{i\in [n]} \E_{r_{-i}\sim P_{R_{-i}|E}} \lonorm{P_{X_i|r_{-i}, E}}{P_{X_i|E}} \lesssim \frac{1}{P(E)}\sqrt{\frac{1}{n}\log_2{\frac{1}{P(E)}}}. \]
	By Bayes' rule, it holds that for every $i\in [n]$,
	\[ \E_{r_{-i}\sim P_{R_{-i}|E}} \lonorm{P_{X_i|r_{-i}, E}}{P_{X_i|E}} = \lonorm{P_{X_i,R_{-i}|E}}{P_{X_i|E}\cdot P_{R_{-i}|E}} =  \E_{x_i\sim P_{X_i|E}} \lonorm{P_{R_{-i}|x_i, E}}{P_{R_{-i}|E}}.\]
	Substituting this in the above inequality, we get the desired result.
\end{proof}

The remainder of this section is devoted to the proof of Lemma~\ref{lemma:Xi_same_distr_cond_ri}. 

%%%%%%%%%%%%%%%%%%%%%%%%%%%%%%%%%%%%%%%%%%%%%%%%%%%%%%%%%%%%%%%%%%%%%%%%%%%%%%%%
\subsection{Proof of The Main Lemma}

\subsubsection{Conditioning For a Single Player}
Fix some $j\in [k]$.
We show that a stronger version of Lemma~\ref{lemma:Xi_same_distr_cond_ri} holds in the case we are conditioning only on $E^j$ (by which we mean the event $X^j\in E^j$), instead of conditioning on $E$.

\begin{lemma}\label{lemma:Xi_same_distr_cond_ri_only_Ej}
	\[ \E_{i\in [n]} \E_{r_{-i}\sim P_{R_{-i}|E^j}} \lonorm{P_{X_i|r_{-i}, E^j}}{P_{X_i}} \lesssim \sqrt{\frac{1}{n}\log_2{\frac{1}{P\brac{E^j}}}}. \]	
\end{lemma}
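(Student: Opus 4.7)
The plan is to follow the information-theoretic framework of~\cite{Raz98, Hol09, DHVY17}, adapted to the playerwise-connected setting. First, I would reduce to the single-player analog: under $P$, the conditional distribution of $X_i^{-j}$ given $X_i^j$ equals $Q_{X^{-j}|X^j}(\,\cdot\,|X_i^j)$ and is independent of $X_{-i}$. Since $E^j$ depends only on $X^j$ and $R_{-i}$ only on $X_{-i}$, this kernel is preserved under the additional conditioning, and a direct calculation yields
\[
\lonorm{P_{X_i\,|\,r_{-i},\,E^j}}{P_{X_i}} \;=\; \lonorm{P_{X_i^j\,|\,r_{-i},\,E^j}}{P_{X_i^j}}.
\]

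By the triangle inequality, it suffices to bound $\E_{i}\lonorm{P_{X_i^j\,|\,E^j}}{P_{X_i^j}}$ and $\E_{i}\,\E_{r_{-i}}\lonorm{P_{X_i^j\,|\,r_{-i},\,E^j}}{P_{X_i^j\,|\,E^j}}$ separately. The first is handled by Lemma~\ref{lemma:pdt_distr_cond} applied to the product distribution $P_{X^j} = Q_{X^j}^{\otimes n}$ and event $E^j$, which directly gives the desired $\sqrt{(1/n)\log_2(1/P(E^j))}$ bound. For the second, Pinsker's inequality applied coordinatewise, combined with Jensen's inequality, reduces the task to establishing
\[
\sum_i I_{P|E^j}(X_i^j\,;\,R_{-i}) \;\lesssim\; \log_2(1/P(E^j)).
\]

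To prove this bound, I would apply the chain rule to $D(P_{X^j, R\,|\,E^j}\,\|\,P_{X^j, R}) = \log_2(1/P(E^j))$; the equality holds because, conditional on $X^j$, the law of $R$ is unchanged by $E^j$. Decomposing along the coordinates $i$ yields per-coordinate contributions whose total is at most $\log_2(1/P(E^j))$, and the remaining step is to relate these contributions to $I_{P|E^j}(X_i^j;\,R_{-i})$. Here the symmetric randomized structure of $R$ (each $R_\ell$ reveals $X_\ell^{-D_\ell}$ for a uniformly random $D_\ell\in[k]$) combines with the playerwise-connected property: a Raz/Holenstein/DHVY-style argument yields, for most $r_{-i}$, that $P_{X_i^j|r_{-i}, E^j}(x^j)$ and $P_{X_i^j|r_{-i}, E^j}(y^j)$ are close whenever $x^j, y^j$ are adjacent in $H_\mc G^j$, and the connectedness of $H_\mc G^j$ then chains these per-edge comparisons into a global closeness statement.

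The hardest part will be this final step. A naive data-processing argument—using only that $R_{-i}$ is a function of $X_{-i}$—would give $\sum_i I_{P|E^j}(X_i^j;\,X_{-i}^j)$, which in general can exceed $\log_2(1/P(E^j))$ by a factor of $n$ (as illustrated e.g.\ by GHZ-type supports on the marginal $P_{X^j|E^j}$). Avoiding this blow-up requires genuinely exploiting both the randomized structure of $R$ and the playerwise-connectedness hypothesis, so that per-edge losses along $H_\mc G^j$ are well-controlled and the graph diameter enters only as a game-dependent constant.
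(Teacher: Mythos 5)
Your plan matches the paper's proof in all essentials: the reduction to $X_i^j$ via preservation of the fiber kernel $Q_{X^{-j}|X^j}$ is exactly Lemma~\ref{lemma:same_distr_over_planes}; the decomposition of $\relent{P_{X^j,R|E^j}}{P_{X^j,R}} = \log_2(1/P(E^j))$ by conditioning on the \emph{full} $r$ (so that $P_{X^j|r}$ is a product across coordinates) and then restricting to the event $D_i=j$ is precisely how Lemma~\ref{lemma:same_distr_over_edges} extracts the per-edge comparisons; and chaining those comparisons over the connected graph $H_{\mc G}^j$ is the paper's concluding step. The only cosmetic difference is that the paper compares $P_{X_i|r_{-i},E^j}$ directly to the uniform distribution $P_{X_i}$ on $\mc S$ (once all point probabilities are shown approximately equal, closeness to uniform is automatic), rather than routing through $P_{X_i^j|E^j}$ and a mutual-information bound as an intermediate.
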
 

To prove the above lemma, we will need the following lemmas.

\begin{lemma}\label{lemma:same_distr_over_edges}
Let $H_\mc G$ be the $(k-1)$-connection graph (see Definition~\ref{defn:conn_graph}). 
Suppose there is an edge between $y, z \in \mc S$ in the $j\textsuperscript{th}$ direction, that is, $y^{-j}=z^{-j}$ and $y^j\not= z^j$.
Then,
	\[ \E_{i\in [n]} \E_{r_{-i}\sim P_{R_{-i}|E^j}} \norm{P\brac{X_i=y|r_{-i}, E^j} - P\brac{X_i=z|r_{-i}, E^j}} \lesssim \sqrt{\frac{1}{n}\log_2{\frac{1}{P\brac{E^j}}}}. \]
\end{lemma}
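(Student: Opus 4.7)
The plan is to follow the information-theoretic framework of \cite{Raz98, Hol09, DHVY17}, leveraging the special structure that $E^j$ depends only on the $j\ts{th}$ player's input $X^j = (X_1^j, \ldots, X_n^j)$, which under $P$ is the product distribution $Q_{X^j}^{\otimes n}$. First I would use Bayes' rule together with $Q(y) = Q(z) = 1/|\mc S|$ and the conditional independence $X_i^{-j} \perp (R_{-i}, E^j) \mid X_i^j$ (which follows from the product structure of $P$ and the fact that $E^j$ depends only on $X^j$) to reduce the edge-difference to the closeness of the single-coordinate distribution of $X_i^j$:
\[
\norm{P(X_i = y \mid r_{-i}, E^j) - P(X_i = z \mid r_{-i}, E^j)} \;\lesssim\; \lonorm{P_{X_i^j \mid r_{-i}, E^j}}{Q_{X^j}},
\]
where the hidden constant depends only on $|\mc S|$ and $\min_{x^j \in \mc X^j} Q_{X^j}(x^j) > 0$.

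Next, the main estimate is to show
\[
\E_{i \in [n]} \E_{r_{-i} \sim P_{R_{-i} \mid E^j}}\, \lonorm{P_{X_i^j \mid r_{-i}, E^j}}{Q_{X^j}} \;\lesssim\; \sqrt{(1/n) \log(1/P(E^j))}.
\]
The starting point is the identity $\relent{P_{X^j \mid E^j}}{P_{X^j}} = \log(1/P(E^j))$. Combining it with the chain rule for KL divergence and the product structure $P_{X^j} = Q_{X^j}^{\otimes n}$ yields $\sum_{i=1}^{n} \E_{X_{<i}^j \sim P \mid E^j}\, \relent{P_{X_i^j \mid X_{<i}^j, E^j}}{Q_{X^j}} = \log(1/P(E^j))$. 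An application of Pinsker's inequality together with Jensen (concavity of $\sqrt{\cdot}$) and Cauchy--Schwarz then gives the analogous $\sqrt{(1/n)\log(1/P(E^j))}$ bound on the $\ell_1$-distance, but with $X_{<i}^j$ in place of $r_{-i}$ as the conditioning variable. This argument is essentially the single-player specialization of the standard analysis behind \Cref{lemma:pdt_distr_cond}.

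The main obstacle is upgrading the ``sequential'' conditioning on $X_{<i}^j$ to the ``dependency-breaking'' conditioning on $R_{-i}$, which simultaneously carries information from ``past'' and ``future'' coordinates and lacks a natural telescoping structure. I expect this step to mirror the corresponding step in \cite{Raz98, Hol09, DHVY17}: one applies the chain rule to the joint $(X, R)$-space with $(X_i, R_i)$ interleaved in a specific order, so that by convexity of KL in its first argument (which gives $\E \relent{P^*_{X_i^j \mid R_{-i}, E^j}}{Q_{X^j}} \le \E \relent{P^*_{X_i^j \mid X_{-i}, R_{-i}, E^j}}{Q_{X^j}}$) and a symmetric application to the reverse ordering, the target sum $\sum_i \E_{r_{-i} \sim P^*} \relent{P_{X_i^j \mid r_{-i}, E^j}}{Q_{X^j}}$ is bounded by $O(\log(1/P(E^j)))$. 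The independence of the direction variables $D_{-i}$ from $X$ under $P$ and the product structure of $P_{X,R}$ are crucial to making this chain-rule accounting go through. Pinsker's inequality converts the sum-of-KL bound into the target $\ell_1$-bound on average, and composing with the reduction from the first paragraph completes the proof.
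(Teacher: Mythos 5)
Your first-paragraph reduction is a valid inequality, but it converts the edge-difference into a target that is strictly stronger than the lemma and is not obtainable by your second step --- indeed, as stated it is false for general games. The quantity you propose to control, $\lonorm{P_{X_i^j\mid r_{-i},E^j}}{Q_{X^j}}$, compares the marginal of $X_i^j$ to the \emph{fixed} reference $Q_{X^j}$. Consider $k=2$, $\mc S=\{(0,0),(1,1)\}$, $Q$ uniform, and $E^1$ the event $\bigoplus_i X_i^1=0$: here $P(E^1)=1/2$, yet $R_{-i}$ determines $X_{-i}^1$, so $P_{X_i^1\mid r_{-i},E^1}$ is a point mass and the left-hand side of your ``main estimate'' equals $1$ for every $i$ and $r_{-i}$, while the right-hand side is $O(1/\sqrt{n})$. (That game has no edges, so the lemma is vacuous for it, but it shows your intermediate target cannot follow from a structure-free information-theoretic argument; for playerwise connected games it is true, but only as a consequence of Lemma~\ref{lemma:Xi_same_distr_cond_ri_only_Ej}, which is what the present lemma is a step toward --- so taking it as the target is essentially circular.) The concrete failure in your second paragraph is the direction of convexity: convexity of KL in its first argument bounds $\E_{r_{-i}}\relent{P_{X_i^j\mid r_{-i},E^j}}{Q_{X^j}}$ from \emph{above} by the more-conditioned $\E_{x_{-i}}\relent{P_{X_i^j\mid x_{-i},E^j}}{Q_{X^j}}$, and the sum over $i$ of the latter is not $O(\log(1/P(E^j)))$ (parity again), while the chain-rule bound with conditioning on the ``past'' $X^j_{<i}$ does not dominate conditioning on $R_{-i}$.

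The paper's proof avoids this by never comparing against a fixed reference. It applies Lemma~\ref{lemma:pdt_distr_cond} to $P_{X\mid r}$ --- conditioning on \emph{all} of $R$, including $R_i$, which is still a product distribution across coordinates --- to get $\E_{i}\lonorm{P_{X_i\mid r,E^j}}{P_{X_i\mid r}}\lesssim\sqrt{\tfrac{1}{n}\log_2\tfrac{1}{P(E^j\mid r)}}$, and then averages over $r\sim P_{R\mid E^j}$ via Jensen. Since $D_i=j$ with probability $1/k$, the left-hand side dominates ($1/k$ times) the expected distance between $P_{X_i^j\mid x_i^{-j},r_{-i},E^j}$ and the \emph{conditional} prior $Q_{X^j\mid X^{-j}=x_i^{-j}}$; because $y$ and $z$ lie in the same fiber $y^{-j}=z^{-j}$ and $Q$ gives them equal conditional mass there, the triangle inequality extracts exactly the edge-difference. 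The provable estimate thus controls $X_i^j$ only \emph{within the fiber} $\{w\in\mc S: w^{-j}=x_i^{-j}\}$, which is precisely why the lemma is restricted to edges and why playerwise connectivity is needed afterwards to chain these fiber-wise comparisons into the global closeness to $Q_{X^j}$ that your proposal assumes at the outset.
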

\begin{proof}
	For any $r$, applying Lemma~\ref{lemma:pdt_distr_cond} to the distribution $P_{X|r} = P_{X_1|r}\times\dots\times P_{X_n|r}$, and the event $E^j$ gives that
	\[ \E_{i\in [n]} \lonorm{P_{X_i|r, E^j}}{P_{X_i|r}} \lesssim  \sqrt{\frac{1}{n}\log_2{\frac{1}{P\brac{E^j|r}}}}. \]
	Taking expectation over $r\sim P_{R|E^j}$, we get that
	\begin{align*}
		\E_{r\sim P_{R|E^j}}\E_{i\in [n]} \lonorm{P_{X_i|r, E^j}}{P_{X_i|r}} \lesssim  \E_{r\sim P_{R|E^j}} \sqrt{\frac{1}{n}\log_2{\frac{1}{P\brac{E^j|r}}}}.
	\end{align*}
	We simplify the two sides of the above equation separately.
	\begin{description}
		\item \textit{Right-hand side:} The function $\sqrt{\log_2(\cdot)}$ is concave over the domain $[1,\infty)$, and hence by using Jensen's inequality, we get 
			\[	\E_{r\sim P_{R|E^j}} \sqrt{\frac{1}{n}\log_2\brac{\frac{1}{P\brac{E^j|r}}}} 
				\leq \sqrt{\frac{1}{n}\log_2\brac{\E_{r\sim P_{R|E^j}}\frac{1}{P\brac{E^j|r}}}}\leq \sqrt{\frac{1}{n}\log_2\frac{1}{P\brac{E^j}}}. \]
		\item \textit{Left-hand side:}
			\begin{align*}
				\E_{r\sim P_{R|E^j}}\E_{i\in [n]}\lonorm{P_{X_i|r, E^j}}{P_{X_i|r}} &= \E_{i\in [n]}\E_{r\sim P_{R|E^j}}\lonorm{P_{X_i|r, E^j}}{P_{X_i|r}}
				\\&= \E_{i\in [n]}\E_{r_{-i}\sim P_{R_{-i}|E^j}}\E_{r_i\sim P_{R_i|r_{-i}, E^j}}\lonorm{P_{X_i|r, E^j}}{P_{X_i|r}}.
			\end{align*}
			We know that with probability $1/k$ (independent of everything else), $R_i=(j, X_i^{-j})$.
			Hence, the above quantity is at least $1/k$ times the following: 			\begin{align*}
				&\E_{i\in [n]}\E_{r_{-i}\sim P_{R_{-i}|E^j}} \E_{x_i^{-j}\sim P_{X_i^{-j}|r_{-i}, E^j}} \lonorm{P_{X_i|x_i^{-j}, r_{-i}, E^j}}{P_{X_i|x_i^{-j}, r_{-i}}}
				\\=& \E_{i\in [n]}\E_{r_{-i}\sim P_{R_{-i}|E^j}} \E_{x_i^{-j}\sim P_{X_i^{-j}|r_{-i}, E^j}} \lonorm{P_{X_i^j|x_i^{-j}, r_{-i}, E^j}}{P_{X_i^j|x_i^{-j}}}
				\\=& \E_{i\in [n]}\E_{r_{-i}\sim P_{R_{-i}|E^j}} \sum_{x_i} P\brac{x_i^{-j}|r_{-i}, E^j}\cdot \norm{P\brac{x_i^j|x_i^{-j}, r_{-i}, E^j}-P\brac{x_i^j|x_i^{-j}}}
				\\=& \E_{i\in [n]}\E_{r_{-i}\sim P_{R_{-i}|E^j}} \sum_{x_i} \norm{P\brac{x_i|r_{-i}, E^j} - P\brac{x_i^{-j}|r_{-i}, E^j} P\brac{x_i^j|x_i^{-j}}}
				\\\geq & \E_{i\in [n]}\E_{r_{-i}\sim P_{R_{-i}|E^j}} \norm{P\brac{X_i=y|r_{-i}, E^j} - P\brac{X_i=z|r_{-i}, E^j}}.
			\end{align*}
			The last inequality follows from the triangle inequality applied to the terms corresponding to $x_i=y$ and $x_i=z$.
			We use the observation that since $y^{-j}=z^{-j}$, and $P_{X_i}$ is the uniform distribution over $\mc S$, the second quantity inside the bracket is equal in the cases $x_i=y$ and $x_i=z$.
	\end{description}
	Combining the left-hand side and the right-hand side, we get the desired result.
\end{proof}

\begin{lemma}\label{lemma:same_distr_over_planes}
	Let $y, z \in \mc S$ be such that $y^j=z^j$. Then, for each $i\in [n]$, and each $r_{-i}$, it holds that
	\[ P\brac{X_i=y|r_{-i}, E^j} = P\brac{X_i=z|r_{-i}, E^j}.\]	
\end{lemma}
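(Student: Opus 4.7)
The plan is to reduce the claim via Bayes' rule to a symmetry statement about the conditional distribution of $X^j$, and then exploit the product structure of $P$ together with the fact that $R$ is built from $X_{-i}$ using only independent auxiliary randomness. Concretely, assuming $P(E^j \mid r_{-i}) > 0$ (otherwise both sides of the lemma are undefined, so we can restrict to that case), I write
\[
P\brac{X_i = y \mid r_{-i}, E^j} = \frac{P\brac{X_i = y \mid r_{-i}} \cdot P\brac{E^j \mid r_{-i}, X_i = y}}{P\brac{E^j \mid r_{-i}}},
\]
and similarly for $z$. Since $Q$ is uniform on $\mc S$ and both $y,z \in \mc S$, we have $P_{X_i}(y) = P_{X_i}(z)$; moreover $R_{-i}$ is a deterministic function of $X_{-i}$ and of auxiliary independent uniform random variables $D_{-i}$, both of which are independent of $X_i$ under the product distribution $P$. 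Hence $X_i \perp R_{-i}$, giving $P[X_i = y \mid r_{-i}] = P[X_i = z \mid r_{-i}]$. It therefore suffices to show
\[
P\brac{E^j \mid r_{-i}, X_i = y} = P\brac{E^j \mid r_{-i}, X_i = z}.
\]

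For this equality, I use that $E^j$ depends only on $X^j = (X_i^j, X_{-i}^j)$. By hypothesis $y^j = z^j$, so conditioning on $X_i = y$ or $X_i = z$ fixes $X_i^j$ to the same value. It then remains to check that the conditional distribution of $X_{-i}^j$ given $r_{-i}$ does not depend on whether we further condition on $X_i = y$ or $X_i = z$. This follows because $X_i$ is independent of $X_{-i}$ under $P$, and $R_{-i}$ is determined by $X_{-i}$ together with randomness independent of everything, so
\[
P_{X_{-i}^j \mid r_{-i}, X_i = y} = P_{X_{-i}^j \mid r_{-i}} = P_{X_{-i}^j \mid r_{-i}, X_i = z}.
\]
Combining the equalities of $X_i^j$ and of the conditional law of $X_{-i}^j$, the joint conditional law of $X^j$ given $(r_{-i}, X_i = y)$ coincides with that given $(r_{-i}, X_i = z)$, and since $E^j$ is a function of $X^j$, the two conditional probabilities of $E^j$ are equal, completing the proof. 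There is no substantive obstacle here: the lemma is essentially a clean consequence of the product structure of $P$, the $X_i$-independence of $R_{-i}$, and the fact that $E^j$ reads only the $j$-th player's inputs; no information-theoretic estimates or playerwise-connectivity arguments are required.
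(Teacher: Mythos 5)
Your proof is correct and rests on exactly the same structural facts as the paper's: the product structure of $P$ across coordinates, the independence of $X_i$ from $(X_{-i},R_{-i})$, the fact that $E^j$ reads only $X^j$, and the uniformity of $Q$ on $\mc S$. The only cosmetic difference is that you isolate the dependence on $y$ via a Bayes-rule flip of $X_i$ against $E^j$, whereas the paper chain-rules $P(X_i=y\mid r_{-i},E^j)$ into $P(X_i^j=y^j\mid r_{-i},E^j)\cdot P(X_i^{-j}=y^{-j}\mid X_i^j=y^j)$; both are valid two-line computations and yield the same conclusion.
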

\begin{proof}
	Fix any $i\in [n]$ and $r_{-i}$. If $P\brac{X_i^j=y^j|r_{-i}, E^j} = 0$, then both quantities are zero. If not, we observe that since $P_{X_i}$ is the uniform distribution over $\mc S$, and that we are conditioning only on $E^j$, it holds that
	\begin{align*}
		P\brac{X_i=y|r_{-i}, E^j} &=  P\brac{X_i^j=y^j|r_{-i}, E^j}\cdot  P\brac{X_i^{-j}=y^{-j}|X_i^j=y^j, r_{-i}, E^j}
		\\&= P\brac{X_i^j=y^j|r_{-i}, E^j}\cdot  P\brac{X_i^{-j}=y^{-j}|X_i^j=y^j}
		\\&= P\brac{X_i^j=y^j|r_{-i}, E^j}\cdot \frac{1}{\norm{\set{w\in \mc S: w^j=y^j}}}.
	\end{align*}
	A similar expression is valid for $z$, and equals the above one since $y^j=z^j$.
\end{proof}

\begin{proof}[Proof of Lemma~\ref{lemma:Xi_same_distr_cond_ri_only_Ej}]
Since our game $\mc G$ is playerwise connected (see Definition~\ref{defn:conn_player_graph}), the graph $H_{\mc G}^j$ is connected. This fact, along with Lemma~\ref{lemma:same_distr_over_edges}, and Lemma~\ref{lemma:same_distr_over_planes} gives that for any $y, z \in \mc S$,
\[ \E_{i\in [n]} \E_{r_{-i}\sim P_{R_{-i}|E^j}} \norm{P\brac{X_i=y|r_{-i}, E^j} - P\brac{X_i=z|r_{-i}, E^j}} \lesssim \sqrt{\frac{1}{n}\log_2{\frac{1}{P\brac{E^j}}}}. \]

Now, by the above inequalities, along with the fact that $P_{X_i}$ is the uniform distribution over $\mc S$,
\[ \E_{i\in [n]} \E_{r_{-i}\sim P_{R_{-i}|E^j}} \lonorm{P_{X_i|r_{-i}, E^j}}{P_{X_i}} \lesssim \sqrt{\frac{1}{n}\log_2{\frac{1}{P\brac{E^j}}}}. \qedhere\]	
\end{proof}

%%%%%%%%%%%%%%%%%%%%%%%%%%%%%%%%%%%%%%%%%%%%%%%%%%%%%%%%%%%%%%%%%%%%%%%%%%%%%%%%
\subsubsection{The General Case}
\label{sec:fivepoint_gamma}
We know that $Q$ is the uniform distribution over its support $\mc S\subseteq \mc X = \mc X^1\times\dots\times \mc X^k$.
Let $\Gamma$ be the distribution over $\mc X$, which equals the product of marginals of $Q$.
That is, $\Gamma(y) = Q(y^1)Q(y^2)\dots Q(y^k)$, for each $y\in \mc X$.
For each $i\in [n]$, let $P_{-i}\Gamma$ be the distribution over $\mc X^{\otimes n}$ given by $(P_{-i}\Gamma)(x) = P(x_{-i})\Gamma(x_i).$

\begin{observation}\label{fact:prod_distr_full_support}
	The distribution $\Gamma$ has support $\mc X$.
\end{observation}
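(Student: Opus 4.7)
The plan is to unpack the definition of $\Gamma$ and then invoke the standing assumption on the game $\mc G$ that every question is asked to each player with positive probability. Recall from the setup just above that $\Gamma(y) = Q(y^1) Q(y^2) \cdots Q(y^k)$ for each $y = (y^1, \dots, y^k) \in \mc X$, where $Q(y^j)$ denotes the marginal of $Q$ on the $j\textsuperscript{th}$ coordinate, namely $Q(y^j) = \sum_{x \in \mc S : x^j = y^j} Q(x)$.

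First I would recall (from Definition~\ref{defn:conn_player_graph}) that we assumed, without loss of generality, that for every $j \in [k]$ and every $x^j \in \mc X^j$, the $j\textsuperscript{th}$ player receives the question $x^j$ with strictly positive probability under $Q$. In other words, the marginal distribution $Q_{X^j}$ has full support on $\mc X^j$, so $Q(y^j) > 0$ for every $y^j \in \mc X^j$. Next, for any $y = (y^1, \dots, y^k) \in \mc X$, the expression $\Gamma(y) = \prod_{j=1}^k Q(y^j)$ is a product of strictly positive numbers, hence $\Gamma(y) > 0$. Therefore $\Supp(\Gamma) = \mc X$.

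There is no real obstacle here; this is a one-line unpacking of definitions. The only subtlety worth flagging is that $\Gamma$ may assign positive probability to points outside $\mc S = \Supp(Q)$ (indeed this is the whole point of introducing $\Gamma$ in the proof outline of Section~\ref{sec:playerconn_outline}), so the observation is genuinely recording a useful discrepancy between $\Supp(Q) \subseteq \mc X$ and $\Supp(\Gamma) = \mc X$ rather than a triviality about $Q$.
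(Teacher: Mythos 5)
Your proof is correct and matches the paper's: both arguments reduce to the standing assumption (recorded with Definition~\ref{defn:conn_player_graph}) that each question to each player occurs with positive probability under $Q$, so every marginal $Q(y^j)$ is positive and hence so is the product $\Gamma(y)$. The paper phrases this via connectivity of the graphs $H_{\mc G}^j$, but the substance is identical, and your closing remark about $\Supp(\Gamma)$ strictly containing $\Supp(Q)$ is a fair (if optional) clarification.
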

\begin{proof}
	For each $j\in [k]$, the graph $H_\mc G^j$ is connected, and hence each question to each player occurs with positive probability under $Q$ (also see the remark after Definition~\ref{defn:conn_player_graph}).
\end{proof}

We show that Lemma~\ref{lemma:Xi_same_distr_cond_ri} holds when the distribution $P$ is replaced by $P_{-i}\Gamma$.
Note that the conditional distributions in the next lemma are well-defined since for any event $W$, if $P(W)>0$, then $P_{-i}\Gamma (W) > 0$.
This is because by Observation~\ref{fact:prod_distr_full_support}, we can write $\Gamma$ as a non-trivial convex combination of $Q$ and some other distribution over $\mc X$.

\begin{lemma}\label{lemma:same_distr_ith_prod}
	\[ \E_{i\in [n]} \E_{r_{-i}\sim P_{R_{-i}|E}} \lonorm{(P_{-i}\Gamma)_{X_i|r_{-i}, E}}{(P_{-i}\Gamma)_{X_i}} \lesssim \frac{1}{P(E)}\sqrt{\frac{1}{n}\log_2{\frac{1}{P(E)}}}. \]
\end{lemma}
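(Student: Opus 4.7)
The plan is to exploit the product structure of $P_{-i}\Gamma$ at coordinate~$i$ across the $k$ players to factorize $(P_{-i}\Gamma)_{X_i|r_{-i},E}$, and then to apply Lemma~\ref{lemma:Xi_same_distr_cond_ri_only_Ej} player-by-player, paying a factor of $1/P(E)$ for the change of measure from $P_{R_{-i}|E^j}$ to $P_{R_{-i}|E}$.

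First I would establish the key factorization
\[(P_{-i}\Gamma)_{X_i|r_{-i},E} \;=\; \prod_{j=1}^k P_{X_i^j|r_{-i},E^j}.\]
Under $P_{-i}\Gamma$, the random variable $X_i$ has distribution $\Gamma=\prod_j Q_{X^j}$ and is independent of $X_{-i}$ (hence of $R_{-i}$). Claim~\ref{claim:dep_break_prop} continues to hold when $P$ is replaced by $P_{-i}\Gamma$, because its proof only uses the definition of $R$ and the fact that $X_{-i}^1,\dots,X_{-i}^k$ are conditionally independent given $R_{-i}$, and $P$ and $P_{-i}\Gamma$ induce identical marginals on $X_{-i}$. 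Combining this with the product structure of $E$ and Bayes' rule gives
\[(P_{-i}\Gamma)(x_i \mid r_{-i},E) \;\propto\; \Gamma(x_i)\cdot (P_{-i}\Gamma)(E \mid x_i,r_{-i}) \;=\; \prod_{j=1}^k \Big(Q_{X^j}(x_i^j)\cdot P(X^j\in E^j \mid x_i^j, r_{-i})\Big),\]
which factors across $j$; the $j$-th factor equals $P_{X_i^j|r_{-i},E^j}$ because the marginal of $X_i^j$ and the conditional of $X_{-i}^j$ given $r_{-i}$ coincide under $P$ and $P_{-i}\Gamma$. Moreover $(P_{-i}\Gamma)_{X_i}=\Gamma=\prod_j P_{X_i^j}$ is also a product of the matching marginals, so the standard hybrid (triangle) bound for product distributions followed by the data-processing inequality yields
\[\lonorm{(P_{-i}\Gamma)_{X_i|r_{-i},E}}{(P_{-i}\Gamma)_{X_i}} \;\leq\; \sum_{j=1}^k \lonorm{P_{X_i^j|r_{-i},E^j}}{P_{X_i^j}} \;\leq\; \sum_{j=1}^k \lonorm{P_{X_i|r_{-i},E^j}}{P_{X_i}}.\]

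It remains to take expectations under the correct distribution on $r_{-i}$. Since $E\subseteq E^j$, for every $r_{-i}$ we have $P(r_{-i}\mid E)\leq P(r_{-i}\mid E^j)\cdot P(E^j)/P(E)\leq P(r_{-i}\mid E^j)/P(E)$, so for any non-negative function $f$,
\[\E_{r_{-i}\sim P_{R_{-i}|E}} f(r_{-i}) \;\leq\; \frac{1}{P(E)}\,\E_{r_{-i}\sim P_{R_{-i}|E^j}} f(r_{-i}).\]
Applying this with $f(r_{-i})=\lonorm{P_{X_i|r_{-i},E^j}}{P_{X_i}}$ and invoking Lemma~\ref{lemma:Xi_same_distr_cond_ri_only_Ej} (using $P(E^j)\geq P(E)$) bounds each of the $k$ summands by $\tfrac{1}{P(E)}\sqrt{\tfrac{1}{n}\log_2\tfrac{1}{P(E)}}$. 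Summing over the $k$ players (absorbing $k$ into $\lesssim$) gives the claim. The factor $1/P(E)$ is introduced exactly at this change-of-measure step, and is the source of the polynomial (rather than exponential) dependence in the final bound on $\val(\mc G^{\otimes n})$ in Theorem~\ref{thm:playerwise_conn_par_rep}.
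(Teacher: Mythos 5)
Your proposal is correct and follows essentially the same route as the paper's proof: factorize $(P_{-i}\Gamma)_{X_i|r_{-i},E}$ across players using the dependency-breaking property and the product structure of $\Gamma$ and $E$, identify each factor with $P_{X_i^j|r_{-i},E^j}$ via the matching marginals, apply the hybrid bound and data processing, and pay the $1/P(E)$ change-of-measure cost before invoking Lemma~\ref{lemma:Xi_same_distr_cond_ri_only_Ej}. The only cosmetic difference is that you derive the coordinate-$i$ factorization directly by Bayes' rule, whereas the paper first states the full product decomposition of $(P_{-i}\Gamma)_{X|r_{-i},E}$ and then restricts to coordinate $i$.
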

\begin{proof}
	Fix some $i\in [n]$.
	By the definition of the random variable $R$, and the fact that $\Gamma$ is a product distribution, it holds that 
	\[(P_{-i}\Gamma)_{X|r_{-i}} = (P_{-i}\Gamma)_{X^1|r_{-i}}\times\dots\times (P_{-i}\Gamma)_{X^k|r_{-i}}.\]
	Since $E=E^1\times\dots\times E^k$ is a product event, we get
	\[(P_{-i}\Gamma)_{X|r_{-i}, E} = (P_{-i}\Gamma)_{X^1|r_{-i}, E^1}\times\dots\times (P_{-i}\Gamma)_{X^k|r_{-i}, E^k}.\]
	In particular, this implies that
	\[(P_{-i}\Gamma)_{X_i|r_{-i}, E} = (P_{-i}\Gamma)_{X_i^1|r_{-i}, E^1}\times\dots\times (P_{-i}\Gamma)_{X_i^k|r_{-i}, E^k}.\]
	Also, since $\Gamma$ is a product distribution, we know that $(P_{-i}\Gamma)_{X_i} = (P_{-i}\Gamma)_{X_i^1}\times\dots\times (P_{-i}\Gamma)_{X_i^k}$.
	
	This gives us
	\begin{align*}
		& \E_{i\in [n]} \E_{r_{-i}\sim P_{R_{-i}|E}} \lonorm{(P_{-i}\Gamma)_{X_i|r_{-i}, E}}{(P_{-i}\Gamma)_{X_i}} 
		\\ \leq&  \sum_{j\in [k]}\E_{i\in [n]} \E_{r_{-i}\sim P_{R_{-i}|E}} \lonorm{(P_{-i}\Gamma)_{X_i^j|r_{-i}, E^j}}{(P_{-i}\Gamma)_{X_i^j}}
		\\ =&  \sum_{j\in [k]}\E_{i\in [n]} \E_{r_{-i}\sim P_{R_{-i}|E}} \lonorm{P_{X_i^j|r_{-i}, E^j}}{P_{X_i^j}}.
	\end{align*}
	The equality $(P_{-i}\Gamma)_{X_i^j|r_{-i}, E^j} = P_{X_i^j|r_{-i}, E^j}$ uses the following facts:
	\begin{enumerate}
		\item The distribution $\Gamma$ has the same marginals as $Q$.
		\item The distribution over the variables $X_i^{-j}$ is irrelevant for the distribution $(P_{-i}\Gamma)_{X_i^j|r_{-i}, E^j}$. This is because we are conditioning only on $E^j$ (rather than $E$), and $R_{-i}$ has no dependence on the variables $X_i^{-j}$.
	\end{enumerate}
	
	Observe that for each $j\in [k]$, and each $r_{-i}$
	\[P\brac{r_{-i}|E} = \frac{P\brac{r_{-i}, E}}{P\brac{E}} \leq  \frac{P\brac{r_{-i}, E^j}}{P(E)} = P\brac{r_{-i}|E^j}\cdot \frac{P\brac{E^j}}{P(E)}. \]
	Using this, we get that the above expression is at most
	\begin{align*}
		&\sum_{j\in [k]}\frac{P\brac{E^j}}{P(E)}\E_{i\in [n]} \E_{r_{-i}\sim P_{R_{-i}|E^j}} \lonorm{P_{X_i^j|r_{-i}, E^j}}{P_{X_i^j}}
		\\\leq\ &\sum_{j\in [k]}\frac{P\brac{E^j}}{P(E)}\E_{i\in [n]} \E_{r_{-i}\sim P_{R_{-i}|E^j}} \lonorm{P_{X_i|r_{-i}, E^j}}{P_{X_i}} 
		\\\lesssim\ & \sum_{j\in [k]} \frac{P\brac{E^j}}{P(E)}\cdot \sqrt{\frac{1}{n}\log_2{\frac{1}{P\brac{E^j}}}} & (\text{Lemma }\ref{lemma:Xi_same_distr_cond_ri_only_Ej})
		\\\lesssim\ & \frac{1}{P(E)}\sqrt{\frac{1}{n}\log_2{\frac{1}{P(E)}}} & \brac{\text{for each }j\in[k],\ P\brac{E}\leq P\brac{E^j}\leq 1}.
	\end{align*}
\end{proof}

\begin{proof}[Proof of Lemma~\ref{lemma:Xi_same_distr_cond_ri}]
	By Observation~\ref{fact:prod_distr_full_support}, we know that there exists a constant $\gamma >0$ such that for each $y\in \mc X$, $\Gamma(y)\geq \gamma$.

	For each $i\in [n]$, we define a random variable $T_i$ over the set $\set{0,1}$, depending only on $X_i$ (and additional randomness), as follows.
	For each $x_i \in \mc X$, we define $\Prob{T_i = 1 | X_i = x_i} = \frac{\gamma Q(x_i)}{\Gamma(x_i)}\in [0,1]$.
	
	By Lemma~\ref{lemma:same_distr_ith_prod}, we know that
	\[ \E_{i\in [n]} \E_{r_{-i}\sim P_{R_{-i}|E}} \lonorm{(P_{-i}\Gamma)_{X_i|r_{-i}, E}}{(P_{-i}\Gamma)_{X_i}} \lesssim \frac{1}{P(E)}\sqrt{\frac{1}{n}\log_2{\frac{1}{P(E)}}}. \]
	
	Conditioning on the event $T_i=1$, by Fact~\ref{fact:norm_condition} we get that
	\[ \E_{i\in [n]} \E_{r_{-i}\sim P_{R_{-i}|E}} \lonorm{P_{X_i|r_{-i}, E}}{P_{X_i}} \lesssim \frac{2}{\gamma}\cdot \frac{1}{P(E)}\sqrt{\frac{1}{n}\log_2{\frac{1}{P(E)}}},\]
	which completes the proof. Note that technically we apply Fact~\ref{fact:norm_condition} to the product of the distribution $P$ and the distribution from which the extra randomness for $(T_i)_{i\in [n]}$ is coming.
	
	We used the following facts:
	\begin{enumerate}
		\item \[(P_{-i}\Gamma)(T_i=1) = \sum_{x_i\in \mc X}\Gamma(x_i)\cdot \frac{\gamma Q(x_i)}{\Gamma(x_i)}  = \gamma > 0.\]
		
		\item The distribution $(P_{-i}\Gamma)|_{T_i=1}$ equals the distribution $P$, since for each $x\in \mc X^{\otimes n}$,
			\begin{align*}
				(P_{-i}\Gamma)\brac{X=x|T_i=1} &= \frac{(P_{-i}\Gamma)\brac{T_i=1|X=x}\cdot (P_{-i}\Gamma)\brac{X=x}}{(P_{-i}\Gamma)\brac{T_i=1}} 
				\\&= \frac{1}{\gamma}\cdot \brac{\frac{\gamma Q(x_i)}{\Gamma(x_i)}\cdot P(x_{-i})\Gamma(x_i)}
				\\&=P(x_{-i})Q(x_i) = P(x). 
			\end{align*}
			
		\item For each $r_{-i}$ such that $P(r_{-i}|E)>0$, 
			\begin{align*}
				(P_{-i}\Gamma)(T_i=1|r_{-i}, E) &\geq  (P_{-i}\Gamma)(T_i=1, r_{-i}, E)
				\\&\geq (P_{-i}\Gamma)(r_{-i}, E | T_i=1)\cdot (P_{-i}\Gamma)(T_i=1)
				\\&= P(r_{-i}, E) \cdot \gamma
				\\& > 0.
				\qedhere
			\end{align*} 
	\end{enumerate}
\end{proof}

\section{Hamming Weight One Distribution with Binary Outputs} \label{sec:hm_wt_one}

In this section, we analyze parallel repetition for three-player games with inputs drawn uniformly from the set $\mc S = \set{(1,0,0), (0,1,0), (0,0,1)}$ of hamming-weight one inputs, and having binary outputs.
Formally, we prove the following theorem:

\begin{theorem}\label{thm:hm_wt_one_par_rep}
	Let $\mc G = (\mc X\times \mc Y\times \mc Z, \mc A\times \mc B\times \mc C, Q, V)$ be a 3-player game with $\mc X = \mc Y = \mc Z = \mc A = \mc B= \mc C= \set{0,1}$, and $Q$ the uniform distribution over $\mc S = \set{(1,0,0), (0,1, 0), (0,0,1)}$, and such that $\val(\mc G)<1$.
	Then, there exists a constant $c = c(\mc G)>0$, such that for every $n\in \N$, it holds that $\val(\mc G^{\otimes n})\leq 2^{-cn}$. 
\end{theorem}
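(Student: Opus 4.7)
The plan is to reduce \Cref{thm:hm_wt_one_par_rep} to \Cref{thm:anti_corr_par_rep}. First, by \Cref{lemma:wlog_unif_dist} we may assume $Q$ is the uniform distribution on $\mc S = \{(1,0,0),(0,1,0),(0,0,1)\}$. The coordinate-wise bit-flip $\phi(x,y,z) = (1-x,1-y,1-z)$ is a bijection between $\mc S$ and the anti-correlation support $\mc S' = \{(0,1,1),(1,0,1),(1,1,0)\}$; defining $\mc G'$ on $\mc S'$ with predicate $V'(\phi(x,y,z), a,b,c) = V((x,y,z), a,b,c)$ and uniform distribution yields $\val((\mc G')^{\otimes n}) = \val(\mc G^{\otimes n})$ and $\val(\mc G') = \val(\mc G) < 1$. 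So it suffices to prove exponential parallel-repetition decay for every binary-output game on $\mc S'$ with value less than $1$.

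Next, observe that on $\mc S'$ each of the three inputs is uniquely identified by whichever player receives $0$ (input $0$ to Alice forces $(0,1,1)$, to Bob forces $(1,0,1)$, and to Charlie forces $(1,1,0)$). Three applications of \Cref{lemma:pred_for_deter_inp} therefore allow us to further assume that $V'((1,1,0), (a,b,c)) = W_{12}(a,b)$, $V'((1,0,1), (a,b,c)) = W_{13}(a,c)$, and $V'((0,1,1), (a,b,c)) = W_{23}(b,c)$ for some $W_{12}, W_{13}, W_{23}: \{0,1\}^2 \to \{0,1\}$. The condition $\val(\mc G') < 1$ then becomes the unsatisfiability of $W_{12}(a,b) \wedge W_{13}(a,c) \wedge W_{23}(b,c)$ over $(a,b,c) \in \{0,1\}^3$. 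If some $W_{ij}$ is identically $0$, then the corresponding input of $\mc S'$ is a guaranteed loss, making winning even a single coordinate fail with probability at least $1/3$ and yielding a trivial $(2/3)^n$ bound. Otherwise, I would re-run the proof of \Cref{thm:anti_corr_par_rep}: the sampling spaces $\cP, \cC$ and the reductions in \Cref{claim:P-probabilities}, \Cref{claim:C-probabilities}, and the surrounding parts of \Cref{prop:f-nearly-constant} depend only on the query distribution and transfer verbatim.

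The two substantive adaptations are in the analogs of \Cref{claim:P-good} and \Cref{prop:f-cant-be-constant}. For the self-agreement step, note that by the symmetry of the three players in $\mc S'$ one can run the correlated-sampling argument by fixing any one of the players' inputs and independently resampling the other two; when $f(X)_i \neq f(\tilde{X})_i$ on a coordinate with $X_i = \tilde{X}_i = 1$, simultaneous wins force $W_{12}(\cdot, g(Y)_i) \equiv 1$ on $Y_i = 1$ and analogous $W_{13}$-constraints on $Y_i = 0$. A finite case analysis on the admissible triples $(W_{12}, W_{13}, W_{23})$ (exploiting the unsatisfiability of their conjunction, and choosing the most favorable player-pair among the three symmetric versions) yields a constant $\rho = \rho(V') < 1$ such that $\cP[W \wedge \tilde{W} \mid X, \tilde{X}] \le \rho^{|\Delta(X,\tilde{X})|}$, giving approximate self-agreement with $|\Delta| = O(\log(1/\alpha))$. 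For the constant-$f$ step, when $f \equiv a \in \{0,1\}^{n'}$ the game collapses into a two-player parallel-repetition game between Bob and Charlie whose query distribution is uniform over $\{(1,1),(0,1),(1,0)\}$ and whose coordinatewise predicate is determined by $a_i$; since $\val(\mc G') < 1$, each of the two resulting single-coordinate two-player games (one per value of $a \in \{0,1\}$) has value strictly less than $1$, so partitioning the $n'$ coordinates by $a_i$ and applying \Cref{thm:2p_par_rep} separately to each part gives the exponentially small winning probability needed to derive a contradiction. The main obstacle is the finite case analysis establishing a uniform $\rho < 1$, which is a bounded combinatorial check over the tuples $(W_{12}, W_{13}, W_{23})$ whose conjunction is unsatisfiable and in which no $W_{ij}$ is identically zero.
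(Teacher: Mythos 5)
Your reduction is sound and your overall plan works, but it takes a genuinely different and heavier route than the paper's. The paper does not re-run the correlated-sampling machinery of \Cref{sec:anti_corr_game} for general predicates. Instead, after applying \Cref{lemma:pred_for_deter_inp} to get three $2\times 2$ tables, it splits on whether some \emph{row or column} of some table is all zeros (a finer dichotomy than your ``some $W_{ij}\equiv 0$''). If yes (\Cref{sec:three_point_case1}), a short direct counting argument shows that the offending player can use the forbidden answer on at most $\log_2(2/\alpha)$ coordinates, so their strategy is already essentially constant without invoking any analog of \Cref{prop:f-nearly-constant}; one then fixes that answer and applies the two-player concentration bound. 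If no (\Cref{sec:three_point_case2}), a few lines of propositional reasoning show the game \emph{is} the anti-correlation game up to relabelling, so \Cref{thm:anti_corr_par_rep} applies verbatim and no new instance of the correlated-sampling argument is ever needed. This dichotomy is exactly what discharges the ``finite case analysis'' you defer: when some row or column of $W_{ij}$ is all-zero, the player indexing it gets your per-coordinate penalty $\rho=2/3$ by pivoting on the \emph{other} player of that same table (note there are six target/pivot combinations, not three), and when no row or column is all-zero the only admissible triple is the anti-correlation predicate. So your obstacle is real but surmountable; the paper's organization just avoids paying for it.

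Two smaller points. First, in your constant-$f$ step you cite \Cref{thm:2p_par_rep}, but the strategy only needs to win all but a constant fraction ($0.01n'$, since $5\ln(1/\alpha)+5$ can be $\Theta(n)$) of the coordinates, so the plain parallel repetition theorem is not enough; you need the concentration bound \Cref{prop:2p_concentration_bd}, which is what the paper uses (and note the residual two-player game has value at most $2/3$, not merely less than $1$, because the value is a multiple of $1/3$). Second, your observation that unsatisfiability of $W_{12}(a,b)\wedge W_{13}(a,c)\wedge W_{23}(b,c)$ survives fixing any single coordinate of $(a,b,c)$ is correct and is what makes the per-value two-player games nondegenerate; with those repairs your argument goes through.
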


\begin{proof}

By Lemma~\ref{lemma:pred_for_deter_inp}, it suffices to consider the case when the predicate $V$ only depends on the answers of the two players that receive the input 0.
The predicate $V$ is then given by three predicates $V_1(b,c), V_2(a,c), V_3(a,b)$, where $a,b,c\in \set{0,1}$ denote the outputs of the players when they receive input 0.
We will think of these predicates as given by 3 tables, as shown in the figure.

\begin{table}[h]
\begin{subtable}{0.33\linewidth}\centering
{\begin{tabular}{|c|c|c|}
\hline 		 & $c=0$ & $c=1$ \\
\hline $b=0$ & $*$   & $*$	 \\
\hline $b=1$ & $*$   & $*$	 \\
\hline
\end{tabular}
}
\caption*{$V_1$}
\end{subtable}
\begin{subtable}{0.33\linewidth}\centering
{\begin{tabular}{|c|c|c|}
\hline 		 & $c=0$ & $c=1$ \\
\hline $a=0$ & $*$   & $*$	 \\
\hline $a=1$ & $*$   & $*$	 \\
\hline
\end{tabular}
}
\caption*{$V_2$}
\end{subtable}
\begin{subtable}{0.33\linewidth}\centering
{\begin{tabular}{|c|c|c|}
\hline 		 & $b=0$ & $b=1$ \\
\hline $a=0$ & $*$   & $*$	 \\
\hline $a=1$ & $*$   & $*$	 \\
\hline
\end{tabular}
}
\caption*{$V_3$}
\end{subtable}
\end{table}

We will do a case analysis over all predicates in the following manner:
\begin{description}
	\item \textbf{Case 1:} At least one of the rows or columns in some table has all 0s. This is analyzed in Section~\ref{sec:three_point_case1}.
	\item \textbf{Case 2:} Each row and column in each table has at least one 1. This is analyzed in Section~\ref{sec:three_point_case2}.
\end{description}
In either case, we prove the theorem only for large enough $n$, as the theorem trivially holds for small $n$, for a sufficiently small constant $c>0$.
\end{proof}

\begin{remark}
	 Another way to view this predicate is to think of it as a tripartite graph $G = (W_1\cup W_2\cup W_3, E)$,  with each $W_i$ having 2 vertices, corresponding to player answers 0 and 1 (on input 0).
	 The accepting answer pairs in $V_1$ correspond to the edges $E\cap (W_2\times W_3)$, and similarly $V_2$ corresponds to $E\cap (W_1\times W_3)$, and $V_3$ corresponds to $E\cap (W_1\times W_2)$.
\end{remark}

\subsection{Case 1 Analysis}\label{sec:three_point_case1}

We consider the case when at least one of the rows or columns in some table has all 0s.
By using symmetry among the players and possibly using the transformation $a\mapsto 1-a$, we can assume this is the row labelled by $a=1$ in $V_3$.
More precisely, we assume $V_3(a,b) \leq \ind\sqbrac{a=0}$, where we use $\ind$ to denote the indicator operator.

Next, we prove an exponential parallel repetition bound, assuming that the predicate $V$ is such that $\val(\mc G)<1$.

Consider the $n$-fold repeated game ${\mc G^{\otimes n} = (\brac{\mc X\times \mc Y\times \mc Z}^{\otimes n}, \brac{\mc A\times \mc B\times \mc C}^{\otimes n}, P, V^{\otimes n})}$, and suppose that $(X,Y,Z)$ is the random variable denoting the inputs to the three players in $\mc G^{\otimes n}$.
Let the functions $f, g, h:\set{0,1}^n\to\set{0,1}^n$ denote a set of optimal strategies for the three players respectively.
Let $W = \set{(x,y,z)\in \mc S^{\otimes n}\st V(x,y,z,f(x),g(y),h(z)) = 1}$ denote the event of winning the game.
Let $\alpha = P(W)$, and for the sake of contradiction, assume that $\alpha \geq 2^{-\epsilon n}$ for some small enough constant $\epsilon>0$ (to be specified later).

Let $E = \set{(x,y,z)\in W\st P(W|x) \geq \alpha/2}$ denote the sub-event of $W$ that remains after removing \emph{negligible} inputs (with respect to $W$) of the first player.
Then, it holds that
\begin{enumerate}[label=(\alph*)]
	\item $P(E) \geq \alpha /2$.
	\item For each $x$ such that $P(E|x)>0$, it holds that $P(E|x) = P(W|x)\geq \alpha/2$.
\end{enumerate}

Now, consider any fixed input $x$ for player 1, and suppose $\alpha_x = P(E|x)>0$ (and hence $\alpha_x\geq \alpha/2$).
Let $S_x = \set{i\in [n]: x_i=0, f(x)_i = 1}$, and $s_x = \norm{S_x}$.
Then, for each $y,z$ such that $(x,y,z)\in E$, and for each $i\in S_x$, it must hold that $y_i=1,z_i=0$  (by the assumed condition on $V_3$).
In particular, this implies that $\alpha_x \leq \brac{1/2}^{s_x}$, since after conditioning on $x_i=0$, the inputs $(y_i,z_i)$ equal $(0,1)$ or $(1,0)$ each with probability 1/2.
Hence, we get that $s_x\leq \log_2\brac{1/\alpha_x} \leq \log_2\brac{2/\alpha}$.

The above argument shows that the strategy $(0^n,g,h)$, where $0^n$ denotes the constant all zeros function, wins at least $n-\log_2\brac{2/\alpha} \geq 3n/4$ (for small enough $\epsilon$) coordinates when the input lies in $E$, which happens with probability at least $\alpha/2$.

Next, we consider the 2-player game $\tilde{\mc G}$, which naturally arises when we think of the first player's answer to be fixed to 0 in $\mc G$.
It is defined formally as follows:
\begin{enumerate}
	\item The inputs $(\tilde{p}, \tilde{q})$ are distributed uniformly over the set $\tilde{\mc S} = \set{(0,0), (0,1), (1,0)}$.
	\item The players give answers $\tilde{b}, \tilde{c} \in \set{0,1}$.
	\item The predicate $\tilde{V}$ is given by $\tilde{V}\brac{(\tilde{p}, \tilde{q}), (\tilde{b}, \tilde{c})} = V\brac{(1-\tilde{p}-\tilde{q}, \tilde{p}, \tilde{q}), (0, \tilde{b}, \tilde{c})}$. 
\end{enumerate}
The following are easy to verify:
\begin{enumerate}[label=(\alph*)]
	\item $\val(\tilde{\mc G}) \leq \val(\mc G) \leq 2/3$.
	\item With probability at least $\alpha/2$, the strategy $(g,h)$ wins at least $3n/4$ coordinates in $\tilde{\mc G}^{\otimes n}$.
\end{enumerate}

By standard concentration bounds on 2-player parallel repetition (see Proposition~\ref{prop:2p_concentration_bd}), we know that for any strategy, the probability of winning at least $3n/4$ coordinates in $\tilde{\mc G}^{\otimes n}$ is at most $2^{-\delta n}$, where $\delta>0$ is an absolute constant.
Hence, it must hold that $\alpha/2 \leq 2^{-\delta n}$, which is a contradiction for small enough $\epsilon>0$.
\qed

We remark that the strong result of Proposition~\ref{prop:2p_concentration_bd} is not really needed here, and we only use it for brevity.
A case analysis on the possible predicates makes this more evident: each of the resulting games turns out to be very easy to analyze once the answer for player 1 is fixed to all zeros.

\subsection{Case 2 Analysis}\label{sec:three_point_case2}

Suppose that each row and column in each table has at least one 1.
Then, in each table, at least one of the two diagonals (indexed by the $\set{(0,0), (1,1)}$ or $\set{(1,0), (0,1)}$ entries) has all 1s.
By possibly using the transformations $a \mapsto 1-a$ and $b \mapsto 1-b$, we can assume that the diagonal indexed by the $\set{(0,0), (1,1)}$ entries has all 1s, in both $V_1$ and $V_2$.
More precisely, we assume that $V_1(b,c)\geq \ind\sqbrac{b=c}$ and $V_2(a,c)\geq \ind\sqbrac{a=c}$, where we use $\ind$ to denote the indicator operator.
Further, we assume that the predicate $V$ is such that $\val(\mc G)<1$.

Now, if $V_3(0,0)=1$ then $(a,b,c)=(0,0,0)$ is a strategy that wins on all points in $\mc S$, and $\val(\mc G) = 1$.
Similarly, if $V_3(1,1)=1$ then $(a,b,c)=(1,1,1)$ wins on all points in $\mc S$.
Hence, it must hold that $V_3(0,0)=V_3(1,1)=0$.
Since we assumed each row in $V_3$ has at least one 1, it holds that $V_3(a,b) = \ind\sqbrac{a\not=b}$.

At this point, it is not hard to show that all the remaining $*$ entries must be zero:
\begin{itemize}
	\item If $V_1(0,1) = 1$, then $(a,b,c) = (1,0,1)$ wins on all points in $\mc S$.
	\item If $V_1(1,0) = 1$, then $(a,b,c) = (0,1,0)$ wins on all points in $\mc S$.
	\item If $V_2(0,1) = 1$, then $(a,b,c) = (0,1,1)$ wins on all points in $\mc S$.
	\item If $V_2(1,0) = 1$, then $(a,b,c) = (1,0,0)$ wins on all points in $\mc S$.
\end{itemize}

Hence, the three predicates are given by $V_1(b,c) = \ind\sqbrac{b=c}$, $V_2(a,c) = \ind\sqbrac{a=c}$ and $V_3(a,b) = \ind\sqbrac{a\not=b}$.

Applying the transformation $c\mapsto 1-c$, the predicate $V$ is given by $V_1(b,c) = \ind\sqbrac{b\not=c}$, $V_2(a,c) = \ind\sqbrac{a\not=c}$ and $V_3(a,b) = \ind\sqbrac{a\not=b}$.
This is exactly the anti-correlation game, an exponential parallel repetition decay bound for which is proven in Section~\ref{sec:anti_corr_game} (see Theorem~\ref{thm:anti_corr_par_rep}).
Note that in Section~\ref{sec:anti_corr_game}, we think of the input as uniform over $\set{(0,1,1), (1,0,1), (1,1,0)}$, which is easily seen to be equivalent.
\qed
\section{Three Player Games over Binary Alphabet} \label{sec:three_player_binary}

\subsection{The Main Theorem}

In this section, we prove the following main result:

\begin{theorem}\label{thm:main_thm}
	Let $\mc G = (\mc X\times \mc Y\times \mc Z, \mc A\times \mc B\times \mc C, Q, V)$ be a 3-player game with $\mc X = \mc Y = \mc Z = \mc A = \mc B= \mc C= \set{0,1}$, and such that $\val(\mc G)<1$.
	Then, there exists a constant $c = c(\mc G)>0$, such that for every $n\in \N$, it holds that $\val(\mc G^{\otimes n})\leq n^{-c}$. 
\end{theorem}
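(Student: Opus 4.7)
The plan is to prove Theorem \ref{thm:main_thm} by a case analysis on the support $\mc S \subseteq \{0,1\}^3$ of the query distribution $Q$, reducing every case to a class already handled. First, by Lemma \ref{lemma:wlog_unif_dist}, I can assume $Q$ is uniform over $\mc S$ (this only costs a constant factor in the parallel repetition decay). The classification will be done up to the natural symmetry group of order $48$ generated by permuting the three players and flipping individual input bits. I will also record two easy reductions to 2-player games $\mc G'$, preserving parallel repetition value: \textbf{(R1)} if some player's input is constant on $\mc S$, absorb that player's fixed question and answer into the predicate and reduce to the two remaining players; \textbf{(R2)} if two players' inputs are in bijective correspondence on $\mc S$, merge them into a single super-player whose answer is the concatenation of both original answers. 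In either case Theorem \ref{thm:2p_par_rep} gives an exponential bound.

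The main combinatorial step is: if $|\mc S| \ge 5$, then $\mc G$ is automatically playerwise connected. Indeed, for every player $j$ the projection of $\mc S$ onto the other two coordinates has at most $|\{0,1\}^2| = 4$ elements, so by pigeonhole two distinct points of $\mc S$ must share $x^{-j}$ and differ only in $x^j$; this produces an edge between $0$ and $1$ in the graph $H_{\mc G}^j$ of Definition \ref{defn:conn_player_graph}. Thus Theorem \ref{thm:playerwise_conn_par_rep} applies and yields the desired polynomial decay. It remains to handle $|\mc S| \le 4$ where reductions (R1) and (R2) both fail.

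For $|\mc S| \le 2$, (R1) or (R2) always applies (all player inputs are constant or perfectly correlated). For $|\mc S| = 3$, the orbits of $3$-subsets of $\{0,1\}^3$ under the symmetry group are classified by the multiset of pairwise Hamming distances into three types: $\{1,1,2\}$ (three vertices of a face forming an L), $\{1,2,3\}$ (three vertices with one antipodal pair), and $\{2,2,2\}$ (equilateral triangle of weight-two vectors). A direct inspection shows the first type always has a constant player, the second type always has a pair of bijectively correlated players, and the third type is exactly the anti-correlation support, which after flipping the three input bits matches the hypothesis of Theorem \ref{thm:hm_wt_one_par_rep}. For $|\mc S| = 4$ with the game not playerwise connected, some player's input must be a function of the other two (the relevant projection is a bijection onto $\{0,1\}^2$), so $\mc S = \{(x, y, f(x,y)) : (x, y) \in \{0,1\}^2\}$ for some $f : \{0,1\}^2 \to \{0,1\}$. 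The sixteen such $f$'s fall into three symmetry classes: functions depending on at most one variable (handled by R1 or R2); $f \in \{x \oplus y,\ 1 \oplus x \oplus y\}$, giving the GHZ support handled by Theorem \ref{thm:ghz_par_rep}; and the remaining eight $\mathrm{AND}/\mathrm{OR}/\mathrm{NAND}/\mathrm{NOR}$-type functions, which are all equivalent to $f = x \wedge y$ under bit flips and De Morgan's law, giving the four-point AND distribution handled by Theorem \ref{thm:four_point_par_rep}.

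The main ``obstacle'' here is not technical depth but case exhaustion: the argument is essentially a careful enumeration plus verification that each orbit of $\{0,1\}^3$-supports under $S_3 \ltimes (\Z/2\Z)^3$ falls into exactly one of the treated classes. The only subtle point is verifying that reductions (R1) and (R2) carry through to parallel repetition rather than just to the one-shot value; this holds because perfect correlation of two players' inputs in $\mc G$ implies coordinate-wise perfect correlation in $\mc G^{\otimes n}$, so strategies for $(\mc G')^{\otimes n}$ and $\mc G^{\otimes n}$ are in a value-preserving bijection. All the heavy lifting has been carried out in the earlier sections, and the proof of Theorem \ref{thm:main_thm} is the final gluing step.
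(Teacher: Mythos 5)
Your proposal is correct and reaches the same five terminal classes as the paper's proof; the difference is purely in how the case enumeration is organized. The paper first splits on whether the $(k-1)$-connection graph $\mc H_{\mc G}$ is connected (dispatching connected supports to \cref{thm:conn_par_rep} for exponential bounds) and then argues that a disconnected $\mc H_{\mc G}$ has a smallest component of size $1$ or $2$, which pins down $\mc S$ up to symmetry. You instead split on $|\mc S|$ and on playerwise connectivity: the pigeonhole observation that $|\mc S|\ge 5$ forces every $H_{\mc G}^j$ to be connected, and the observation that a non-playerwise-connected support of size $4$ must be the graph of a function $f:\{0,1\}^2\to\{0,1\}$, are both correct and arguably cleaner than the smallest-component argument; your route never needs \cref{thm:conn_par_rep} at all, at the cost of settling for polynomial rather than exponential decay on supports where $\mc H_{\mc G}$ happens to be connected (which is all the theorem claims anyway). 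Two minor points of care, neither of which is a gap relative to the paper's own level of rigor: (i) you should state explicitly that $|\mc S|=4$ with the game playerwise connected is dispatched directly to \cref{thm:playerwise_conn_par_rep} (this is implicit in your dichotomy); and (ii) reduction (R1) in the repeated game needs one extra line, since a player with constant input still chooses a possibly non-constant answer vector $c\in\mc C^n$ --- one fixes the majority value $c^*$ of the $c_i$ and restricts to those $\ge n/2$ coordinates, each of which is a copy of a fixed two-player game of value $<1$ (the paper's ``degenerates to a two-player game'' elides the same point). With those remarks, your gluing argument is a complete and valid proof of \cref{thm:main_thm}.
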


\begin{proof}
By Lemma~\ref{lemma:wlog_unif_dist}, it suffices to only consider the case when the distribution $Q$ is the uniform distribution over its support $\mc S\subseteq \set{0,1}^3$. Also notice that we only need to analyze $\mc S$ up to symmetry among the 3 players, and up to symmetry of the inputs $0$ and $1$ (that is, up to symmetries of the cube $\set{0,1}^3$).

When $\mc G$ is connected, \cref{thm:conn_par_rep} provides an inverse exponential bound $\val(\mc G^{\otimes n})= 2^{-\Omega(n)}$. Therefore we only need to consider the case when $\mc G$ is not connected, or equivalently, the graph $\mc H_{\mc G}$ (see \cref{defn:conn_graph}) is not connected. Notice that the graph $\mc H_{\mc G}$  is the subgraph of the cubical graph $\{0,1\}^3$ induced by $\mc S$. Since $\mc H_{\mc G}$ is not connected, there must be a smallest connected component $\mc S'\subseteq\mc S$ in $\mc H_{\mc G}$ of size $1$ or $2$.

If $|\mc S'|=2$, by symmetry assume that $S'=\{(1,1,0),(1,1,1)\}$. Then $S\setminus S'$ is contained in $\{(0,0,0),(0,0,1)\}$, which implies that the input $(x,y,z)$ always satisfies $x=y$. This means that the game $\mc G$ is essentially a two-player game, where an inverse exponential decay bound is known by \cref{thm:2p_par_rep}.

If $|\mc S'|=1$, by symmetry assume that $S'=\{(1,1,1)\}$. Then $S\setminus S'$ is contained in $\{(0,0,0),(1,0,0),(0,1,0),(0,0,1)\}$, and we perform a case analysis in below:

\begin{enumerate}
    \item $|S|\leq 2$. Then $\mc G$ always degenerates to a two-player game, similar to the case of $|\mc S'|=2$ above.
    \item $|S|=3$. To avoid degeneracy it must hold that $(0,0,0)\notin S$, so by symmetry we only consider $S=\{(1,0,0),(0,1,0),(1,1,1)\}$, or equivalently, $S=\{(1,0,0),(0,1,0),(0,0,1)\}$. The specific game of interest, the anti-correlation game, was studied in \cref{sec:anti_corr_game}. The general game with binary outputs was analyzed in \cref{sec:hm_wt_one}, where we proved an inverse exponential decay bound (see Theorem~\ref{thm:hm_wt_one_par_rep}).
    \item $|S|=4$ and $(0,0,0)\in S$. By symmetry we consider $S=\{(0,0,0),(1,0,0),(0,1,0),(1,1,1)\}$. We proved an inverse polynomial decay bound for this four-point AND distribution in \cref{sec:four_point_game} (see Theorem~\ref{thm:four_point_par_rep}).
    \item $|S|=4$ and $(0,0,0)\notin S$, that is, $S=\{(1,0,0),(0,1,0),(0,0,1),(1,1,1)\}$. This is equivalent to the support of the GHZ game, and and an inverse polynomial decay bound is known (see \cref{thm:ghz_par_rep})
    \item $|S|=5$, that is, $S=\{(0,0,0),(1,0,0),(0,1,0),(0,0,1),(1,1,1)\}$. In particular, the game $\mc G$ is playerwise connected (see \cref{defn:conn_player_graph}), and we proved inverse polynomial decay bounds for all playerwise connected games in Section~\ref{sec:playerconn_games} (see Theorem~\ref{thm:playerwise_conn_par_rep}). \qedhere
\end{enumerate}
\end{proof}

%%%%%%%%%%%%%%%%%%%%%%%%%%%%%%%%%%%%%%%%%%%%%%%%%%%%%%%%%%%%%%%%%%%%%%%%%%%%%%%%%%%%%%%%%%

\subsection{A General Game on Hamming Weight One Input}
    We observe that Theorem~\ref{thm:main_thm} works for arbitrary answer lengths in all cases except when the support $\mc S$ has $\norm{\mc S}=3$ with all disjoint points, for example, $\mc S = \set{(1,0,0), (0,1,0), (0,0,1)}$.
	
	Next, we describe a very simple family of 3-player games $\set{\mc G_k}_{k\in \N}$, such that proving a bound on the value of parallel repetition for games in this family will extend Theorem~\ref{thm:main_thm} to all games with $\mc X = \mc Y = \mc Z = \set{0,1}$, and arbitrary answer sets $\mc A, \mc B, \mc C$.

\begin{definition}\label{defn:3_pt_general game}
	For every $k\in \N$, we define a 3-player game $\mc G_k = (\mc X\times \mc Y\times \mc Z, \mc A_k\times \mc B_k\times \mc C_k, Q, V_k)$, with $\mc X = \mc Y = \mc Z = \set{0,1}$, and $Q$ the uniform distribution over $\mc S = \set{(1,0,0), (0,1,0), (0,0,1)}$, as follows:
	\begin{enumerate}[label=(\alph*)]
		\item $\mc A_k = \mc B_k = \set{0,1}^k$, and $\mc C_k = [k]$.
		\item For all $(x,y,z)\in \mc S$, and $(a,b,c)\in \mc A\times \mc B\times \mc C$,
			\[V_k\brac{(x,y,z),(a,b,c)} = \begin{cases} (a_i\land b_i)_{i\in [k]} = 0^k ,& \text{if }(x,y,z) = (0,0,1) \\ a_c = 1, &\text{if }(x,y,z) = (0,1,0) \\ b_c = 1, & \text{if }(x,y,z) = (1,0,0) \end{cases}.\]
	\end{enumerate}
	It is an easy check that $\val(\mc G_k) = 2/3$.
	For every $n\in \N$, we define $\rho_k(n) = \val(\mc G_k^{\otimes n})$. 
\end{definition}

\begin{proposition}\label{prop:redn_to_3pt_general}
	Let $\mc G = (\mc X\times \mc Y\times \mc Z, \mc A\times \mc B\times \mc C, Q, V)$ be a 3-player game with $\mc X = \mc Y = \mc Z = \set{0,1}$, and $Q$ the uniform distribution over $\mc S = \set{(1,0,0), (0,1, 0), (0,0,1)}$, and such that $\val(\mc G)<1$.
	Then, for every $n\in \N$, it holds that $\val(\mc G^{\otimes n}) \leq \rho_k(n)$, where $k = \max\set{\norm{\mc A}, \norm{\mc B}, \norm{\mc C} }$.
\end{proposition}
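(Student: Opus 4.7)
The plan is to exhibit, for each $n$, an explicit reduction turning any strategy $(f,g,h)$ for $\mc G^{\otimes n}$ into a strategy $(F,G,H)$ for $\mc G_k^{\otimes n}$ that wins on every input tuple on which $(f,g,h)$ does. Such a reduction immediately gives $\val(\mc G^{\otimes n})\le\val(\mc G_k^{\otimes n})=\rho_k(n)$, which is the claim.

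To set up the reduction, fix arbitrary injections identifying $\mc A$, $\mc B$, and $\mc C$ with subsets of $[k]$ (possible by the choice of $k$). For every coordinate $i\in[n]$, define
\begin{align*}
H(z)_i &\;=\; h(z)_i\ \in\ \mc C\subseteq[k],\\
F(x)_i &\;=\; \{c\in \mc C : \exists\, b\in \mc B,\ V((0,1,0),(f(x)_i,b,c))=1\},\\
G(y)_i &\;=\; \{c\in \mc C : \exists\, a\in \mc A,\ V((1,0,0),(a,g(y)_i,c))=1\},
\end{align*}
where $F(x)_i,G(y)_i\subseteq[k]$ are viewed as indicator vectors in $\{0,1\}^k=\mc A_k=\mc B_k$. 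The map $x\mapsto F(x)$ depends on $x$ only through $f$, and similarly for $G$ and $H$, so $(F,G,H)$ is a bona fide product strategy for $\mc G_k^{\otimes n}$.

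The next step is to verify, coordinate by coordinate, that whenever $(f,g,h)$ wins coordinate $i$ on input $(x_i,y_i,z_i)\in\mc S$, the strategy $(F,G,H)$ also wins coordinate $i$ of $\mc G_k^{\otimes n}$ on the same input. Two of the three cases are immediate. For $(x_i,y_i,z_i)=(1,0,0)$, if $V((1,0,0),(f(x)_i,g(y)_i,h(z)_i))=1$ then plugging $a=f(x)_i$ into the definition of $G(y)_i$ yields $h(z)_i\in G(y)_i$, which is precisely the $V_k$-winning condition $(G(y)_i)_{H(z)_i}=1$ on $(1,0,0)$. The case $(0,1,0)$ is symmetric via $F$.

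The main (and only real) step is the case $(x_i,y_i,z_i)=(0,0,1)$, where one must show $F(x)_i\cap G(y)_i=\emptyset$ under the assumption $V((0,0,1),(f(x)_i,g(y)_i,h(z)_i))=1$. This is precisely where the hypothesis $\val(\mc G)<1$ enters: a supposed $c^*\in F(x)_i\cap G(y)_i$ would, by the definitions of $F$ and $G$, furnish witnesses $a^*\in\mc A$ and $b^*\in\mc B$ with $V((0,1,0),(f(x)_i,b^*,c^*))=1$ and $V((1,0,0),(a^*,g(y)_i,c^*))=1$. Then the single-copy deterministic strategy that plays $(f(x)_i,a^*)$, $(g(y)_i,b^*)$, $(c^*,h(z)_i)$ as Alice/Bob/Charlie's answers to inputs $(0,1)$ respectively would simultaneously satisfy all three clauses of $V$ on every question in $\mc S$, contradicting $\val(\mc G)<1$. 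I anticipate no real obstacle in carrying this out; the whole argument is elementary, and its heart is just the three-way combinatorial contradiction above that extracts a fully winning single-copy strategy from the hypothetical common element $c^*$.
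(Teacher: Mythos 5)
Your proof is correct and follows essentially the same route as the paper's: build indicator-vector strategies for $\mc G_k$ recording which third-player answers are compatible with each player's fixed answer, and use $\val(\mc G)<1$ to get the disjointness needed on $(0,0,1)$. The only cosmetic difference is that the paper first invokes Lemma~\ref{lemma:pred_for_deter_inp} to make $V$ independent of the answer of the player receiving $1$, whereas you handle that dependence directly via existential quantification over that answer — both work, and your explicit three-way contradiction is exactly the observation the paper asserts.
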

\begin{proof}
	Let $\mc G$ be a game as specified.
	Without loss of generality, we assume that $\mc A = \mc B = \mc C = [k]$.
	By Lemma~\ref{lemma:pred_for_deter_inp}, it also suffices to assume that the predicate $V$ only depends on the answers of the two players that get input 0.
	We observe that since $\val(\mc G)<1$, for every $a,b\in [k]$, if it holds that $V\brac{(0,0,1),(a,b,*)} = 1$, then for every $c\in [k]$, it holds that $V\brac{(0,1,0),(a,*,c)} \land V\brac{(1,0,0),(*,b,c)} = 0$
	
	We consider the game $\mc G^{\otimes n} = (\brac{\mc X\times \mc Y\times \mc Z}^{\otimes n}, \brac{\mc A\times \mc B\times \mc C}^{\otimes n}, P, V^{\otimes n})$.
	Let $f,g,h:\set{0,1}^n\to [k]^n$ be optimal strategies for the game $\mc G^{\otimes n}$.
	
	We define strategies $f_k:\set{0,1}^n\to \mc A_k^{\otimes n}, \ g_k:\set{0,1}^n\to \mc B_k^{\otimes n}, \ h_k:\set{0,1}^n\to \mc C_k^{\otimes n}$, for the game $\mc G_k^{\otimes n}$, as follows: 
	For every $x,y,z \in \set{0,1}^n$, and every $i\in [n]$, we define
	\[f_k(x)_i = \brac{V\brac{(0,1,0),(f(x)_i, *,  c)}}_{c\in [k]},\]
	\[g_k(y)_i = \brac{V\brac{(1,0,0),(*, g(y)_i,  c)}}_{c\in [k]},\]
	\[h_k(z)_i = h(z)_i.\]
	It is clear (from the above observation about $\mc G$) that if the strategies $f, g, h$ win the game $\mc G^{\otimes n}$ on an input $(x,y,z)$, then the strategies $f_k, g_k, h_k$ also win the game $\mc G_k^{\otimes n}$ on input $(x,y,z)$.
	This shows that $\rho_k(n) = \val(\mc G_k^{\otimes n}) \geq \val(\mc G^{\otimes n})$.	 
\end{proof}

\appendix
\section{Preliminary Lemmas}

\subsection{Probability Facts}\label{appendix:probab_facts}

\begin{proof}[Proof of Fact~\ref{fact:norm_condition}]
	\begin{align*}
		\lonorm{P_{X|W}}{Q_{X|W}} &= \sum_{x\in W} \norm{\frac{P_X(x)}{P_X(W)}-\frac{Q_X(x)}{Q_X(W)}}
		\\&\leq  \sum_{x\in W} \norm{\frac{P_X(x)}{P_X(W)}-\frac{P_X(x)}{Q_X(W)}} + \sum_{x\in W} \norm{\frac{P_X(x)}{Q_X(W)}-\frac{Q_X(x)}{Q_X(W)}}
		\\&\leq P_X(W)\cdot\frac{\norm{P_X(W)-Q_X(W)}}{P_X(W)\cdot Q_X(W)} + \frac{1}{Q_X(W)}\cdot \lonorm{P_X}{Q_X}
		\\&\leq \frac{2}{Q_X(W)}\cdot\lonorm{P_X}{Q_X}. \qedhere
	\end{align*}
\end{proof}

\begin{proof}[Proof of Fact~\ref{fact:conditional-markov}]
	\[ P\brac{X\not\in \mc T \st E} = \sum_{x\not\in \mc T} P\brac{X=x\st E} =  \sum_{x\not\in \mc T} \frac{P(E\st x)}{P(E)} \cdot P(X=x) < \sum_{x\not\in \mc T} \alpha \cdot P(X=x) \leq \alpha. \qedhere\]
\end{proof}

%%%%%%%%%%%%%%%%%%%%%%%%%%%%%%%%%%%%
\subsection{Multiplayer Game Results}\label{appendix:multiplayer_game_results}

\begin{proof}[Proof of Lemma~\ref{lemma:wlog_unif_dist}]
	It is easy to see that $v(1) = \val(\tilde{\mc G}) < 1$, since any strategy that wins with probability 1 with respect to $U$ also wins with probability 1 with respect to $Q$, since they have the same support $\mc S$.
	
	Observe that we can write $Q = \gamma U + (1-\gamma)Q'$, for some constant $\gamma \in (0,1]$, and for some distribution $Q'$ over $\mc S$. 
	Equivalently, drawing a sample from $Q$ can be thought of as the following two step process: 
	First, we toss a coin that lands heads with probability $\gamma$ and tails with probability $1-\gamma$. 
	If the coin lands heads, we draw a sample from $U$, and else we draw a sample from $Q'$. 
	In a similar manner, drawing $n$ times independently from $Q$ can be thought of as first tossing $n$ such coins, and then choosing $n$ samples based on the values of the coin tosses.
	Let $Z = (Z_1,\dots, Z_n) \in \set{0,1}^n$ be a random variable denoting the values of these coins.
	Then, for each $i\in [n]$, independently, $Z_i$ is 1 with probability $\gamma$ and 0 with probability $1-\gamma$.
	
	We wish to bound the value of the game $\mc G^{\otimes n}$. For this, we can assume that each of the players is also given $Z$ as input, since this can only increase the game's value.
	Now, consider any fixed value $z\in \set{0,1}^n$, and let $m = \abs{z}$ be the number of coordinates of $z$ that are ones.
	Observe that conditioned on the event $Z=z$, the value of the game is at most $v(m)$.
	This holds because the players can simply \emph{embed} a copy of the game $\tilde{\mc G}^{\otimes m}$ in the $m$ coordinates corresponding to the ones in $z$, and use shared randomness to sample the remaining $n-m$ coordinates from $Q'$ independently.
	
	Hence, we get that for a small enough constant $\beta>0$,
	\begin{align*}
		\val(\mc G^{\otimes n}) & \leq \sum_{m=0}^n\Prob{\abs{Z} = m}\cdot v(m)
		\\& \leq \Prob{\abs{Z}\leq \frac{\gamma n}{2}}\cdot 1 + 1\cdot v\brac{\left \lfloor\frac{\gamma n}{2}\right\rfloor} &  (v \text{ is non-increasing})
		\\& \leq e^{-\gamma n/8} + v\brac{\left \lfloor\frac{\gamma n}{2}\right\rfloor} & \brac{\text{Fact } \ref{fact:chernoff}}
		\\& \leq 2v(\lfloor\beta n\rfloor)  & (v(n)\geq v(1)^n,\ v(1)=\val(\tilde{\mc G})<1).
	\end{align*}
\end{proof}

\begin{proof}[Proof of Lemma~\ref{lemma:pred_for_deter_inp}]
	For every $x\in \mc X, a\in \mc A$, we define
	\[ V'(x, a) = \begin{cases} \max_{b\in \mc A: b^{-j} = a^{-j}}\set{V(y, b)}, & x=y \\V(x, a), & o/w \\  \end{cases}.\]
	The first property follows directly from the definition, and the second property simply follows from the fact $V(x, a) \leq V'(x, a)$ for every $x\in \mc X, a\in \mc A$.
	
	For the third property, it suffices to observe that in the game $\mc G$, when the $j\ts{th}$ player gets input $y^j$, they know the entire input $y$, and hence also the answers $a^{-j}$ of the remaining players.
	Hence, they are able to answer an optimal $a^j$ such that $V(x,a) = V'(x,a)$. 
\end{proof}

\begin{proof}[Proof of Lemma~\ref{lemma:prod_set_hard_coor}]
	Fix any $n\in \N$, and let $\rho = \rho(n)$. 
	Consider the repeated game $\mc G^{\otimes n} = (\mc X^{\otimes n}, \mc A^{\otimes n}, P, V^{\otimes n})$, and let $f = f^1\times\dots\times f^k: \mc X^{\otimes n}\to\mc A^{\otimes n}$ be a product strategy for the players that achieves the game value $\val(\mc G^{\otimes n})$.
	Let $X$ be the random variable denoting the inputs to the players in the game $\mc G^{\otimes n}$, and let $A = f(X)$ be the random variable denoting the players' answers.
	
	We define a sequence of random variables $J_1,\dots, J_n \in [n]$, and $Z_1,\dots, Z_n\in \mc X\times \mc A$ as follows.
%	For each $i\in [n]$, we define $Z_{[i]} = (Z_1,\dots, Z_i)$.
%	We define $J_1 = 1$, and $Z_i = (X_{J_i}, A_{J_i})$ for all $i\in [n]$.
	For each $i\in [n]$, we define $J_{i} \in [n]\setminus\set{J_1,\dots, J_{i-1}}$ to be a coordinate with the lowest winning probability, conditioned on the value of $(Z_1,\dots, Z_{i-1})$. Then, we define $Z_i = (X_{J_i}, A_{J_i})$.
	For each $i\in [n]$, let $W_i$ be the event that the players win in coordinate $J_i$.

	Let $m = \min\set{\left\lfloor\frac{\log_2\brac{1/\rho}}{\log_2\brac{8\norm{\mc X}\norm{\mc A}}}\right\rfloor, n}$. 
	Assume $m>0$, as else the desired result holds trivially for a small enough constant $c>0$.
	We claim that for each integer $\ell\in [m]$, it holds that ${P\brac{W_1\land\dots\land W_\ell} \leq \brac{1-\epsilon/2}^{\ell}}$.
	In that case, there exists a constant $c>0$ such that
	\[\val(\mc G^{\otimes n})  = P\brac{W_1\land\dots\land W_n} \leq P\brac{W_1\land\dots\land W_m}\leq  \brac{1-\epsilon/2}^m \leq \rho^c. \]
	
	We prove the claim by induction on $\ell$.
	For each $\ell\in[m]$, let $W_{[\ell]} = \cap_{i\in [\ell]}W_i$, and let $Z_{[\ell]} = (Z_1,\dots, Z_\ell)$. 
	The base case $\ell=1$ follows by applying the lemma hypothesis to the event $E = \mc X^{\otimes n}$. 
	For the inductive step, let $\ell\in[m-1]$ be such that $P\brac{W_{[\ell]}} \leq \brac{1-\epsilon/2}^{\ell}$.
	Further, we assume that $P\brac{W_{[\ell]}}  \geq \brac{1/2}^{\ell+1}$, or else the inductive step holds trivially in this case.
	Observe that the event $W_{[\ell]}$ depends deterministically on the value of $Z_{[\ell]}$.
	Let $\mc T$ be the set of all such tuples $z_{[\ell]}$ of questions and answers that win on all coordinates in $[\ell]$, and $\mc T'\subseteq \mc T$ consist of those $z_{[\ell]}\in \mc T$ such that $P\brac{Z_{[\ell]} = z_{[\ell]}} \geq \rho$. Then,
	{\small
	\begin{align*}
		P\brac{W_{\ell+1} | W_{[\ell]}} &= \sum_{z_{[\ell]}\in \mc T} P\brac{W_{\ell+1}|Z_{[\ell]}=z_{[\ell]}}\cdot\frac{P\brac{Z_{[\ell]}=z_{[\ell]}}}{P\brac{W_{[\ell]}}}
		\\&\leq \sum_{z_{[\ell]}\in \mc T'} (1-\epsilon)\cdot\frac{P\brac{Z_{[\ell]}=z_{[\ell]}}}{P\brac{W_{[\ell]}}} + \sum_{z_{[\ell]}\in \mc T\setminus \mc T'} 1 \cdot\frac{P\brac{Z_{[\ell]}=z_{[\ell]}}}{P\brac{W_{[\ell]}}} & (\text{by lemma hypothesis})
		\\&= (1-\epsilon) + \epsilon\cdot\sum_{z_{[\ell]}\in \mc T\setminus \mc T'} \frac{P\brac{Z_{[\ell]}=z_{[\ell]}}}{P\brac{W_{[\ell]}}} & \brac{\text{as }\sum_{z_{[\ell]}\in \mc T} \frac{P\brac{Z_{[\ell]}=z_{[\ell]}}}{P\brac{W_{[\ell]}}}=1}
		\\&\leq (1-\epsilon) + \epsilon\cdot \frac{\rho \norm{\mc T\setminus \mc T'}}{P\brac{W_{[\ell]}}}
%		\\&\leq (1-\epsilon) + \epsilon\cdot \frac{\rho \norm{\mc X}^\ell\norm{\mc A}^\ell}{P\brac{W_{[\ell]}}}  &\brac{\text{as }\norm{\mc T'\setminus \mc T} \leq \norm{\mc T} \leq \norm{\mc X}^\ell\norm{\mc A}^\ell}
		\\&\leq (1-\epsilon) + \epsilon\cdot \frac{\rho \norm{\mc X}^\ell\norm{\mc A}^\ell}{\brac{1/2}^{\ell+1}} &\brac{\text{as }\norm{\mc T'\setminus \mc T} \leq \norm{\mc T} \leq \norm{\mc X}^\ell\norm{\mc A}^\ell}
		\\&\leq 1-\epsilon/2 & \brac{\text{by choice of }m}.
	\end{align*}}
	
Hence, $P\brac{W_{[\ell+1]}} = P\brac{W_{[\ell]}}\cdot P\brac{W_{\ell+1}|W_{[\ell]}} \leq \brac{1-\epsilon/2}^{\ell}\cdot \brac{1-\epsilon/2} = \brac{1-\epsilon/2}^{\ell+1}.$
\end{proof}

%%%%%%%%%%%%%%%%%%%%%%%%%%%%%%%%%%%%%%%%%%%%%%%%%%%%%%%%%%%%%%%%%%%%%%%%%%%%%%%%%
\section{2-player Parallel Repetition}\label{appendix:2player_par_rep_results}

In this section, we state some useful results from 2-player parallel repetition.

\begin{lemma}(Lemma~\ref{lemma:pdt_distr_cond} restated)
	Let $P_{V} = P_{V_1}\times\dots\times P_{V_n}$ be a product distribution over a set $\mc V^{\otimes n}$, and $W$ an event.
	Then,
	\[ \frac{1}{n}\sum_{i=1}^n \lonorm{P_{V_i|W}}{P_{V_i}} \lesssim \sqrt{\frac{1}{n}\log_2{\frac{1}{P_V(W)}}}.\]
\end{lemma}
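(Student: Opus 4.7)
The plan is the standard information-theoretic argument underlying the two-player parallel repetition proofs of \cite{Raz98, Hol09}, combining Pinsker's inequality, convexity of KL divergence, and the chain rule.

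First I would pass from $L^1$ distance to KL divergence via Pinsker's inequality, giving
\[ \lonorm{P_{V_i|W}}{P_{V_i}}^2 \lesssim \relent{P_{V_i|W}}{P_{V_i}} \]
for each $i\in[n]$. Then by Cauchy--Schwarz,
\[ \frac{1}{n}\sum_{i=1}^n \lonorm{P_{V_i|W}}{P_{V_i}} \leq \sqrt{\frac{1}{n}\sum_{i=1}^n \lonorm{P_{V_i|W}}{P_{V_i}}^2} \lesssim \sqrt{\frac{1}{n}\sum_{i=1}^n \relent{P_{V_i|W}}{P_{V_i}}}, \]
so it suffices to bound the sum of relative entropies by $\log_2 (1/P_V(W))$.

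Next I would apply the chain rule for KL divergence to $\relent{P_{V|W}}{P_V}$. Since $P_V$ is a product distribution, $P_{V_i|V_{<i}} = P_{V_i}$, and hence
\[ \relent{P_{V|W}}{P_V} \;=\; \sum_{i=1}^n \E_{v_{<i}\sim P_{V_{<i}|W}}\sqbrac{\relent{P_{V_i|W, v_{<i}}}{P_{V_i}}}. \]
By convexity of KL divergence in the first argument, for each $i$
\[ \relent{P_{V_i|W}}{P_{V_i}} \;\leq\; \E_{v_{<i}\sim P_{V_{<i}|W}}\sqbrac{\relent{P_{V_i|W,v_{<i}}}{P_{V_i}}}, \]
so summing over $i$ gives $\sum_{i=1}^n \relent{P_{V_i|W}}{P_{V_i}} \leq \relent{P_{V|W}}{P_V}$.

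Finally, a direct computation using that $P_{V|W}(v) = P_V(v)/P_V(W)$ for $v\in W$ and $0$ otherwise shows $\relent{P_{V|W}}{P_V} = \log_2 (1/P_V(W))$. Combining these three steps yields the stated bound. There is no real obstacle here; the only care needed is to use $\log$ to the correct base in Pinsker (a constant factor, absorbed into the $\lesssim$).
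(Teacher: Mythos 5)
Your proposal is correct and follows essentially the same route as the paper's proof sketch: Pinsker plus Cauchy--Schwarz to reduce to a KL bound, superadditivity of relative entropy against a product distribution (which you derive from the chain rule and convexity, while the paper just cites it), and the observation that conditioning on an event costs at most $\log_2(1/P_V(W))$ in relative entropy.
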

\begin{proof}[Proof Sketch]
For two distributions $P$ and $Q$ over a set $\Omega$, the relative entropy (also known as the KL divergence) between $P$ and $Q$ is defined as $\relent{P}{Q} = \sum_{\omega\in\Omega} P(\omega)\log_2\frac{P(\omega)}{Q(\omega)}$, with the convention that $0\cdot \log_2{0}=0\cdot \log_2\frac{0}{0} = 0$.

Observe that 
\[\relent{P_{V|W}}{P_V}  = \sum_{v\in \mc V^{\otimes n}}P_V(v|W)\log_2{\frac{P_V(v|W)}{P_V(v)}} \leq \log_2{\frac{1}{P_V(W)}}.\]
Also,
\[\frac{1}{n}\cdot \relent{P_{V|W}}{P_V}\geq \frac{1}{n}\sum_{i=1}^n \relent{P_{V_i|W}}{P_{V_i}} \gtrsim \frac{1}{n}\sum_{i=1}^n \lonorm{P_{V_i|W}}{P_{V_i}}^2 \geq \brac{\frac{1}{n}\sum_{i=1}^n \lonorm{P_{V_i|W}}{P_{V_i}}}^2.\]
For the first inequality, we used that relative entropy satisfies a super-additive property when the second distribution is a product distribution.
The second inequality is an application of Pinsker's inequality, which says that the $L^1$-distance between two distributions is at most $\sqrt{2\ln{2}}$ times the square root of relative entropy.
\end{proof}

The next result is essentially the core of all known (information-theoretic) proofs of 2-player parallel repetition.
For the sake of completeness, we give a rough proof sketch.
The reader is referred to the proofs in $\cite{Raz98, Hol09}$ for details.

\begin{proposition}\cite{Raz98}\label{prop:2_player_hard_set}
	Let $\mc G = \brac{\mc X\times \mc Y,\mc A\times \mc B,Q, V}$ be a 2-player game with $\val(\mc G)< 1$.
	Let $n\in \N$ be large enough, and consider the game $\mc G^{\otimes n} = \brac{\brac{\mc X\times \mc Y}^{\otimes n},\ \brac{\mc A\times \mc B}^{\otimes n},\ P, V}$.
	Suppose $E = E^1\times E^2\subseteq \mc X^{\otimes n}\times \mc Y^{\otimes n}$ is a product event with $P(E)\geq 2^{-n^{\delta}}$, for some constant $\delta\in(0,1)$.
	
	Let $f:\mc X^{\otimes n}\to\mc A^{\otimes n}$, and $g:\mc Y^{\otimes n}\to\mc B^{\otimes n}$ be any strategies for the 2-players.
	
	For any $S \subseteq [n]$, let $W_S$ be the event that the players win all the coordinates indexed by $S$.
	Then, for every constant $\epsilon\in(0,1)$, it holds that $\E_S\sqbrac{P\brac{W_S\st E}}\leq 2^{-n^{\epsilon/2}}$, where the expectation is over uniformly random $S\subseteq [n]$ of size $\lfloor n^{\epsilon}\rfloor$.
\end{proposition}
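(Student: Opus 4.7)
The approach is to adapt the classical $2$-player parallel repetition argument of \cite{Raz98, Hol09} to our setting, where we condition on a product event $E$ and only need to win a randomly chosen subset $S$ of coordinates. The plan is iterative: select the coordinates of $S$ one at a time uniformly at random, and show that at each step the conditional probability of winning the newly added coordinate is bounded away from $1$ by a constant $\epsilon_0 > 0$ depending only on $\mc G$, provided the cumulative conditioning event remains sufficiently large. Since $|S| = \lfloor n^\epsilon \rfloor$ can be significantly larger than $\log_2(1/P(E)) \le n^\delta$, a naive application of the $2$-player parallel repetition theorem to a fixed $S$ fails; the averaging over random $S$ is essential.

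The first ingredient is a hard coordinate lemma: for any product event $F = F^1 \times F^2$ with $P(F) \geq 2^{-n^{\delta'}}$ (for some small constant $\delta' > 0$ to be chosen below), a uniformly random $i \in [n]$ satisfies $\E_i P(W_i \st F) \leq 1 - \epsilon_0$. This is the standard Raz-type argument: introduce a dependency-breaking variable $R$ in the spirit of Definition~\ref{defn:dep_break_rv} (with $k=2$), apply Lemma~\ref{lemma:pdt_distr_cond} to $P_{X|F^1}$ and $P_{Y|F^2}$ together with Pinsker-type inequalities to show that the conditional joint distribution of $(X_i, Y_i)$ given $R_{-i}$ is, on average over $i$ and $R_{-i}$, close in $L^1$ to $Q$; the two players can then approximately embed a copy of $\mc G$ into coordinate $i$ via shared randomness for $R_{-i}$, and since $\val(\mc G) < 1$, the expected probability of winning coordinate $i$ is bounded by $1 - \epsilon_0$.

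To iterate, pick the coordinates $i_1, \ldots, i_{\lfloor n^\epsilon \rfloor}$ of $S$ one at a time. After $j$ steps the event $E \wedge W_T$, with $T = \set{i_1, \ldots, i_j}$, is no longer a product event, but it decomposes as a disjoint union $\bigsqcup_{(a_T, b_T)} F(a_T, b_T)$ over winning answer tuples, where $F(a_T, b_T) = \set{X \in E^1 : f(X)_T = a_T} \times \set{Y \in E^2 : g(Y)_T = b_T}$ is a product event. By pigeonhole, some summand has measure at least $P(E \wedge W_T) \cdot (\abs{\mc A}\abs{\mc B})^{-j}$; applying the hard coordinate lemma to this dominant summand (provided its measure is still above the threshold $2^{-n^{\delta'}}$) yields $\E_{i_{j+1}} P(W_{i_{j+1}} \st E, W_T) \leq 1 - \epsilon_0/2$ after the small loss from the pigeonhole step is absorbed by choosing $\delta'$ appropriately. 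Iterating for $\lfloor n^\epsilon \rfloor$ rounds, and handling separately the case where $P(W_T \st E)$ has already dropped below $2^{-n^{\epsilon/2}}$ at some intermediate step (in which case we are already done), gives the desired bound.

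The main technical difficulty is balancing the accumulated pigeonhole loss $(\abs{\mc A}\abs{\mc B})^j$ against the constant-factor decay per step, so that the dominant product event continues to exceed the threshold required by the hard coordinate lemma throughout the iteration. This trade-off is precisely what forces the exponent $n^{\epsilon/2}$ in the final bound rather than the naively expected $n^{\epsilon}$, and the parameters $\delta'$ and $\epsilon_0$ must be chosen carefully in terms of $\mc G$ and $\epsilon$ so that the induction closes uniformly for every $\epsilon \in (0,1)$.
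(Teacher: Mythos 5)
Your overall strategy --- iteratively extracting hard coordinates while decomposing the non-product cumulative conditioning event $E\wedge W_T$ into product pieces indexed by the data in the already-chosen coordinates --- is the same as the paper's, and your observation that the random choice of $S$ is what rescues the argument when $\abs{S}\gg \log_2(1/P(E))$ is exactly right. However, the step where you pass to a single dominant summand via pigeonhole is a genuine gap. The quantity you must bound, $\E_{i_{j+1}}\sqbrac{ P\brac{W_{i_{j+1}}\st E, W_T}}$, is a weighted average over \emph{all} summands of the decomposition; pigeonhole only guarantees the largest summand has relative weight $(\abs{\mc A}\abs{\mc B})^{-j}$ inside $E\wedge W_T$, not weight close to $1$, so controlling it leaves a $1-(\abs{\mc A}\abs{\mc B})^{-j}$ fraction of the conditional measure uncontrolled, on which the players could win the new coordinate with probability $1$. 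Even read charitably, this yields a per-step decay factor of only $1-\epsilon_0(\abs{\mc A}\abs{\mc B})^{-j}$, whose product over $j\leq \lfloor n^{\epsilon}\rfloor$ converges to a positive constant rather than to $2^{-n^{\epsilon/2}}$. The correct fix --- which is what the paper does, both here and in Lemma~\ref{lemma:prod_set_hard_coor} --- is to apply the hard-coordinate lemma to \emph{every} summand whose measure exceeds a threshold (a threshold of $2^{-o(n)}$ suffices for the hard-coordinate lemma), and to bound the total conditional mass of the sub-threshold summands by the number of summands times the threshold divided by $P(E\wedge W_T)$; this is negligible as long as $P(W_T\st E)$ has not already dropped below the inductive target (and if it has, that branch of the induction is already closed).

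Two smaller points. First, your product sets $F(a_T,b_T)$ must also fix the questions $x_T,y_T$, not just the answers: whether $(a_T,b_T)$ wins on the coordinates of $T$ depends on the questions there, so as written $F(a_T,b_T)$ is not a subset of $W_T$ and the decomposition is not a partition of $E\wedge W_T$. This is easily repaired by indexing the fibers by the full tuple $(x_T,a_T)$ and $(y_T,b_T)$, exactly as the paper conditions on ``typical questions and answers'' in $S$. Second, the exponent $n^{\epsilon/2}$ is not forced by any delicate trade-off between pigeonhole losses and per-step decay: the paper's induction gives the clean bound $k\eta^{k}$ with $k=\lfloor n^{\epsilon}\rfloor$ and $\eta<1$ a constant, and $2^{-n^{\epsilon/2}}$ is merely a crude upper bound on $k\eta^k=2^{-\Omega(n^{\epsilon})}$ for large $n$.
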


\begin{proof}[Proof Sketch]
	Let $c\in (0,1)$ be such that $\val(\mc G) = 1-c$, and let $\eta = 1-\frac{c}{2}$.
	The main result that goes into the proof is Theorem 1.2 in \cite{Raz98}, which says the following: When conditioning on a large product event (with measure $2^{-o(n)}$), a random coordinate has winning probability at most $\eta$ in expectation.
	We note that $\cite{Raz98}$ only talks about the existence of a hard coordinate, but the proof also shows that a random coordinate is hard in expectation.	
	
	We show that $\E_S\sqbrac{P\brac{W_S\st E}}\leq k\cdot \eta^{k}$, where the expectation is over uniformly random $S\subseteq [n]$ of size $k \leq \lfloor n^{\epsilon}\rfloor$, by induction on $k$.
	
	The case $k=1$ follows directly from the above statement from $\cite{Raz98}$.
	
	For the case $k+1$, 	we first fix an arbitrary set $S = \set{i_1,\dots, i_k}$ and observe the following:
	\begin{itemize}
		\item If $P(W_S|E)\leq \eta^{k+1} $, then $\E_{i\in [n]\setminus S} \sqbrac{P(W_{S\cup\set{i}}|E)}\leq \eta^{k+1}$ also.
		\item If $P(W_S|E)\geq \eta^{k+1} \geq 2^{-(k+1)}$, then by the above statement from \cite{Raz98} (note that $k=o(n)$), we get $\E_{i\in [n]\setminus S}\sqbrac{P\brac{W_{\set{i}}\st E, W_S}}\leq \eta$.
		We remark that while conditioning on $E, W_S$ is not necessarily a product event, we can rather condition on $E$ and \emph{typical} questions and answers of the players in coordinate $S$, as these deterministically determine $W_S$ (this uses a similar argument as the proof of Lemma~\ref{lemma:prod_set_hard_coor_exists}).
	\end{itemize}
	This gives us $\E_{S\cup\set{i}}\sqbrac{P\brac{W_{S\cup\set{i}}\st E}}\leq \eta^{k+1} + \eta \cdot \E_S\sqbrac{P\brac{W_S\st E}} \leq (k+1)\cdot \eta^{k+1}$ by induction.
\end{proof}

We will also state a concentration bound on the parallel repetition of 2-player games.

\begin{proposition}\label{prop:2p_concentration_bd} \cite{Rao11}
	Let $\mc G = \brac{\mc X\times \mc Y,\mc A\times \mc B,Q, V}$ be a 2-player game with $\val(\mc G)\leq 1-\epsilon$, for some $\epsilon>0$.
	Then, for any constant $0<\delta<\epsilon$, and any player strategies, the probability that the players win at least $(1-\epsilon+\delta)$ fraction of the coordinates in the game $\mc G^{\otimes n}$ is at most $2^{-\Omega_{\epsilon}\brac{\frac{\delta^3 n}{\log_2\norm{\mc A}\norm{\mc B}}}}$, where the constant in the exponent may depend on $\epsilon$.
    
\end{proposition}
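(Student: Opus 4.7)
The plan is to derive this concentration bound from Rao's parallel repetition theorem as a black box, combined with a random-subset reduction and a bootstrapping step. Throughout, fix strategies $(f,g)$ for $\mc G^{\otimes n}$ and let $W \subseteq [n]$ denote the random set of coordinates on which the strategies win. Set $p = \Pr\!\left[|W| \ge (1-\epsilon+\delta)n\right]$; our goal is to show $p \le 2^{-\Omega_\epsilon(\delta^3 n / \log_2(|\mc A||\mc B|))}$. The key ingredient will be the stronger form of Raz's theorem due to Rao, stating that for any 2-player game $\mc H$ with $\val(\mc H) \le 1-\eta$ we have $\val(\mc H^{\otimes m}) \le \bigl(1 - c\eta^3/\log_2(|\mc A_\mc H||\mc B_\mc H|)\bigr)^m$ for a universal constant $c$.

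First, I would set up the subset reduction. Sample a uniformly random $S \subseteq [n]$ of size $m$, to be chosen later. Conditioning on the event $\{|W| \ge (1-\epsilon+\delta)n\}$, a straightforward counting argument gives $\Pr[S \subseteq W \mid |W| \ge (1-\epsilon+\delta)n] \ge \binom{(1-\epsilon+\delta)n}{m}/\binom{n}{m}$, and for $m \le \delta n/10$ this ratio is at least $(1-\epsilon+\delta/2)^m$. Averaging then produces a fixed $S^* \subseteq [n]$ with $|S^*| = m$ and $\Pr[S^* \subseteq W] \ge p\,(1-\epsilon+\delta/2)^m$. On the other hand, by the standard embedding argument (the players can use shared randomness to sample the inputs in $[n]\setminus S^*$ jointly from $Q^{\otimes (n-m)}$ and then run $(f,g)$), we have $\Pr[S^* \subseteq W] \le \val(\mc G^{\otimes m})$. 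Combining with Rao's theorem yields
\[
p \le \left( \frac{1 - c\epsilon^3/\log_2(|\mc A||\mc B|)}{1-\epsilon+\delta/2} \right)^{m}.
\]

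The main obstacle is that this ratio is strictly less than $1$ only in the regime $\delta \ge \epsilon - O(\epsilon^3/\log_2(|\mc A||\mc B|))$; for $\delta \ll \epsilon$ the naive subset argument is vacuous. To remedy this, I would apply an amplification step before invoking Rao: replace $\mc G$ by the game $\mc G' = \mc G^{\otimes k}$ where $k = k(\epsilon)$ is a constant chosen large enough that Rao's theorem already gives $\val(\mc G') \le 1 - \epsilon'$ with $\epsilon' \ge 1 - (1-\epsilon+\delta/4)$; equivalently, $k$ is chosen so that $(1 - c\epsilon^3/\log_2(|\mc A||\mc B|))^k$ drops below $1-\epsilon+\delta/4$. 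Now group the $n$ coordinates of $\mc G^{\otimes n}$ into $n/k$ blocks of size $k$ and treat each block as a coordinate of $\mc G'^{\otimes n/k}$; the event that more than $(1-\epsilon+\delta)n$ original coordinates are won implies, after a routine accounting with Chernoff, that more than $(1-\epsilon')\cdot(n/k) + \Omega(\delta k^{-1} n)$ blocks are entirely won in $\mc G'^{\otimes n/k}$. Apply the subset argument of the preceding paragraph to $\mc G'^{\otimes n/k}$ with the new gap $\Omega(\delta/k)$, and use Rao's theorem again. Balancing the parameters yields the claimed bound with exponent $\Omega_\epsilon(\delta^3 n / \log_2(|\mc A||\mc B|))$, where the constant depends on $\epsilon$ through $k(\epsilon)$ and $c(\epsilon)$.

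The hardest step is the balancing in the bootstrap: tracking the interaction of the block size $k$, the effective gap $\delta/k$, and the polynomial factors coming from Rao's rate, so that the final exponent has the correct $\delta^3$ dependence rather than a worse power of $\delta$. An alternative, more direct route is to forgo the subset reduction entirely and apply the Raz--Holenstein--Rao information-theoretic machinery directly to the tail event $\{|W| \ge (1-\epsilon+\delta)n\}$ (rather than the all-wins event): conditioning on this event, perform the dependency-breaking analysis in a random coordinate to extract a single-shot strategy for $\mc G$ with value at least $1-\epsilon+\delta/2$, which contradicts $\val(\mc G) \le 1-\epsilon$ unless $p$ is at most $2^{-\Omega_\epsilon(\delta^3 n/\log_2(|\mc A||\mc B|))}$. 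Either route goes through, and either can be taken as the proof of the proposition; the reference to \cite{Rao11} indicates that we may simply cite Rao's concentration bound as a black box.
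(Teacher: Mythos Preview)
The paper does not prove this proposition at all: it is stated with a citation to \cite{Rao11} and used as a black box. Your closing remark --- that one may simply cite Rao's concentration bound --- is exactly what the paper does, and is the correct disposition here.

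That said, your first sketched route contains a real gap. In the bootstrapping step you fix a partition of $[n]$ into $n/k$ blocks of size $k$ and assert that ``the event that more than $(1-\epsilon+\delta)n$ original coordinates are won implies, after a routine accounting with Chernoff, that more than $(1-\epsilon')\cdot(n/k)+\Omega(\delta n/k)$ blocks are entirely won.'' This implication is false for a fixed partition, and there is no Chernoff to invoke: an adversary who loses exactly one coordinate in every block wins a $(1-1/k)$ fraction of coordinates while winning zero blocks. Randomizing the partition does not rescue the argument either, since the set of lost coordinates is determined by the same randomness (the questions) that you would be averaging over, so the losses need not be spread uniformly across random blocks. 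Consequently the amplification-by-blocking step does not go through as written, and the first route does not deliver the bound for $\delta \ll \epsilon$.

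Your second route --- running the Raz--Holenstein--Rao dependency-breaking analysis directly conditioned on the tail event $\{|W|\ge(1-\epsilon+\delta)n\}$ rather than on the all-wins event --- is indeed how the concentration bound is actually obtained in \cite{Rao11}, and is the substantive content behind the citation.
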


%%%%%%%%%%%%%%%%%%%%%%%%%%%%%%%%%%%%%%%%%%%%%%%%%%%%%%%%%%%%%%%%%%%%%%%%%%%%%%%%%

\section{Random 3-CNF Example}\label{app:rand_3CNF}
Here, we prove the claims in Example~\ref{ex:rand_3CNF}.
\begin{enumerate}
	\item Suppose $m = \omega(d)$. Fix any constant $\epsilon >0$. Consider any fixed strategy for the players (which consists of three functions $[d]\to \set{0,1}$). The number of possible clauses on which this strategy loses is exactly $d^3$ (out of the possible $8d^3$ clauses). Hence, by Fact~\ref{fact:chernoff}, the probability (over the random 3-CNF $\varphi$) that this strategy wins the game $\mc G$ with value at least $7/8 + \eps$ is at most $2^{-\Omega(\eps^2 m)}$. By a union bound over the possible player strategies, we get \[\Pr_{\varphi}\sqbrac{\val\brac{\mc G} \geq \frac{7}{8}+\eps} \leq (2^d)^3 \cdot 2^{-\Omega(\eps^2 m)}= o(1).\] In particular, with high probability, it holds that $\val(\mc G)<1$.
	
	By a similar argument, with probability $1-o(1)$, any fixed strategy has value at least $7/8-\eps$, and in particular $\val\brac{\mc G}\geq 7/8-\eps$.
	\item Consider the graph $\mc H$ with vertex set $[d]^3$, with edges between vertices $v$ and $v'$ if they differ in exactly one coordinate. The graph $\mc H_\mc G$ (based on a random $\varphi$) is then the induced subgraph of this graph obtained by choosing $m$ vertices uniformly and independently.
		\begin{enumerate}
			\item Suppose $m = \omega(d^2\log d)$. For each $x\in[d]$, construct a bipartite graph $\mc K_x$ on the vertex sets $[d]$ and $[d]$, such that for each chosen vertex $(x,y,z)$ in $\mc H_{\mc G}$, there is an edge $(y,z)$ in $\mc K_x$. Let $k_x$ be the number of chosen vertices $(x,y,z)$, which is at least $\frac{m}{2d}=\omega(d\log d)$ with probability $1-o(d^{-1})$ by Fact~\ref{fact:chernoff}. On the other hand, when $k_x$ is fixed, the $k_x$ edges are uniformly randomly chosen in $\mc K_x$, therefore by Erd\'{o}s-Renyi the graph $\mc K_x$ is connected with probability $1-o(d^{-1})$ (see e.g. \cite{Pal64} ).

			That means with probability $1-o(1)$, the induced subgraphs of $\mc H_{\mc G}$ on vertices $\{(x,y,z)\}$ are connected for all fixed $x\in[d]$ (the edges of $\mc K_x$ are connected since the vertices are connected). From the proof in item~\ref{item:3cnf_playerconn_d1.5} below we also know that $\mc H_{\mc G}^1$ is connected with probability $1-o(1)$, thus overall $\mc H_{\mc G}$ is connected with probability $1-o(1)$.
			\item Suppose $m = o(d^2)$. The probability that the first chosen vertex is isolated in $\mc H_{\mc G}$ equals $\brac{1-\frac{3(d-1)}{d^3}}^{m-1} \geq 1-\frac{3(m-1)(d-1)}{d^3} = 1-o(1)$. In this case, the graph $\mc H_\mc G$ must not be connected.
		\end{enumerate}
	\item \begin{enumerate}
		\item \label{item:3cnf_playerconn_d1.5} Suppose $m = \omega(d^{1.5}\sqrt{\log d})$. We make the following deductions:
		\begin{itemize}
		    \item First, we can assume that for each player $j$, the graph $\mc H_{\mc G}^j$ contains all vertices in $[d]$, since the probability of it not happening is at most $3d\cdot (1-1/d)^m=o(1)$.
		    \item Therefore, it suffices to prove the claim when a subset of exactly $m/2$ vertices in $\mc H$ are chosen, for the more vertices of $\mc H_{\mc G}$ there are, the more likely the graphs $\mc H_{\mc G}^j$ are connected when their vertex sets are fixed to $[d]$.
		    \item Furthermore, it suffices to prove the claim when each vertex in $\mc H$ is chosen independently with probability $p=\frac{m}{4d^3}=\omega(d^{-1.5}\sqrt{\log d})$, as when less than $m/2$ vertices are chosen (which happens with probability $1-o(1)$ by Fact~\ref{fact:chernoff}), we can always randomly choose a superset of size $m/2$ instead. 
		    \item Now with each vertex independently chosen, for each player $j$ the graph $\mc H_{\mc G}^j$ is now a random intersection graph $G(d,d^2,p)$ as defined in \cite{Sin96}, and there it was proved to be connected with probability $1-o(1)$. A union bound on the players shows that $\mc G$ is playerwise connected with probability $1-o(1)$.
		\end{itemize}
		
		\item Suppose $m = o(d^{1.5})$, and let $(x_0,y_0,z_0)$ denote the first chosen vertex in $\mc H_\mc G$. We show that with high probability, the vertex $x_0$ is isolated in $\mc H_\mc G^1$, and hence the game $\mc G$ is not playerwise connected.
		
			Observe that $x_0$ is not isolated in $\mc H_{\mc G}^1$ if and only if one of the following holds:
			\begin{itemize}
				\item There exists $x\in [d]\setminus\set{x_0}$ such that $(x,y_0, z_0)$ is chosen. This occurs with probability $1-\brac{1-\frac{d-1}{d^3}}^{m-1}\leq \frac{(d-1)(m-1)}{d^3} = o(1)$.
				\item There exists $x\in [d]\setminus\set{x_0}$, and $(y,z)\in [d]^2\setminus\set{(y_0,z_0)}$ such that both $(x,y,z)$ and $(x_0, y, z)$ are chosen. This occurs with probability at most 
				\begin{align*}
					&\brac{d-1}\cdot \brac{d^2-1}\cdot \brac{1-\brac{1-\frac{1}{d^3}}^{m-1} - \brac{1-\frac{1}{d^3}}^{m-1} + \brac{1-\frac{2}{d^3}}^{m-1} }
					\\&\leq  d^3\cdot \brac{1-\brac{1-\frac{m-1}{d^3}} - \brac{1-\frac{m-1}{d^3}} + \brac{1-\frac{2(m-1)}{d^3}+\frac{2(m-1)^2}{d^6}} }
					\\&= d^3\cdot \frac{2(m-1)^2}{d^6} \leq \frac{2m^2}{d^3} = o(1). 
				\end{align*}		
				\qed		
			\end{itemize}
	\end{enumerate}
\end{enumerate}

\bibliographystyle{alpha}
\bibliography{ref.bib}

\newcommand{\etalchar}[1]{$^{#1}$}
\begin{thebibliography}{BOGKW88}

\bibitem[AK09]{AK09}
Noga Alon and Bo'az Klartag.
\newblock Economical toric spines via {C}heeger's inequality.
\newblock {\em J. Topol. Anal.}, 1(2):101--111, 2009.

\bibitem[BBCR13]{BBCR13}
Boaz Barak, Mark Braverman, Xi~Chen, and Anup Rao.
\newblock How to compress interactive communication.
\newblock {\em SIAM J. Comput.}, 42(3):1327--1363, 2013.
\newblock (also in STOC 2010).

\bibitem[BG15]{BG15}
Mark Braverman and Ankit Garg.
\newblock Small value parallel repetition for general games.
\newblock In {\em STOC}, pages 335--340, 2015.

\bibitem[BGS98]{BGS98}
Mihir Bellare, Oded Goldreich, and Madhu Sudan.
\newblock Free bits, {PCP}s, and nonapproximability---towards tight results.
\newblock {\em SIAM J. Comput.}, 27(3):804--915, 1998.
\newblock (also in FOCS 1995).

\bibitem[BOGKW88]{BOGKW88}
Michael Ben-Or, Shafi Goldwasser, Joe Kilian, and Avi Wigderson.
\newblock Multi-prover interactive proofs: How to remove intractability
  assumptions.
\newblock In {\em STOC}, pages 113--131, 1988.

\bibitem[BRR{\etalchar{+}}09]{BRRRS09}
Boaz Barak, Anup Rao, Ran Raz, Ricky Rosen, and Ronen Shaltiel.
\newblock Strong parallel repetition theorem for free projection games.
\newblock In {\em APPROX-RANDOM}, pages 352--365, 2009.

\bibitem[BRWY13]{BRWY13}
Mark Braverman, Anup Rao, Omri Weinstein, and Amir Yehudayoff.
\newblock Direct products in communication complexity.
\newblock In {\em FOCS}, pages 746--755, 2013.

\bibitem[CHTW04]{CHTW04}
Richard Cleve, Peter H{\o}yer, Benjamin Toner, and John Watrous.
\newblock Consequences and limits of nonlocal strategies.
\newblock In {\em CCC}, pages 236--249, 2004.

\bibitem[CS88]{CS88}
Va\v{s}ek Chv\'{a}tal and Endre Szemer\'{e}di.
\newblock Many hard examples for resolution.
\newblock {\em J. Assoc. Comput. Mach.}, 35(4):759--768, 1988.

\bibitem[DHVY17]{DHVY17}
Irit Dinur, Prahladh Harsha, Rakesh Venkat, and Henry Yuen.
\newblock Multiplayer parallel repetition for expanding games.
\newblock In {\em ITCS}, volume~67 of {\em LIPIcs}, pages Art. No. 37, 16,
  2017.

\bibitem[DS14]{DS14}
Irit Dinur and David Steurer.
\newblock Analytical approach to parallel repetition.
\newblock In {\em STOC}, pages 624--633, 2014.

\bibitem[Fei91]{Fei91}
Uriel Feige.
\newblock On the success probability of the two provers in one-round proof
  systems.
\newblock In {\em Structure in Complexity Theory Conference}, pages 116--123,
  1991.

\bibitem[Fei98]{Fei98}
Uriel Feige.
\newblock A threshold of {$\ln n$} for approximating set cover.
\newblock {\em J. ACM}, 45(4):634--652, 1998.
\newblock (also in STOC 1996).

\bibitem[FK91]{FK91}
H.~Furstenberg and Y.~Katznelson.
\newblock A density version of the {H}ales-{J}ewett theorem.
\newblock {\em J. Anal. Math.}, 57:64--119, 1991.

\bibitem[FKO06]{FKO06}
Uriel Feige, Jeong~Han Kim, and Eran Ofek.
\newblock Witnesses for non-satisfiability of dense random 3cnf formulas.
\newblock In {\em FOCS}, pages 497--508, 2006.

\bibitem[FKO07]{FKO07}
Uriel Feige, Guy Kindler, and Ryan O'Donnell.
\newblock Understanding parallel repetition requires understanding foams.
\newblock In {\em CCC}, pages 179--192, 2007.

\bibitem[FO07]{FO07}
Uriel Feige and Eran Ofek.
\newblock Easily refutable subformulas of large random 3{CNF} formulas.
\newblock {\em Theory Comput.}, 3:25--43, 2007.
\newblock (also in ICALP 2004).

\bibitem[For89]{For89}
Lance Fortnow.
\newblock {\em Complexity theoretic aspects of interactive proof systems}.
\newblock PhD thesis, MIT, 1989.

\bibitem[FRS94]{FRS94}
Lance Fortnow, John Rompel, and Michael Sipser.
\newblock On the power of multi-prover interactive protocols.
\newblock {\em Theoret. Comput. Sci.}, 134(2):545--557, 1994.

\bibitem[FV02]{FV02}
Uriel Feige and Oleg Verbitsky.
\newblock Error reduction by parallel repetition---a negative result.
\newblock {\em Combinatorica}, 22(4):461--478, 2002.

\bibitem[GHM{\etalchar{+}}21]{GHMRZ21}
Uma Girish, Justin Holmgren, Kunal Mittal, Ran Raz, and Wei Zhan.
\newblock Parallel repetition for the {GHZ} game: {A} simpler proof.
\newblock In {\em APPROX-RANDOM}, pages 62:1--62:19, 2021.

\bibitem[GHZ89]{GHZ89}
Daniel~M. Greenberger, Michael~A. Horne, and Anton Zeilinger.
\newblock Going beyond bell's theorem.
\newblock In {\em Bell\textquoteright{}s Theorem, Quantum Theory and
  Conceptions of the Universe}, pages 69--72. Springer Netherlands, 1989.

\bibitem[H{\aa}s01]{Has01}
Johan H{\aa}stad.
\newblock Some optimal inapproximability results.
\newblock {\em J. ACM}, 48(4):798--859, 2001.
\newblock (also in STOC 1997).

\bibitem[HHR16]{HHR16}
Jan Hazla, Thomas Holenstein, and Anup Rao.
\newblock Forbidden subgraph bounds for parallel repetition and the density
  hales-jewett theorem.
\newblock {\em CoRR}, abs/1604.05757, 2016.

\bibitem[Hol09]{Hol09}
Thomas Holenstein.
\newblock Parallel repetition: simplifications and the no-signaling case.
\newblock {\em Theory Comput.}, 5:141--172, 2009.
\newblock (also in STOC 2007).

\bibitem[HR20]{HR20}
Justin Holmgren and Ran Raz.
\newblock A parallel repetition theorem for the {GHZ} game.
\newblock {\em CoRR}, abs/2008.05059, 2020.

\bibitem[HY19]{HY19}
Justin Holmgren and Lisa Yang.
\newblock The parallel repetition of non-signaling games: counterexamples and
  dichotomy.
\newblock In {\em STOC}, pages 185--192, 2019.

\bibitem[KORW08]{KORW08}
Guy Kindler, Ryan O'Donnell, Anup Rao, and Avi Wigderson.
\newblock Spherical cubes and rounding in high dimensions.
\newblock In {\em FOCS}, pages 189--198, 2008.

\bibitem[MR21]{MR21}
Kunal Mittal and Ran Raz.
\newblock Block rigidity: strong multiplayer parallel repetition implies
  super-linear lower bounds for {T}uring machines.
\newblock In {\em ITCS}, pages Art. No. 71, 15, 2021.

\bibitem[MU05]{MU05}
Michael Mitzenmacher and Eli Upfal.
\newblock {\em Probability and computing}.
\newblock Cambridge University Press, Cambridge, 2005.
\newblock Randomized algorithms and probabilistic analysis.

\bibitem[Pal64]{Pal64}
Ilona Pal\'{a}sti.
\newblock On the connectedness of bichromatic random graphs.
\newblock {\em Magyar Tud. Akad. Mat. Kutat\'{o} Int. K\"{o}zl.}, 8:431--441
  (1964), 1964.

\bibitem[Pol12]{Pol12}
D.~H.~J. Polymath.
\newblock A new proof of the density {H}ales-{J}ewett theorem.
\newblock {\em Ann. of Math. (2)}, 175(3):1283--1327, 2012.

\bibitem[PRW97]{PRW97}
Itzhak Parnafes, Ran Raz, and Avi Wigderson.
\newblock Direct product results and the {GCD} problem, in old and new
  communication models.
\newblock In {\em STOC}, pages 363--372, 1997.

\bibitem[Rao11]{Rao11}
Anup Rao.
\newblock Parallel repetition in projection games and a concentration bound.
\newblock {\em SIAM J. Comput.}, 40(6):1871--1891, 2011.
\newblock (also in STOC 2008).

\bibitem[Raz98]{Raz98}
Ran Raz.
\newblock A parallel repetition theorem.
\newblock {\em SIAM J. Comput.}, 27(3):763--803, 1998.
\newblock (also in STOC 1995).

\bibitem[Raz10]{Raz10}
Ran Raz.
\newblock Parallel repetition of two prover games.
\newblock In {\em CCC}, pages 3--6, 2010.

\bibitem[Raz11]{Raz11}
Ran Raz.
\newblock A counterexample to strong parallel repetition.
\newblock {\em SIAM J. Comput.}, 40(3):771--777, 2011.
\newblock (also in FOCS 2008).

\bibitem[RR12]{RR12}
Ran Raz and Ricky Rosen.
\newblock A strong parallel repetition theorem for projection games on
  expanders.
\newblock In {\em CCC}, pages 247--257, 2012.

\bibitem[Sin96]{Sin96}
Karen~B. Singer.
\newblock {\em Random intersection graphs}.
\newblock ProQuest LLC, Ann Arbor, MI, 1996.
\newblock Thesis (Ph.D.)--The Johns Hopkins University.

\bibitem[Ver96]{Ver96}
Oleg Verbitsky.
\newblock Towards the parallel repetition conjecture.
\newblock {\em Theoret. Comput. Sci.}, 157(2):277--282, 1996.

\end{thebibliography}

\end{document}